\newcommand{\eps}{\varepsilon}
\setlist[enumerate]{itemsep=3pt,topsep=3pt}
\setlist[enumerate,1]{label={\rm(}$\roman*${\rm)}}
\definecolor{corlinks}{RGB}{0,0,200}
\definecolor{cormenu}{RGB}{0,0,200}
\definecolor{corurl}{RGB}{0,0,200}
\setlist[itemize]{itemsep=2pt,topsep=5pt}
\setlist[enumerate]{itemsep=2pt,topsep=5pt}
\newtheorem{theorem}{Theorem}[section]
\newtheorem*{theorem*}{Theorem}
\newtheorem{lemma}[theorem]{Lemma}
\newtheorem{corollary}[theorem]{Corollary}
\newtheorem{proposition}[theorem]{Proposition}
\newtheorem{fact}[theorem]{Fact}
\newtheorem{claim}[theorem]{Claim}
\theoremstyle{definition}
\newtheorem{definition}[theorem]{Definition}
\newtheoremstyle{reminder}%
{}%
{}%
{\em}%
{}%
{\bf}%
{.}%
{ }%
{Reminder of \thmnote{#3}}
\theoremstyle{reminder}
\newtheorem*{reminder}{Reminder}
\theoremstyle{remark}
\newtheorem{preremark}[theorem]{Remark}
\newenvironment{remark}%
  {\vspace{0.5em}\begin{mdframed}[innertopmargin=-0em,skipabove=0em]\small
	\begin{preremark}\upshape}
{\end{preremark}\end{mdframed}}
\newenvironment{remarkn}[1]%
  {\vspace{0.5em}\begin{mdframed}[innertopmargin=0em,skipabove=0em]\small
	\begin{preremark}[#1]\upshape}
{\end{preremark}\end{mdframed}}
\newtheorem{preexample}[theorem]{Example}
\newenvironment{example}%
  {\vspace{0.5em}\begin{mdframed}[innertopmargin=0em,skipabove=0em]\small
	\begin{preexample}\upshape}
{\end{preexample}\end{mdframed}}
\newenvironment{examplen}[1]%
  {\vspace{0.5em}\begin{mdframed}[innertopmargin=0em,skipabove=0em]\small
	\begin{preexample}[#1]\upshape}
{\end{preexample}\end{mdframed}}
\definecolor{blue-violet}{rgb}{0.54, 0.17, 0.89}
\definecolor{darkorange}{rgb}{1.0, 0.55, 0.0}
\newcommand{\eqdef}{\triangleq}
\newcommand{\PV}{\mathsf{PV}}
\newcommand{\TPV}{\mathsf{T}_\PV}
\newcommand{\LPV}{\mathcal{L}(\PV)}
\newcommand{\LUT}{\mathcal{L}_{\sf UT}}
\newcommand{\LB}{\mathsf{LB}}
\newcommand{\LBw}{\mathsf{LB}_{\mathsf{wst}}}
\newcommand{\APC}{\mathsf{APC}}
\newcommand{\NW}{\mathsf{NW}}
\newcommand{\uhr}{\!\upharpoonright}
\newcommand{\Search}{\text{-}\mathsf{Search}}
\newcommand{\bbQ}{\mathbb{Q}}
\newcommand{\inote}[1]{\textcolor{blue-violet}{(Igor: #1)}}
\newcommand{\jnote}[1]{\textcolor{darkorange}{(Jiatu: #1)}}
\newcommand{\ignore}[1]{}
\newclass{\NSIZE}{NSIZE}
\newclass{\GG}{G3c}
\newclass{\LK}{LK}
\newcommand{\cPi}[1]{\Pi_{#1}\mbox{-}}
\newcommand{\cSig}[1]{\Sigma_{#1}\mbox{-}}
\newclass{\Log}{Log}
\newclass{\EMPTY}{EMPTY}
\newclass{\FZPP}{FZPP}
\newclass{\tPV}{PV}
\newclass{\dWPHP}{dWPHP}
\newcommand{\Tpv}{\mathsf{T}_{\tPV}}
\newcommand{\Upv}{\mathsf{U}_{\tPV}}
\newcommand{\UTpv}{\mathsf{UT}_{\tPV}}
\newcommand{\calL}{\mathcal{L}}
\newcommand{\calM}{\mathcal{M}}
\newclass{\TF}{TF}
\newclass{\sTF}{sTF}
\newcommand{\Lpv}{\mathcal{L}_{\tPV}}
\newcommand{\bbN}{\mathbb{N}}
\newcommand{\bbZ}{\mathbb{Z}}
\newcommand{\bbR}{\mathbb{R}}
\newcommand{\calD}{\mathcal{D}}
\newcommand{\calI}{\mathcal{I}}
\newcommand{\calT}{\mathcal{T}}
\renewcommand\paragraph{\@startsection{paragraph}{4}{\z@}%
    {1em \@plus1ex \@minus.2ex}%
    {-1em}%
    {\bf\normalsize}}
\DeclareMathOperator*{\Ex}{\mathbb{E}}
\begin{document}
	
\newgeometry{margin=0.82in}	
	
\title{Unprovability of strong complexity lower bounds in bounded arithmetic\vspace{0.4cm}}

\author{
		Jiatu Li\footnote{Email: \texttt{\href{mailto:lijt19@mails.tsinghua.edu.cn}{lijt19@mails.tsinghua.edu.cn}}}\vspace{0.2cm}\\{\small Institute for Interdisciplinary Information Sciences}\\{\small Tsinghua University\vspace{0.3cm}}
		\and	
		Igor C. Oliveira\footnote{Email: \texttt{\href{mailto:igor.oliveira@warwick.ac.uk}{igor.oliveira@warwick.ac.uk}}}\vspace{0.2cm}\\{\small Department of Computer Science}\\
		{\small University of Warwick\vspace{0.4cm}} 
	}

\maketitle

\vspace{-0.6cm}

\begin{abstract}
While there has been progress in establishing the unprovability of complexity statements in lower fragments of bounded arithmetic, understanding the limits of Je\v r\'abek's theory $\APC_1$ \citep{Jerabek07} and of higher levels of Buss's hierarchy $\mathsf{S}^i_2$ \citep{Buss} has been a more elusive task. Even in the more restricted setting of Cook's theory $\PV$ \citep{Coo75}, known results often rely on a less natural formalization that encodes a complexity statement using a collection of sentences instead of a single sentence. This is done to reduce the quantifier complexity of the resulting sentences so that standard witnessing results can be invoked. 

In this work, we establish unprovability results for \emph{stronger theories} and for \emph{sentences of higher quantifier complexity}. In particular, we unconditionally show that $\APC_1$ cannot prove strong complexity lower bounds separating the third level of the polynomial hierarchy. In  more detail, we consider non-uniform average-case separations, and establish that $\APC_1$ cannot prove a sentence stating that
\begin{center}
    $\forall n \geq n_0\;\exists\,f_n \in \cPi{3}\SIZE[n^d]$ that is $(1/n)$-far from every $\cSig{3}\SIZE[2^{n^{\delta}}]$ circuit.
\end{center}
This is a consequence of a much more general result showing that, for every $i \geq 1$, strong separations for $\cPi{i}\SIZE[\poly(n)]$ versus $\cSig{i}\SIZE[2^{n^{\Omega(1)}}]$ cannot be proved in the theory $\Tpv^i$ consisting of all true $\forall \Sigma^b_{i-1}$-sentences in the language of Cook's theory $\PV$. 

Our argument employs a convenient game-theoretic witnessing result that can be applied to sentences of arbitrary quantifier complexity. We combine it with extensions of a technique introduced by Kraj\'{i}\v{c}ek \citep{DBLP:journals/jml/Krajicek11} that was recently employed by Pich and Santhanam \citep{PS21} to establish the unprovability of lower bounds in $\PV$ (i.e., the case $i =1$ above, but under a weaker formalization) and in a fragment of $\APC_1$.
\end{abstract}

\restoregeometry

\newpage

\small
\setcounter{tocdepth}{3}
\tableofcontents

\normalsize

\newpage

\section{Introduction}\label{sec:introduction}

Establishing unconditional lower bounds on the complexity of computations is one of the primary goals of the theory of computational complexity. While the field has seen progress in the setting of restricted computational devices, such as constant-depth Boolean circuits (e.g.,~\citep{DBLP:journals/apal/Ajtai83, DBLP:journals/mst/FurstSS84, DBLP:conf/stoc/Hastad86, Razborov87, DBLP:conf/stoc/Smolensky87,   DBLP:journals/jacm/Williams14}) and monotone Boolean circuits (e.g.,~\citep{Razborov:85a, andreev1985method,  DBLP:journals/combinatorica/AlonB87}), proving super-linear circuit size lower bounds against general (unrestricted) circuits (see, e.g., \citep{FGHK16,DBLP:conf/stoc/Li022}) and separating complexity classes remain longstanding challenges.

Several barrier results have been proposed to explain why techniques that have been successful in certain settings cannot lead to stronger results. The most well known of them are relativization \citep{DBLP:journals/siamcomp/BakerGS75}, natural proofs \citep{DBLP:journals/jcss/RazborovR97}, and algebrization \citep{DBLP:journals/toct/AaronsonW09} (see also \citep{DBLP:conf/stoc/FanL022, DBLP:journals/jacm/ChenHOPRS22} for recent examples). While knowledge of these results provides a practical way to check if some approaches are likely to fail, each of these barriers is formulated in an ad-hoc way and is limited in scope. For instance, the natural proofs barrier does not consider a standard notion of ``proof'' and can be circumvented using simple reductions (see, e.g.,~\citep{AK10, OS18_mag_first, DBLP:conf/focs/ChenJW19, DBLP:journals/jacm/ChenHOPRS22}). In general, the aforementioned barriers don’t really tell if we simply haven’t been clever enough to design a better lower bound technique,
or if there is a deeper, more fundamental reason for the difficulty of establishing complexity lower
bounds and separations.

This motivates the development of a more principled approach to investigate the difficulty of analysing computations and, perhaps more importantly, the intriguing possibility that strong complexity lower bounds might be unprovable from certain mathematical axioms. In order to implement this plan, the first step is to try to understand which logical theories are able to formalise a significant number of results in algorithms and complexity theory. There has been a long and highly successful line of research showing that certain fragments of Peano Arithmetic collectively known as \emph{Bounded Arithmetic} offer a robust class of theories for the formalization of both basic and advanced results in these areas. 

\vspace{0.2cm}
 
\begin{remarkn}{Bounded Arithmetic}
Theories of bounded arithmetic aim to capture mathematical proofs that manipulate concepts from a given complexity class (e.g.,~a proof by induction whose inductive hypothesis can be checked in polynomial time). Notable examples include Cook's theory $\PV$ \cite{Coo75}, which formalises polynomial-time reasoning, Je\v r\'abek's theory $\APC_1$ \cite{Jerabek07}, which formalises probabilistic polynomial-time reasoning, and Buss's theories $\mathsf{S}^i_2$ and $\mathsf{T}^i_2$, which correspond to the levels of the polynomial-time hierarchy \cite{Buss}. 

The correspondence between these theories and the complexity classes is reflected in several ways. For instance, certain \emph{witnessing results} show that every provably total function in a given theory $\mathsf{T}_\mathcal{C}$ (i.e.,~when $\forall x~\exists! y~\varphi(x,y)$ is provable, for certain formulas $\varphi$) is computable within the corresponding complexity class $\mathcal{C}$ (i.e., the function $y = f(x)$ is in $\mathcal{C}$). There are also close relationships between theories of bounded arithmetic and propositional proof systems, e.g., propositional translations between proofs of certain sentences in $\mathsf{PV}$ and $\mathsf{S}^1_2$ and polynomial-size proofs in the extended Frege proof system (see, e.g, \citep{beyersdorff2009correspondence} and references therein).

Weaker theories corresponding to more fine-grained complexity classes such as $\TC^0$ and $\NC^1$ and the mathematical theorems provable in each of them have also received considerable attention. For instance, key properties of the elementary integer arithmetic operations can be established in theory $\mathsf{VTC}^0$ \citep{jevrabek2022iterated},  expander graphs can be constructed and analyzed in theory $\mathsf{VNC}^1$ \citep{DBLP:journals/apal/BussKKK20}, and several results from linear algebra can be formalised in theory $\mathsf{VNC}^2$ \citep{DBLP:journals/jacm/TzameretC21}. We refer to Cook and Nguyen \citep{cook_nguyen_2010} and Kraj\'{i}\v{c}ek \citep{Krajicek-book, krajicek_2019} for more information about bounded arithmetic and the logical foundations of complexity theory. 
\end{remarkn} 

\vspace{1em}

\noindent \textbf{Complexity Lower Bounds in Bounded Arithmetic.} The study and formalization of complexity lower bounds proofs in bounded arithmetic dates back to Razborov \citep{razborov1995unprovability, Razborov-switching-l}. We refer to Pich \citep{DBLP:journals/apal/Pich15} and to M\"{u}ller and Pich \citep{DBLP:journals/apal/MullerP20} for a comprehensive survey of the area. In particular, the latter paper identifies Je\v r\'abek's theory $\APC_1$ \citep{Jerabek07} for probabilistic reasoning as a suitable theory for the formalization of several existing circuit lower bounds. (Informally, $\APC_1$ is defined as the extension of Cook's theory $\PV$ \citep{Coo75} with the dual weak pigeonhole principle for polynomial-time functions.) For instance, $\APC_1$ can prove super-polynomial lower bounds against bounded-depth circuits and against monotone circuits \citep{DBLP:journals/apal/MullerP20}, establish the PCP Theorem \citep{DBLP:journals/corr/Pich14} (also provable in $\PV$), and formalize randomized matching algorithms \citep{DBLP:journals/corr/abs-1103-5215}.

Given the expressive power of $\PV$ and its extensions,
\emph{unconditionally} showing that these theories cannot prove a given result is a non-trivial task. Remarkably, in a recent work, Pich and Santhanam \citep{PS21} employed a technique introduced by Kraj\'{i}\v{c}ek \citep{DBLP:journals/jml/Krajicek11} and further elaborated in \cite{DBLP:journals/apal/Pich15} to establish that $\PV$ cannot show strong complexity lower bounds separating $\NP$ and $\coNP$. More precisely, for each fixed non-deterministic polynomial-time machine $M$, they showed that $\PV$ cannot prove an average-case lower bound for $L(M)$ against co-nondeterministic circuits of size $2^{n^{o(1)}}$. 

In the same work, \citep{PS21} showed that this unprovability result extends to a certain fragment of $\APC_1$ (see \citep{PS21} for the details and for additional results). They left open the status of the provability of strong complexity lower bounds in $\APC_1$. This theory has also been identified in other papers (e.g.,~\citep{CKKO21}) as an important test case for unconditional unprovability results. This is unsurprising, given the number of advanced results from algorithms and complexity that can be formulated and proved in $\APC_1$ and its mild extensions (see \citep{thesis_KO, cook_nguyen_2010, thesis_DTML, thesis_Pich,  DBLP:journals/apal/MullerP20} for many additional examples).\\

\noindent \textbf{Witnessing Theorems and Quantifier Complexity.} The approach of \citep{PS21} crucially relies on the KPT Witnessing Theorem \citep{KrajicekPT91}, a result that can be used to extract computational information from a proof of a sentence with a small number of quantifier alternations. This and similar results (e.g.,~Herbrand's Theorem and Buss's Witnessing Theorem) have been extremely useful tools in unprovability results (see, e.g.,~\citep{DBLP:journals/jsyml/CookK07, DBLP:journals/tocl/Krajicek21, CKKO21}). In order to apply the usual formulation of these witnessing theorems, it is crucial to consider sentences with up to four blocks of quantifiers. In particular, for this reason, the machine $M$ in the aforementioned result from \citep{PS21} is quantified outside of the sentence (i.e.,~in the meta-theory). A similar challenge is faced in other papers that consider the unprovability of complexity statements in bounded arithmetic (see, e.g.,~\citep{KO17} and the subsequent papers \citep{DBLP:journals/aml/BydzovskyM20, BKO20}).\\

\noindent \textbf{Our Contributions.} We 
 obtain (unconditional) unprovability results for \emph{stronger theories}
and for \emph{sentences of higher quantifier complexity}. We can summarize our main contributions as follows.

\begin{enumerate}
    \item\label{enum: contribution 1} Building on previous works \cite{DBLP:journals/jml/Krajicek11, DBLP:journals/apal/Pich15, PS21}, we establish the unprovability of strong complexity lower bounds in $\APC_1$ and in more expressive theories of bounded arithmetic. The lower bound sentences showed unprovable refers to separations between the levels of the polynomial hierarchy, where we consider a non-uniform setting and an average-case lower bound against sub-exponential size circuits.
    \item We consider a more natural (and of higher quantifier complexity) formalization of complexity lower bounds compared with \cite{DBLP:journals/jml/Krajicek11, DBLP:journals/apal/Pich15, PS21}. To achieve this, we employ a convenient game-theoretic witnessing theorem that allows us to extract computational information from proofs of sentences with an arbitrary number of quantifiers. While extensions of the KPT Witnessing Theorem for formulas with more quantifiers have found applications in bounded arithmetic  \citep{pudlak1992relations, pudlak2006consistency, DBLP:journals/jsyml/BussKT14}, to our knowledge this is the first time that such a result is used for the unprovability of complexity bounds.
\end{enumerate}

In the next section, we discuss our results in detail.

\subsection{Results}

Before formally stating our main unprovability result, we introduce the theories $\Tpv^i$ and their common language (vocabulary) $\Lpv$.\\

\noindent \textbf{Theory $\Tpv^i$ and Language $\Lpv$.} We let $\Lpv$ contain the constant symbols $0$ and $1$, and a function symbol $f$ for every function in $\FP$, the class of polynomial-time computable functions.\footnote{For the reader familiar with bounded arithmetic, we note that in our setup considering polynomial-time functions is equivalent to considering polynomial-time algorithms. See \Cref{sec:prelim_logic} for more details.} In particular, $\Lpv$ contains function symbols for the length function $|x|$, addition $+$, etc. $\Lpv$ contains the equality predicate $=$ as its only relation symbol. Note that one can define any polynomial-time computable predicate through its characteristic function, equality, and the constant symbol $1$. 

For each integer $i\geq 1$, we let $\Tpv^i$ denote the theory of all true (with respect to the standard model $\mathbb{N}$) $\forall \Sigma^b_{i-1}$-sentences over the language $\Lpv$.\footnote{This is a standard class of sentences in bounded arithmetic. Informally, it means that the sentence starts with a block of universal quantifiers, followed by $i-1$ blocks of \emph{bounded} quantifiers, i.e., $\forall x\le t$ or $\exists x\le t$ for some term $t$. The formal definition will be given in \Cref{sec:prelim_logic}. (For the specialist, we note that allowing sharply bounded quantifiers would not change our unprovability results.)} In particular, the theory $\Tpv^1$ (which is called $\Tpv$ in \cite{PS21}) is strictly stronger than Cook's theory $\PV$.\footnote{We use $\PV$ to refer to its first-order formalization \citep{Coo75, KrajicekPT91}, also denoted by $\PV_1$ by some authors.} We provide some examples of sentences provable in $\Tpv^i$ after stating our main result.\\

\noindent \textbf{Formalization of Lower Bounds.} In order to consider the provability of a strong complexity lower bound separating the $i$-th level of the (non-uniform) polynomial hierarchy, we introduce a sentence $\LB^i(s_1,s_2,m,n_0)$ stating that, for every input length $n \geq n_0$, there is a $\Pi_i$-circuit $C$ of size $\leq s_1(n)$ such that, for every $\Sigma_i$-circuit $D$ of size $\leq s_2(n)$, we have
$$
\Pr_{x \sim \{0,1\}^n}\Big [C(x) = D(x) \Big] \;\leq\; 1 - \frac{m(n)}{2^n}.
$$
Here a $\Pi_i$-circuit $C$ (similarly for $\Sigma_i$ circuits) is simply a standard deterministic Boolean circuit $C(x,z_1,\ldots, z_i)$, where we define that 
$$
C(x) =1 \quad \text{if and only if}\quad \forall z_1\;\exists z_2 \; \ldots \; Q_i z_i~C(x,z_1,\ldots, z_i) = 1\;.
$$

Formally, let $\cSig{i}\SIZE[s(n)]$ and $\cPi{i}\SIZE[s(n)]$ refer to $\Sigma_i$-circuits and $\Pi_i$-circuits of size $s(n)$, respectively. Let $\LB^i(s_1,s_2,m,n_0)$ denote the following  $\Lpv$-sentence: 
\begin{align*}
&\forall n\in\Log\Log\text{ with }n\ge n_0~\exists C\in\cPi{i}\SIZE[s_1(n)]~\forall D\in\cSig{i}\SIZE[s_2(n)] \\ 
&\exists m=m(n)\mbox{ distinct $n$-bit strings }x^1,\dots,x^m \mbox{ s.t. } \mathsf{Error}(C,D,x^i)\mbox{ for all }i\in[m],
\end{align*}
where $\mathsf{Error}(C,D,x)$ means that the circuits $C$ and $D$ do not agree on the input $x$. For the reader that is not familiar with bounded arithmetic, the notation $n \in \Log\Log$ essentially means that all bounded quantifiers refer to objects of length up to $\mathsf{poly}(2^n)$. As in \citep{PS21}, this makes the unprovability result  stronger. Many existing circuit lower bound proofs can be formalized in $\APC_1$ without ever quantifying over objects of length larger than $\poly(n)$ \citep{DBLP:journals/apal/MullerP20}.

It's easy to see that $\mathsf{Error}(C,D,x)$ is the disjunction of a $\Sigma_i^b$-formula (stating that $C(x) = 0 \wedge D(x)=1$) and a $\Pi_i^b$-formula (stating that $C(x) = 1 \wedge D(x)=0$). We note that, already for $i = 1$, $\LB^i(s_1,s_2,m,n_0)$ is a $\forall \Sigma^b_4$-sentence. In particular, widely used witnessing results such as the KPT Theorem \citep{KrajicekPT91} (see \Cref{sec:witnessing_review} for a review) cannot be directly applied to it.

\vspace{1em}

\noindent \textbf{Main Unprovability Result.} Next, we state our main theorem on the unprovability of complexity lower bounds in $\Tpv^i$ and its corollary for $\APC_1$.  

\begin{theorem}[Main Theorem]\label{thm:intro_main_unprov}
For every $i\ge 1$, $n_0\in\bbN$, $\delta\in\bbQ\cap(0,1)$, and $d\ge 1$, $$\Tpv^i\nvdash\LB^i(s_1,s_2,m,n_0)\;,$$ where $s_1(n)=n^d$, $s_2(n)=2^{n^\delta}$, and $m=2^n/n$.
\end{theorem}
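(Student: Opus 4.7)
The plan is to argue by contradiction. Suppose $\Tpv^i\vdash\LB^i(s_1,s_2,m,n_0)$ with $s_1(n)=n^d$, $s_2(n)=2^{n^\delta}$, and $m=2^n/n$. Since $\LB^i$ has quantifier complexity at least $\forall\Sigma^b_4$ (and generally $\forall\Sigma^b_{i+3}$), the standard KPT Witnessing Theorem does not apply. The first step is therefore to invoke the game-theoretic witnessing theorem announced as Contribution~(ii) of the introduction. Specialized to $\Tpv^i$-provable sentences, it yields a Student--Teacher game of some constant depth $k$ in which the Student has a winning strategy computed in polynomial time with access to a $\Sigma^b_{i-1}$-oracle, uniformly in the Teacher's past moves. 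Unrolling this for $\LB^i$: in round $j$ the Student emits a candidate $\Pi_i$-circuit $C_j\in\cPi{i}\SIZE[n^d]$ as a $\Delta^p_i$-function of $(n,D_1,\dots,D_{j-1})$, and eventually proposes an alleged error set $\vec x\subseteq\{0,1\}^n$ of size $m(n)=2^n/n$ on which $C_j$ and $D_j$ are supposed to disagree.

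The goal is then to design an adversarial Teacher $T^*$ that, round by round, responds to each $C_j$ with a $\Sigma_i$-circuit $D_j\in\cSig{i}\SIZE[2^{n^\delta}]$ agreeing with $C_j$ on strictly more than $2^n-m(n)$ inputs. By the pigeonhole principle, any alleged error set of size $m(n)$ must then contain an input where $C_j$ and $D_j$ actually agree, invalidating the Student's error response. Composing these responses over the constantly many rounds of the game produces a single Teacher strategy that defeats the Student in every round, contradicting the guarantee of the witnessing theorem.

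Constructing such a $D_j$ is the main technical obstacle, and is where the argument extends the Kraj\'{i}\v{c}ek--Pich--Santhanam approach from $\PV$ to $\Tpv^i$. The key leverage is that $C_j$ admits a short algorithmic description: it is the output of a fixed polynomial-time, $\Sigma^b_{i-1}$-oracle procedure on input $(n,D_1,\dots,D_{j-1})$. A $\Sigma_i$-circuit of size $2^{n^\delta}$ can, in its outermost existential block, guess polynomial-length advice certifying the $\Sigma^b_{i-1}$-oracle answers used by the Student, and then use the remaining $i-1$ alternations to reconstruct and evaluate $C_j$ at a given input. The delicate point---and the step I expect to require most effort---is that naively evaluating a $\Pi_i$-circuit inside a $\Sigma_i$-machine would cost a ``wrong'' alternation; to avoid this one exploits (a) the slack $\delta<1$ between the $n^d$ budget of $C_j$ and the $2^{n^\delta}$ budget of $D_j$, and (b) a hardness-to-agreement transformation in the spirit of the $\PV$ argument of Pich and Santhanam, now relativized to a $\Sigma^b_{i-1}$-oracle so that only $o(2^n)$ inputs need to be handled via the short-description route while the remaining inputs are absorbed into the allowed Hamming slack $m(n)=2^n/n$. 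Iterating the construction across the bounded number of game rounds then produces $T^*$ and yields the desired contradiction.
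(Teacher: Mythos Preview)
Your plan has the right opening move---assume provability and invoke a witnessing theorem---but then inverts the logic of the argument in a way that leaves a genuine gap. You propose to build a Teacher $T^*$ who, handed any $C_j$, returns a $\Sigma_i$-circuit $D_j$ agreeing with $C_j$ on $>2^n-m(n)$ inputs, thereby \emph{defeating} the Student and contradicting the witnessing guarantee. But the existence of such $D_j$ is exactly the complexity question at stake: if $\LB^i$ happens to be \emph{true} in $\bbN$, no such $D_j$ exists for the hard $C_j$, and $T^*$ cannot be built. Your proposed construction does not circumvent this: the Teacher already \emph{has} $C_j$ in hand, so guessing oracle-answer certificates to ``reconstruct'' it is moot; the real obstacle is \emph{evaluating} a $\Pi_i$-circuit inside a $\Sigma_i$-circuit, which is precisely the wrong-alternation problem you flag. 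The appeal to ``Hamming slack'' and an unspecified ``hardness-to-agreement transformation'' does not discharge this.

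The paper's argument runs in the opposite direction: it never tries to defeat the truthifier, but rather \emph{exploits} the truthifier's guaranteed wins to derive a complexity upper bound, and then contradicts the \emph{soundness} of $\Tpv^i$. Concretely: (i) it first weakens to the worst-case sentence $\LBw^i$, and passes to the universal theory $\UTpv^i$ so that the sentence becomes $\forall\Sigma^b_4$ regardless of $i$; (ii) it applies the witnessing theorem and instantiates the falsifier with a \emph{specific} strategy based on the Nisan--Wigderson generator, namely $D_{w,C}(x)\eqdef\NW_{\overline{M}}(w,\,x\|C)$ for an arbitrary target $\Pi_i$-function $M$; (iii) the truthifier's wins, filtered through the bounded-intersection property of the NW design (so that the teacher's counterexamples depend on only $O(n^d)$ bits of the seed and can be hard-wired), produce a $\Sigma^p_{i-1}$-oracle circuit of size $2^{O(n^d)}$ that agrees with $M$ with probability $1/2+2^{-O(n^d)}$; (iv) hardness amplification within the polynomial hierarchy then boosts this to a $(1-1/n)$-approximation of any $\cPi{i}\SIZE[n^d]$ circuit by $\cSig{i}\SIZE[2^{n^\delta}]$ circuits, refuting $\LB^i$ in $\bbN$. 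Your proposal is missing both of the key mechanisms---the NW-generator falsifier that converts wins into upper bounds, and the hardness-amplification step that meets the $m(n)=2^n/n$ parameter.
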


 \Cref{thm:intro_main_unprov} extends the  result of \cite{PS21} in two directions. Firstly, it establishes the unprovability of strong complexity lower bounds in theories believed to be much stronger than $\Tpv^1$. Secondly, \citep{PS21} considered a weaker formalization that instead of quantifying over the circuit $C$ (inside the sentence) considers a collection of sentences $\{\LB^1_M\}_{M}$, one for each uniform non-deterministic machine $M$ (quantified over outside the theory). 

\vspace{0.2cm}
 
 \begin{examplen}{The Strength of Theory $\Tpv^i$}
These theories are quite strong already at small values of $i$, say $i=3$. Below we give some examples (see \Cref{sec:provability_in_Tpv} for a related discussion).
\begin{enumerate}
    \item Fermat's Little Theorem, which states that if $a^{p}\not\equiv a\pmod{p}$ then there is $1<d<p$ such that $d\mid p$, is a true $\forall \Sigma^b_1$-sentence in $\Lpv$ and consequently an axiom of $\Tpv^2$. It is unprovable in $\Tpv^1$ (therefore also unprovable in $\PV$) unless factoring is easy (see, e.g.,~\citep{DBLP:journals/iandc/KrajicekP98, cook_nguyen_2010}). 
    \item The Pigeonhole Principle, which states that for every circuit $C \colon [n+1]\to[n]$ there exists $x\ne y$ such that $C(x)=C(y)$, is also an axiom of $\Tpv^2$. It is not hard to show that even the weaker version of this principle (in which the circuit $C \colon [2n] \to [n]$) is unprovable in $\Tpv^1$ unless there is no (public-key) collision-resistant hash functions (see, e.g., \cite{krajivcek2001weak, Buss08}).
    \item The dual Pigeonhole Principle, which states that for every circuit $C \colon [n]\to[n+1]$ there exists $y\in[n+1]$ such that for all $x\in[n]$ we have $C(x)\ne y$, is in $\Tpv^3$. Even the weak version of this principle (in which the circuit $C \colon [n] \to [2n]$) is unprovable in $\Tpv^1$ unless EMPTY \citep{DBLP:conf/focs/Korten21} (also known as Range Avoidance \citep{DBLP:journals/eccc/RenSW22}) can be solved in polynomial time with $O(1)$ circuit-inversion oracle queries. %
    \item The induction principle for $\Sigma^p_i$-predicates is provable in $\Tpv^{i+2}$, while even the induction principle for $\NP$-predicates is unprovable in $\Tpv^1$ unless the polynomial-time hierarchy collapses \citep{KrajicekPT91, DBLP:journals/apal/Buss95, DBLP:journals/jsyml/Zambella96}.
\end{enumerate} 
\end{examplen}

\vspace{0.2cm}

Since every axiom of $\APC_1$ is implied by a true $\forall \Sigma^b_2$-sentence over the language $\Lpv$ in theory $\Tpv^3$ (see \Cref{sec:prelim} for the definition of $\APC_1$), every sentence provable in $\APC_1$ is also provable in $\Tpv^3$. Consequently, we get the following corollary to \Cref{thm:intro_main_unprov}, which shows that $\APC_1$ cannot establish strong complexity lower bounds separating the third level of the (non-uniform) polynomial hierarchy.

\begin{corollary}[Unprovability of Strong Complexity Lower Bounds in {$\APC_1$}]\label{cor:intro_APC1}
For every $n_0\in\bbN$, $\delta\in\bbQ\cap(0,1)$, and $d\ge 1$, $$\APC_1 \nvdash\LB^3(s_1,s_2,m,n_0)\;,$$ where $s_1(n)=n^d$, $s_2(n)=2^{n^\delta}$, and $m=2^n/n$.
\end{corollary}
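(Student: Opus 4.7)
The plan is to reduce the corollary directly to Theorem~\ref{thm:intro_main_unprov} by showing that $\APC_1$ is a sub-theory of $\Tpv^3$ at the level of $\Lpv$-provability, and then invoking the main theorem at $i = 3$ with the stated parameters. This is essentially the argument indicated in the paragraph preceding the corollary, so my job is to make the two steps precise.

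I would first verify the containment $\APC_1 \subseteq \Tpv^3$ (for consequences in the language $\Lpv$). Je\v{r}\'{a}bek's theory $\APC_1$ is axiomatized as Cook's $\PV$ extended by the dual weak pigeonhole principle schema $\mathrm{dWPHP}(\PV)$, which asserts for each $\PV$-function symbol $f$ that, for all parameters $a$, the function $f(\vec z, \cdot)$ cannot surject $\{0,\dots,a-1\}$ onto $\{0,\dots,2a-1\}$; schematically, $\forall a, \vec z \, \exists y < 2a \, \forall x < a \, (f(\vec z, x) \neq y)$. Because $\Lpv$ contains a function symbol for every polynomial-time function, the matrix $f(\vec z, x) \neq y$ is quantifier-free, so each instance is a true $\forall \Sigma^b_2$-sentence over $\Lpv$. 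The $\PV$-axioms are true universal $\Lpv$-sentences and hence are a fortiori true $\forall \Sigma^b_2$-sentences. Consequently, every axiom of $\APC_1$ lies in the axiom set of $\Tpv^3$, and so every $\Lpv$-sentence provable in $\APC_1$ is also provable in $\Tpv^3$.

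I would then apply Theorem~\ref{thm:intro_main_unprov} at $i = 3$ to the parameters $s_1(n) = n^d$, $s_2(n) = 2^{n^\delta}$, $m = 2^n / n$, obtaining $\Tpv^3 \nvdash \LB^3(s_1, s_2, m, n_0)$. By the containment established in the previous step, contraposition gives $\APC_1 \nvdash \LB^3(s_1, s_2, m, n_0)$, which is the statement of the corollary.

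The only non-routine aspect is the quantifier-complexity bookkeeping for $\mathrm{dWPHP}(\PV)$: one must confirm that the matrix is genuinely quantifier-free in $\Lpv$ once ``$f$ is polynomial-time'' is absorbed into a function symbol, and that any sharply bounded quantifiers used in the standard formalization do not push the sentence beyond $\forall \Sigma^b_2$. Both points are standard and are explicitly allowed by the footnote accompanying the definition of $\Tpv^i$ noting that sharply bounded quantifiers do not affect the unprovability results; no genuine obstacle arises beyond this check.
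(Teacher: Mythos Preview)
Your proposal is correct and matches the paper's own argument essentially verbatim: the paper observes (in the paragraph before the corollary and in Section~\ref{sec:prelim}) that every axiom of $\APC_1$ is a true $\forall\Sigma^b_2$-sentence in $\Lpv$, hence $\APC_1$ is a subtheory of $\Tpv^3$, and the corollary follows from Theorem~\ref{thm:intro_main_unprov} at $i=3$. The only cosmetic difference is that the paper formalizes $\dWPHP(f)$ with $n\in\Log$ and codomain $(1+1/n)\cdot 2^n$ rather than your $a\mapsto 2a$ version, but the quantifier count is unaffected.
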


\Cref{cor:intro_APC1} establishes the first unconditional result showing the unprovability of strong complexity lower bounds in $\APC_1$. Previously, \cite{PS21} obtained an extension of their result to a fragment of $\APC_1$, but left open the provability of the same collection of sentences in $\APC_1$. Our result is incomparable to theirs in this case, since  \Cref{cor:intro_APC1} refers to $\LB^3$ (the third level of the non-uniform polynomial hierarchy) instead of $\{\LB^1_M\}_{M}$.

\vspace{0.2cm}

\begin{remarkn}{\textbf{Relevance to the Logical Foundations of Complexity Theory}} The hypothesis that $\mathsf{P} \neq \mathsf{PH}$ (which is equivalent to $\mathsf{P} \neq \mathsf{NP}$) can be interpreted as the statement that polynomial time computations cannot simulate a finite number of bounded quantifier alternations. Our unconditional unprovability result, on the other hand, establishes that $\Tpv^i$, the strongest (sound) theory operating with $\forall \Sigma^b_{i-1}$ axioms over $\Lpv$, cannot strongly separate the $i$-th level of the polynomial hierarchy. 

If the lower bound stated by the $\LB^i$ sentence is true, our result indicates the existence of a fundamental limitation of this theory in reasoning about computations at the $i$-th level of the hierarchy and above. In contrast to previous works, which were restricted to subtheories of $\APC_1$, a significant aspect of \Cref{thm:intro_main_unprov} is showing that this phenomenon is not caused by a potential weakness of the theory at hand.
\end{remarkn}

\subsection{Techniques}
\label{sec:general-ph:intuition}

In order to prove \Cref{thm:intro_main_unprov}, we formulate a game-theoretic witnessing theorem that can be applied to sentences of high quantifier complexity, such as $\LB^i(s_1,s_2,m,n_0)$. Our general framework is  similar to an extension of the KPT Witnessing Theorem considered by Buss, Kołodziejczyk, and Thapen \citep{DBLP:journals/jsyml/BussKT14}.\\

\noindent \textbf{A Game-Theoretic Witnessing Theorem for General Formulas.} For a language (vocabulary) $\calL$, let $\varphi(x)$ be a bounded $\calL$-formula defined as
\begin{align*}
  \varphi(x)~\triangleq~&\exists y_1\le t_1(x)~\forall x_1\le s_1(x,y_1)~\exists y_2\le t_2(x,y_1,x_1)\dots \forall x_{k-1}\le s_{k-1}(x,y_1,x_1,\dots,y_{k-1})\\
  ~&\exists y_k\le t_k(x,y_1,x_1,\dots,y_{k-1}, x_{k-1})~\forall x_{k}\le s_k(x,y_1,x_1,\dots,y_{k})~\phi(x,x_1,\dots,x_k, y_1,\dots,y_k),
\end{align*}
where $\phi(x, \vec{x}, \vec{y})$ is a quantifier-free $\calL$-formula. We would like to extract computational information from the provability of $\forall x\, \varphi(x)$ in a theory $\calT$. We achieve this by showing that the provability of this sentence is \emph{equivalent} to the existence of a \emph{winning strategy} in a certain \emph{game}. Moreover, the winning strategy will be computable using \emph{terms} of $\calL$. Consequently, if the interpretation of each term in a given model $\calM$ of $\calT$ has limited computational complexity, we obtain a \emph{computationally bounded} winning strategy. For simplicity, we discuss the game only informally below, deferring the formal details to \Cref{sec:witnessing-1}.

We consider an interactive game between two players, the \emph{truthifier} (associated with existential quantifiers in $\varphi$) and the \emph{falsifier} (associated with universal quantifiers in $\varphi$). A \emph{board} is defined as a pair $(\calM,n_0)$, where $\calM$ is a structure over  $\calL$ such that $\calM\vDash\calT$, and $n_0\in\calM$ is an element of its domain. The \emph{evaluation game} for the formula $\varphi(x)$ on the board $(\calM,n_0)$ is played as follows: in the $i$-th round of the game ($1\le i\le k$), the truthifier firstly chooses an assignment $m_i\in\calM$ for $y_i$ such that $m_i\le t_i(n_0,m_1,n_1\dots,m_{i-1}, n_{i-1})$, then the falsifier chooses an assignment $n_{i}\in\calM$ for $x_i$ such that $n_i\le s_{i}(n_0,m_1,n_1, \dots,m_{i})$. The truthifier \emph{wins} if and only if $\phi(x/n_0,\vec x/\vec n,\vec y/\vec m)$ holds in $\calM$. (Note that when playing on a board $(\calM, n_0)$ we set $x$ in $\varphi(x)$ to $n_0$.)

We will also consider a more general game called the \emph{tree exploration game}. In more detail, we allow the truthifier and falsifier to simultaneously play different evaluation games over the same board $(\calM,n_0)$. The truthifier has a positional advantage over the falsifier: it can decide where to make the next move, i.e., by either
\begin{enumerate}
    \item making the next move in one of the current games; or
    \item starting a new evaluation game over the board $(\calM,n_0)$; or
    \item playing differently some earlier play, which creates a new game from that position but maintains the existing game plays.
\end{enumerate}
The falsifier must respond to the move of the truthifier  in the corresponding evaluation game. Note that the next assignment selected by each player now depends on previous plays in all concurrent games. The truthifier \emph{wins} the tree exploration game if there is a node $u$ in the current partial game tree that is a winning node for the truthifier, that is, the concatenation of the pairs of elements labelling the edges on the root-to-$u$ path forms a winning transcript of the truthifier in the evaluation game of $\varphi(x)$ on the board $(\calM,n_0)$. The \emph{tree exploration game of $\varphi(x)$} is defined as the tree exploration game starting from a partial game tree containing only the root node. See \Cref{fig:tree-exploration-game} for an example of a transcript of the tree exploration game.

\begin{figure}
    \centering
    \begin{subfigure}{0.21\linewidth}
    \begin{tikzpicture}[nodes={circle}, ->]
        \node (r) [draw] {1}
        child { 
            node (n1) [draw] {2} edge from parent [->]; 
        };
        \path (r) -- (n1) node[midway,left] {$(4,\cdot)$};
    \end{tikzpicture}
    \caption{The truthifier adds a child $(2)$ from the root and a label $4$.}
    \end{subfigure}
    ~
    \begin{subfigure}{0.21\linewidth}
    \begin{tikzpicture}[nodes={circle}, ->]
        \node (r) [draw] {1}
        child { 
            node (n1) [draw] {2} edge from parent [->]; 
        };
        \path (r) -- (n1) node[midway,left] {$(4,2)$};
    \end{tikzpicture}
    \caption{The falsifier's response is $2$. Node $(2)$ is not a winning node.}
    \end{subfigure}
    ~
    \begin{subfigure}{0.21\linewidth}
    \begin{tikzpicture}[nodes={circle}, ->]
        \node (r) [draw] {1}
        child { 
            node (n1) [draw] {2} edge from parent [->]; 
        }
        child {
            node (n2) [draw] {3} edge from parent [->];
        };
        \path (r) -- (n1) node[midway,left] {$(4,2)$};
        \path (r) -- (n2) node[midway,right] {$(5,\cdot)$};
    \end{tikzpicture}
    \caption{The truthifier adds a child (3) from the root and a label $5$.}
    \end{subfigure}
    ~
    \begin{subfigure}{0.21\linewidth}
    \begin{tikzpicture}[nodes={circle}, ->]
        \node (r) [draw] {1}
        child { 
            node (n1) [draw] {2} edge from parent [->]; 
        }
        child {
            node (n2) [draw,double] {3} edge from parent [->];
        };
        \path (r) -- (n1) node[midway,left] {$(4,2)$};
        \path (r) -- (n2) node[midway,right] {$(5,3)$};
    \end{tikzpicture}
    \caption{Clearly a winning node is reached regardless of the falsifier's move.}
    \end{subfigure}
    \caption{A transcript of the tree exploration game for $\varphi(x)=\exists y\le 2x~\forall z< y~(y\ge x\land (z=1\lor z\nmid y))$ (``there is a prime number within $[x,2x]$'') on the board $(\bbN, 3)$. The truthifier wins by reaching node $(3)$.} %
    \label{fig:tree-exploration-game}
\end{figure}
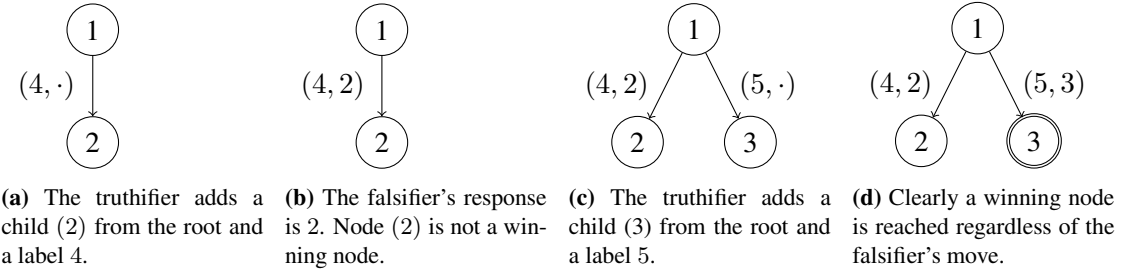

An \emph{$\calL$-strategy} of the truthifier in the tree exploration game is described by a sequence of $\calL$-terms, where each term describes the next move of the truthifier. Finally, a length-$\ell$ $\calL$-strategy is said to be a \emph{universal winning strategy} if the truthifier wins within $\ell$ moves against all strategies (not necessarily $\calL$-strategies) of the falsifier on any board $(\calM,n_0)$. (The ``universality'' of the strategy comes from the fact that it succeeds over any board $(\calM,n_0)$ and against any strategy of the falsifier. Moreover, the location of the next move of the truthifier in the game tree will be independent of the board and of the strategy of the falsifier.)

Recall that a theory $\calT$ is said to be a universal theory if every axiom of $\calT$ is of the form $\forall \vec{z}\,\psi(\vec{z})$, where $\psi(\vec{z})$ is a formula free of quantifiers. We show that the provability of the sentence $\forall x \, \varphi(x)$ in a universal theory $\calT$ with a certain closure property is equivalent to the existence of a universal winning $\calL$-strategy of length $O(1)$ for the truthifier in the tree exploration game of $\varphi(x)$.

\begin{theorem}[Game-Theoretic Witnessing Theorem]\label{thm:intro_witnessing-tree-exploration}
    Let $\calT$ be a universal bounded theory  with vocabulary $\calL$ that is closed under if-then-else (see \Cref{def:closed-under-if-then-else}). Let $\varphi$ be a bounded $\calL$-formula  of the form
    \begin{align*}
    \varphi(x)~\triangleq~&\exists y_1\le t_1(x)~\forall x_1\le s_1(x,y_1)~\exists y_2\le t_2(x,y_1,x_1)\dots \forall x_{k-1}\le s_{k-1}(x,y_1,x_1,\dots,y_{k-1})\\
  ~&\exists y_k\le t_k(x,y_1,x_1,\dots,y_{k-1}, x_{k-1})~\forall x_{k}\le s_k(x,y_1,x_1,\dots,y_{k})~\phi(x,x_1,\dots,x_k, y_1,\dots,y_k),
\end{align*}
where $\phi(x, \vec{x}, \vec{y})$ is a quantifier-free $\calL$-formula. Then $\calT\vdash\forall x~\varphi(x)$ if and only if there is a universal winning $\calL$-strategy of length $O(1)$ for the truthifier in the corresponding tree exploration game of $\varphi(x)$. 
\end{theorem}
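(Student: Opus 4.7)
The plan is to prove the two implications of the equivalence separately, with the main technical content residing in the forward direction.

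For the easier reverse direction, suppose there is a universal winning $\calL$-strategy of length $\ell = O(1)$ for the truthifier. The strategy is specified by finitely many $\calL$-terms describing the truthifier's moves as functions of $x$ and the falsifier's previous responses in the concurrent games. Then, in any model $\calM \vDash \calT$ and any $n_0 \in \calM$, the strategy wins against every falsifier strategy on the board $(\calM, n_0)$. A winning node in the explored tree yields, along the root-to-leaf path, explicit witnesses $y_1, \ldots, y_k$ verifying $\varphi(n_0)$ against any falsifier choices $x_1, \ldots, x_k$. Hence $\calM \vDash \forall x\, \varphi(x)$, so $\calT \vDash \forall x\, \varphi(x)$, and provability follows by the completeness theorem.

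For the forward direction, the plan is to invoke a Herbrand-style witnessing result. Consider the negation $\neg \varphi(x) \equiv \forall y_1\, \exists x_1\, \forall y_2 \dots \forall y_k\, \exists x_k\, \neg\phi$, Skolemize the existential quantifiers by fresh function symbols $G_1(y_1), G_2(y_1, y_2), \ldots, G_k(y_1, \ldots, y_k)$, and introduce a fresh constant $c$ in place of the outer $x$. Since $\calT$ is universal and $\calT \vdash \forall x\, \varphi(x)$, the extension $\calT \cup \{\forall \vec y\, \neg\phi(c, G_1(y_1), \ldots, G_k(\vec y), \vec y)\}$ is inconsistent. By Herbrand's theorem for universal theories, there exist finitely many tuples of $\calL \cup \{c, G_1, \ldots, G_k\}$-terms $(\vec y^{(j)})_{j=1}^N$ such that
\[
    \calT \vdash \bigvee_{j=1}^N \phi\bigl(c,\, G_1(y^{(j)}_1),\, \ldots,\, G_k(\vec y^{(j)}),\, \vec y^{(j)}\bigr).
\]
The nested occurrences of the Skolem symbols $G_i$ inside the $\vec y^{(j)}$ encode a tree: each Herbrand instance $j$ corresponds to a root-to-leaf play, and each application $G_i(\vec u)$ refers to the falsifier's response at an earlier position in the tree. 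The final step is to read this Herbrand tree as a universal $\calL$-strategy of length $N$ in the tree exploration game. The truthifier's $i$-th move in the $j$-th game is the pure $\calL$-term obtained from $y^{(j)}_i$ after interpreting every $G_{i'}(\vec u)$ as the falsifier's actual response to the truthifier's moves $\vec u$ in the appropriate earlier subgame. Because the tree exploration game lets the truthifier launch a new game or branch from any prior position, the nested Herbrand terms fit this format directly, and the Herbrand identity guarantees that some leaf must be winning on any board and against any falsifier strategy.

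The main obstacle is the bookkeeping that aligns the nested Skolem-term dependencies coming from Herbrand's theorem with the genuine branching structure of the tree exploration game, and making sure that the truthifier's chosen move at each node is itself an $\calL$-term rather than a term in the enlarged Skolemized language. This is exactly where the \emph{closure under if-then-else} hypothesis becomes essential: merging finitely many Herbrand disjuncts into a single winning strategy with a uniform tree layout may require conditional moves that branch on quantifier-free predicates evaluated at the falsifier's responses, and these conditional terms must live in $\calL$. I expect this matching to be the technically heaviest part of the argument, proceeding along the template of the KPT Witnessing Theorem \cite{KrajicekPT91} and its generalization to more quantifier alternations by Buss, Kołodziejczyk, and Thapen \cite{DBLP:journals/jsyml/BussKT14}.
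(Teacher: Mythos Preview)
Your proposal is essentially correct and close to the paper's own argument. I note two points where the paper differs or is more careful.

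For the forward direction, the paper invokes the \emph{$\lor$-expansion} form of Herbrand's theorem rather than Skolemizing the negation: a prenexification of a strong $\lor$-expansion of $\forall x\,\varphi(x)$ admits a witnessing substitution, and the tree is read directly from the nesting of existential quantifiers in the $\lor$-expansion. Your Herbrandization route is dual to this and equally valid; indeed the paper uses exactly your approach in its self-contained proof of the weaker oblivious-falsifier corollary. One step you elide but the paper treats explicitly is the handling of the \emph{bounded} quantifiers: they first establish the result for an auxiliary \emph{unbounded} tree exploration game, and then use closure under if-then-else to truncate the extracted terms so that every truthifier move respects its bound (defaulting to $0$ when the term overshoots). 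This, rather than ``merging disjuncts into a uniform tree layout,'' is where the if-then-else hypothesis is actually used.

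For the reverse direction, your phrasing that a winning node ``yields witnesses $y_1,\ldots,y_k$ verifying $\varphi(n_0)$ against any falsifier choices $x_1,\ldots,x_k$'' is not quite right: the winning path only certifies correctness against the specific $x_i$'s the falsifier chose in that play. The paper argues by contrapositive: if $\calM\not\vDash\varphi(n_0)$, the falsifier has a winning strategy in the evaluation game, and playing it obliviously in the tree exploration game (responding at each node based only on the root-to-node path) defeats every truthifier strategy, so no universal winning $\calL$-strategy can exist.
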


Beyond its applicability to sentences with an arbitrary number of quantifiers, we stress that two keys aspects of \Cref{thm:intro_witnessing-tree-exploration} are that the winning strategy is computed by $\calL$-terms and that the truthifier wins in constantly many rounds. (In practice, in order to use this result to obtain computational information from a proof, one typically fixes a particular strategy of the falsifier, which depends on the context and intended application.)

    It is possible to show that \Cref{thm:intro_witnessing-tree-exploration} is a generalization of the KPT Witnessing Theorem \citep{KrajicekPT91}: If the formula $\varphi(x)$ is an $\exists\forall$-formula, the evaluation game for $\varphi$ has only one round; this means that the tree exploration game for $\varphi$ is essentially  a  sequential repetition of the evaluation game (which is equivalent to the Student-Teacher game given by KPT Witnessing Theorem; see \Cref{thm:KPT} and \cite{KrajicekPT91,DBLP:journals/apal/Pich15}). Indeed, KPT witnessing can also be derived from a less general result that we present in \Cref{sec:witnessing_special_case} as a corollary of \Cref{thm:intro_witnessing-tree-exploration} and that is sufficient for the proof of \Cref{thm:intro_main_unprov}.

    We discuss \Cref{thm:intro_witnessing-tree-exploration} in detail in \Cref{sec:witnessing-1} and \Cref{sec:appendix_witnessing}. In contrast to the model-theoretic approach of \citep{DBLP:journals/jsyml/BussKT14}, we establish \Cref{thm:intro_witnessing-tree-exploration} using techniques from proof theory. As we explain below, in our main application we will actually work with a simplified and more convenient framework that might be of independent interest.

    \vspace{0.3cm}

\begin{example}\label{example: non-constructive}
    Why does the provability in a universal theory $\calT$ correspond to the tree exploration game instead of the simpler evaluation game? As a conceptual example, one may consider the well-known non-constructive proof of the existence of two irrational numbers $x,y$ such that $x^y$ is rational. By the Law of Excluded Middle (i.e., $A$ or $\neg A$), one can easily argue that either $(x,y)=(\sqrt 2,\sqrt 2)$ or $(x,y)=((\sqrt 2)^{\sqrt 2},\sqrt 2)$ will be the required pair of irrational numbers. However, we cannot figure out which one of these two possibilities is the correct answer from the structure of this proof. Nevertheless, we can convince any opponent that the original statement is true by a two-round ``tree exploration game'': we first propose $(x,y)=((\sqrt 2)^{\sqrt 2},\sqrt 2)$ and, in case that the opponent argues that $(\sqrt 2)^{\sqrt 2}$ is rational, we propose $(\sqrt 2,\sqrt 2)$ instead. Similarly, the truthifier's strategy extracted from the $\GG$-proof is not guaranteed to witness the existential quantifiers in one shot; it might need to interact with the falsifier for constantly many rounds to produce a correct answer (and each current move of the truthifier can depend on previous moves of both players). 
\end{example}
\vspace{1em}

\noindent \textbf{Unprovability of Strong Complexity Lower Bounds.} The proof of \Cref{thm:intro_main_unprov} extends the approach of \cite{PS21}, which explores a technique from \cite{DBLP:journals/jml/Krajicek11, DBLP:journals/apal/Pich15}. The main challenge for us is that we must consider the significantly more powerful theory $\Tpv^i$ and the (un)provability of a sentence $\LB^i(s_1,s_2,m,n_0)$ with a larger number of quantifier alternations. In particular, while \cite{PS21} considered the provability of a strong complexity lower bound against a fixed machine $M$, the sentence $\LB^i(s_1,s_2,m,n_0)$ merely states that there exists a strong separation between $\Pi_i$ circuits vs $\Sigma_i$ circuits. This introduces an additional technical difficulty that requires us to also revisit and extend the approach of \cite{DBLP:journals/jml/Krajicek11, DBLP:journals/apal/Pich15}.  

Suppose, toward a contradiction, that  $$\Tpv^i\vdash\LB^i(s_1,s_2,m,n_0)\;,$$ where $s_1(n)=n^d$, $s_2(n)=2^{n^\delta}$, and $m=2^n/n$. In other words, we assume that the theory $\Tpv^i$ proves that for every $n\geq n_0$ there is a $\Pi_i$-circuit $C_n$ of size $\leq n^d$ such that, for every $\Sigma_i$-circuit $D_n$ of size $\leq 2^{n^\delta}$,
$$
\Pr_{x \sim \{0,1\}^n}\Big [C_n(x) = D_n(x) \Big] \;\leq\; 1 - \frac{1}{n}.
$$

The key idea behind the argument is that the proof of a strong complexity \emph{lower bound} in bounded arithmetic yields a corresponding complexity \emph{upper bound}. We then argue that the lower bound and the upper bound \emph{contradict each other}. From this, the unprovability of the lower bound sentence follows. 

In more detail, our high-level strategy is as follows:
\begin{itemize}
    \item[(\emph{i})] The provability of the average-case lower bound sentence $\LB^i(s_1,s_2,m,n_0)$ implies the provability in $\Tpv^i$ of a \emph{worst-case} lower bound for $\cPi{i}\SIZE[n^d]$ vs $\cSig{i}\SIZE[2^{n^\delta}]$. The latter is formalized by a sentence $\LBw^i(s_1,s_2,n_0)$.
    \item[(\emph{ii})] From any $\Tpv^i$-proof of $\LBw^i(s_1,s_2,n_0)$, we show how to extract a \emph{complexity upper bound} for an \emph{arbitrary} $\Pi_i$-circuit $E_m(x)$ over an input $x$ of length $m$ and of size at most $\poly(m)$. (This is done outside the theory $\Tpv^i$.) More precisely, we show that there is a deterministic circuit $B_m$ with $\Sigma^p_{i-1}$-oracle gates and of size $\leq 2^{m^{o(1)}}$ such that
    $$
        \Pr_{x \sim \{0,1\}^m}[E_m(x) = B_m(x)] \geq 1/2 + 2^{-m^{o(1)}}.
    $$
    \item[(\emph{iii})] We invoke a hardness amplification result for the (non-uniform) polynomial hierarchy to conclude that, on any large enough input length $n$, \emph{every} $\Pi_i$-circuit $C_n$ of size $\leq n^d$ agrees with some $\Sigma_i$-circuit $D_n$ of size $\leq 2^{n^{\delta}}$ on more than a $1 - 1/n$ fraction of the inputs. (If this is not the case, we would be able to use hardness amplification to contradict the previous item.)
\end{itemize}
Since $\Tpv^i$ is a \emph{sound} theory, i.e., every theorem of $\Tpv^i$ is a true sentence, Item (\emph{iii}) is in contradiction with the complexity lower bound stated in $\LB^i(s_1,s_2,m,n_0)$. Consequently, $\Tpv^i$ does not prove this sentence.

Item (\emph{i}) is trivial, since the provability of an average-case lower bound immediately yields the provability of a worst-case lower bound against circuits of the same size. Item (\emph{iii}) requires an extension of a hardness amplification result of Healy, Vadhan, and Viola \cite{DBLP:journals/siamcomp/HealyVV06} to higher levels of the polynomial hierarchy. We verify that this is possible in \Cref{sec:hard_ampl} and \Cref{sec: lmms for hardness amp}. The most challenging step of the proof is Item (\emph{ii}), which we discuss next.\\

\noindent \emph{General upper bounds from the provability of a complexity lower bound.} In Item (\emph{ii}) we aim to extract computational information from a proof of $\LBw^i(s_1,s_2,n_0)$ in $\Tpv^i$. For this, we would like to invoke our game-theoretic witnessing theorem (\Cref{thm:intro_witnessing-tree-exploration}). Since this result can only be applied to a \emph{universal theory}, the first step is to introduce a convenient universal theory that is at least as powerful as $\Tpv^i$. Using standard techniques from logic and similarly to \citep{KrajicekPT91}, we construct a universal theory $\UTpv^i$ with all the necessary properties (see \Cref{thm:universal_theory} in \Cref{sec:universal_theory}). While the axioms of $\UTpv^i$ are structurally simpler (i.e., universal sentences), the terms of $\UTpv^i$ no longer correspond to polynomial-time functions. However, a careful construction of $\UTpv^i$ ensures that its terms (when interpreted over the standard model) correspond to functions in 
 $\mathsf{FP}^{\Sigma^p_{i-1}}$, which will be sufficient for our purposes. In addition to the (syntactic) simplification of the axioms of $\Tpv^i$, a benefit of $\UTpv^i$ is that the worst-case lower bound sentence $\LBw^i(s_1,s_2,n_0)$, whose quantifier complexity grows with $i$, simplifies to a $\forall \Sigma_4^b$-sentence $\mathsf{ULB}_{\mathsf{wst}}^i(s_1,s_2,n_0)$ in the vocabulary of $\UTpv^i$. %

 Since $\mathsf{ULB}_{\mathsf{wst}}^i(s_1,s_2,n_0)$ is also provable in the universal theory $\UTpv^i$, we can invoke the game-theoretic witnessing theorem  with $\calT = \UTpv^i$ and on the formula $\varphi(x)$ corresponding to $\mathsf{ULB}_{\mathsf{wst}}^i(s_1,s_2,n_0)$. (For this overview, think of $x$ as the input length $n$.) Consequently, there is a universal winning $\calL(\UTpv^i)$-strategy for the truthifier (existential player) in the \emph{tree exploration game} of $\varphi(x)$. In particular, for every input length $n \geq n_0$, the truthifier has a winning strategy computed by functions in $\mathsf{FP}^{\Sigma^p_{i-1}}$ that succeeds within $O(1)$ plays in producing a $\Pi_i$-circuit $C_n$ of size $\leq n^d$ that cannot be computed (in the worst case) by $\Sigma_i$-circuits of size $\leq 2^{n^{\delta}}$. 
 
 The plan for the remainder of the proof is to fix a \emph{particular strategy of the falsifier}, which will depend on the circuit $E_m$ from Item (\emph{ii}) that we would like to approximate, and to show that using the $\mathsf{FP}^{\Sigma^p_{i-1}}$-computable winning strategy of the truthifier we can obtain a good circuit $B_m$ for $E_m$.

Similarly, in the simpler context of the Student-Teacher game obtained from the KPT Witnessing Theorem and for a worst-case lower bound sentence that refers to a fixed machine $M$, \cite{DBLP:journals/jml/Krajicek11, DBLP:journals/apal/Pich15, PS21} showed that an average-case complexity upper bound follows from the provability of a worst-case lower bound. We provide a simple example of how this can be done in \Cref{sec:PichSanthanam}, when we discuss Student-Teacher games with a single round in the context of \citep{PS21}. For games with more than one round, techniques from pseudorandomness and a more elaborated strategy that employs the Nisan-Wigderson generator \citep{NW} play an important role in the argument from \cite{DBLP:journals/jml/Krajicek11, DBLP:journals/apal/Pich15, PS21}.

In our context, the following difficulties arise:
\begin{itemize}
    \item[(1)] We need to consider the considerably more complicated tree exploration game played between the truthifier and the falsifier.
    \item[(2)] The machine $M$ becomes an arbitrary circuit $C'$ that the falsifier proposes as a candidate hard function, and different circuits can be proposed until the winning strategy of the truthifier succeeds in producing a hard circuit $C_n$. 
\end{itemize}

 We are able to avoid a difficult analysis in Item (1) by considering a simpler setting of the tree exploration game that is sufficient for our purposes. In more detail, when considering the strategy for the falsifier based on the circuit $E_m$ that we would like to approximate, the play of the falsifier in the current node of the game tree only depends on the partial play of the \emph{evaluation game} corresponding to the moves of both players in the root-to-node path of the \emph{tree exploration game}. This simpler framework is developed in \Cref{sec:witnessing_special_case}, and we believe that it might find more applications in the investigation of the logical foundations of algorithms and complexity theory. 

Finally, in order to address Item (2), we show that it is possible to modify the use of the Nisan-Wigderson generator in \cite{DBLP:journals/jml/Krajicek11, DBLP:journals/apal/Pich15}  when defining the strategy of the falsifier so that even if the truthifier changes the candidate hard circuit $O(1)$ times when we execute its winning strategy, we are still able to obtain a non-trivial complexity upper bound for $E_m$. We refer to \Cref{sec:unprovability_main_section} for the technical details.  %

\subsection{Organization}

We intended to make the exposition accessible to a broad audience and in particular to someone that might not be so familiar with bounded arithmetic. The remaining sections of the paper are organised as follows:
\begin{itemize}[leftmargin=*,topsep=1em,itemsep=0pt]
    \item[--] \Cref{sec:prelim}  fixes notation and presents some basic definitions and useful tools in logic and complexity. \vspace{0.1cm}
    \item[--] \Cref{sec:witnessing-1} formalizes the game-theoretic witnessing theorem  (\Cref{thm:intro_witnessing-tree-exploration}). We defer its proof to \Cref{sec:appendix_witnessing}. A simpler version that is sufficient for the proof of \Cref{thm:intro_main_unprov} is derived in \Cref{sec:witnessing_special_case}.\vspace{0.1cm}
    \item[--] \Cref{sec:PichSanthanam} provides an exposition of Kraj\'{i}\v{c}ek's technique \citep{DBLP:journals/jml/Krajicek11} (further elaborated in \citep{DBLP:journals/apal/Pich15}) and of the main unprovability result from Pich and Santhanam \cite{PS21} in a language that will be more convenient when discussing our proofs. \vspace{0.1cm}
    \item[--] \Cref{sec:unprovability_main_section} combines and extends results from the previous sections in order to establish \Cref{thm:intro_main_unprov}. \vspace{0.1cm} 
    \item[--] \Cref{sec:provability_in_Tpv} discusses provability in the theories $\Tpv^i$ and relates their strength to certain  computational assumptions. \vspace{0.1cm}
    \item[--] \Cref{sec:appendix_witnessing} contains the proofs of the  witnessing theorems. A proof of \Cref{thm:intro_witnessing-tree-exploration} using sequent calculus appears in \Cref{sec:proof_herbrand}. \Cref{sec:no-counterexample} provides a self-contained proof of the witnessing theorem presented in \Cref{sec:witnessing_special_case} using Herbrandization instead of sequent calculus. \vspace{0.1cm}
    \item[--] Appendices \ref{sec: lmms for hardness amp}, \ref{sec:universal-theory}, and \ref{sec:counting_lemma} contain omitted proofs from Sections  \ref{sec:prelim} and \ref{sec:unprovability_main_section}. \vspace{0.1cm}
\end{itemize}

\vspace{0.3cm}

\noindent \textbf{Acknowledgements.} We would like to thank the anonymous STOC reviewers for their helpful comments and for bringing to our attention the witnessing result presented in \citep[Section 2.3]{DBLP:journals/jsyml/BussKT14}. In addition, we thank J\'{a}n Pich for answering questions about \citep{PS21}, Anupam Das for a discussion on extracting computational content from proofs,  and Junhua Yu  for comments on the game-theoretic witnessing argument. We also thank Jan  Kraj\'{i}\v{c}ek for providing valuable feedback on  different parts of the paper. Finally, we are grateful to Marco Carmosino, Emil Je\v r\'abek, Valentine Kabanets, Rahul Santhanam, Antonina Kolokolova, and Lijie Chen  for related discussions. This work received support from the Royal Society University Research Fellowship URF$\setminus$R1$\setminus$191059, the EPSRC New Horizons Grant EP/V048201/1, and the Centre for Discrete Mathematics and its Applications (DIMAP) at the University of Warwick.

\section{Preliminaries}\label{sec:prelim}

This section fixes notation and presents some basic definitions and useful tools in logic and complexity. 

\subsection{Complexity theory}

Given a function $t \colon \mathbb{N} \to \mathbb{N}$, we generalize the definition of each level of the polynomial hierarchy to machines that run in time $t(n)$ in the natural way. For a fixed $i \geq 1$, we let $\cPi{i}\TIME[t]$ denote the set of languages $L$ that admit a deterministic machine $A$ running in time $t(n)$ such that, for every $x \in \{0,1\}^n$, 
$$x \in L \quad \Longleftrightarrow \quad \forall z_1 \in \{0,1\}^{\leq t(n)}\; \exists z_2 \in \{0,1\}^{\leq t(n)} \ldots \;Q_i z_i \in \{0,1\}^{\leq t(n)}\;A(x,z_1, \ldots, z_i) = 1.$$
The class $\cSig{i}\TIME$ is defined in an analogous way. This generalises the classes $\Sigma^p_i$ and $\Pi^p_i$ corresponding to the $i$-th level of the polynomial hierarchy.

We consider (non-uniform) Boolean circuits over a standard set of gates of fan-in at most two, such as  $\{\land,\lor,\lnot\}$. The size of a circuit is the number of gates in the circuit. We adopt this convention only for concreteness, as our results are robust and do not depend on specific details of the circuit model. We let $\mathsf{SIZE}[s]$ denote the set of languages that admit non-uniform Boolean circuits of size $s(n)$.

We also consider circuits and corresponding circuit classes obtained by extending deterministic circuits to circuits with a constant number of alternations. For a fixed $i \geq 1$, we say that a language $L \in \Sigma_i\text{-}\mathsf{SIZE}[s]$ if there is a sequence $\{C_n\}_{n \geq 1}$ of deterministic Boolean circuits $C_n$ of size $s(n)$ such that, for every $x \in \{0,1\}^n$,
$$
x \in L \quad \Longleftrightarrow \quad \exists z_1 \in \{0,1\}^{s(n)}\; \forall z_2 \in \{0,1\}^{s(n)} \ldots \;Q_i z_i \in \{0,1\}^{s(n)}\;C_n(x,z_1, \ldots, z_i) = 1.
$$
The class $\Pi_i\text{-}\mathsf{SIZE}[s]$ is defined in an analogous way. For convenience, we might refer to $\Sigma_1\text{-}\mathsf{SIZE}[s]$ as $\mathsf{NSIZE}[s]$, i.e., the set of languages computed by non-deterministic circuits of size at most $s(n)$. When we write $C(x) = 1$ for a non-deterministic circuit $C$ and input $x$, we implicitly refer to its acceptance condition, i.e., that there is an input $z$ such that $C(x,z) = 1$. We adopt the analogous convention for co-nondeterministic circuits and for  circuit classes with additional alternations.

We will also consider languages that are computed by circuits with oracle gates. For an oracle $\mathcal{O}$, we let $\mathsf{SIZE}^{\mathcal{O}}[s]$ denote the set of languages computed by circuits of size at most $s$ that can also make use of $\mathcal{O}$-oracle gates.

Finally, for convenience we often abuse notation and associate the size of a Boolean circuit to its bit-length, i.e., its description length under a reasonable encoding.

\subsection{Logic and bounded arithmetic}\label{sec:prelim_logic}

We refer to \citep{buss-survey} for an introduction to bounded arithmetic and to the textbooks \citep{Krajicek-book, cook_nguyen_2010} for a comprehensive treatment. Below we review the relevant definitions and fix notation. 

We use $\calL(\mathcal{T})$ to denote the language (vocabulary) of a theory $\mathcal T$. 

For a structure $\calM$ over a language $\calL$, we often write $\mathcal{M} = (\mathcal{D}, \mathcal{I})$ to explicitly refer to its domain $\mathcal{D}$ and interpretations $\mathcal{I}$. As usual, the $\mathcal{M}$-interpretation of a function symbol $f \in \calL$ will be denoted by $f^{\mathcal{M}}$ (similarly for relations and constants). The $\calM$-interpretation of an $\calL$-term $t$ is also denoted by $t^{\calM}$. 

Given a formula $\psi$, we write $\psi(y)$ to explicitly indicate that $y$ \emph{may} be a free variable in $\psi$. For a formula $\varphi(x)$ and a term $t$, we write $\varphi(x/t)$ for substitution of the free variable $x$ with $t$, or simply $\varphi(t)$ if it is clear from the context. Similarly, we use $s(x)$ to denote a term $s$ that may contain $x$ as a free variable, and $s(x/t)$ to denote the substitution of the free variable $x$ with $t$, or simply $s(t)$ if it is clear.\\

\noindent \textbf{The language $\Lpv$.} 
In theoretical computer science one typically considers functions and predicates that operate over binary strings. For the computational models considered in this paper, this is equivalent to operations on integers, by identifying each non-negative integer with its binary representation. For convenience, we adopt the latter perspective when introducing the language (vocabulary) $\Lpv$ of theories $\Tpv^i$.

Let $\mathbb{N}$ denote the set of non-negative integers. For $a \in \mathbb{N}$, we let $|a| = \max\{\lceil \log_2(a+1) \rceil, 1\}$ denote the length of the binary representation of $a$. For a constant $k \geq 1$, we say that a function $f \colon \mathbb{N}^k \to \mathbb{N}$ is computable in polynomial time if $f(x_1, \ldots, x_k)$ can be computed in time polynomial in $|x_1|, \ldots, |x_k|$. Recall that $\FP$ denotes the set of polynomial time functions. While this definition refers to a particular model of computation (Turing machines), Cobham \citep{Cob64} proved that $\FP$ can be introduced in a machine independent way as the closure of a set of base functions under composition and limited recursion on notation. We briefly review this construction.\footnote{This is not strictly needed in our presentation. We include it here because it provides more intuition about the language of theories $\Tpv^i$ and the typical choice in bounded arithmetic of defining $\FP$ over non-negative integers instead of binary strings.} 

Consider the following class $\mathcal{F}_0$ of base functions:
$$
c(x) = 0, \quad s_0(x) = 2 \cdot x, \quad s_1(x) = 2x + 1, \quad \pi^i_\ell(x_1, \ldots, x_\ell) = x_i, \quad x \# y = 2^{|x| \cdot |y|}
$$
We say that a function $f(\vec{x},y)$ is defined from functions $g(\vec{x})$, $h_0(\vec{x},y,z)$, $h_1(\vec{x},y,z)$, and $k(\vec{x},y)$ by \emph{limited recursion on notation} if
\begin{eqnarray}
f(\vec{x},0) & = & g(\vec{x}) \nonumber \\
f(\vec{x},s_0(y)) & = & h_0(\vec{x},y,f(\vec{x},y)) \nonumber \\
f(\vec{x},s_1(y)) & = & h_1(\vec{x},y,f(\vec{x},y)) \nonumber \\
f(\vec{x},y) & \leq & k(\vec{x},y) \nonumber
\end{eqnarray}
for every sequence $\vec{x}$ and $y$ of natural numbers. Let $\mathcal{F}$ be the least class of functions that contains $\mathcal{F}_0$ and is closed under composition and limited recursion on notation.  Cobham \citep{Cob64} proved that $f \in \mathcal{F}$ if and only if $f \in \FP$.

We let $\Lpv$ contain the constant symbols $0$ and $1$, and a function symbol $f$ for every function in $\FP$. In particular, $\Lpv$ contains function symbols for the length function $|x|$, $\leq$, $+$, etc.\footnote{It is also possible to include in $\Lpv$ a function symbol for every polynomial time \emph{algorithm}, where an algorithm can be described from the base functions and   operations allowed in Cobham's characterisation. However, this is inessential in our context. The theories $\Tpv^i$ will contain all true universal sentences, and polynomial time algorithms that compute the same function are provably equivalent in these theories.}

We use the standard notation $n\in\Log$ and $n\in\Log\Log$ for $\exists N~n=|N|$ and $\exists N~n=||N||$, respectively. We define $\forall n\in\Log$ and $\forall n\in\Log\Log$ as $\forall N~\forall n=|N|$ and $\forall N~\forall n=||N||$, respectively.
\\

\noindent \textbf{Bounded formulas and theories $\bm{\Tpv^i}$.} A \emph{bounded quantifier} is a quantifier of the form $Qx \leq t$, where $Q \in \{\exists, \forall\}$ and $t$ is an $\Lpv$-term that does not involve $x$.\footnote{Bounded quantifiers can be expressed with the usual quantifiers from first-order logic. For instance, a formula $\psi(y)$ of the form $\forall x \leq t(y)\;\varphi(x,y)$ is equivalent to $\forall x \;(x \leq t(y) \rightarrow \varphi(x,y))$. On the other hand, a formula  of the form $\exists x \leq t(y)\;\varphi(x,y)$ is equivalent to $\exists x \;(x \leq t(y) \wedge \varphi(x,y))$.} An $\Lpv$-formula $\psi$ is \emph{bounded} if every quantifier in $\psi$ is bounded.

We will need to introduce a hierarchy of bounded formulas to define the theories $\Tpv^i$. We let $\Sigma^b_0 = \Pi^b_0$ be the set of quantifier-free $\Lpv$-formulas. We then recursively define sets $\Sigma^b_i$ and $\Pi^b_i$ of formulas as follows. For each $i \geq 1$, $\Sigma^b_i$ and $\Pi^b_i$ constitute the smallest class of $\Lpv$-formulas such that the following conditions hold:
\begin{itemize}
    \item[1.] $\Sigma^b_{i-1} \cup \Pi^b_{i-1} \subseteq \Sigma^b_{i} \cap \Pi^b_{i}$;
    \item[2.] both $\Sigma^b_{i}$ and $\Pi^b_{i}$ are closed under Boolean connectives $\wedge$ and $\vee$;
    \item[3.] if $\psi(\vec{x}) \equiv \exists y \leq t(\vec{x})\;\varphi(\vec{x},y)$ is a bounded formula and $\varphi \in \Sigma^b_i$, then $\psi \in \Sigma^b_i$;
    \item[4.] similarly, if $\psi(\vec{x}) \equiv \forall y \leq t(\vec{x})\;\varphi(\vec{x},y)$ is a bounded formula and $\varphi \in \Pi^b_i$, then $\psi \in \Pi^b_i$;
    \item [5.] the negation $\neg \psi$ of a formula $\psi$ from $\Sigma^b_i$ is in $\Pi^b_i$ and vice versa.
\end{itemize}
We note that these classes of sentences are often referred to as \emph{strict} $\Sigma^b_{i}$ and $\Pi^b_{i}$ formulas in the literature, as we do not include sharply bounded quantifiers  between bounded quantifiers. 

For convenience, we sometimes describe formulas with the implication symbol $\rightarrow$, implicitly assuming that it is expressed using the Boolean connectives appearing above.

Note that to each $\Lpv$-formula $\phi(x_1, \ldots, x_k)$ we can associate a language $L_\phi \subseteq \{0,1\}^*$ consisting of binary encodings of all tuples $(a_1, \ldots, a_k) \in \mathbb{N}^k$ such that $\mathbb{N} \models \phi(a_1, \ldots, a_k)$. It is known that $\phi \in \Sigma^b_i$ if and only if $L_\phi \in \Sigma^p_i$  \citep{DBLP:journals/tcs/Stockmeyer76, DBLP:journals/tcs/Wrathall76, DBLP:journals/tcs/KentH82}, where $\Sigma^p_i$ denotes the $i$-th level of the polynomial hierarchy.

For $j \geq 0$, we let $\forall \Sigma^b_{j}$ denote the set of $\Lpv$-sentences of the form $\forall \vec{y} \,\varphi(\vec{y})$, where $\varphi$ is a $\Sigma^b_j$-formula. We sometimes write $\Sigma^b_i(\calL)$, $\Pi^b_i(\calL)$, and $\forall \Sigma^b_i(\calL)$ to emphasize the underlying language $\calL$ of a class of formulas.

As expected, the intended model of theories $\Tpv^i$ is $\mathbb{N}$, with the interpretation of each function symbol $f \in \Lpv$ as the corresponding polynomial time function. We will refer to $(\mathbb{N}, 0^{\mathbb{N}}, +^{\mathbb{N}}, \ldots)$ as the \emph{standard model}.

\begin{definition}[Theories $\Tpv^i$]\label{def:theoryTiPV}
For each integer $i\geq 1$, we let $\Tpv^i$ denote the theory of all true (with respect to $\mathbb{N}$) $\forall \Sigma^b_{i-1}$ sentences over the language $\Lpv$.
\end{definition}

In particular, $\Tpv^1$ is the theory of true universal sentences, and we might refer to $\Tpv^1$ just as $\Tpv$.

Note that the definition of $\Tpv^i$ consists of only true $\Sigma^b_i$-sentences without sharply bounded quantifiers as axioms. However, as we observe in \Cref{sec:sharply-bounded}, this is inessential in our unprovability results, given that the introduction of sharply bounded quantifiers would not make the theories $\Tpv^i$ any stronger.\\

In order to simplify the presentation of some results, we introduce the following definition.

\begin{definition}[Closure under if-then-else]\label{def:closed-under-if-then-else}
A theory $\calT$ is \emph{closed under if-then-else} if for every quantifier-free formula $\varphi(\vec x)$ and terms $t_1(\vec x)$ and $t_2(\vec x)$, there exists a term $t(\vec x)$ such that 
$$\calT\vdash \big (t(\vec x)=t_1(\vec x)\land\varphi(\vec x)\big )\lor \big(t(\vec x)=t_2(\vec x)\land\lnot \varphi(\vec x)\big).$$ 
\end{definition}

We note that in such a theory the provability of a disjunction $\psi(x,t_1(x)) \lor \psi(x,t_2(x)) \lor \ldots \lor \psi(x, t_k(x))$ yields the provability of $\psi(x,t(x))$, for a quantifier-free formula $\psi(x,y)$. Typical theories of bounded arithmetic (e.g.,~$\mathsf{S}^1_2$ and $\Tpv^i$) are closed under if-then-else or admit a suitable extension that is closed under this property.\\

\noindent \textbf{Theory $\bm{\APC_1}$.} In order to formalize certain probabilistic methods and randomised algorithms, Je\v{r}\'abek \cite{Jerabek04, Jerabek07} (following \citep{krajivcek2001weak}) introduced the theory $\APC_1$ by extending $\PV$ with the \emph{dual Weak Pigeonhole Principle} for $\PV$ functions, an axiom scheme postulating that there is no $\PV$ function $f:[2^n]\to[(1+1/n)\cdot 2^n]$ that is surjective.\footnote{The size of the codomain (with respect to the size of the domain) affects the power of the dual Weak Pigeonhole Principle. This can be a subtle point, as the equivalence between dual Weak Pigeonhole Principles with different codomain sizes is not known to be provable in $\PV$ (see \cite{Jerabek07-dWPHP} for more details).} (The notation $\APC_1$ was proposed by \citep{DBLP:journals/jsyml/BussKT14}.) More formally, we define $\dWPHP(f)$ for a function $f$ 
 (with extra parameters) as the sentence\footnote{Note that the additional parameter $\vec z$ is crucial in the definition of $\APC_1$. If we remove this parameter in the definition of $\dWPHP$, denoted by $\dWPHP'$, the theory $\PV+\dWPHP'(\PV)$ will be a (possibly) weaker fragment of $\APC_1$ (see, e.g., \cite{PS21}).}
\begin{equation}\label{equ: dWPHP}
\dWPHP(f) \eqdef \forall n\in\Log~~\forall\vec z~~\exists y<(1+1/n)\cdot 2^n~~\forall x<2^n~~f(\vec z,x)\ne y.
\end{equation}
Let $\dWPHP(\PV)\eqdef\{\dWPHP(f)\mid f\in\Lpv\}$. Then $\APC_1\eqdef\PV+\dWPHP(\PV)$. (For a definition of theory $\mathsf{PV}$, see \citep{Krajicek-book} or the equivalent presentation from \citep{jerabek:sharply-bounded}.) Je\v{r}\'abek \cite{Jerabek04,Jerabek-phd,Jerabek07} developed a sophisticated (but intuitive) framework for approximate counting in $\APC_1$ built on an elegant formalisation of the Nisan-Wigderson PRG \cite{NW} in this theory. 

By counting the quantifier alternations in \Cref{equ: dWPHP}, it is easy to see that $\dWPHP(f)$ is a $\forall\Sigma^b_2$-sentence in $\Lpv$. As a result, $\APC_1$ is a subtheory of $\Tpv^3$. We note that our unprovability result for $\APC_1$ (\Cref{cor:intro_APC1}) is quite robust and works with any non-trivial codomain size in the definition of $\dWPHP(f)$, since this does not increase the quantifier complexity of the corresponding sentences.

\subsection{Total search problems and the polynomial hierarchy}
\label{sec:general-ph:complexity-basic}

In this section, we define complexity classes and circuit classes associated with total search problems in the polynomial hierarchy and explore their basic properties.

Recall that a relation $R \subseteq \{0,1\}^* \times \{0,1\}^*$ is in $\TFNP$ and if there is a polynomial $p(n)$ and a polynomial time machine $A$ such that
\begin{itemize}
    \item For every $x \in \{0,1\}^*$ there is $y \in \{0,1\}^{\leq p(|x|)}$ such that $R(x,y)$ holds. Moreover, any such $y$ is of length at most $p(|x|)$.
    \item For every pair $(x,y)$ of strings $x,y \in \{0,1\}^*$, $(x,y) \in R$ if and only if $A(x,y) = 1$. 
\end{itemize}

The next definition is a standard generalisation of this class.

\begin{definition}
 For $i \geq 1$, we say that a relation $R \in \TF\Sigma_i^p$  if there is a polynomial $p(n)$ and a polynomial time   machine $A$  such that the following conditions hold:
 \begin{itemize}
    \item For every $x \in \{0,1\}^*$ there is $y \in \{0,1\}^{\leq p(|x|)}$ such that $R(x,y)$ holds.
    \item For every pair $(x,y)$ of strings $x,y \in \{0,1\}^*$, 
    $$
    R(x,y) \; \Longleftrightarrow \; \forall z_1 \in \{0,1\}^{p(|x|)} \; \exists z_2 \in \{0,1\}^{p(|x|)} \ldots Q_{i - 1} z_{i-1} \in \{0,1\}^{p(|x|)} \; A(x,y,z_1, \ldots, z_{i-1}). 
    $$
    In other words, $R \in \Pi^p_{i - 1}$.
\end{itemize}    
\end{definition}

We will need the following simulation result.

\begin{theorem}
  \label{thm:oracle-circuit-to-nondet-circuit}
  For every $i\ge 1$ and $s(n)\ge n$, $\SIZE^{\Sigma_{i-1}^p}[s(n)]\subseteq \cSig{i}\SIZE[\poly(s(n))]$. 
\end{theorem}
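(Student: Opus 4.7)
The plan is to take a circuit $C$ of size $s(n)$ with $\Sigma^p_{i-1}$-oracle gates and simulate it by a $\Sigma_i$-circuit that first existentially guesses the output of each oracle gate and then verifies each guess using the alternation structure already available in the $\Sigma_i$ prefix. First I would label the oracle gates $g_1,\dots,g_m$ with $m\le s(n)$, and let the outermost existential block of the target $\Sigma_i$-circuit pick bits $b_1,\dots,b_m\in\{0,1\}$ standing for the purported outputs of the $g_j$'s. Conditioned on these guesses, the non-oracle skeleton of $C$ becomes a deterministic combinational circuit $D(x,b_1,\dots,b_m)$ of size $O(s(n))$, and the input $w_j$ to each gate $g_j$ becomes a poly-size function of $x$ and $b_1,\dots,b_{j-1}$; one conjunct of the quantifier-free matrix will assert that $D(x,\vec b)=1$.

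Next I would expand the correctness condition ``$g_j(w_j)=b_j$'' as a $\Sigma^p_{i-1}$-statement about $w_j$ when $b_j=1$ and as a $\Pi^p_{i-1}$-statement (its negation) when $b_j=0$. The $\Sigma_{i-1}$ and $\Pi_{i-1}$ prefixes have shapes $\exists\forall\exists\cdots$ and $\forall\exists\forall\cdots$ respectively, so they become compatible after a one-block shift: placing the $k$-th quantifier of each $\Sigma_{i-1}$-prefix into block $k$ of the new circuit and the $k$-th quantifier of each $\Pi_{i-1}$-prefix into block $k+1$ keeps every block homogeneous in type and makes adjacent blocks alternate. Bundling all witnesses at each level into a single string, and merging the initial $\exists b_1\dots b_m$ into block~$1$ (which is already existential), I obtain exactly $i$ alternating blocks starting with $\exists$, matching the $\Sigma_i$-circuit definition from the paper.

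The resulting quantifier-free matrix is a conjunction of $D(x,\vec b)=1$ together with, for each $j$, the polynomial-time base predicate of the $\Sigma^p_{i-1}$-oracle applied to $(w_j,\vec y^{\,j})$ when $b_j=1$, or the negation of that predicate applied to $(w_j,\vec z^{\,j})$ when $b_j=0$. Since there are at most $s(n)$ such conjuncts, each of size $\poly(s(n))$, the full matrix is implementable by a deterministic circuit of size $\poly(s(n))$, which yields $\SIZE^{\Sigma^p_{i-1}}[s(n)]\subseteq\cSig{i}\SIZE[\poly(s(n))]$. The only step that needs genuine care is the alignment of the $\Sigma_{i-1}$ and $\Pi_{i-1}$ prefixes so that the total number of blocks is exactly $i$ with an existential lead; this reduces to a parity check on the very last block, where one family of prefixes is exhausted one step before the other. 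Beyond this bookkeeping---and the degenerate case $i=1$, where $\Sigma^p_0=\mathsf{P}$ makes the oracle gates redundant and the inclusion collapses to $\mathsf{SIZE}[s]\subseteq\mathsf{NSIZE}[\poly(s)]$---I do not foresee a genuine obstacle.
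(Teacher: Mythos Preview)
Your proposal is correct and is precisely the standard ``guess-and-verify'' argument the paper has in mind; the paper's own proof simply says that the argument is analogous to the textbook inclusion $\mathsf{P}^{\Sigma^p_{i-1}}\subseteq\Sigma^p_i$ and omits the details, so you have effectively written out what the paper left implicit.
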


\begin{proof}
The proof is similar to the well-known inclusion $\mathsf{P}^{\Sigma^p_{i-1}} \subseteq \Sigma^p_i$ (see, e.g.,~\citep[Chapter 17]{Papa}), and we omit the details.
\end{proof}

\ignore{

For technical reasons, we will also consider relations in $\TF\Sigma_i^p$ where each $x$ admits a unique ``solution'' $y$.

\begin{definition} For $i \geq 1$, we let
  $\sTF\Sigma_i^p\triangleq \TF\Sigma_i^p\cap\{R\subseteq \{0,1\}^*\times \{0,1\}^*\mid \forall x~\exists! y~(x,y)\in R\}$, that is, $\sTF\Sigma_i^p$ contains all single-valued total search problems in $\TF\Sigma_i^p$. 
\end{definition}

For $R \in \sTF\Sigma_i^p$ and $x \in \{0,1\}^*$, we use $R(x)$ to denote the unique $y$ such that $(x,y) \in R$. There is a natural search problem associated with $R$: given an input $x$, output $y$ such that $R(x,y)$.

\begin{definition} For $i \geq 1$ and a function $s \colon \mathbb{N} \to \mathbb{N}$, we let $\sTF\Sigma^p_i\Search[s]$ be the following promise problem: the input is a Boolean circuit $C(x_1,\dots,x_i)$, where each $x_i$ is a string of length $n$ and $C$ is of size $s(n)$, such that
  \begin{equation}\label{eq:sTF-search-def}
    \exists! x_1\in\{0,1\}^n~\forall x_2\in\{0,1\}^n~\exists x_3\in\{0,1\}^n\dots Q x_i\in\{0,1\}^n~C(x_1,x_2,\dots,x_i)=1,
  \end{equation}
  where $Q$ is $\forall$ when $i$ is even and is $\exists$ when $i$ is odd; the answer of the problem is the unique $x_1\in\{0,1\}^n$ that witnesses \emph{(\ref{eq:sTF-search-def})}.
\end{definition}

For convenience, we simply write $\sTF\Sigma^p_i\Search$ when $s = \mathsf{poly}(n)$, i.e., when considering instances where the size of $C$ and the length of its inputs are polynomially related.  

\begin{proposition}
  The search problem of every relation in $\sTF\Sigma^p_i$ is polynomial-time reducible to $\sTF\Sigma_i^p\Search$.  
\end{proposition}

\begin{proof}
  Suppose that $R\in\sTF\Sigma^p_i$. Then by definition there is a language $L\in\P$ and polynomials $m(n),p(n)$ such that for every $x\in\{0,1\}^n$, 
  \[
    \exists! y_1\in\{0,1\}^{m(n)}~\forall y_2\in\{0,1\}^{p(n)}~\exists y_2\in\{0,1\}^{p(n)}\dots Q y_i\in\{0,1\}^{p(n)}~(x,y_1,y_2,\dots,y_i)\in L,
  \]
  and $R(x)$ is the unique $y_1$ that witnesses this. Given an input $x\in\{0,1\}^n$, by the standard transformation of machines into circuits, we can construct a polynomial-size circuit $C_x(y_1,y_2,\dots,y_i)$ that outputs $1$ if and only if $(x,y_1,\dots,y_i)\in L$. It's easy to see that $\sTF\Sigma^p_i\Search(C_x)=R(x)$. 
\end{proof}

Given this ``complete'' problem for $\sTF\Sigma^p_i$, we can define uniform and non-uniform computations with oracle accesses to $\sTF\Sigma^p_i$ problems.   

\begin{definition}
  Abusing notation, we define $\FP^{\sTF\Sigma^p_i}$ \emph{(}resp.~$\P^{\sTF\Sigma^p_i}$\emph{)} to be the set of functions \emph{(}resp.~languages\emph{)} computable \emph{(}resp.~decidable\emph{)} by polynomial-time algorithms with oracle access to $\sTF\Sigma^p_i\Search$ \emph{(}i.e., for every $n$ and input $x\in\{0,1\}^n$, each oracle query made by the algorithm always satisfies the promise of $\sTF\Sigma^p_i\Search$, and the oracle returns the unique solution\emph{)}.

  Similarly, $\SIZE^{\sTF\Sigma^p_i}[s(n)]$ contains all languages decidable by circuit families $\{C_n\}_{n\in\bbN}$, where $C_n$ is an $n$-input single-output circuit of size $s(n)$ that contains $\{\land,\lor,\lnot\}$ gates of fan-in at most 2 and oracle gates of unbounded fan-in solving $\sTF\Sigma^p_i\Search$ \emph{(}i.e., for every $n$, input $x\in\{0,1\}^n$, and each oracle gate in $C_n$, the oracle query always satisfies the promise of $\sTF\Sigma^p_i\Search$, and the gate outputs the unique solution\emph{)}. In particular, $\SIZE^{\sTF\Sigma^p_i}[\poly(n)]\triangleq\bigcup_{k\ge 1}\SIZE^{\sTF\Sigma^p_i}[n^k]$.
\end{definition}

Next, we establish a few basic results that relate the classes introduced above. 

\begin{theorem}
  \label{thm:ph-subset-p-to-stf}
  For every $i\ge 1$,  $\Sigma^p_{i-1}\cup\Pi^p_{i-1}\subseteq\P^{\sTF\Sigma^p_i}$.  
\end{theorem}

\begin{proof}
  The case for $i=1$ is trivial. Assume that $i\ge 2$. It is sufficient to prove that $\Sigma^p_{i-1}\subseteq\P^{\sTF\Sigma^p_i}$ since $\P^{\sTF\Sigma^p_i}$ is closed under complement. Let $L\in\Sigma^p_{i-1}$. There is a predicate $M\in\Pi_{i-2}$ and a polynomial $p(\cdot)$ such that for every $n$ and $x\in\{0,1\}^n$,
  \[
    x\in L \iff \exists y\in\{0,1\}^{p(n)}~(x,y)\in M,
  \]
  Let $\Phi(x,w,y)$ be the predicate
  \begin{align*}
    \Phi(x,w,y)\triangleq~& (w=0 \land \forall y'\in\{0,1\}^{p(n)}~(x,y')\notin M) \\
    \lor&
     (w=1\land (x,y)\in M\land \forall y'<y~(x,y')\notin M).
  \end{align*}
  It is easy to check that $\Phi\in\Pi^p_{i-1}$ and for every $x\in\{0,1\}^n$: (1) there are unique $w\in\{0,1\}$ and $y\in\{0,1\}^{p(n)}$ such that $\Phi(x,w,y)$ holds; (2) $w=1$ if and only if $x\in L$. Then the following $\P^{\sTF\Sigma^p_i}$ algorithm decides $L$:
  \begin{itemize}
  \item Given $x$, find the unique $(w,y)$ such that $\Phi(x,w,y)$ holds using the $\sTF\Sigma^p_i\Search$ oracle.
  \item Accept if and only if $w=1$. \qedhere 
  \end{itemize}
\end{proof}

\begin{theorem}\label{thm:algo-to-circuits}
  For every $i\ge 1$, $\P^{\sTF\Sigma^p_i}\subseteq\SIZE^{\sTF\Sigma^p_i}[\poly(n)]$. 
\end{theorem}

\begin{proof}
  This directly follows from the standard transformation of polynomial-time algorithms into polynomial-size circuit families. 
\end{proof}

}

\subsection{The Nisan-Wigderson generator}\label{sec:NW}

In this section, we review basic properties of the Nisan-Wigderson \citep{NW} pseudorandom generator and fix notation. For an introduction to this generator and to computational pseudorandomness, see \citep{Vadhan-Survey}.

\begin{definition}
A collection $\mathcal{S} = \{S_1, \ldots, S_k\}$ of sets $S_i$ is said to be an $(m, \ell, a)$-\emph{design} if 
\begin{itemize}
    \item for every $i \in [k]$, $S_i \subseteq [m]$;
    \item for every $i \in [k]$, $|S_i| = \ell$; and
    \item for every $i \neq j \in [k]$, $|S_i \cap S_j| \leq a$.\footnote{Designs are also called combinatorial designs by some authors. We will use both terms interchangeably.}
\end{itemize}
The \emph{size} of a design $\mathcal{S}$ is defined as the number of sets in $\mathcal{S}$. 
\end{definition}

\begin{lemma}[Explicit designs; see, e.g.,~\citep{NW, Vadhan-Survey}]\label{lem:designs}
For every constant $c \geq 2$ and every sufficiently large $n \in \mathbb{N}$, there exists an $(n^{c},n^{c/2},n)$-design $\mathcal{S}_{c,n}$ of size $2^n$. Moreover, for every fixed $c$, there is an algorithm  that, given a large enough $n$ and an index $i \in [2^n]$, outputs the $i$-th set $S_i \in \mathcal{S}_{c,n}$ in time $\mathsf{poly}(n)$.
\end{lemma}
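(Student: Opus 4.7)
The plan is to invoke the classical polynomial-based construction of Nisan and Wigderson~\citep{NW}. First, I would fix the finite field: by Bertrand's postulate, for every sufficiently large $n$ there is a prime $q$ with $n^{c/2}/2 \leq q \leq n^{c/2}$. Identify $\mathbb{F}_q \times \mathbb{F}_q$ with a subset of the universe $[n^c]$ of size $q^2 \leq n^c$. For a parameter $d$ to be chosen below, I would associate to each univariate polynomial $p \in \mathbb{F}_q[x]$ of degree strictly less than $d$ the set $S_p := \{(x, p(x)) : x \in \mathbb{F}_q\}$, which has cardinality $q$. For any two distinct such polynomials $p \neq p'$, the difference $p - p'$ is a nonzero polynomial of degree $< d$ and hence has at most $d - 1$ roots in $\mathbb{F}_q$; consequently $|S_p \cap S_{p'}| \leq d - 1$.

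The next step is to tune $d$. Since there are exactly $q^d$ polynomials of degree $< d$, setting $d := \lceil n / \log_2 q \rceil$ both guarantees $q^d \geq 2^n$ and keeps $d = O(n / \log n)$; in particular $d \leq n$ for $n$ sufficiently large (using $\log_2 q \geq (c/2) \log_2 n - 1$). Selecting any $2^n$ of these polynomials---for instance, the first $2^n$ in the lexicographic order on coefficient tuples---produces $2^n$ sets, each of size $q$, with pairwise intersection at most $n$.

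For explicitness, given $c$ and $n$ one can locate $q$ by trial division in the interval $[n^{c/2}/2, n^{c/2}]$ in $\mathsf{poly}(n)$ time, represent $\mathbb{F}_q$ by integers modulo $q$, decode an index $i \in [2^n]$ as a $d$-digit base-$q$ integer to read off the coefficients of the $i$-th polynomial $p_i$, and finally tabulate $\{(x, p_i(x)) : x \in \mathbb{F}_q\}$ by direct evaluation; all steps run in $\mathsf{poly}(n)$ time since $q = \mathsf{poly}(n)$ and $d \leq n$.

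The main bookkeeping obstacle will be matching the exact integer parameters $(n^c, n^{c/2}, n)$: the raw construction only yields sets of size $q$ inside a universe of size $q^2$, and $q$ is merely a prime in a $\Theta$-window around $n^{c/2}$. A standard adjustment---inflating $c$ by an arbitrarily small constant to create slack between $q^2$ and $n^c$, and then padding the universe and each set, or equivalently tolerating constant factors in $\ell$ and $a$---converts the construction to the form stated, and only these asymptotic relations are needed in the downstream applications of the lemma.
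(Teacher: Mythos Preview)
The paper does not give its own proof of this lemma; it is stated as a known fact with citations to \citep{NW, Vadhan-Survey}. Your proposal reproduces precisely the standard polynomial-based construction from those references, and the argument is correct. Your last paragraph accurately flags the only real friction—that the raw construction produces sets of size $q$ in a universe of size $q^2$ for a prime $q$ near $n^{c/2}$, rather than the exact parameters $(n^c, n^{c/2}, n)$—and your suggested fixes (absorbing slack into $c$, or simply noting that only the asymptotic relations matter downstream) are exactly how this is handled in the cited sources and in the paper's own use of the lemma.
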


Recall that, given an $(m, \ell, a)$-design $\mathcal{S}$ of size $N$ and a function $f \colon \{0,1\}^{\ell} \to \{0,1\}$, the \emph{Nisan-Wigderson generator} (NW generator) maps a \emph{seed} $w \in \{0,1\}^m$ into the $N$-bit string 
\[ 
  f(w|_{S_1}) f(w|_{S_2}) \ldots f(w|_{S_N}),
\]  
where $w|_{S_i}$ is the string of length $\ell$ obtained from $w$ be selecting the bits indexed by $S_i \in \mathcal{S}$.

It will be convenient to view the NW generator as a Boolean function and to introduce additional notation. For a large constant $c \geq 1$, given a function $h \colon \{0,1\}^{n^{c/2}} \to \{0,1\}$, we will use the NW generator to define a function $\NW_h \colon \{0,1\}^{n^c} \times \{0,1\}^{n} \to \{0,1\}$. More precisely,
\begin{itemize}
\item The seed length is $n^c$.
\item The corresponding design is described by a $2^n \times n^c$ Boolean matrix $A$ where each row has exactly $n^{c/2}$ entries set to $1$, and the $1$ entries in distinct rows overlap in at most $n$ columns. As stated in \Cref{lem:designs},
designs with these parameters are known to exist. Given a pair $(i,j) \in [2^n] \times [n^c]$, the $(i,j)$-entry of the corresponding design matrix can be  explicitly computed by circuits of size $\mathsf{poly}(n)$ \cite{NW}.
\item For a row index $x \in \{0,1\}^n$ of $A$, we use $J_x \subseteq [n^c]$ to denote the set of $n^{c/2}$ columns of the $x$-th row of $A$ set to $1$.
\item It will often be convenient to consider an $n^{c}$-bit string $w$ as a function in $\{0,1\}^{[n^c]}$ that maps $[n^c]$ to $\{0,1\}$. If $S_1, S_2 \subseteq [n^c]$ is a partition of $[n^c]$, $a \in \{0,1\}^{S_1}$, and $u \in \{0,1\}^{S_2}$, we let $w = u \cup a$ denote the corresponding $n^c$-bit string obtained from $a$ and $u$.\footnote{This notation is consistent with the standard set-theoretic definition of a function as a set of pairs.}
\item For $x \in \{0,1\}^n$ and strings $a \in \{0,1\}^{n^c - n^{c/2}}$ and $u \in \{0,1\}^{n^{c/2}}$, we let $r_{x}(a,u)$ denote the string $w = u \cup a$ of length $n^c$ obtained by viewing $a \in \{0,1\}^{[n^{c}] \setminus J_{x}}$ and $u \in \{0,1\}^{J_{x^*}}$.
\item By fixing the seed $w \in \{0,1\}^{n^c}$ in the NW generator and the function $h \colon \{0,1\}^{n^{c/2}} \to \{0,1\}$, we obtain a function $\NW_h(w) \colon \{0,1\}^n \to \{0,1\}$ in the natural way. Similarly, we can obtain a family $\{\NW_h(w)\}_{w \in \{0,1\}^{n^c}}$ of functions, one for each possible seed $w$.
\end{itemize}

\subsection{Hardness amplification in the polynomial hierarchy}\label{sec:hard_ampl}

In order to relax the average-case complexity parameter in our unprovability results, we need a hardness amplification theorem for the polynomial hierarchy. The result stated below can be seen as the ``relativised'' version of \citep{DBLP:journals/siamcomp/HealyVV06} (see also \citep[Section 3.3]{PS21}). For completeness, we sketch their proof and explain how to adapt the result to our purpose in \Cref{sec: lmms for hardness amp}.

\begin{theorem}\label{thm:hardamp-ppoly} There is a constant $\gamma>0$ and $\ell=\ell(n)=\poly(n)$ such that the following holds for every $i\ge 1$. Let $s_1,s_2 \colon \mathbb{N} \to \mathbb{N}$ be non-decreasing functions, where $s_2(n) = n^{\omega(1)}$, and suppose there is a function $f_n \colon \{0,1\}^n \to \{0,1\}$ computable by $\cSig{i}\mathsf{SIZE}[s_1(n)]$ circuits (resp.~$\cPi{i}\mathsf{SIZE}[s_1(n)]$ circuits) such that each $\Sigma^p_{i-1}$-oracle circuit $A_n$ of size at most  $s_2(n)$ satisfies 
$$
\Pr_{x \in \{0,1\}^n}[f_n(x) = A_n(x)] \;\leq\; 1 - \frac{1}{n}.
$$ 
Then there exist a function $h_\ell \colon \{0,1\}^\ell \to \{0,1\}$ computable by $\cSig{i}\mathsf{SIZE}[\poly(\ell)\cdot s_1(\ell)]$ circuits (resp. $\cPi{i}\SIZE[\poly(\ell)\cdot s_1(\ell^\gamma)]$ circuits) such that each $\Sigma^p_{i-1}$-oracle circuit $B_\ell$ of size at most $s_2(\ell^{\gamma})^\gamma$ satisfies
$$
\Pr_{y \in \{0,1\}^\ell}[h_\ell(y) = B_\ell(y)] \;\leq\; \frac{1}{2} + \frac{1}{s_2(\ell^{\gamma})^{\gamma}}.
$$
\end{theorem}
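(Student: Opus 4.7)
The plan is to adapt the hardness amplification framework of Healy, Vadhan, and Viola~\citep{DBLP:journals/siamcomp/HealyVV06} to the setting where the adversary is a $\Sigma^p_{i-1}$-oracle circuit rather than a plain non-uniform circuit. The overall HVV construction carries over essentially unchanged; the two points that need explicit verification are (a) that the amplified function $h_\ell$ still lies in $\cSig{i}\mathsf{SIZE}$ (resp.~$\cPi{i}\mathsf{SIZE}$), and (b) that the distinguisher-to-predictor reduction used in the hardness analysis issues only $\Sigma^p_{i-1}$ oracle queries, so that the simulated adversary for $f_n$ uses no more powerful oracle than $B_\ell$ did.

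First, I set up the combiner. Fix $k = \poly(n)$ and take $C \colon \{0,1\}^k \to \{0,1\}$ to be a balanced monotone Boolean combiner of polynomial size with strong noise-sensitivity properties (for example, a recursive-majority or tribes-style combiner analysed in \citep{DBLP:journals/siamcomp/HealyVV06}). Using an explicit sampler based on short random walks on a constant-degree expander, one obtains a map $\{0,1\}^\ell \to (\{0,1\}^n)^k$ with $\ell = \poly(n)$; writing $(y_1, \ldots, y_k)$ for the output on seed $y$, define
\[
    h_\ell(y) \;\triangleq\; C\bigl(f_n(y_1), f_n(y_2), \ldots, f_n(y_k)\bigr).
\]
Since $C$ is monotone and of polynomial size, and each $f_n(y_j)$ is given by a $\cSig{i}\mathsf{SIZE}[s_1(n)]$ circuit, closure of $\Sigma_i$-circuits under polynomial-size monotone combinations delivers $h_\ell \in \cSig{i}\mathsf{SIZE}[\poly(\ell)\cdot s_1(\ell)]$. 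The $\Pi_i$ case is analogous, with the size bound read off tightly in terms of $n = \ell^\gamma$ so as to match the $s_1(\ell^\gamma)$ appearing in the statement.

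Next, for the hardness lower bound, suppose toward contradiction that some $\Sigma^p_{i-1}$-oracle circuit $B_\ell$ of size at most $s_2(\ell^\gamma)^\gamma$ satisfies
\[
    \Pr_y\bigl[B_\ell(y) = h_\ell(y)\bigr] \;>\; \frac{1}{2} + \frac{1}{s_2(\ell^\gamma)^\gamma}.
\]
The HVV decoding reduction, which combines $B_\ell$ with a short non-uniform advice string of sampled values of $f_n$ and then evaluates the combiner $C$ and the expander sampler in polynomial time, yields a $\Sigma^p_{i-1}$-oracle circuit $A_n$ of size at most $s_2(n)$ with $\Pr_x[A_n(x) = f_n(x)] > 1 - 1/n$, contradicting the hypothesis on $f_n$. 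The constant $\gamma > 0$ and the polynomial $\ell(n)$ are chosen uniformly in $i$ from the parameters of the sampler and the combiner.

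The main obstacle, and the only point where the polynomial-hierarchy version diverges from the classical HVV argument, is ensuring that $A_n$ only needs its $\Sigma^p_{i-1}$ oracle and does not silently acquire $\Sigma^p_i$-oracle power. This is exactly what the monotone combiner plus a combinatorial (rather than algebraic) decoder guarantee: the decoder never computes $h_\ell$ or $f_n$ itself, but only queries $B_\ell$ and reads hard-wired sampled values of $f_n$, with the remaining work being polynomial-time computation. Detailed parameter tracking and verification that every step relativizes to the $\Sigma^p_{i-1}$ oracle is deferred to \Cref{sec: lmms for hardness amp}.
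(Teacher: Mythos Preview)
Your outline has the right shape but contains a genuine quantitative gap that breaks the argument: fixing $k=\poly(n)$ cannot yield the advantage bound $1/s_2(\ell^\gamma)^\gamma$ demanded by the statement when $s_2(n)=n^{\omega(1)}$. For any monotone combiner $C$ on $k$ inputs, the residual advantage of the amplified function is controlled by the noise stability $\NoiseStab_\delta(C)$, and the best monotone combiners (Tribes, recursive majority, their compositions) achieve only $\NoiseStab_\delta(C)=k^{-\Theta(1)}$. With polynomial $k$ this gives hardness $\tfrac12+1/\poly(\ell)$, not $\tfrac12+1/s_2(\ell^\gamma)^\gamma$. This is exactly the obstacle that distinguishes hardness amplification \emph{within} $\NP$/$\Sigma^p_i$ from ordinary XOR-lemma amplification, and it is not overcome by any choice of polynomial-size monotone $C$ or expander sampler.

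The paper, following \citep{DBLP:journals/siamcomp/HealyVV06}, takes $k=s_2(n)^{1/7}$, which is superpolynomial. The crux is that $(C_k\circ f^{\otimes k})\circ G_k$ is nevertheless in $\cSig{i}\SIZE[\poly(n)\cdot s_1(n)]$, for reasons that go beyond generic closure under monotone combinations: the combiner is specifically $\Tribes_{k/3^r}\circ\RMaj_r^{\otimes(k/3^r)}$ with $3^r=\poly(n)$, and the $\Sigma_i$ circuit nondeterministically guesses \emph{one} satisfied Tribes clause (touching only $O(\log k)=\poly(n)$ recursive-majority blocks), guesses which $f$-outputs inside those blocks equal $1$, and verifies only those positively using the $\Sigma_i$ circuit for $f$. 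This needs the generator $G_k$ to be block-by-block explicit, which is why the paper uses the XOR of a Nisan--Wigderson style generator (indistinguishability-preserving against $\Sigma^p_{i-1}$-oracle distinguishers) and Nisan's PRG against width-$2^n$ branching programs, rather than an expander walk. The hardness analysis then proceeds via a relativised Impagliazzo hardcore lemma and an expected-bias/noise-stability bound, not via the direct sampled-advice decoding you sketch.
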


\subsection{Herbrand's Theorem and the KPT Witnessing Theorem}\label{sec:witnessing_review}

In this section, we review standard  witnessing theorems previously used to show unprovability results in bounded arithmetic (see, e.g., \cite{CKKO21,PS21}). In all results, we consider a universal theory $\calT$ with vocabulary $\calL$.\footnote{Recall that a theory $\calT$ is \emph{universal} if all its axioms are universal formulas, i.e., a formula of the form $\forall \vec{x}\, \varphi(\vec{x})$, where $\varphi$ is free of quantifiers.} (As a concrete example, one can take $\calT=\PV$ and $\calL=\Lpv$.) 

\begin{paragraph}{Two quantifiers ($\forall \exists$).} The well-known Herbrand's theorem is the simplest witnessing result and can be applied to $\forall\exists$-sentences (see, e.g., Section 2 of \cite{applied-proof-theory}).
\end{paragraph}

\begin{theorem}[Herbrand's Theorem]\label{thm:herbrand} Let $\calT$ be a universal theory with vocabulary $\calL$. 
  If $\calT\vdash\forall x\,\exists y\,\varphi(x,y)$ for a quantifier-free $\calL$-formula $\varphi$, there exist a constant $\ell\ge 1$ and a sequence $t_1,t_2,\dots,t_\ell$ of $\calL$-terms such that
  \[
    \calT\vdash\forall x~\big (\varphi(x,t_1(x))\lor\varphi(x,t_2(x))\lor\dots\lor \varphi(x,t_\ell(x))\big ).
  \]
  In particular, if $\calT$ is closed under \emph{if-then-else}, then there is an $\calL$-term $t$ such that $\calT\vdash\forall x~\varphi(x,t(x))$. 
\end{theorem}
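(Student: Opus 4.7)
The plan is a classical model-theoretic argument via compactness, the contrapositive, and the preservation of universal sentences under substructures. First, I would assume toward contradiction that for every finite sequence of $\calL$-terms $t_1(x),\dots,t_\ell(x)$ one has $\calT\nvdash\bigvee_{i\le \ell}\varphi(x,t_i(x))$, and aim to show that $\calT\nvdash\forall x\,\exists y\,\varphi(x,y)$. Introducing a fresh constant symbol $c$ (not in $\calL$) and passing to the expanded language $\calL^{*}=\calL\cup\{c\}$, the assumption is equivalent to the finite consistency of the theory
$$
\calT^{*}\;\eqdef\;\calT\cup\{\lnot\varphi(c,s(c))\colon s\text{ an }\calL\text{-term}\},
$$
since any inconsistency would involve only finitely many of the new axioms and, because $c$ is fresh and does not occur in $\calT$, yields a proof of a corresponding finite disjunction $\bigvee_i\varphi(x,t_i(x))$ in $\calT$ by generalisation on $c$.

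Next, by compactness of first-order logic, $\calT^{*}$ has a model $\calM$. I would then pass to the substructure $\mathcal{N}\subseteq\calM$ generated by $c^{\calM}$, whose domain is $\{t^{\calM}(c^{\calM})\colon t\text{ an }\calL\text{-term}\}$. Two basic facts do the main work here. First, because $\calT$ is \emph{universal}, every substructure of a model of $\calT$ is itself a model of $\calT$, so $\mathcal{N}\vDash \calT$. Second, since $\varphi$ is quantifier-free, the statements $\lnot\varphi(c,t(c))$ are absolute between $\calM$ and $\mathcal{N}$ and therefore hold in $\mathcal{N}$ for every $\calL$-term $t$. Since every element of $\mathcal{N}$ is of the form $t^{\mathcal{N}}(c^{\mathcal{N}})$, we conclude $\mathcal{N}\vDash\forall y\,\lnot\varphi(c,y)$, whence $\mathcal{N}\nvDash\forall x\,\exists y\,\varphi(x,y)$, contradicting $\calT\vdash \forall x\,\exists y\,\varphi(x,y)$.

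For the ``in particular'' clause, I would iterate \Cref{def:closed-under-if-then-else}. Given terms $t_1,\dots,t_\ell$ such that $\calT\vdash\forall x\,\bigvee_{i\le \ell}\varphi(x,t_i(x))$, I would build, by induction on $\ell$, a single term $t(x)$ that evaluates to $t_j(x)$ for the least $j\le\ell$ with $\varphi(x,t_j(x))$: at stage $j$, I apply closure under if-then-else with the quantifier-free guard $\varphi(x,t_j(x))$ to combine $t_j(x)$ with the term produced for the remaining disjuncts. A straightforward check in $\calT$ shows $\calT\vdash\forall x\,\varphi(x,t(x))$.

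The point to watch is the universality assumption on $\calT$: it is precisely what ensures that the substructure $\mathcal{N}$ remains a model of $\calT$, and the argument would fail for theories whose axioms contain genuine existential quantifiers. Everything else — compactness, the new-constant trick, and absoluteness of quantifier-free formulas over substructures — is entirely routine, which is why the real content of the proof is the one-line fact that term-generated substructures of models of universal theories are again models.
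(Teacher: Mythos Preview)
Your argument is correct: the compactness-plus-term-substructure proof is a standard and complete route to this version of Herbrand's theorem, and your handling of the ``in particular'' clause via iterated if-then-else is fine.

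Note, however, that the paper does not supply its own proof of \Cref{thm:herbrand}: it states the result as well known and points to an external reference (Kohlenbach, \emph{Applied Proof Theory}, \S2). So there is no in-paper proof to compare against. For context, the reference favours a proof-theoretic derivation (cut-elimination / Herbrand disjunction extraction), whereas you have given the model-theoretic alternative. Both are standard; your approach trades the syntactic machinery of sequent calculus for compactness and the preservation of universal sentences under substructures, which is arguably more elementary if one is comfortable with basic model theory.
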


\paragraph{Three quantifiers ($\forall \exists \forall$).} The KPT Witnessing Theorem \cite{KrajicekPT91} extends Herbrand's Theorem by providing witnessing functions for the existential quantifier in a provable $\forall\exists\forall$-sentence. 

\begin{theorem}[KPT Witnessing \cite{KrajicekPT91}]\label{thm:KPT} Let $\calT$ be a universal theory with vocabulary $\calL$. 
  Suppose that, for a quantifier-free $\mathcal{L}$-formula $\varphi$, $\calT \vdash \forall x \,\exists y \,\forall z \; \varphi(x, y, z)$. Then there exist a constant $\ell \geq 1$ and a sequence $t_1, \ldots, t_\ell$ of $\mathcal{L}$-terms such that
  \begin{gather*}
    \calT \vdash \forall x \;\forall \vec{z}\;\;
    \big ( \varphi(x,t_1(x),z_1) \lor \varphi(x,t_2(x,z_1),z_2) 
    \lor \dots \lor \varphi(x,t_\ell(x,z_1, \dots ,z_{\ell-1}), z_\ell) \big ).
  \end{gather*}
\end{theorem}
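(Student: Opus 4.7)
The plan is to reduce this three-quantifier witnessing result to the two-quantifier case (Herbrand's Theorem, \Cref{thm:herbrand}) by Skolemizing the innermost universal quantifier with a fresh function symbol. Concretely, I would extend the language to $\calL^+ = \calL \cup \{g\}$ by adding a fresh binary function symbol $g$, and let $\calT^+$ be $\calT$ viewed as a theory in $\calL^+$; note that $\calT^+$ remains universal since no axioms about $g$ are added. The hypothesis $\calT \vdash \forall x\,\exists y\,\forall z\,\varphi(x,y,z)$ allows us to instantiate $z$ by any $\calL^+$-term, in particular $g(x,y)$, yielding the $\forall\exists$-sentence
$$\calT^+ \vdash \forall x\,\exists y\,\varphi(x, y, g(x,y)).$$
Applying \Cref{thm:herbrand} to this sentence inside $\calT^+$ produces a constant $\ell$ and $\calL^+$-terms $s_1(x),\ldots,s_\ell(x)$ such that
$$\calT^+ \vdash \forall x\,\big(\varphi(x, s_1, g(x,s_1)) \lor \varphi(x, s_2, g(x,s_2)) \lor \cdots \lor \varphi(x, s_\ell, g(x,s_\ell))\big).$$

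The main obstacle is eliminating the fresh symbol $g$ from the witness terms so that the conclusion lives in the original language $\calL$. The key conceptual point is that $g$ is unconstrained by $\calT^+$, so it behaves as a completely arbitrary function, and any fixed provable disjunction must hold for every interpretation of $g$. I would enumerate the distinct $g$-subterms appearing in the disjunction and order them by the subterm relation, so that whenever $g(x,s_j)$ occurs inside $s_i$ we have $j < i$; a small amount of permuting and possibly duplicating disjuncts makes this ordering consistent with the indexing of the disjuncts themselves. Then I would substitute a fresh universal variable $z_j$ for each $g$-subterm $g(x,s_j)$, obtaining $\calL$-terms $t_i(x, z_1,\ldots,z_{i-1})$ from the $s_i$. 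Because $g$ is free in $\calT^+$, the provable disjunction survives this substitution (formally, by universal generalization over the new $z_j$ inside $\calT$), yielding exactly
$$\calT \vdash \forall x\,\forall \vec z\,\big(\varphi(x, t_1(x), z_1) \lor \varphi(x, t_2(x, z_1), z_2) \lor \cdots \lor \varphi(x, t_\ell(x, z_1,\ldots,z_{\ell-1}), z_\ell)\big).$$

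I expect the ordering-and-substitution step to be the main bookkeeping task: one must verify that the partial order induced by ``is a proper subterm of'' on the $g$-occurrences is compatible with the linear indexing of the disjuncts (so that each $t_i$ genuinely depends only on $z_1,\ldots,z_{i-1}$), and that the substitution respects both the truth of the disjunction and the distinctness of the introduced variables. The proof theory is otherwise routine: no deep tools beyond Herbrand's Theorem and the conservativity of adding fresh function symbols to a universal theory are required.
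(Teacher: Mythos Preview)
Your proposal is correct and follows the standard proof-theoretic route to KPT witnessing: Skolemize the inner universal by a fresh function symbol, apply Herbrand's Theorem (\Cref{thm:herbrand}) to the resulting $\forall\exists$-sentence, then eliminate the Skolem symbol by replacing its occurrences with fresh universally quantified variables, ordered by the subterm relation. The bookkeeping you flag (ensuring each $t_i$ depends only on $z_1,\dots,z_{i-1}$, padding the disjunction so each distinct $g$-subterm gets its own index) is real but routine.

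The paper itself does not prove \Cref{thm:KPT}; it is quoted from \cite{KrajicekPT91} as background. However, the paper does remark that KPT is a special case of its game-theoretic witnessing theorem (\Cref{thm:witnessing-tree-exploration} and \Cref{thm:witnessing-general}), and the self-contained proof of the latter given in \Cref{sec:no-counterexample} uses exactly your technique in the general $k$-quantifier setting: form the Herbrand normal form $\varphi^H$ by introducing fresh function symbols $f_1,\dots,f_k$ for the universal quantifiers, apply \Cref{thm:herbrand}, and then translate each $f_j$-subterm into a ``query to the falsifier'' by structural induction on the witnessing terms. Specialized to the $\forall\exists\forall$ case (a single Skolem function, one-round evaluation game), that argument collapses to precisely the one you outline. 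So your approach is not merely correct but coincides with the method the paper employs for its generalizations of KPT.
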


KPT witnessing has a well-known computational interpretation as an interactive game between a student and a teacher (see, e.g.,~\cite{DBLP:journals/apal/Pich15}). In the first round, the student is given an arbitrary input $x$, and computes according to the term $t_1(x)$. This computation provides a candidate object $y_1$. The teacher then replies with an arbitrary ``counterexample'' $z_1$ such that $\neg \varphi(x,y_1,z_1)$ holds, whenever such $z_1$ exists. Note that the next move of the student takes into account previously presented counterexamples, i.e., the term $t_2$ depends on both $x$ and $z_1$. According to Theorem \ref{thm:KPT}, the game ends in at most $\ell$ rounds, and the student is guaranteed to succeed, i.e., to output $y$ such that $\varphi(x,y,z)$ holds for every $z$.

\begin{example}\label{example: non-constructive proof}
    An example of the interactive game is the proof of the existence of two irrational numbers $x,y$ such that $x^y$ is rational (see \Cref{example: non-constructive}), formalized (in some appropriate theory for real numbers) as: 
    \begin{align*}
    &\exists x,y\in\bbR~\exists p,q\in\bbZ~\forall p',q'\in\bbZ~\psi(x,y,p,q,p',q'), \text{ where }\\ 
    &\psi(x,y,p,q,p',q')\triangleq x^y=p/q\land x\ne p'/q'\land y\ne p'/q'
    \end{align*}
    The student wants to learn $x,y,p,q$ such that $\psi(x,y,p,q,p',q')$ holds for every $p',q'$, with the help of a teacher that finds counterexamples $p',q'$ making $\psi(x,y,p,q,p',q')$ false. The student's strategy (say, extracted from the proof using KPT witnessing) is that: 
    \begin{itemize}
        \item In the first round, try $x=(\sqrt 2)^{\sqrt 2},y=\sqrt 2,p=2,q=1$, and ask for a counterexample $p',q'$ from the teacher if it failed.   
        \item Since $y\ne p'/q'$, the student knows that $x=p'/q'$. The student can then propose in the second round that $x=\sqrt 2,y=\sqrt 2,p=p',q=q'$.
    \end{itemize}
\end{example}

\paragraph{Four quantifiers ($\forall \exists \forall \exists$).} It is also known that one can prove a witnessing theorem for $\forall\exists\forall\exists$-sentences using the standard model-theoretical proof of the KPT witnessing theorem. 

\begin{theorem}[KPT Witnessing for $\forall \exists \forall \exists$-Sentences \cite{KrajicekPT91}]\label{thm:KPT4} Let $\calT$ be a universal theory with vocabulary $\calL$. Let $\varphi$ be a quantifier-free $\mathcal{L}$-formula, and suppose that $\mathcal{T} \;\vdash\; \forall x\, \exists y \, \forall z \, \exists w\;\varphi(x,y,z,w)$. Then there is an $\ell\ge 1$ and a finite sequence $t_1, \dots, t_\ell$ of $\calL$-terms such that
\[
\mathcal{T} \;\vdash\;  \forall x,z_1,\dots,z_k\;
\bigl(\psi(z,t_1(z),z_1) \lor \psi(x,t_2(x,z_1),z_2) \lor \dots \lor \psi(x,t_\ell(z_1,\dots,z_{\ell-1}),z_\ell)\bigr),
\]
where $\psi(x,y,z) \triangleq \exists w~\varphi(x,y,z,w)$.
\end{theorem}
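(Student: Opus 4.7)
The plan is to adapt the standard model-theoretic (compactness) proof of the KPT Witnessing Theorem (\Cref{thm:KPT}) to this setting, treating $\psi(x,y,z) \triangleq \exists w\,\varphi(x,y,z,w)$ as a ``black-box'' matrix. The key observation is that although $\psi$ is not quantifier-free, its negation $\forall w\,\neg\varphi(x,y,z,w)$ is a \emph{universal} $\calL$-formula in $(x,y,z)$, which is exactly what we need so that adding such axioms preserves universality and thus commutes with passing to substructures of models of $\calT$.

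I would argue by contradiction: suppose no finite sequence $t_1,\dots,t_\ell$ of $\calL$-terms makes the displayed disjunction provable. Expand $\calL$ to $\calL^{*} = \calL \cup \{c, d_1, d_2, \dots\}$ with fresh constants, and let $\calT^{*}$ consist of $\calT$ together with, for every $k\ge 1$ and every $\calL$-term $t(x,z_1,\dots,z_{k-1})$, the universal axiom
\[
  \forall w\,\neg \varphi\bigl(c,\, t(c, d_1, \dots, d_{k-1}),\, d_k,\, w\bigr).
\]
To see that $\calT^{*}$ is consistent, note that any finite fragment mentions only finitely many terms $t_1,\dots,t_\ell$. By the contrapositive of our assumption, the $\calL$-theory
\[
  \calT \cup \bigl\{\exists x\,\exists z_1\dots\exists z_\ell\; \textstyle\bigwedge_{i=1}^{\ell} \neg\psi\bigl(x,\,t_i(x,z_1,\dots,z_{i-1}),\,z_i\bigr)\bigr\}
\]
is consistent, and any model together with witnesses for $x, z_1, \dots, z_\ell$ interpreted as $c, d_1, \dots, d_\ell$ supplies a model of the fragment. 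Compactness then yields a model $\calM^{*} \models \calT^{*}$.

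Finally, I would pass to the $\calL$-substructure $\calN \subseteq \calM^{*}$ generated by $c^{\calM^{*}}, d_1^{\calM^{*}}, d_2^{\calM^{*}}, \dots$. Since both $\calT$ and the added axioms are universal, they are preserved under substructures, so $\calN \models \calT^{*}$. Every element of $\calN$ has the form $t^{\calN}(c, d_1,\dots,d_k)$ for some $\calL$-term $t$ and some $k$, and hence for every $y \in \calN$ the relevant axiom of $\calT^{*}$ gives $\calN \models \forall w\,\neg\varphi(c,y,d_{k+1},w)$. Thus $\calN \models \neg\exists y\,\forall z\,\exists w\,\varphi(c,y,z,w)$, contradicting $\calT \vdash \forall x\,\exists y\,\forall z\,\exists w\,\varphi(x,y,z,w)$ together with $\calN \models \calT$. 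The main obstacle is precisely this last step: one must guarantee that the substructure generated by the distinguished constants realizes all potential witnesses for $\exists y$, which is why the enumeration of axioms ranges over \emph{all} $\calL$-terms and why the universality of $\calT$ and of the added axioms is essential (otherwise the axioms of $\calT^{*}$ need not persist when cutting down to $\calN$, and the argument collapses).
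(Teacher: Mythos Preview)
Your approach is exactly what the paper points to: it does not give its own proof of \Cref{thm:KPT4} but simply cites \cite{KrajicekPT91} and remarks that the result follows ``using the standard model-theoretical proof of the KPT witnessing theorem.'' Your proposal is precisely that argument, and the key observation you isolate---that $\neg\psi(x,y,z)\equiv\forall w\,\neg\varphi(x,y,z,w)$ is universal, so the added axioms survive passage to the term-generated substructure---is the heart of the matter.

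There is, however, one technical point in your compactness step that needs care. As you have set up $\calT^{*}$, the axiom scheme is indexed by pairs $(t,k)$, so a finite fragment may contain two axioms $\forall w\,\neg\varphi(c,s_1(c),d_1,w)$ and $\forall w\,\neg\varphi(c,s_2(c),d_1,w)$ sharing the \emph{same} counterexample constant $d_1$. The failure of the theorem for the sequence $(s_1,s_2)$ only yields a model with $\neg\psi(a,s_1(a),b_1)$ and $\neg\psi(a,s_2(a),b_2)$ for possibly distinct $b_1,b_2$, so you cannot directly interpret $d_1$. (And conversely, $\forall z\,[A(z)\lor B(z)]$ does not imply $\forall z_1\forall z_2\,[A(z_1)\lor B(z_2)]$, so inconsistency of such a fragment does not obviously yield the theorem either.) The standard fix is to first enumerate all closed $\calL\cup\{c,d_1,d_2,\dots\}$-terms as $\tau_1,\tau_2,\dots$ with $\tau_i$ using only $c,d_1,\dots,d_{i-1}$, and add only the axioms $\forall w\,\neg\varphi(c,\tau_i,d_i,w)$; then each finite fragment corresponds to an initial segment of a sequence as in the statement, and your compactness argument goes through verbatim. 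With this adjustment your proof is complete.
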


\paragraph{Five or more quantifiers?} Unlike the case of four quantifiers, there is no obvious direct generalization of the KPT witnessing theorem to five or more quantifiers. The intuitive reason is that there is more than one universal quantifier within the outermost existential quantifier that we would like to witness, so  the interaction pattern of the student and the teacher, which can provide counterexamples for all but the outermost universal quantifier, becomes much more complicated. This can be mitigated with the use of Herbrandization, as done in \Cref{sec:no-counterexample}, but the corresponding witnessing results become significantly more involved. %

\subsection{A universal theory for $\Tpv^i$}\label{sec:universal_theory}

There are two immediate issues when trying to show the unprovability of the lower bound sentence $\LB^i$ in $\Tpv^i$. Firstly, $\LB^i$ contains more quantifier alternations than a typical witnessing theorem can handle (see \Cref{sec:witnessing-1}).  Secondly, $\Tpv^i$ is not a universal theory if $i > 1$, which violates a common assumption in these results. To address the latter, the first step of our argument is to turn $\Tpv^i$ into a universal theory by introducing Skolem functions. In turn, this will allow us to reduce the quantifier complexity of $\LB^i$ so that the techniques developed in \Cref{sec:witnessing-1} can be applied (see \Cref{sec:circuit-vs-circuit}).

\paragraph{Theory $\Upv^i$ and Language  $\Lpv^i$.} Let $i\ge 1$. For each $(\Pi_{i-1}^b\cup\Sigma_{i-1}^b)$-formula $\alpha(\vec z)$ over $\Lpv$, we introduce a function symbol $f_{\alpha}$ interpreted (in the standard model) as the Boolean-valued function 
\[
f_{\alpha}^\bbN(\vec z)=\begin{cases}
1\quad & \text{if}~\alpha^\bbN(\vec z)\text{ holds}; \\ 
0 & \text{otherwise.}
\end{cases}
\]
Furthermore, when $i\ge 2$, for each $\Sigma_{i-1}^b$-formula $\beta(\vec x,y)$ and term $t$ in $\Lpv$, we introduce a function symbol $g_{\beta,t}$ that is interpreted (in the standard model) as the function\footnote{If the reader is somewhat uncomfortable with the possibility that the smallest $y$ satisfying the condition below might be $0$, we stress that this is not going to be an issue in our construction -- see, e.g., the statement of  \Cref{lmm:defining-axiom-g}.}
\[
  g_{\beta,t}^\bbN(\vec x)=\begin{cases}
                     \text{smallest }y \in \mathbb{N}\text{ s.t.~}\beta^\bbN(\vec x,y)\quad & \text{if}~\exists y\le t^{\mathbb{N}}(\vec x)~\beta^\bbN(\vec x,y); \\
                     0 & \text{otherwise.}
                   \end{cases}
\]
Denote by $\Lpv^i$ the language of $\Lpv$ supplemented with the new function symbols. Let $\Upv^i$ be the theory consisting of all universal true sentences (over the standard model) in $\Lpv^i$.

\paragraph{Correctness of the extension $\Upv^i$.} Now we show that $\Upv^i$ is an extension of $\Tpv^i$, that is, every sentence provable in $\Tpv^i$  is also provable in $\Upv^i$. We state two useful lemmas (see \Cref{sec:universal-theory} for proofs).

\begin{lemma}[Defining Axioms of $g_{\beta,t}$]\label{lmm:defining-axiom-g}
    Let $i\ge 2$, $\beta(\vec{x},y)$ be any $\Sigma_{i-1}^b$-formula in $\Lpv$, and $t$ be any term in $\Lpv$. Then $\Upv^i\vdash \forall\vec x~((\exists y\le t(\vec x)~f_{\beta}(\vec x, y)=1)\leftrightarrow f_{\beta}(\vec x,g_{\beta,t}(\vec x))=1)$.
\end{lemma}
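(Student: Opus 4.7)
The plan is to derive the biconditional by exhibiting two auxiliary universal $\Lpv^i$-sentences that are \emph{true in $\bbN$} (and hence axioms of $\Upv^i$), and then combine them by simple first-order reasoning.

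\textbf{Forward direction.} I would use the universal sentence
\[
  (\star)\quad \forall \vec x,y\;\bigl((y\le t(\vec x)\land f_\beta(\vec x,y)=1)\to f_\beta(\vec x,g_{\beta,t}(\vec x))=1\bigr).
\]
This is true in the standard model: if some $y\le t(\vec x)$ satisfies $\beta(\vec x,y)$, then by construction $g_{\beta,t}(\vec x)$ is the \emph{smallest} such $y$, so $\beta(\vec x,g_{\beta,t}(\vec x))$ holds and therefore $f_\beta(\vec x,g_{\beta,t}(\vec x))=1$. Being a true universal $\Lpv^i$-sentence, $(\star)$ is an axiom of $\Upv^i$. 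From $(\star)$ and the assumption $\exists y\le t(\vec x)\;f_\beta(\vec x,y)=1$, the conclusion $f_\beta(\vec x,g_{\beta,t}(\vec x))=1$ follows by $\exists$-elimination.

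\textbf{Backward direction.} The natural witness for the existential quantifier is $y=g_{\beta,t}(\vec x)$, so it suffices to show that $g_{\beta,t}(\vec x)$ respects the bound. For this I would invoke the universal sentence
\[
  (\star\star)\quad \forall \vec x\;\;g_{\beta,t}(\vec x)\le t(\vec x),
\]
which is again true in $\bbN$: on inputs $\vec x$ for which $\exists y\le t(\vec x)\,\beta(\vec x,y)$ holds, $g_{\beta,t}(\vec x)$ is by definition the least such $y$ and so is bounded by $t(\vec x)$; otherwise $g_{\beta,t}(\vec x)=0\le t(\vec x)$ because $t(\vec x)\in\bbN$. As a true universal $\Lpv^i$-sentence, $(\star\star)$ is an axiom of $\Upv^i$. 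Given the hypothesis $f_\beta(\vec x,g_{\beta,t}(\vec x))=1$, the pair $(\star\star)$ and $\exists$-introduction with the witness $y\eqdef g_{\beta,t}(\vec x)$ yield $\exists y\le t(\vec x)\;f_\beta(\vec x,y)=1$.

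\textbf{Main obstacle.} There is essentially no deep obstacle here; the delicate point is simply bookkeeping around the default clause ``$g_{\beta,t}(\vec x)=0$ when no witness exists''. One must verify that this default value is consistent with both $(\star)$ and $(\star\star)$ (which it is, since $0\le t(\vec x)$ always holds and, in the default branch, $f_\beta(\vec x,0)$ could only be $1$ if a witness did exist, contradicting the branch condition). After this check, the proof is pure first-order manipulation of two universal truths, and the hypothesis $i\ge 2$ is used only to ensure that $\beta\in\Sigma_{i-1}^b$ and hence $g_{\beta,t}$ is a symbol of $\Lpv^i$.
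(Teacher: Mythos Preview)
Your proposal is correct and essentially identical to the paper's proof: both directions are obtained by exhibiting the same two true universal $\Lpv^i$-sentences (your $(\star)$ is the paper's $(*)$, and your $(\star\star)$ is exactly the sentence used in the paper's Case~2), which are therefore axioms of $\Upv^i$, followed by routine first-order reasoning. Your additional remarks about the default value $g_{\beta,t}(\vec x)=0$ are accurate and slightly more explicit than the paper's treatment.
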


\begin{lemma}[Defining Axioms of $f_\alpha$]\label{lmm:defining-axiom-f}
    For every $i\ge 1$ and $(\Pi_{i-1}^b\cup\Sigma_{i-1}^b)$-formula $\alpha(\vec z)$ in the language $\Lpv$, $\Upv^i\vdash\forall\vec z~(\alpha(\vec z)\leftrightarrow f_\alpha(\vec z)=1)$. 
\end{lemma}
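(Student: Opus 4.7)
The plan is to prove the lemma by structural induction on the bounded $\Lpv$-formula $\alpha(\vec z)\in\Pi^b_{i-1}\cup\Sigma^b_{i-1}$, exploiting the fact that $\Upv^i$ contains \emph{every} universal $\Lpv^i$-sentence that is true in $\bbN$.

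For the base case, when $\alpha$ is quantifier-free, $\forall\vec z\,(\alpha(\vec z)\leftrightarrow f_\alpha(\vec z)=1)$ is itself a universal $\Lpv^i$-sentence true in $\bbN$ by the very definition of $f_\alpha^\bbN$, hence an axiom of $\Upv^i$. This already handles the case $i=1$ entirely, since $\Pi^b_0\cup\Sigma^b_0$ consists precisely of the quantifier-free formulas.

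For the inductive step ($i\ge 2$), I would split $\alpha$ by its outermost logical symbol. (i) If $\alpha=\alpha_1\wedge\alpha_2$, $\alpha_1\vee\alpha_2$, or $\neg\alpha_1$, combine the inductive hypotheses on $\alpha_1,\alpha_2$ with the true universal $\Lpv^i$-sentence (hence axiom of $\Upv^i$) stating that $f_\alpha$ equals the corresponding Boolean combination of $f_{\alpha_1}$ and $f_{\alpha_2}$. (ii) If $\alpha(\vec z)=\exists y\le t(\vec z)\,\beta(\vec z,y)$, first use the IH to rewrite the matrix as $f_\beta(\vec z,y)=1$, then invoke Lemma \ref{lmm:defining-axiom-g} to collapse the existential into $f_\beta(\vec z,g_{\beta,t}(\vec z))=1$, and finally apply the axiom $\forall\vec z\,(f_\beta(\vec z,g_{\beta,t}(\vec z))=1\leftrightarrow f_\alpha(\vec z)=1)$, which is again a universal $\Lpv^i$-sentence true in $\bbN$. (iii) If $\alpha=\forall y\le t(\vec z)\,\beta(\vec z,y)$, dualize to $\neg\exists y\le t(\vec z)\,\neg\beta(\vec z,y)$ and reduce to (i) and (ii).

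The main obstacle I anticipate is the bookkeeping in case (ii): verifying that the subformula $\beta$ really is a $\Sigma^b_{i-1}$-formula, so that the function symbol $g_{\beta,t}$ is available and Lemma \ref{lmm:defining-axiom-g} applies. This follows from a careful inspection of the strict hierarchy: when a bounded existential is the outermost connective of some $\alpha\in\Sigma^b_{i-1}\cup\Pi^b_{i-1}$, either the existential preserves membership in $\Sigma^b_{i-1}$ (so $\beta\in\Sigma^b_{i-1}$), or it introduces a fresh block on top of some $\gamma\in\Sigma^b_{i-2}\cup\Pi^b_{i-2}\subseteq\Sigma^b_{i-1}$, in which case $\beta=\gamma$ is still $\Sigma^b_{i-1}$; the analogous check in case (iii) goes through after negation. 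A secondary point to confirm is that the intended equivalence $f_\beta^\bbN(\vec z,g_{\beta,t}^\bbN(\vec z))=1\Leftrightarrow\exists y\le t(\vec z)\,\beta(\vec z,y)$ remains valid when no witness exists: this uses $0\le t^\bbN(\vec z)$ in $\bbN$, which forces $\neg\beta(\vec z,0)$ in the no-witness branch and makes the default $g_{\beta,t}^\bbN(\vec z)=0$ consistent with the biconditional.
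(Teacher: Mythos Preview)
Your proposal is correct and uses the same core ingredients as the paper: induction on the quantifier structure of $\alpha$, the fact that every true universal $\Lpv^i$-sentence is an axiom of $\Upv^i$, and the Skolem functions $g_{\beta,t}$ via Lemma~\ref{lmm:defining-axiom-g} to collapse bounded quantifiers. The organization differs in two ways. First, the paper begins by reducing to a prenex normal form with strictly alternating quantifiers (Lemma~\ref{lmm:bounded-normal-form}) and then inducts on the number of quantifiers, whereas you do a direct structural induction that handles the Boolean connectives explicitly; your route is slightly more elementary in that it avoids the normalization lemma. Second, the paper takes the $\forall$ case as primary (proving both implications directly, the harder one via $g_{\lnot\alpha',t}$) and reduces the $\exists$ case to it by negation, while you do the reverse, treating $\exists$ as primary and dualizing $\forall$; both orderings work, and your bookkeeping concern about $\beta\in\Sigma^b_{i-1}$ is exactly the point the paper sidesteps by normalizing first. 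The secondary point you raise about the default value $g_{\beta,t}^\bbN(\vec z)=0$ is handled identically in both arguments, relying on $0\le t^\bbN(\vec z)$ in $\bbN$.
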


\begin{theorem}\label{thm:Upv-extends-Tpv}
    For every $i\ge 1$ and $\Lpv$-sentence $\varphi$, if $\Tpv^i\vdash \varphi$, then $\Upv^i\vdash\varphi$. 
\end{theorem}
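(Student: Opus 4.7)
The plan is to show that every axiom of $\Tpv^i$ is provable in $\Upv^i$; since $\Upv^i$ is a first-order theory whose deductive machinery subsumes that of $\Tpv^i$ (the vocabulary $\Lpv^i$ only extends $\Lpv$ with new function symbols and does not interfere with any $\Lpv$-proof), this immediately implies that every $\Tpv^i$-theorem is a $\Upv^i$-theorem.

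First I would treat the trivial case $i = 1$. Here the axioms of $\Tpv^1$ are true universal sentences of $\Lpv$ (since $\Sigma^b_0$ is quantifier-free), and these are in particular true universal sentences of $\Lpv^1 \supseteq \Lpv$; they are therefore axioms of $\Upv^1$ by definition. The remaining work is for $i \ge 2$. Fix an axiom $\varphi \equiv \forall \vec{y}\,\psi(\vec{y})$ of $\Tpv^i$, where $\psi \in \Sigma^b_{i-1}$. I would then argue as follows. On the one hand, \Cref{lmm:defining-axiom-f} applied to $\alpha := \psi$ gives
\[
\Upv^i \vdash \forall \vec{y}\,\bigl(\psi(\vec{y}) \leftrightarrow f_\psi(\vec{y}) = 1\bigr).
\]
On the other hand, because $\varphi$ is true in $\mathbb{N}$ we have $\psi^{\mathbb{N}}(\vec{y})$ for every $\vec{y}$, and by the intended interpretation of the new symbol $f_\psi$ this means $f_\psi^{\mathbb{N}}(\vec{y}) = 1$ for every $\vec{y}$. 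Consequently the \emph{universal} $\Lpv^i$-sentence $\forall \vec{y}\,\bigl(f_\psi(\vec{y}) = 1\bigr)$ is true in the standard model and hence is itself an axiom of $\Upv^i$. Combining the two displayed facts via modus ponens inside $\Upv^i$ yields $\Upv^i \vdash \forall \vec{y}\,\psi(\vec{y})$, i.e., $\Upv^i \vdash \varphi$.

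Having established that every axiom of $\Tpv^i$ is a theorem of $\Upv^i$, I would conclude by a routine proof-theoretic argument: any $\Tpv^i$-derivation of a sentence $\varphi$ uses only finitely many axioms of $\Tpv^i$, each of which is provable in $\Upv^i$, so one obtains a $\Upv^i$-derivation of $\varphi$ by prefixing the $\Tpv^i$-derivation with the $\Upv^i$-proofs of the needed axioms. (Because the new function symbols in $\Lpv^i$ do not occur in $\varphi$ or in the axioms of $\Tpv^i$, no issue arises regarding the enlarged language.)

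The principal technical work is not in this theorem itself but in the preceding lemmas that it invokes: \Cref{lmm:defining-axiom-g} and \Cref{lmm:defining-axiom-f}. The former supplies the Skolem functions $g_{\beta,t}$ with their defining axioms (which are universal and hence lie in $\Upv^i$), and the latter uses these Skolem functions to eliminate the bounded quantifiers hidden inside a $\Sigma^b_{i-1}$-formula in favour of the Boolean-valued symbol $f_\psi$. The only delicate point, which is handled in those lemmas (and is the expected main obstacle), is the inductive replacement of nested $\exists y \le t$ and $\forall y \le t$ quantifiers by terms of $\Lpv^i$; once this equivalence is secured, the proof of \Cref{thm:Upv-extends-Tpv} is just the short assembly described above.
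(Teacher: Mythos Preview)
Your proposal is correct and follows essentially the same approach as the paper: reduce to showing each axiom $\forall\vec{y}\,\psi(\vec{y})$ of $\Tpv^i$ is provable in $\Upv^i$, invoke \Cref{lmm:defining-axiom-f} to get $\Upv^i\vdash\forall\vec{y}\,(\psi(\vec{y})\leftrightarrow f_\psi(\vec{y})=1)$, and observe that $\forall\vec{y}\,(f_\psi(\vec{y})=1)$ is a true universal $\Lpv^i$-sentence and hence an axiom of $\Upv^i$. Your separate treatment of $i=1$ is unnecessary (since \Cref{lmm:defining-axiom-f} already covers that case), but harmless.
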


\begin{proof}
To prove this lemma, it is sufficient to show that for every $\varphi\in \Tpv^i$, $\Upv^i\vdash\varphi$. Let $\varphi=\forall \vec{x}~\alpha(\vec{x})$ be an axiom of $\Tpv^i$, where $\alpha(\vec{x})$ is a $\Sigma_{i-1}^b$-formula. By Lemma \ref{lmm:defining-axiom-f}, we only need to show that $\Upv^i$ proves $\forall \vec{x}~f_\alpha(\vec{x})=1$. This follows directly from the fact that $\forall \vec{x}~f_\alpha(\vec{x})=1$ is a true universal sentence in the standard model. 
\end{proof}

\paragraph{Complexity of the function symbols in $\Lpv^i$.}
As we discussed in Section \ref{sec:general-ph:intuition}, we will extract a KPT-style student-teacher game from the provability of the lower bound sentence in the universal theory $\Upv^i$. In this step, the complexity of the student is determined by the complexity of the standard interpretations of the function symbols in the language $\Lpv^i$, which consists of both the polynomial-time computable functions (i.e. the symbols in $\Lpv$) and the new function symbols $f_\alpha$ and $g_{\beta,t}$. Now we determine the complexity of the functions $f_\alpha$ and Skolem functions $g_{\beta, t}$. 

\begin{lemma}\label{lmm:complexity-f} Let $i \geq 1$. 
    For every function symbol $f_\alpha$ in $\Lpv^i$, $f_\alpha^\bbN:\bbN\to\{0,1\}$ is the characteristic function of a language in $\Pi^p_{i-1}\cup\Sigma^p_{i-1}$. 
\end{lemma}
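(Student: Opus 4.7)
The plan is essentially to unwind the definition of $f_\alpha$ and invoke the standard correspondence between strict bounded formulas and the levels of the polynomial hierarchy that was already recalled in the preliminaries (the theorem of Stockmeyer, Wrathall, and Kent--Hodgson cited as \citep{DBLP:journals/tcs/Stockmeyer76, DBLP:journals/tcs/Wrathall76, DBLP:journals/tcs/KentH82}).

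By construction of $\Lpv^i$, the symbol $f_\alpha$ was introduced for some $(\Pi^b_{i-1}\cup\Sigma^b_{i-1})$-formula $\alpha(\vec{z})$ over $\Lpv$, and its interpretation in the standard model is precisely the characteristic function of the language
\[
    L_\alpha \;=\; \{\,\vec{z}\in\mathbb{N}^{*} : \mathbb{N}\models \alpha(\vec{z})\,\}.
\]
So the content of the lemma reduces to showing $L_\alpha \in \Sigma^p_{i-1}\cup\Pi^p_{i-1}$. First I would split on which of the two classes $\alpha$ lies in. If $\alpha\in\Sigma^b_{i-1}$, the cited correspondence yields $L_\alpha\in\Sigma^p_{i-1}$ directly. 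If instead $\alpha\in\Pi^b_{i-1}$, then $\neg\alpha\in\Sigma^b_{i-1}$ (by closure of the bounded hierarchy under negation, as recorded in the definition of $\Sigma^b_i$/$\Pi^b_i$), so $L_{\neg\alpha}\in\Sigma^p_{i-1}$ and hence $L_\alpha\in\Pi^p_{i-1}$ by closure of $\Pi^p_{i-1}$ under complement.

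In either case $L_\alpha\in\Sigma^p_{i-1}\cup\Pi^p_{i-1}$, and therefore $f_\alpha^{\mathbb{N}}$ is the characteristic function of a language in $\Sigma^p_{i-1}\cup\Pi^p_{i-1}$, as required. There is no real obstacle here: this is a bookkeeping lemma whose only subtlety is recognising that ``strict'' bounded formulas suffice to exhaust $\Sigma^p_{i-1}$ and $\Pi^p_{i-1}$, which is exactly the content of the Stockmeyer--Wrathall--Kent--Hodgson equivalence stated earlier. Everything else is definitional.
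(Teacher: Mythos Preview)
Your proof is correct and follows essentially the same approach as the paper's own: unwind the definition of $f_\alpha$ as the characteristic function of $\alpha^{\mathbb{N}}$ for a $(\Pi^b_{i-1}\cup\Sigma^b_{i-1})$-formula $\alpha$, and then invoke the standard correspondence between (strict) bounded formulas over $\Lpv$ and levels of the polynomial hierarchy. The paper's version is slightly terser (it just says ``it is not hard to see'' rather than citing the correspondence explicitly or splitting into the $\Sigma$/$\Pi$ cases), but the argument is the same.
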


\begin{proof}
    Recall that each $f_\alpha$ is introduced for a $(\Pi_{i-1}^b\cup\Sigma_{i-1}^b)$-formula $\alpha(\vec z)$ with language $\Lpv$ such that $f_\alpha^\bbN$ is the characteristic function of $\alpha^\bbN$, i.e., for every $\vec m\in\vec\bbN$, $f_\alpha^\bbN(\vec m)=1$ if and only if $\alpha^\bbN(\vec m)$ holds. Since $\alpha$ is a bounded formula and the initial function symbols and relation symbols, when interpreted in the standard model, are polynomial-time computable, it is not hard to see that $\alpha^\bbN\in \Pi^p_{i-1}\cup\Sigma^p_{i-1}$.
\end{proof}

\begin{lemma}\label{lmm:complexity-g} Let $i \geq 2$. 
    For every function symbol $g_{\beta,t}$ in $\Lpv^i$, $g_{\beta,t}^\bbN\in\FP^{\Sigma_{i-1}^p}$. 
\end{lemma}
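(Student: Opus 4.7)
The plan is to observe that computing $g_{\beta,t}^\bbN(\vec x)$ reduces to a bounded search over an interval of polynomial bit-length, each step of which is a single query to a $\Sigma_{i-1}^p$ oracle. Concretely, since $t$ is a term in $\Lpv$, its standard interpretation $t^\bbN$ is polynomial-time computable; hence, given $\vec x$, we can compute $N(\vec x) := t^\bbN(\vec x)$ in polynomial time, and $|N(\vec x)|$ is polynomial in $|\vec x|$. Since $\beta(\vec x, y)$ is a $\Sigma^b_{i-1}$-formula in $\Lpv$, by the correspondence between the bounded hierarchy and the polynomial-time hierarchy, the language defined by $\beta^\bbN$ lies in $\Sigma^p_{i-1}$.

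First, I would define an auxiliary predicate
\[
E(\vec x, z) \;\equiv\; \exists y \le z \;\; \beta^\bbN(\vec x, y),
\]
and note that for $i \geq 2$ the class $\Sigma^p_{i-1}$ is closed under polynomially bounded existential quantification, so $E \in \Sigma^p_{i-1}$. Then I would compute $g_{\beta,t}^\bbN(\vec x)$ as follows: query the $\Sigma^p_{i-1}$-oracle on $E(\vec x, N(\vec x))$; if the answer is negative, output $0$, which matches the second case of the definition of $g_{\beta,t}^\bbN$. Otherwise, perform a standard binary search on $[0, N(\vec x)]$: maintain an interval $[\ell, r]$ with the invariant that the smallest $y$ satisfying $\beta^\bbN(\vec x, y)$ lies in $[\ell, r]$, and at each step query $E(\vec x, \lfloor (\ell+r)/2 \rfloor)$ to halve the interval. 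Because $|N(\vec x)| = \poly(|\vec x|)$, this requires only $\poly(|\vec x|)$ oracle queries.

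Once $\ell = r = y^*$, output $y^*$. Correctness follows from the invariant: $y^*$ is the least $y \le N(\vec x)$ with $\beta^\bbN(\vec x, y)$, and under the hypothesis $\exists y \le t^\bbN(\vec x)\,\beta^\bbN(\vec x, y)$ this is exactly the smallest $y \in \bbN$ satisfying $\beta^\bbN(\vec x, y)$, matching the first case of the definition. The whole procedure runs in polynomial time with a $\Sigma^p_{i-1}$ oracle, so $g_{\beta,t}^\bbN \in \FP^{\Sigma^p_{i-1}}$.

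There is no real obstacle; the only point to verify carefully is the closure of $\Sigma^p_{i-1}$ under polynomially bounded existential quantification (which ensures $E \in \Sigma^p_{i-1}$ for $i \geq 2$) and the fact that the bound $t^\bbN(\vec x)$ has polynomial bit-length, both of which are standard.
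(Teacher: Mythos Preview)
Your proposal is correct and matches the paper's own proof essentially line for line: both compute $t^{\bbN}(\vec x)$, observe that the prefix-existence predicate lies in $\Sigma^p_{i-1}$ (using closure under polynomially bounded existential quantification for $i\ge 2$), and then binary-search for the minimum witness. The only cosmetic difference is that the paper phrases the oracle query as ``$\exists y\in[l,r]\,\beta^{\bbN}(\vec x,y)$'' rather than your ``$\exists y\le z\,\beta^{\bbN}(\vec x,y)$'', which makes no substantive difference.
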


\begin{proof}
  Recall that $g_{\beta,t}$ is introduced for every $\Sigma_{i-1}^b$-formula $\beta$ and term $t$ in the language $\Lpv$, and that $g_{\beta,t}^\bbN(\vec x)$ finds the minimum $y^*$ such that $\beta^\bbN(\vec x,y^*)$ holds if there is $y\le t(\vec x)$ such that $\beta^\bbN(\vec x,y)$ holds, or outputs $0$ otherwise. Note that using a $\Sigma^p_{i-1}$ oracle we can decide for $0\le l\le r\le t(\vec x)$ whether there exists $y\in [l,r]$ such that $\beta^\bbN(\vec x,y)$ holds. So we can perform a binary search over $[0,t(\vec x)]$ to find the minimum $y^*$ such that $\beta^\bbN(\vec x,y^*)$ holds or detect that no such element exists. This is an $\FP^{\Sigma^p_{i-1}}$ computation for every $i \geq 2$.  
\end{proof}

\begin{theorem}\label{thm:lpvi-term-complexity}
  Let $i \geq 1$. For every $\Lpv^i$-term $t(x_1,\dots,x_\ell)$, we have  $t^\mathbb{N}(x_1,\dots,x_\ell)\in\FP^{\Sigma^p_{i-1}}$.  
\end{theorem}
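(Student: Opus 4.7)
The plan is to proceed by structural induction on the $\Lpv^i$-term $t$, establishing simultaneously two properties: (a) a polynomial size bound $|t^\mathbb{N}(x_1, \dots, x_\ell)| \leq p_t(|x_1|, \dots, |x_\ell|)$ for some polynomial $p_t$, and (b) membership of $t^\mathbb{N}$ in $\FP^{\Sigma^p_{i-1}}$. The size bound is needed in order to control the complexity of composition, since both $\FP$ and $\FP^{\Sigma^p_{i-1}}$ computations are measured with respect to input length.

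For the base case, variables and the constants $0, 1$ are trivially polynomial-time computable with polynomial-size outputs. For the inductive step, suppose $t = h(t_1, \dots, t_k)$ where $h$ is a function symbol of $\Lpv^i$ and each $t_j$ satisfies the inductive hypothesis. By the inductive hypothesis, each $t_j^\mathbb{N}(\vec{x})$ can be computed in $\FP^{\Sigma^p_{i-1}}$ and has output of size bounded by some polynomial $p_{t_j}(|\vec{x}|)$. We then invoke the appropriate complexity characterisation of $h^\mathbb{N}$ based on its type:
\begin{itemize}
    \item If $h \in \Lpv$, then $h^\mathbb{N} \in \FP$, so $h^\mathbb{N}(t_1^\mathbb{N}(\vec{x}), \dots, t_k^\mathbb{N}(\vec{x}))$ is computable in $\FP \subseteq \FP^{\Sigma^p_{i-1}}$ on inputs whose size is polynomial in $|\vec{x}|$, with a polynomial-size output.
    \item If $h = f_\alpha$, then by \Cref{lmm:complexity-f}, $h^\mathbb{N}$ is the characteristic function of a language in $\Pi^p_{i-1} \cup \Sigma^p_{i-1}$, hence computable in $\FP^{\Sigma^p_{i-1}}$ with output of size $\leq 1$.
    \item If $h = g_{\beta, t'}$ (requiring $i \geq 2$), then by \Cref{lmm:complexity-g}, $h^\mathbb{N} \in \FP^{\Sigma^p_{i-1}}$. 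Moreover, the output $g_{\beta, t'}^\mathbb{N}(\vec{u})$ is bounded by $t'^\mathbb{N}(\vec{u})$, and by the inductive hypothesis applied to $t'$, this is of size polynomial in $|\vec{u}|$.
\end{itemize}
In each case, the composition $h^\mathbb{N}(t_1^\mathbb{N}(\vec{x}), \dots, t_k^\mathbb{N}(\vec{x}))$ is computable in $\FP^{\Sigma^p_{i-1}}$, since we first compute each $t_j^\mathbb{N}(\vec{x})$ in $\FP^{\Sigma^p_{i-1}}$ (producing polynomially-bounded intermediate values by the inductive hypothesis) and then apply $h^\mathbb{N}$, which is itself in $\FP^{\Sigma^p_{i-1}}$ on inputs of polynomial size. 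The output size is polynomial in $|\vec{x}|$ by composition of polynomials. This closes the induction.

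The proof is essentially routine once one recognises that the two inductive invariants have to be maintained together. The only subtle point, which I expect to be the main thing to get right, is the interaction with the Skolem-like symbols $g_{\beta, t'}$: the size bound on $g_{\beta, t'}^\mathbb{N}(\vec{u})$ comes from the bounding term $t'$ rather than from the intrinsic complexity of the search, so one must apply the inductive hypothesis to $t'$ (a strictly smaller $\Lpv^i$-term) to conclude that the outputs remain polynomially bounded. For $i = 1$, note that $\Lpv^1$ only introduces $f_\alpha$ for quantifier-free $\alpha$ and no symbols $g_{\beta, t'}$, so every $\Lpv^1$-term is polynomial-time computable, recovering $\FP^{\Sigma^p_0} = \FP$ as expected.
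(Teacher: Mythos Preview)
Your proposal is correct and spells out exactly what the paper's one-line proof leaves implicit (closure of $\FP^{\Sigma^p_{i-1}}$ under composition, given that each function symbol's interpretation lies in this class by \Cref{lmm:complexity-f} and \Cref{lmm:complexity-g}). One minor point: the bounding term $t'$ in the symbol $g_{\beta,t'}$ is by definition an $\Lpv$-term, not a subterm of $t$, so its polynomial output bound is immediate from $t'^{\mathbb N}\in\FP$ rather than from the inductive hypothesis.
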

  
\begin{proof}
  This directly follows from \Cref{lmm:complexity-f} and \Cref{lmm:complexity-g}.
\end{proof}

Theory $\Upv^i$ has almost all properties needed for the proof of our results, except that it is not necessarily closed under if-then-else (\Cref{def:closed-under-if-then-else}). This is desirable as it simplifies the statement of our witnessing result and its proof. For this reason, we further modify $\Upv^i$ to guarantee this property.

\paragraph{Theory $\UTpv^i$ and Language $\mathcal{L}^i_{\sf UT}$.} Let $i \geq 1$, and consider the language $\Lpv^i$ introduced before. We extend $\Lpv^i$ as follows. For every $k \geq 1$ and  function $f \colon \mathbb{N}^k \to \mathbb{N}$ in $\FP^{\Sigma_{i-1}^p}$, we introduce a new function symbol $f_{\sf UT}$. Then, we let 
$$\LUT^i \eqdef \Lpv^i \cup \{f_{\sf UT} \mid f \in \FP^{\Sigma_{i-1}^p}\}.$$
Given $\LUT^i$, we define $\UTpv^i$ as the theory of all universal sentences in $\LUT^i$ that are true in the standard model.

\begin{theorem}[Main Properties of {$\UTpv^i$}] \label{thm:universal_theory}
For every $i \geq 1$, the theory  $\UTpv^i$ satisfies the following properties:
\begin{enumerate}
    \item $\UTpv^i$ is a universal theory.
      \item Every $\Lpv^i$-sentence provable in $\Upv^i$ is also provable in $\UTpv^i$.
    \item\label{enum: provable translate} Every $\Lpv$-sentence provable in $\Tpv^i$ is also provable in $\UTpv^i$.
    \item Let $t$ be an arbitrary $\LUT^i$-term, and consider its interpretation $t^{\mathbb{N}} \colon \mathbb{N}^k \to \mathbb{N}$ over the standard model. Then $t^{\mathbb{N}} \in \FP^{\Sigma_{i-1}^p}$.
    \item $\UTpv^i$ is closed under if-then-else.
    \item $\UTpv^i$ is sound, i.e., every sentence provable in  $\UTpv^i$ is true over $\mathbb{N}$.
\end{enumerate}
\end{theorem}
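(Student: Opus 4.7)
The plan is to verify the six properties essentially in the order stated, since each one is fairly direct once the right lemma is invoked. Properties 1 and 6 are immediate: $\UTpv^i$ is defined as the set of all true universal $\LUT^i$-sentences over $\mathbb{N}$, so it is universal by construction, and since every axiom is true in $\mathbb{N}$ and first-order deduction is sound, every provable sentence is true in $\mathbb{N}$. Property 2 reduces to the observation that $\Lpv^i \subseteq \LUT^i$, so every true universal $\Lpv^i$-sentence is in particular a true universal $\LUT^i$-sentence; hence each axiom of $\Upv^i$ is an axiom of $\UTpv^i$. Property 3 then follows by composing Property 2 with \Cref{thm:Upv-extends-Tpv}.

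For Property 4, I would argue by induction on the structure of the $\LUT^i$-term $t$. The base case handles constants and variables trivially. In the inductive step, if $t = h(s_1, \ldots, s_k)$ where each $s_j$ has been shown to interpret as a function in $\FP^{\Sigma^p_{i-1}}$, then I need that the outer symbol $h$ also yields an $\FP^{\Sigma^p_{i-1}}$ function, and that $\FP^{\Sigma^p_{i-1}}$ is closed under composition. The latter is standard. For the former, there are two sub-cases: if $h$ is an original symbol of $\Lpv^i$, the claim is exactly \Cref{thm:lpvi-term-complexity}; if $h$ is one of the new symbols $f_{\sf UT}$, then by construction its standard interpretation is already in $\FP^{\Sigma^p_{i-1}}$.

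Property 5 is the slightly less trivial point, and relies on Property 4 together with the way we enlarged the language. Given a quantifier-free $\LUT^i$-formula $\varphi(\vec{x})$ and $\LUT^i$-terms $t_1(\vec{x}), t_2(\vec{x})$, the standard interpretation of the predicate ``$\varphi(\vec{x})$ holds'' is decidable in $\mathsf{P}^{\Sigma^p_{i-1}}$: by Property 4, each atomic equality $s(\vec{x}) = s'(\vec{x})$ is decided by evaluating two $\FP^{\Sigma^p_{i-1}}$ functions, and Boolean connectives preserve this. Consequently the function
\[
    g(\vec{x}) \;=\; \begin{cases} t_1^{\mathbb{N}}(\vec{x}) & \text{if } \varphi^{\mathbb{N}}(\vec{x}),\\ t_2^{\mathbb{N}}(\vec{x}) & \text{otherwise},\end{cases}
\]
lies in $\FP^{\Sigma^p_{i-1}}$. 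Hence a function symbol $g_{\sf UT}$ for $g$ was introduced in $\LUT^i$, and the universal sentence
\[
  \forall \vec{x}\;\bigl((g_{\sf UT}(\vec{x}) = t_1(\vec{x}) \wedge \varphi(\vec{x})) \vee (g_{\sf UT}(\vec{x}) = t_2(\vec{x}) \wedge \neg\varphi(\vec{x}))\bigr)
\]
is true in $\mathbb{N}$, so it is an axiom of $\UTpv^i$ and witnesses closure under if-then-else.

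I don't expect any genuine obstacle here; the whole statement is essentially a bookkeeping exercise that records why our Skolemised/closed-up theory still has the syntactic and semantic features needed to apply the game-theoretic witnessing theorem downstream. The only small pitfall worth being careful about is in Property 5, where one must check that the auxiliary function $g$ really is in $\FP^{\Sigma^p_{i-1}}$ for \emph{every} quantifier-free $\LUT^i$-formula $\varphi$; this in turn depends on having already established Property 4, so the order above matters.
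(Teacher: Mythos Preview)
Your proposal is correct and follows essentially the same route as the paper's proof: both verify the six items in order, with 1, 2, 3, 6 immediate from the definitions and \Cref{thm:Upv-extends-Tpv}, item 4 obtained from \Cref{thm:lpvi-term-complexity} plus closure of $\FP^{\Sigma^p_{i-1}}$ under composition, and item 5 by defining the if-then-else function, observing it lies in $\FP^{\Sigma^p_{i-1}}$ (so it has a symbol in $\LUT^i$), and noting that the desired universal sentence is then true and hence an axiom. Your justification for item 5 via item 4 is in fact cleaner than the paper's, which (apparently by a slip) cites item (iii) at that point.
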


The proof of the theorem is deferred to \Cref{sec:universal-theory}.

\section{Witnessing Theorems for General Formulas}
\label{sec:witnessing-1}

In this section, we introduce a convenient witnessing theorem that works for sentences of arbitrarily high quantifier complexity. As explained in \Cref{sec:introduction}, the result is used in the proof that strong complexity lower bounds cannot be established in $\Tpv^i$. While it is possible to obtain a general witnessing result that holds for an arbitrary universal theory, due to our main applications we restrict our attention to theories of bounded arithmetic. %

\subsection{A game-theoretic witnessing theorem}\label{sec:game_theoretic_witnessing}

Let $\calT$ be a universal bounded theory over the vocabulary $\calL$. Let $\varphi(x)$ be a bounded $\calL$-formula defined as
\begin{align*}
  \varphi(x)~\triangleq~&\exists y_1\le t_1(x)~\forall x_1\le s_1(x,y_1)~\exists y_2\le t_2(x,y_1,x_1)\dots \forall x_{k-1}\le s_{k-1}(x,y_1,x_1,\dots,y_{k-1})\\
  ~&\exists y_k\le t_k(x,y_1,x_1,\dots,y_{k-1}, x_{k-1})~\forall x_{k}\le s_k(x,y_1,x_1,\dots,y_{k})~\phi(x,x_1,\dots,x_k, y_1,\dots,y_k),
\end{align*}
where $\phi(x, \vec{x}, \vec{y})$ is a quantifier-free $\calL$-formula.\\

\noindent \textbf{The Evaluation Game.} We consider an interactive game between two players, the \emph{truthifier} (associated with existential quantifiers in $\varphi$) and the \emph{falsifier} (associated with universal quantifiers in $\varphi$). A \emph{board} is defined as a pair $(\calM,n_0)$, where $\calM$ is a structure over  $\calL$ such that $\calM\vDash\calT$, and $n_0\in\calM$ is an element of its domain.\footnote{For a concrete example, think of $\calM=(\bbN,\leq^\bbN, +^\bbN, \times^\bbN, \ldots)$.} The \emph{evaluation game} for the formula $\varphi(x)$ on the board $(\calM,n_0)$ is played as follows: in the $i$-th round of the game ($1\le i\le k$), the truthifier firstly chooses an assignment $m_i\in\calM$ for $y_i$ such that $m_i\le t_i(n_0,m_1,n_1\dots,m_{i-1}, n_{i-1})$, then the falsifier chooses an assignment $n_{i}\in\calM$ for $x_i$ such that $n_i\le s_{i}(n_0,m_1,n_1, \dots,m_{i})$. The truthifier \emph{wins} if and only if $\phi(x/n_0,\vec x/\vec n,\vec y/\vec m)$ holds in $\calM$. 

\newcommand{\ttf}{\mathtt{f}}
\newcommand{\ttt}{\mathtt{t}}

The \emph{transcript} of a game given strategies $\tau^\ttt$ for the truthifier and $\tau^\ttf$ for the falsifier, denoted by $\left<\tau^\ttt:\tau^\ttf\right>$, is a pair $(\vec n,\vec m)$ of sequences that records the moves for both players.\footnote{For convenience, we might also write the transcript as $(m_1, n_1, \ldots, m_k, n_k)$. The moves of each player will always be clear in each context.} A \emph{partial transcript} is a prefix of a transcript. A partial transcript is \emph{valid} if all elements $m_i$ and $n_i$ respect the corresponding upper bounds (in $\calM$) given by functions $t_i$ and $s_i$. A strategy $\tau^\ttt$ for the truthifier is said to \emph{beat} a strategy $\tau^\ttf$ for the falsifier (w.r.t.~a given board and formula) if the truthifier wins in the transcript $\left<\tau^\ttt:\tau^\ttf\right>$. A \emph{strategy} for a player is defined in the natural way, i.e., as a function that produces the next assignment given a partial transcript of the game. Equivalently, since we will consider games with only a fixed number of rounds, one can describe a strategy as a finite sequence of functions of the form $f \colon \calM^i \to \calM$, for appropriate values $i \leq 2k$.\\

We will consider games that are played in a more general setting. Roughly speaking, we allow the truthifier and falsifier  to simultaneously play different evaluation games over the same  board $(\calM,n_0)$. The truthifier has a positional advantage over the falsifier: it can decide where to make the next move, i.e., by either making the next move in one of the current games or starting a new game over the board $(\calM,n_0)$ or playing differently some earlier play, which creates a new game from there but maintains the existing game plays. The falsifier must respond to that move in the corresponding game. \emph{Crucially, the next assignment selected by each player now depends on previous plays in all games.} The formal details are provided next.\\

\noindent \textbf{The Tree Exploration Game.} A \emph{partial game tree} $T = (V,E, \gamma)$ (where $(V,E)$ is a directed rooted tree and $\gamma \colon E \to \calM \times \calM$) of the evaluation game for $\varphi$ on the board $(\calM,n_0)$ is defined as a finite rooted tree where each edge $e \in E(T)$ is labeled with a pair $(m,n)$ of elements of $\calM$ and such that, for every node $u \in V(T)$, the concatenation of each pair of elements labelling the edges on the root-to-$u$ path is a prefix of a valid transcript of the evaluation game of $\varphi(x)$ on the board $(\calM,n_0)$. More precisely, if the pairs labelling the edges from the root to $u$ are $(m_1,n_1),(m_2,n_2),\dots,(m_i,n_i)$, then $(m_1,n_1,m_2,n_2,\dots, m_i,n_i)$ is a valid partial transcript of the evaluation game, i.e., for all $j\in[i]$, $\calM\vDash m_j\le t_j(n_0,m_1,n_1,\dots,m_{j-1}, n_{j-1})$ and $\calM\vDash n_j\le s_j(n_0,m_1,n_1,\dots,m_j)$. Note that if $\calM$ is the standard model then a partial game tree of the evaluation game is a finite upper part of the (exponential size) complete game tree of the evaluation game. 

Let $T$ be a partial game tree of $\varphi$ and $(\calM,n_0)$ be a board. The \emph{tree exploration game} starting from $T$ on $(\calM,n_0)$ is played as follows. In each \emph{round}, first the truthifier chooses a node $u$ from $T$ (not necessarily a leaf) and an element $m\in\calM$, then the falsifier chooses an element $n\in\calM$. This creates a child of $u$ and a corresponding directed edge labeled by $(m,n)$. Note that when playing each round of the tree exploration game both players should guarantee that the new partial game tree is always a valid partial game tree, i.e., $m$ and $n$ should satisfy the corresponding inequalities. The \emph{size} of a partial game tree $T$ is given by $|T(V)|$.

The truthifier \emph{wins} the tree exploration game if there is a node in the current partial game tree that is a winning node for the truthifier, that is, the concatenation of the pairs of elements labelling the edges on the root-to-$u$ path forms a winning transcript of the truthifier in the evaluation game of $\varphi(x)$ on the board $(\calM,n_0)$. The \emph{tree exploration game of $\varphi(x)$} is defined as the tree exploration game starting from a partial game tree containing only the root node. We refer to \Cref{fig:tree-exploration-game} for an example of the tree exploration game.\\

Recall that $\calL$ is the vocabulary of the  universal (bounded) theory $\calT$. The main result established in this section shows the existence of a  ``computationally bounded'' winning strategy for the truthifier from a $\mathcal{T}$-proof of $\varphi$, i.e., the strategy can be computed by $\calL$-terms. In addition, the strategy is universal, in the sense that it is specified by $\calL$-terms that are independent of the board $(\calM, n_0)$. Finally, the location of each play of the truthifier in the partial game tree is fixed in advance and does not depend on the strategy of the falsifier nor on the board $(\calM, n_0)$. (The elements selected by the truthifier depend on the previous plays of the truthifier and falsifier.) This means that the trees in the sequence of partial game trees are fixed in advance.\\

\noindent \textbf{$\bm{\calL}$-Strategies for the Tree Exploration Game.} An $\calL$-\emph{quasi-strategy} of the truthifier of \emph{length} $\ell\in\bbN$ and initial tree size $d$ is a sequence $\tau=\left<p_1,r_1,p_2,r_2,\dots,p_\ell,r_\ell\right>$, where each $p_i$ is an $\calL$-term and each $r_i\in\bbN$ is such that $1\le r_i<d+i$. Let $(\calM,n_0)$ be a board and $T$ be a partial game tree on this board with $V(T) = \{1,2,\dots,d\}$. The strategy for the tree exploration game starting from the partial game tree $T$ \emph{induced} by $\tau$ proceeds as follows: 
\begin{itemize}
    \item In the $i$-th move, the truthifier introduces a node numbered $d+i$ as a child of the node $r_i$ and chooses the element $v_i\triangleq p_i^\calM(n_0,\Gamma)\in\calM$, where $\Gamma$ is the sequence of $\calM$-elements chosen by the players in previous rounds (i.e., $v_1,\dots,v_{i-1}$ and the falsifier's moves $w_1, \ldots, w_{i-1}$). 
\end{itemize}
Note that an arbitrary $\calL$-\emph{quasi-strategy} might induce an invalid move $v_i = p_i^\calM(n_0,\Gamma)$ that violates the desired upper bound on $v_i$, depending on the moves of the falsifier. We say that an $\calL$-quasi-strategy of the truthifier is an \emph{$\calL$-strategy} if for every board $(\calM,n_0)$ the resulting partial game trees are valid for every valid strategy of the falsifier.

Finally, a length-$\ell$ $\calL$-strategy is said to be a \emph{universal winning strategy} if the truthifier wins within $\ell$ moves against all valid strategies (not necessarily $\calL$-strategies) of the falsifier on any board $(\calM,n_0)$. Note that the falsifier's strategy is a function of the board $(\calM,n_0)$, partial game tree $T=(V,E,\gamma)$ (which includes all moves from previous rounds), and the move of the truthifier in the current round.

\begin{theorem}[Game-Theoretic Witnessing Theorem]\label{thm:witnessing-tree-exploration}
    Let $\calT$ be a universal bounded theory  with vocabulary $\calL$ that is closed under if-then-else. Let $\varphi$ be a bounded $\calL$-formula  of the form
    \begin{align*}
    \varphi(x)~\triangleq~&\exists y_1\le t_1(x)~\forall x_1\le s_1(x,y_1)~\exists y_2\le t_2(x,y_1,x_1)\dots \forall x_{k-1}\le s_{k-1}(x,y_1,x_1,\dots,y_{k-1})\\
  ~&\exists y_k\le t_k(x,y_1,x_1,\dots,y_{k-1}, x_{k-1})~\forall x_{k}\le s_k(x,y_1,x_1,\dots,y_{k})~\phi(x,x_1,\dots,x_k, y_1,\dots,y_k),
\end{align*}
where $\phi(x, \vec{x}, \vec{y})$ is a quantifier-free $\calL$-formula. Then $\calT\vdash\forall x~\varphi(x)$ if and only if there is a universal winning $\calL$-strategy of length $O(1)$ for the truthifier in the corresponding tree exploration game of $\varphi(x)$. 
\end{theorem}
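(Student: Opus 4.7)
The theorem asserts an equivalence; I would prove the two directions separately, using model-theoretic soundness for the ``if'' direction and proof-theoretic extraction (cut elimination in Gentzen's sequent calculus $\GG$) for the ``only if'' direction. The ``if'' direction is essentially a game-theoretic rephrasing of the fact that $\calT$ is sound for its models; the ``only if'' direction is where the real work lies.

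\textbf{Direction} $\Leftarrow$ \textbf{(strategy implies provability).} Let $\tau = \langle p_1, r_1, \dots, p_\ell, r_\ell\rangle$ be a universal winning $\calL$-strategy. By completeness, it suffices to show $\calM \vDash \varphi(n_0)$ for every $\calM \vDash \calT$ and every $n_0 \in \calM$. Assume for contradiction that $\calM \vDash \lnot\varphi(n_0)$. Dualising the quantifier prefix yields, by induction on $k$, a winning falsifier strategy $\sigma$ in the evaluation game on $(\calM, n_0)$: given any partial transcript $(m_1, n_1, \dots, m_i)$, the strategy returns $n_i \le s_i(n_0, m_1, n_1, \dots, m_i)$ such that the residual subformula $\forall y_{i+1}\exists x_{i+1}\dots\lnot\phi$ holds in $\calM$. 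Lift $\sigma$ to a falsifier strategy $\hat\sigma$ in the tree exploration game by applying it independently along each root-to-leaf path in the evolving game tree. By hypothesis, running $\tau$ against $\hat\sigma$ produces, within $\ell$ moves, a winning node $u$; but the root-to-$u$ path is then a complete play of the evaluation game in which the falsifier used $\sigma$, so $\phi$ is false at the leaf, contradicting the definition of a winning node.

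\textbf{Direction} $\Rightarrow$ \textbf{(provability implies strategy).} The plan is to pass to $\GG$ and perform cut elimination. Because $\calT$ is universal, a $\calT$-proof of $\forall x\,\varphi(x)$ can be rewritten as a $\GG$-derivation of a sequent $\Pi \To \varphi(x)$, where $\Pi$ is a finite list of quantifier-free instances of universal $\calT$-axioms. Cut elimination yields a cut-free derivation in which every copy of an existential subformula $\exists y_i(\dots)$ of $\varphi(x)$ is introduced by an $\exists$-right inference with an explicit $\calL$-term as witness, while $\forall$-right rules introduce eigenvariables that play the role of falsifier moves, and succedent contractions produce branching (the multiple copies of the same $\exists y_i$, witnessed by possibly different terms, correspond to different children in the tree exploration game). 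Reading the cut-free proof top-down and enumerating its $\exists$-right inferences produces the sequence $\langle p_i, r_i\rangle_{i\le \ell}$: $p_i$ is the witness $\calL$-term (built from $x$ and from eigenvariables/previous falsifier moves), and $r_i$ points to the node of the accumulated partial game tree at which the new branch is grown (identified by the contraction copy that the $\exists$-right rule was applied to). The bound $\ell = O(1)$ comes from the finiteness of the cut-free proof. Closure under if-then-else is used to fold quantifier-free case splits that appear during the extraction into single $\calL$-terms, keeping $p_i$ a bona fide $\calL$-term rather than a disjunction of candidates.

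\textbf{Main obstacle.} The delicate part is the combinatorial book-keeping in the read-off: for deeply alternating $\varphi$, the $\exists$-right inferences in the cut-free proof are scattered across branches and interleaved with $\forall$-right and contraction rules. One must verify that the linear serialisation $(p_1, r_1, \dots, p_\ell, r_\ell)$ always induces a valid tree exploration play over every board $(\calM, n_0)$ and against every falsifier: each $r_i$ must point to a node that already exists when move $i$ is played, each $p_i^\calM$ must respect the term-bound $t_i$ coming from the original $\exists$-right rule, and the final leaf must be a winning node no matter how the falsifier responds (this last property uses that the endsequent's quantifier-free part $\phi$ is already forced by the leaves of the cut-free proof together with $\Pi$, which is true in $\calM$ because $\calM \vDash \calT$). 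This is precisely the analysis to be carried out in detail in Appendix \ref{sec:proof_herbrand}, with the Herbrandization-style alternative in \Cref{sec:no-counterexample} providing a more self-contained route for the restricted setting used in the main unprovability theorem.
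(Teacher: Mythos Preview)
Your proposal is essentially correct, and in particular your ``if'' direction matches the paper's argument exactly. For the ``only if'' direction, however, the paper takes a different (and somewhat cleaner) route than your direct cut-elimination analysis. Rather than reading off the strategy from a cut-free $\GG$-derivation, the paper invokes the general form of Herbrand's Theorem (Theorem~\ref{thm:herbrand-general}) as a black box: from $\calT\vdash\forall x\,\varphi(x)$ one obtains a prenexification $\phipre$ of a strong $\lor$-expansion $\phiexp$ of $\forall x\,\varphi(x)$ that admits a witnessing substitution. The existential quantifiers of $\phiexp$ form a tree $T$ via the subformula relation, and the paper shows that this tree \emph{is} the partial game tree the truthifier explores, with the witnessing terms providing the $\calL$-term moves. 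This packaging hides the proof-tree bookkeeping you flag as the ``main obstacle'' inside Herbrand's Theorem itself, and replaces it with a purely syntactic analysis of how $\lor$-expansion and prenexification interact.

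A second, more substantive difference is how the bounds $y_i\le t_i$ are handled. You suggest that if-then-else is used to fold quantifier-free case splits in the extraction; the paper instead first passes to an \emph{unbounded} tree exploration game for the translation $[\varphi]_{\sf imp}$, extracts the strategy there (where no validity constraints need be checked), and only afterwards uses closure under if-then-else in a separate reduction (Lemma~\ref{lmm:strategy-unbounded-to-bounded}) that converts any unbounded-game strategy into a valid bounded-game strategy by outputting $0$ whenever the simulated move would violate a bound. Your sketch does not explicitly address why the extracted terms respect the bounds against \emph{arbitrary} falsifier moves on \emph{arbitrary} boards, and this is exactly the gap that the paper's two-step decomposition is designed to close. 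Interestingly, the paper notes in a footnote that its original presentation did proceed via a direct sequent-calculus analysis along the lines you propose, but that the Herbrand's Theorem route was adopted as a simplification.
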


We defer the proof of \Cref{thm:witnessing-tree-exploration} to  \Cref{sec:appendix_witnessing}.

\subsection{A special case: Falsifiers with oblivious strategies}\label{sec:witnessing_special_case}

In this section, we present a  special case of  game-theoretic witnessing  (\Cref{thm:witnessing-tree-exploration}) that involves \emph{sequential} invocations of the \emph{evaluation game} played against an \emph{oblivious falsifier}. This version is sufficient to show the unprovability of strong circuit lower bounds in bounded arithmetic (\Cref{sec:circuit-vs-circuit}). 

We assume familiarity with the notation from \Cref{sec:game_theoretic_witnessing}. In particular, let $\calT$, $\calL$, and $\varphi(x)$ be defined as in \Cref{sec:game_theoretic_witnessing}. The main difference is that here we consider the evaluation game (as opposed to the tree exploration game) in the presence of \emph{ancillary information for the truthifier}, as explained next.\\

\noindent \textbf{Strategies with Ancillary Information.} Let $\calM=(\calD,\calI)$ be a model for $\calT$. An $\calL$-strategy for the truthifier \emph{with ancillary information} in the evaluation game of $\varphi(x)$ is a sequence $\tau^{\ttt}=(p_1,p_2,\dots,p_k)$ of $k$ $\calL$-terms, where $p_i\triangleq p_i(n_0,m_1,n_1,\dots,m_{i-1},n_{i-1},\vec a)$ means given the \emph{ancillary information} $\vec a$ (constantly many elements from $\calD$), $n_0 \in \calD$, and moves $m_1,n_1,\dots,m_{i-1},n_{i-1}\in\calD$, the truthifier chooses $m_i = p_i^\calM(n_0,m_1,n_1,\dots,m_{i-1},n_{i-1},\vec a)$ as the current move. For every $\vec a\in\vec\calD$, the strategy induced by $\tau^{\ttt}$ given $\vec a$ as ancillary information is denoted by $\tau^{\ttt}[\vec a]$. In particular, if the $\calL$-strategy has no ancillary information, the induced strategy is denoted by $\tau^{\ttt}[\varnothing]$. Similarly to \Cref{sec:game_theoretic_witnessing}, the \emph{transcript} of a game given strategies $\tau^\ttt$ for the truthifier (possibly with ancillary information) and $\tau^\ttf$ for the falsifier, denoted by $\left<\tau^\ttt:\tau^\ttf\right>$, is a pair $(\vec n,\vec m)$ of sequences that records the moves of both players.

\begin{theorem}[Winning strategies against oblivious falsifiers]\label{thm:witnessing-general}
  Let $\calT$ be a universal theory over the language $\calL$ that is closed under if-then-else. Let $\varphi(x)$ be the formula
  \begin{align*}
    \varphi(x)~\triangleq~&\exists y_1\le t_1(x)~\forall x_1\le s_1(x,y_1)~\exists y_2\le t_2(x,y_1,x_1)\dots \forall x_{k-1}\le s_{k-1}(x,y_1,x_1,\dots,y_{k-1})\\
  ~&\exists y_k\le t_k(x,y_1,x_1,\dots,y_{k-1}, x_{k-1})~\forall x_{k}\le s_k(x,y_1,x_1,\dots,y_{k})~\phi(x,x_1,\dots,x_k, y_1,\dots,y_k),
  \end{align*}
  where $\phi(x, \vec{x}, \vec{y})$ is a quantifier-free $\calL$-formula.
  If $\calT\vdash\forall x~\varphi(x)$, then there  is a constant $\ell\in\bbN$ and  $\calL$-strategies $\tau^\ttt_1,\tau^\ttt_2,\dots,\tau^\ttt_\ell$ \emph{(}with ancillary information\emph{)} such that, for any board $(\calM,n_0)$ and  evaluation game of $\varphi(x)$ on $(\calM, n_0)$, for every strategy $\tau^\ttf$ of the falsifier:
  \begin{itemize}[itemsep=0pt]
  \item either $\hat\tau^\ttt_1\triangleq \tau^\ttt_1[\varnothing]$ beats $\tau^\ttf$,
  \item or $\hat\tau^{\ttt}_2\triangleq \tau^\ttt_2[\left<\hat\tau^\ttt_1:\tau^\ttf\right>]$ beats $\tau^\ttf$,
  \item or $\hat\tau^\ttt_3\triangleq \tau^\ttt_3[\left<\hat\tau^\ttt_1:\tau^\ttf\right>,\left<\hat\tau^\ttt_2:\tau^\ttf\right>]$ beats $\tau^\ttf$,
  \item $\dots$,
  \item or $\hat \tau^\ttt_\ell\triangleq \tau^\ttt_\ell[\left<\hat\tau^\ttt_1:\tau^\ttf\right>,\left<\hat\tau^\ttt_2:\tau^\ttf\right>,\dots,\left<\hat\tau^\ttt_{\ell-1}:\tau^\ttf\right>]$ beats $\tau^\ttf$.
  \end{itemize}
\end{theorem}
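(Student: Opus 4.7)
The plan is to derive this from \Cref{thm:witnessing-tree-exploration}: any oblivious falsifier strategy $\tau^\ttf$ used across a sequence of evaluation games can be viewed as inducing a tree-exploration falsifier, and the tree-exploration winning strategy can then be re-assembled into a short sequence of evaluation-game strategies with ancillary information.

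First, apply \Cref{thm:witnessing-tree-exploration} to $\varphi(x)$ to obtain a universal winning $\calL$-strategy $S = \langle p_1, r_1, \ldots, p_L, r_L \rangle$ of length $L = O(1)$ for the tree exploration game of $\varphi(x)$. Because $S$ is universal, the parent indices $r_1, \ldots, r_L$ are fixed constants (independent of the board), so the combinatorial shape of the partial game tree that $S$ builds is known in advance. Let $u_1, u_2, \ldots, u_m$ (with $m\le L$) enumerate, in their order of creation, the nodes of this abstract tree that lie at depth exactly $k$; these are the only nodes that can carry a winning evaluation-game transcript. Set $\ell = m$.

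For each $i \in [\ell]$, the strategy $\tau^\ttt_i$ takes as ancillary information the transcripts $\Gamma_1, \ldots, \Gamma_{i-1}$ of the previous sequential evaluation games and is designed so that, when played against $\tau^\ttf$, its transcript $\Gamma_i$ coincides with the truthifier/falsifier labels along the root-to-$u_i$ path in the tree that $S$ would construct against the tree-exploration falsifier $\widehat{\tau}^\ttf$ induced by $\tau^\ttf$ (namely, at any node whose root-to-node path spells the partial evaluation transcript $\sigma$ and whose current truthifier move is $v$, $\widehat{\tau}^\ttf$ responds with $\tau^\ttf(\sigma, v)$). Concretely, the $\calL$-term computing the truthifier's $j$-th move in game $i$ is obtained by taking $p_t$ for the round $t$ that creates the $j$-th ancestor of $u_i$ in the abstract tree and recursively substituting, for each pair $v_{t'}, w_{t'}$ with $t' < t$: the value $v_{t'}$ is itself given by $p_{t'}$ applied to still-earlier moves, while $w_{t'}$ is the oblivious response of $\tau^\ttf$ to the partial evaluation transcript at the position acted on in round $t'$, and so (by obliviousness of $\tau^\ttf$) it equals the falsifier move recorded either in one of the earlier $\Gamma_{i'}$ or in the current partial transcript of game $i$, at the unique position sharing that prefix. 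Closure of $\calT$ under if-then-else folds the required equality tests into a single $\calL$-term.

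Verification proceeds by induction on $i$: assuming the first $i-1$ games have produced transcripts matching the root paths to $u_1, \ldots, u_{i-1}$, the construction of $\tau^\ttt_i$ ensures that $\Gamma_i$ matches the root path to $u_i$ in the tree that $S$ builds against $\widehat{\tau}^\ttf$. Hence the collection $\Gamma_1, \ldots, \Gamma_\ell$ realizes every depth-$k$ root-to-leaf path of that tree. Since $S$ is universally winning in the tree exploration game against $\widehat{\tau}^\ttf$, some depth-$k$ leaf must carry a winning evaluation transcript, meaning some $\hat{\tau}^\ttt_i$ beats $\tau^\ttf$. The main obstacle is the bookkeeping in the previous paragraph: arranging the recursive unfolding of the $p_t$'s so that every reference to a $w_{t'}$ can be resolved unambiguously to a specific entry in the ancillary information. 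This is made possible by the fact that the abstract tree structure (through the constants $r_1, \ldots, r_L$) is fixed, so the substitution pattern can be hard-coded into the $\calL$-terms, and by closure under if-then-else, which supports the finitely many equality tests needed to locate the correct entry in the flat list of transcripts.
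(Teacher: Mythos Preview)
Your overall plan—derive this from \Cref{thm:witnessing-tree-exploration} by interpreting an oblivious falsifier as a tree-exploration falsifier and then ``flattening'' the tree strategy into a sequence of evaluation games—is exactly right and matches the paper. The gap is in the flattening step: you set $\ell = m$, one game per depth-$k$ leaf, and assert that every needed falsifier response $w_{t'}$ is recorded either in some earlier $\Gamma_{i'}$ or in the current partial transcript. This assertion is not justified and is in general false.

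Concretely, take $k=2$ and a tree strategy of length $L=4$ with $r_1=r_2=\text{root}$, $r_3=2$, $r_4=1$: round~1 creates a depth-1 node $A$, round~2 a depth-1 node $B$, round~3 a depth-2 child $C$ of $B$, round~4 a depth-2 child $D$ of $A$. The depth-$k$ leaves in creation order are $u_1=C$, $u_2=D$. In your game~1 you must play the path to $C$, whose depth-1 move is $v_2=p_2(n_0,v_1,w_1)$. But $w_1$ is the falsifier's reply to $v_1$ at node $A$, and that prefix has not been played anywhere: there is no earlier game, and game~1 opens with $v_2$, not $v_1$. So $w_1$ is simply unavailable; no amount of if-then-else or hard-coding can conjure it. The problem is that $p_t$ depends on \emph{all} earlier rounds $t'<t$, including rounds that create nodes off the path to any leaf processed so far.

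The paper's fix is to set $\ell=L$ and play one evaluation game per \emph{round} of the tree strategy, not per leaf. In game~$i$ the truthifier replays the root-to-$r_i$ path (padding with $0$'s past depth $d_i$); the falsifier's depth-$d_i$ reply in that game is exactly $w_i$, so after games $1,\ldots,i-1$ every $w_1,\ldots,w_{i-1}$ is recorded and $p_i$ can be evaluated. Your construction becomes correct as soon as you make this change.
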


\vspace{0.2cm}

Before we establish this result, a few comments are in order. First, notice that the moves of the falsifier only depend on the previous moves in the \emph{current  game}. On the other hand, the truthifier gets as ancillary information the transcripts of all previous games, and  succeeds in beating the strategy of the falsifier after (sequentially) playing at most $\ell = O(1)$ games. Intuitively, the falsifier is \emph{oblivious}, since its moves in the current game do not depend on the moves from any previously completed or different game played in parallel, as in the tree exploration game described in \Cref{sec:game_theoretic_witnessing}. Consequently, when extracting computational information from proofs (where one defines appropriate strategies for the falsifier and considers the behaviour of the truthifier), \Cref{thm:witnessing-general} is more limited than \Cref{thm:witnessing-tree-exploration}.

\begin{proof}[Proof of \Cref{thm:witnessing-general}]

Intuitively, as explained above, the meaning of the theorem is that the truthifier has a winning strategy (with ancillary information) in $\ell$ sequential plays of the evaluation game when the falsifier's strategy is fixed. We will obtain such a strategy from a strategy for the truthifier that succeeds in the tree exploration game. This is not entirely obvious, since there is a mismatch between the games: the next play of the truthifier in the tree exploration game depends on all previous plays in the game tree, while in the evaluation game there is no game tree and they play a sequence of evaluation games.

    Let $\calT,\calL$, and $\varphi(x)$ be defined as above. By Theorem \ref{thm:witnessing-tree-exploration}, there exists an $\ell = O(1)$ length $\calL$-term winning strategy $\tau^{\sf tree}$ for the tree exploration game of $\varphi(x)$.  Let $(\calM,n_0)$ be a board. Consider the tree exploration game when the falsifier plays with a \emph{fixed strategy of the evaluation game}, that is, there exist functions $f_1(x,y_1),f_2(x,y_1,y_2),\dots,f_k(x,y_1,y_2,\dots,y_k)$ such that: 
    
    \begin{itemize}[topsep=5pt]
        \item In the $i$-th step, if the truthifier adds a node $v$ to the partial game tree as a child of $u$ and chooses $m$ as the label and $(m_1,n_1),(m_2,n_2),\dots,(m_d,n_d)$ are the labels on the length-$d$ path from the root to $u$, then the falsifier's move is $f_{d+1}(n_0,m_1,m_2,\dots,m_d,m)$. 
    \end{itemize}
    
    We say that a falsifier's strategy of this form in the tree exploration game is \emph{oblivious}, i.e., the next move of the falsifier only considers moves in the corresponding root-to-node path. Since $\tau^{\sf tree}$ is a winning strategy for the tree exploration game, it beats all strategies of the falsifier,  including oblivious strategies. 
    
    We would like to simulate  $\tau^{\sf tree}$, a strategy for the tree exploration game, in the context of \Cref{thm:witnessing-general}, where the evaluation game is played sequentially and the truthifier has ancillary information. The main idea is to play each \emph{round} in the \emph{tree exploration game} as a new \emph{game} in the \emph{evaluation game} that simulates the current root-to-node path. This guarantees when translating strategies that all the necessary information from the tree exploration game appears in the transcript of previous plays (ancillary information) during the next evaluation game. (If the root-to-node path at the end of a round in the tree exploration game is only a partial play of the corresponding evaluation game, the truthifier simply outputs $0^{\mathcal{M}}$ in the current evaluation game until a new game can be started.) In other words, when the truthifier adds a node $v$ as the child of $u$, it can ``replay'' the path from the root to $v$ using the moves $m_1,m_2,\dots,m_d$ on the path, and the oblivious falsifier will choose the moves $n_1,n_2,\dots,n_d$ as response. Therefore the truthifier can simulate the winning strategy for the tree exploration game by sequentially playing the evaluation game $\ell$ times and beating the falsifier in at least one of the games. 
    
    We now describe in more detail the translation of an $\calL$-term universal winning strategy in the tree exploration game into $\calL$-strategies (with ancillary information) for the evaluation game. Consider the strategy $\tau^{\sf tree}=\left<p_1,r_1,p_2,r_2,\dots,p_\ell,r_\ell\right>$, where $\ell$ is a constant. Recall that the location of each play of the truthifier in the tree exploration game is fixed, and that $r_1, \ldots, r_\ell$ describe the nodes to which a new child is added in each play. For each $i\in[\ell]$, we define an $\calL$-term strategy for the evaluation game $\tau^{\sf eval}_i$ as follows: 
    \begin{itemize}[topsep=5pt,itemsep=5pt]
               \item Let $r_i$ be the node of the game tree that is extended during the $i$-th play of the tree exploration game. Suppose this node is at the $d_i$-th level of the tree, and let $p_{i_1}, \ldots, p_{i_{d_i}}$ be the $\calL$-terms corresponding to the moves of the truthifier in the root-to-$r_i$ path, including the current move.
        \item Define the following $\calL$-strategy $\tau^{\sf eval}_i$ of the evaluation game: (1) parse the ancillary information as a sequence $\Gamma$ of transcripts derived from playing strategies $\tau^{\sf eval}_1,\dots,\tau^{\sf eval}_{i-1}$ with the ancillary information described in the statement of the theorem; (2) in the $j$-th step (during the $i$-th evaluation game), where $j\in[k]$, if $j \leq d_i$ play according to $p_{i_j}$ using that all plays from previous rounds of the tree exploration game are available in the transcript $\Gamma$. Otherwise, choose $0^{\calM}$ (i.e., the $j$-th term defining the strategy is the constant term $0$).
    \end{itemize}
    
    From the discussion above, the correctness of the translation is clear: if the strategies $\tau^{\sf eval}_1,\tau^{\sf eval}_2,\dots,\tau^{\sf eval}_{\ell}$ cannot beat a fixed falsifier strategy $\tau^{\ttf}$ in $\ell$ sequential plays of the evaluation game, we can use the oblivious strategy defined by $\tau^{\ttf}$ in the tree exploration game to show that the truthifier does not win the tree exploration game withing $\ell$ moves.  
\end{proof}

\begin{remark}
Instead of viewing \Cref{thm:witnessing-general} as a special case of the game-theoretic witnessing theorem that employs the tree exploration game (\Cref{thm:witnessing-tree-exploration}), we can also establish the result in a more direct way using a technique known as the \emph{no-counterexample interpretation}. We present a self-contained proof in Appendix \ref{sec:no-counterexample}.
\end{remark}

\addtocontents{toc}{\protect\setcounter{tocdepth}{2}}
\section{Warm-up: Kraj\'{i}cek's Technique and the Pich-Santhanam Result}\label{sec:PichSanthanam}

In this section, we provide a detailed exposition of the unprovability result from Santhanam and Pich \citep{PS21}, which relies on a technique introduced by Kraj\'{i}cek \citep{DBLP:journals/jml/Krajicek11} and further investigated by Pich \citep{DBLP:journals/apal/Pich15}. Their result (intuitively) means that strong average-case circuit lower bounds against co-nondeterministic circuits are not provable in $\TPV$. Concretely, for every $L\in\NTIME[2^{n^{o(1)}}]$, $\delta\in(0,1)\cap\bbQ$, and $n_0\in\bbN$, $\TPV$ cannot prove that: 
\begin{quote} 
    For every $n> n_0$ and every co-nondeterminisetic circuit $C:\{0,1\}^n\to\{0,1\}^1$ of size $2^{n^{\delta}}$, $C(x)=L(x)$ on at most $\frac{1}{2}+\frac{1}{2^{n^\delta}}$ fraction of $x\in\{0,1\}^n$. 
\end{quote} 
Since our unprovability results are obtained by extending the original ideas of Pich and Santhanam \cite{PS21} and Kraj\'{i}cek \citep{DBLP:journals/jml/Krajicek11} in combination with our new witnessing theorem, this section might be particularly helpful for a reader that is unfamiliar with these methods.

\subsection{Formalization of complexity lower bounds}

\ignore{\noindent \textbf{Theory PV and its extensions.} We will consider an extension of $\PV$, and establish an unprovability result for this stronger theory. Let $\mathcal{L}(\PV)$ be the language of $\PV$, which contains a function symbol for every polynomial-time computable function $f \colon \mathbb{N}^k \to \mathbb{N}$, where $k \geq 1$ is fixed. (Whenever this is needed, we employ a canonical efficiently computable map between boolean strings and natural numbers.) We use $\TPV$ to denote the theory consisting of all true (over the standard model) universal sentences in the language $\mathcal{L}(\PV)$.\\}

While the unprovability result of \cite{PS21} is robust to some details of the formalization, we will make a few comments here about the way it is done. First, we can represent any natural number $a \in \mathbb{N}$ by an $\mathcal{L}(\PV)$-term, e.g., $a = 1 + 1 + \ldots + 1$, where $+ \colon \mathbb{N}^2 \to \mathbb{N}$ is the $\LPV$ function symbol for addition, and $1$ is a constant symbol in $\LPV$.\footnote{Of course, one can consider more efficient encodings (e.g.,~dyadic notation), but this will not make a difference in our argument.} From this, we can introduce representations for other finite objects. For instance, a natural number can represent the code of a Turing machine $M$, while a pair of natural numbers can represent a rational number $\delta \in \mathbb{Q}$. In some cases, we will quantify over all such objects in the meta-language, e.g., if $M$ is a Turing machine (in the usual sense), then we can consider a $\LPV$-sentence $\phi_M$ that refers to machine $M$ via its representation as a natural number.

For a nondeterministic Turing machine $M$, a constant $n_0 \in \mathbb{N}$, and  functions $s, m \colon \mathbb{N} \to \mathbb{N}$, we write $\LB(M,s,m,n_0)$ to denote an $\LPV$-sentence stating that, for every input length $n \geq n_0$ and for every co-nondeterminstic circuit $D_n(x,z)$ of size $\leq s(n)$, there are at least $m = m(n)$ distinct input strings $x^1, \ldots, x^m \in \{0,1\}^n$ such that $M(x^i) \neq D_n(x^i)$ for each $1 \leq i \leq m$.\footnote{Here and throughout the exposition, we use that $M(x)= 1$ if and only if there exists $y$ such that $M(x,y) = 1$ (as $M$ computes \emph{nondeterministically}), while $D_n(x) = 1$ if and only if for every $z$ we have $D_n(x,z) = 1$ (since $D$ is a \emph{co-nondeterministic} circuit).} A bit more formally, this sentence can be expressed in $\LPV$ in the following way, where we assume that $M$ on input length $n$ runs in time $\leq t(n)$ for some efficiently computable time bound $t(n) \leq N(n) = 2^n$ and that $s(n)$ and $m(n)$ are  efficiently computable and bounded by $N = 2^n$:
\begin{eqnarray}
    \LB(M,s,m,n_0) & \triangleq & \forall v \;\forall N=|v| \;\forall n = |N|~\text{such that}~n\geq n_0~~~(\text{in other words,}~n \in \mathsf{LogLog}) \nonumber \\
    & ~ & \forall\;\text{co-nondet.~circuit~}D_n\text{~of size~}\leq s(n) \nonumber \\
    & ~ & \exists m = m(n)~\text{distinct~}n\text{-bit strings}~x^1, \ldots, x^{m}~\text{s.t.}~\mathsf{Error}_{M,D_n}(x^i)~\text{for all}~i \in [m],\nonumber 
\end{eqnarray}
where we let $\mathsf{Error}_{M,D_n}(x)$ denote the following $\LPV$-formula:
$$
\mathsf{Error}_{M,D_n}(x)  \;\equiv\; \Big [ \exists y \; \exists z\;M(x,y)=1 \,\wedge\, D_n(x,z) = 0  \Big ] \;\vee\; \Big [\forall y'\;  M(x,y') = 0 \,\wedge\, \forall z' \; D_n(x,z') = 1 \Big ],
$$
with the length of $y, y'$ and $z, z'$ bounded by the running time of $M$ and the size of $D_n$, respectively.

The definition above can be made formal by the use of explicit $\LPV$-function symbols that evaluate circuits and machines on a given input and that perform other necessary checks, e.g., deciding when a given object represents a circuit of size at most $s(n)$. All this can be done without increasing the quantifier complexity of the resulting sentence, since $n \in \mathsf{LogLog}$ and polynomial-time computations over $N = 2^n$-bit strings are feasible. For the same reason, the quantification over $i \in [m]$ does not increase quantifier complexity, using that $m(n) \leq N$. Indeed, in the sentence it is enough to existentially quantify over $m(n)$ strings $x^i$ and over $m(n)$ strings $y^i,z^i$ followed by a universal quantification over $m(n)$ strings $y'^i, z'^i$, and the remaining error conditions can be expressed using a single $\LPV$-function symbol that gets as input the encoding of each collection of strings (formally, each family of $m$ strings is a single object, and the strings are decoded from it). Overall, we get that $\LB(M,s,m,n_0)$ is a $\forall \Sigma^b_2$-$\LPV$ sentence. 

\begin{theorem}[$\TPV$ doesn't prove strong a.e.~average-case co-nondeterministic lower bounds for $\mathsf{NP}$]\label{thm:unprov_lb}
For every $n_0 \in \mathbb{N}$ and $\delta \in \mathbb{Q} \cap (0,1)$, if $M$ is a nondeterministic machine whose running time is bounded by some constructive function $t(n) = 2^{n^{o(1)}}$, then
$$
\TPV \nvdash \LB(M,s, m,n_0),
$$
where $s(n) = 2^{n^\delta}$ and $m(n) = 2^n/2 - 2^n/2^{n^{\delta}}$.
\end{theorem}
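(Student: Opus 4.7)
\medskip

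\noindent \textbf{Proof proposal.} Assume for contradiction that $\TPV \vdash \LB(M,s,m,n_0)$ with $s(n)=2^{n^\delta}$ and $m(n)=2^n/2-2^n/2^{n^\delta}$. The high-level plan has three stages: (i) pass from the average-case sentence to a worst-case lower bound sentence and apply a witnessing theorem to extract $\LPV$-terms describing a ``student'' strategy that, given any candidate co-nondeterministic circuit, either certifies it is wrong somewhere or refines it; (ii) plug the $\LPV$-terms into a carefully chosen ``teacher'' strategy built from the Nisan-Wigderson generator applied to an arbitrary target function $f$, thereby extracting a polynomial-time approximator for $f$ in the worst-case regime; (iii) use a relativised hardness amplification theorem in $\mathsf{P}/\mathsf{poly}$ to amplify this approximator into one that contradicts the original lower bound $\LB(M,s,m,n_0)$. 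Because $\TPV$ is sound, this contradiction implies unprovability.

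For stage (i), observe that the sentence $\LB(M,s,m,n_0)$ trivially implies (provably in $\TPV$) a worst-case version $\LBw(M,s,n_0)$ stating that for every $n\geq n_0$ and every co-nondeterministic circuit $D_n$ of size $\leq s(n)$, there exists at least one $n$-bit string $x$ with $\mathsf{Error}_{M,D_n}(x)$. After passing to a prenex form, this is a $\forall \exists \forall$-sentence in $\LPV$ (outer universal quantifier over $n$ and $D_n$; existential over the witness $x$ and the nondeterministic guess $y$ certifying $M(x)\ne D_n(x)$; innermost universal over a co-nondeterministic refutation $z$). By the KPT Witnessing Theorem (\Cref{thm:KPT}) applied to the universal theory $\TPV$, there is a constant $\ell$ and a sequence $t_1,\dots,t_\ell$ of $\LPV$-terms (polynomial-time functions) giving a student--teacher strategy: the student initially proposes $(x^{(1)},y^{(1)})=t_1(n,D_n)$, and whenever the teacher returns a co-nondeterministic refutation $z^{(j)}$, the student replies with $(x^{(j+1)},y^{(j+1)})=t_{j+1}(n,D_n,z^{(1)},\dots,z^{(j)})$; after at most $\ell$ rounds the student succeeds in producing an $x$ with $\mathsf{Error}_{M,D_n}(x)$.

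For stage (ii), fix an arbitrary Boolean function $h \colon \{0,1\}^{n^{c/2}} \to \{0,1\}$ (the ``hard target'') and let $\NW_h(w,\cdot)\colon\{0,1\}^n\to\{0,1\}$ be the Nisan--Wigderson generator stretched function from \Cref{sec:NW}. Following Kraj\'{i}\v{c}ek~\cite{DBLP:journals/jml/Krajicek11} and \cite{DBLP:journals/apal/Pich15}, we instantiate the teacher as follows: given the current student proposal of an input $x^{(j)}$ claimed to be in $L(M)\triangle L(D_n)$, the teacher interprets $x^{(j)}$ as a seed index and returns a co-nondeterministic refutation of the form $z^{(j)}=\NW_h(w^{(j)},\cdot)$ computing $D_n$ correctly on $x^{(j)}$, where $D_n$ is chosen to be a standard $\cSig{1}\SIZE[2^{n^\delta}]$ majority-vote circuit built from few $\NW_h$-samples. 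The key point, using the reconstruction property of the NW generator, is that if the student's subsequent proposal ever deviates substantially from what $D_n$ computes, the $O(1)$ counter-examples collected by the teacher furnish a circuit of size $2^{n^{o(1)}}$ that computes $h$ on all inputs of length $n^{c/2}$ correctly on a non-trivial fraction. Tracking this carefully (and using that the student makes only $\ell=O(1)$ rounds, all $t_j$ are polynomial-time, and the NW parameters transform sub-exponential size circuits on $n$ bits into slightly-better-than-trivial circuits on $n^{c/2}$ bits) we obtain: for any function $h$, there is a deterministic polynomial-size circuit that agrees with $h$ on at least a $1/2+1/2^{m^{o(1)}}$ fraction of inputs, where $m=n^{c/2}$.

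For stage (iii), we invoke the hardness amplification theorem (the un-relativised case $i=1$ of \Cref{thm:hardamp-ppoly}): if every function on $m$ input bits has a size-$\poly(m)$ approximator with advantage $1/2^{m^{o(1)}}$, then every function has a size-$2^{m^{o(1)}}$ approximator achieving error at most $1/m$. Specialising this to the language $L(M)$ (which is computable in $\NTIME[2^{n^{o(1)}}]$, hence has non-deterministic circuits of size $2^{n^{o(1)}}$ that can serve as the hard-function target $h$ after slight tweaks) contradicts the statement of $\LB(M,s,m,n_0)$ with the chosen $s(n)=2^{n^\delta}$ and $m(n)=2^n/2-2^n/2^{n^\delta}$, because the amplified approximator has error strictly below what $\LB$ asserts to be unavoidable. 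The main obstacle, and the step requiring the most care, is stage (ii): the NW-based teacher must be engineered so that \emph{regardless} of which of the $O(1)$ student proposals eventually succeeds, one can read off an approximator for $h$; in the KPT setting with a fixed machine $M$ this is manageable because the student's terms are fixed polynomial-time functions of the teacher's queries, but the bookkeeping of the NW reconstruction across the $\ell$ rounds is the technical heart of the argument. Soundness of $\TPV$ then converts the contradiction into the claimed unprovability.
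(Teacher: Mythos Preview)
Your three-stage outline matches the paper at the top level, but there are two substantive problems.

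\textbf{Stage (iii) is superfluous for this theorem.} With $m(n)=2^n/2-2^n/2^{n^\delta}$, the lower bound only forbids approximators with advantage $2^{-n^\delta}$; it does not forbid error $1/n$. The paper's argument for \Cref{thm:unprov_lb} uses only stages (i) and (ii): from $\TPV\vdash\LBw(M,s,n_0)$ one constructs, for every $k\ge 1$ and all large $n$, a \emph{deterministic} circuit of size $2^{O(n)}$ on input length $n^k$ that agrees with $L(M)$ with probability $\ge 1/2+2^{-O(n)}$. Taking $k>1/\delta$, at input length $m=n^k$ this circuit has size $2^{O(m^{1/k})}\le 2^{m^\delta}$ and advantage $2^{-O(m^{1/k})}\ge 2^{-m^\delta}$, directly contradicting soundness of $\TPV\vdash\LB(M,s,m,n_0)$. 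Hardness amplification enters only in the paper's \Cref{sec:hard_amplif} (Theorem~\ref{thm:unprov_lb2}), where the average-case parameter is relaxed to $m(n)=2^n/n$.

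\textbf{Stage (ii) is not the construction that works.} You describe $D_n$ as a ``majority-vote circuit built from few $\NW_h$-samples'' and have the teacher ``return $\NW_h(w^{(j)},\cdot)$''. That is not the mechanism. In the paper, the co-nondeterministic circuit fed to the KPT student \emph{is itself} the NW function: one defines a family $\{D_w\}_{w\in\{0,1\}^{n^c}}$ with $D_w(x)=\NW_{\overline{L(M)}}(w,x)$, of size $\le 2^{n^\delta}$, and runs the KPT terms $t_1,\dots,t_\ell$ on each $D_w$. The teacher's role is the standard one: when the student's proposed mistake $(x,y,z)$ fails to satisfy $\mathsf{Error}(x,y,z)$, the teacher supplies the lexicographically first refuting pair $(y',z')$. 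The approximator for $L(M)$ on input $u\in\{0,1\}^{n^{c/2}}$ is then built by (a) fixing a popular output $x^*$ of $t_1(\cdot,D_w)$ over random $w$, (b) fixing all seed bits outside $J_{x^*}$ to some $a$, (c) on input $u$ setting $w=a\cup u$, running the KPT protocol on $D_w$, and outputting the hardwired bit $M(x^*)$ when the protocol confirms a mistake at $x^*$ (random bit otherwise). The crucial point for rounds $j\ge 2$ is that the teacher's refutation $(y'_w,z'_w)$ at a previous $x_i$ depends only on $w|_{J_{x_i}}$, which overlaps $J_{x_j}$ in at most $n$ bits by the design property, so all teacher answers can be tabulated in size $2^{O(n)}$ (the $(\nabla)$ subroutine). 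Your ``polynomial-size'' claim is also off: the resulting circuit has size $2^{O(n)}$, sub-exponential in its input length $n^{c/2}$, and this parameter trade-off (not hardness amplification) is what closes the argument.
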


In particular, for every language $L \in \mathsf{NP}$ and $\delta > 0$ it is consistent with $\TPV$ that there are infinitely many input lengths $n$ and a co-nondeterministic circuit $D_n$ of size $\leq 2^{n^{\delta}}$ such that 
$$
\Pr_{x \sim \{0,1\}^n}[L(x) = D_n(x)] \geq 1/2 + 2^{-n^{\delta}}.
$$

A strengthening of \Cref{thm:unprov_lb} is discussed in \Cref{sec:hard_amplif}.

\subsection{Proof of Theorem \ref{thm:unprov_lb}}\label{sec:proof_main}

Let $n_0$, $\delta$, $M$, $t(n)$, $s(n)$, and $m(n)$ be as in the statement of Theorem  \ref{thm:unprov_lb}. Arguing as in \citep{PS21}, we assume towards a contradiction that 
$$
\TPV \vdash \LB(M,s, m,n_0).
$$
Let $L \subseteq \{0,1\}^*$ be the language defined by $M$. We argue as follows.
\begin{enumerate}
    \item\label{enum: step 1 approx} From the provability of this almost-everywhere average-case lower bound against co-nondeterministic circuits, it follows by the soundness of $\TPV$ that (in the standard model) for every sequence $\{E_n\}_{n \geq 1}$ of \emph{deterministic} circuits $E_n$ of size $\leq 2^{n^{\delta}}$, if $n \geq n_0$ then 
    $$
    \Pr_{x \sim \{0,1\}^n}[L(x) = E_n(x)] \leq 1/2 + 2^{-n^{\delta}}.
    $$
    \item From the provability of the sentence $\LB(M,s, m,n_0)$ it trivially follows that $\TPV$ proves a sentence $\LBw(M,s,n_0)$ which states a \emph{worst-case} lower bound for $M$ against co-nondeterministic circuits of the same size. We then show that the provability of $\LBw(M,s,n_0)$ in $\TPV$ implies that, in the standard model, for every fixed $k\geq 1$ and for every large enough $n$, there is a deterministic circuit $A$ defined over $n^k$ input variables and of size $2^{O(n)}$ such that
    $$
    \Pr_{w \sim \{0,1\}^{n^k}}[L(w) = A(w)] \geq 1/2 + 2^{-O(n)}.
    $$
    Taking $k > 1/\delta$ contradicts \Cref{enum: step 1 approx} above.
\end{enumerate}

Note that the only remaining step is to show that:

\begin{quote} 
$(\star)$ The provability of a worst-case lower bound against \emph{co-nondeterministic} circuits allows us to non-trivially approximate $L$ using \emph{deterministic} circuits of bounded size.
\end{quote}

Before proceeding with the proof of this result, we describe the aforementioned worst-case lower bound sentence in a convenient way.
\begin{eqnarray}
    \LBw(M,s,n_0) & \equiv & \forall n \in \mathsf{LogLog}~\text{with}~n\geq n_0,~\forall ~\text{co-nondet. circuit}~D~\text{of size}~\leq s(n) \nonumber \\
    & ~ & \exists x \in \{0,1\}^n~\exists y \in \{0,1\}^{t(n)}~\exists z \in \{0,1\}^{s(n)}~\text{such that}~\mathsf{Error}(x,y,z), \nonumber
\end{eqnarray}
where here $\mathsf{Error}(x,y,z)$ denotes the following $\LPV$-formula:
\begin{equation}\label{eq:Error}
    \mathsf{Error}(x,y,z) \;\equiv\; \Big [ M(x,y)=1 \,\wedge\, D(x,z) = 0  \Big ] \;\vee\; \Big [\forall y'\;  M(x,y') = 0 \,\wedge\, \forall z' \; D(x,z') = 1 \Big ],
\end{equation}
where the lengths of $y'$ and $z'$ are bounded as before. Observe that $\LBw(M,s,n_0)$ is also a $\forall \Sigma^b_2$-$\LPV$ sentence. 

It is easy to see that, under any reasonable formalization, if $m(n) \geq 1$ then $\TPV$ derives the worst-case lower bound sentence $\LBw(M,s,n_0)$ from the average-case lower bound sentence $\LB(M,s,m,n_0)$. Consequently, it is sufficient for us to prove the following lemma, which formalizes statement $(\star)$.

\begin{lemma}[Non-trivial correlation from the provability of a worst-case lower bound]\label{lem:main}
Let $n_0 \in \mathbb{N}$, $\delta \in \mathbb{Q} \cap (0,1)$, $M$ be a nondeterministic machine whose running time is bounded by some constructive function $t(n) = 2^{n^{o(1)}}$, and $s(n) = 2^{n^{\delta}}$. If
$$
\TPV \vdash \LBw(M,s,n_0),
$$
then for every $k \geq 1$ and sufficiently large $n$, there is a deterministic circuit $B:\{0,1\}^{n^k}\to\{0,1\}$ of size $2^{O(n)}$ such that
    $$
    \Pr_{w \sim \{0,1\}^{n^k}}[L(w) = B(w)] \geq 1/2 + 2^{-O(n)},
    $$
    where $L$ is the language decided by $M$.
\end{lemma}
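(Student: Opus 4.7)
The plan is to apply the KPT Witnessing Theorem to the hypothesised $\TPV$-proof of $\LBw(M,s,n_0)$ and then to design an adversarial Teacher strategy, centred on the Nisan--Wigderson generator from \Cref{sec:NW}, whose interaction with the extracted polynomial-time Student yields the desired deterministic circuit $B$.

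\textbf{Step 1 (Witnessing).} First I would rewrite $\mathsf{Error}(x, y, z)$ by pulling the $\forall y', z'$ buried in its second disjunct to the outside of the disjunction; this is legal since $y', z'$ do not appear in the first disjunct. The result is that $\LBw(M,s,n_0)$ has the shape $\forall n, D\,\exists (x,y,z)\,\forall (y',z')\,\psi(\cdot)$ with $\psi$ quantifier-free, i.e., a $\forall\exists\forall$-$\LPV$-sentence. Because $\TPV$ is universal, \Cref{thm:KPT} then yields a constant $\ell = O(1)$ and $\LPV$-terms $t_1, \dots, t_\ell$ encoding a polynomial-time Student that, given $(n, D)$ together with past Teacher counterexamples $(y'_1, z'_1), \dots, (y'_{i-1}, z'_{i-1})$, outputs a candidate triple $(x_i, y_i, z_i) = t_i(\cdot)$; within at most $\ell$ rounds at least one candidate is a genuine $\mathsf{Error}_{M,D}$-witness, no matter what $D$ the Teacher picks.

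\textbf{Step 2 (Adversarial Teacher and decoding).} Given a target input $w \in \{0,1\}^{n^k}$ for the circuit $B$ being constructed, I would interpret $w$ as a Nisan--Wigderson seed. For a suitable choice of parameters $n' = n'(n,k)$ (with the NW seed length small enough that $w$ fits inside the description of a size-$s(n') = 2^{(n')^\delta}$ circuit), the Teacher plays the co-nondeterministic circuit $D_w$ which hard-wires $w$ and computes $\NW_{L}(w, \cdot) \colon \{0,1\}^{n'} \to \{0,1\}$, with the hard-function slot of $\NW$ implemented (co-)nondeterministically using $M$'s accepting witnesses of length $2^{n^{o(1)}} \ll 2^{(n')^\delta}$. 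By the KPT guarantee, the Student produces within $\ell = O(1)$ rounds an index $x_i \in \{0,1\}^{n'}$ certified to satisfy $L(x_i) \neq \NW_L(w, x_i)$. Following the standard NW reconstruction paradigm, this single disagreement is converted into a short deterministic circuit that predicts, from $w$ alone, an unknown bit of $L$ evaluated inside $\NW_L$; a standard averaging/padding argument then amplifies this into a predictor for $L(w)$ itself, which defines $B$.

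\textbf{Main obstacle and size accounting.} The delicate point is ensuring that $D_w$ is simultaneously (i) a valid co-nondeterministic circuit of size $\leq s(n')$ despite hard-coding the long seed $w$, and (ii) informative enough that any $\mathsf{Error}$-witness leaks a useful bit of $L(w)$. Item (i) uses the explicit polynomial-size computation of the NW designs (\Cref{lem:designs}) together with the sub-exponential witness length of $M$. Item (ii) is the core conceptual step and is a direct adaptation of the NW reconstruction argument to the setting where the ``distinguisher'' is a Student-produced error rather than a small circuit; this is the key idea of \cite{DBLP:journals/jml/Krajicek11, DBLP:journals/apal/Pich15}. Since the Student runs in polynomial time, each of the $\ell = O(1)$ Teacher counterexamples can be supplied by brute-force over the $2^{n^{o(1)}}$ possible $(y'_i, z'_i)$, and the NW-decoding invokes $L$ on only polynomially many short inputs, so the final deterministic circuit $B$ has size $2^{O(n)}$ and achieves $\Pr_w[B(w) = L(w)] \geq 1/2 + 2^{-O(n)}$, as required.
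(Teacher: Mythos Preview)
Your Step~1 (KPT witnessing) is correct and matches the paper; the high-level plan of combining it with the NW generator is also right. But Step~2 has two genuine gaps.

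\textbf{The seed/target relationship is inverted.} In the paper the target input $u$ has length $n^{c/2}$, the NW hard-function input length, not the seed length. The predictor first fixes, by averaging over \emph{all} seeds, a popular error index $x^*$ and a partial seed $a \in \{0,1\}^{[n^c]\setminus J_{x^*}}$; it then forms $w = a \cup u$ (placing $u$ precisely at positions $J_{x^*}$), runs the Student on $D_w = \NW_{\overline{L(M)}}(w,\cdot)$, and when the Student returns $x^*$ outputs the stored bit $b^* = M(x^*)$. The identity $D_w(x^*) = \overline{L(M)}(w|_{J_{x^*}}) = \overline{L(M)}(u)$ is what links the Student's error to $L(u)$. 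You instead take the full target $w\in\{0,1\}^{n^k}$ as the seed; then the error index $x_i$ varies with $w$ (so $M(x_i)$ is not a single advice bit), and what you recover is $L(w|_{J_{x_i}})$ on a length-$n^{k/2}$ input, not $L(w)$. Your ``standard averaging/padding argument'' does not close this: the averaging has to happen \emph{before} $B$ is defined, to pin down $x^*$ (and then $a$), and there is no padding from shorter to longer inputs here.

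\textbf{Brute-forcing the Teacher is infeasible.} The counterexample $(y',z')$ has $|y'|=t(n)=2^{n^{o(1)}}$ and $|z'|\le s(n)=2^{n^\delta}$, so the number of candidate pairs is $2^{2^{n^{\Omega(1)}}}$, not $2^{n^{o(1)}}$; you have confused the \emph{length} of the witnesses with their number. For $\ell\ge 2$ the paper's actual mechanism (claim $(\nabla)$) is that once $u$ sits at positions $J_{x_j}$, the round-$i$ Teacher answer for $i<j$ depends only on $w|_{J_{x_i}}$, hence on at most $|J_{x_i}\cap J_{x_j}|\le n$ bits of $u$ by the design overlap bound; all $2^n$ possible Teacher answers per round are then hard-wired as a $2^{O(n)}$-size lookup table. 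This design-overlap trick --- not any brute force --- is the central reason the NW construction is used in this argument.
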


Lemma \ref{lem:main} and its proof have appeared in \citep{DBLP:journals/jml/Krajicek11, DBLP:journals/apal/Pich15}. We provide next a detailed exposition of this technique.

\subsubsection{A simpler case: $\ell = 1$ in the KPT student-teacher protocol}

Note that we can apply \Cref{thm:KPT} to sentence $\LBw$ and theory $\TPV$, since $\TPV$ is a universal theory and it is not difficult to see that $\LBw$ can be written in the form required by Theorem \ref{thm:KPT}. Since $\LPV$-terms correspond to polynomial-time computable functions, the corresponding student computes in time polynomial in the length of its input. Using that $n \in \mathsf{LogLog}$ in sentence $\LBw$, we obtain uniform algorithms $f_1, \ldots, f_\ell$ that compute in time $2^{O(n)}$ and satisfy the conclusion of Theorem \ref{thm:KPT}. Note that we cannot control the constant $\ell$. In this section, we discuss the simpler case when we get $\ell = 1$ in the application of Theorem \ref{thm:KPT} to $\TPV \vdash \LBw(M,s,n_0)$.

Omitting auxiliary variables in the input to $f_1$ and highlighting the relevant parameters,\footnote{The condition $n \in \mathsf{LogLog}$ means that $n$ is the length of $N$, while $N$ is the length of some universally quantified variable $v$. For this reason, formally, $v$ is an input to $f_1$, and not $n$. However, over $\mathbb{N}$ we will run $f_1$ on a fixed input $v$, such as $1^N$, where $N = 2^n$. For this reason, $n$ is the parameter that  controls the length of the remaining inputs.} $f_1(n,D)$ receives $n$ and an \emph{arbitrary} co-nondeterministic circuit $D$ of size $\leq s(n) = 2^{n^{\delta}}$, and outputs a triple $(x,y,z)$ such that $\mathsf{Error}(x,y,z)$ holds. Assuming that $n \geq n_0$ and $\ell = 1$ in the KPT Witnessing, it follows that this triple witnesses that $D(x) \neq M(x)$ over the standard model $\mathbb{N}$. In other words, $x \in \{0,1\}^n$, $y \in \{0,1\}^{t(n)}$ for $t(n) = 2^{n^{o(1)}}$, $z \in \{0,1\}^{s(n)}$, and the following holds:
$$
\Big [ M(x,y)=1 \,\wedge\, D(x,z) = 0  \Big ] \;\vee\; \Big [\forall y'\;  M(x,y') = 0 \,\wedge\, \forall z' \; D(x,z') = 1 \Big ].
$$

To prove Lemma \ref{lem:main} when $\ell = 1$, let $k \geq 1$, and assume that $n$ is sufficiently large. We will use $f_1$ to construct a deterministic circuit $B$ defined over $n^k$ input variables and of size $2^{O(n)}$ such that
    $$
    \Pr_{w \sim \{0,1\}^{n^k}}[L(w) = B(w)] \geq 1/2 + 2^{-O(n)},
    $$
where $L$ is the language computed by the nondeterministic machine $M$. As a key point, note that we can invoke $f_1(n,D)$ on any co-nondeterministic circuit $D(x, \cdot)$ over $n$ input variables and of size $\leq 2^{n^{\delta}}$. In order to construct the deterministic circuit $B$, we will also use that $f_1(n,D) = (x,y,z)$ computes in time $2^{O(n)}$ and therefore can be simulated by a circuit of size $2^{O(n)}$. For the $\ell = 1$ case, we will not need the output strings $y$ and $z$ during the construction of $A$. From now on, we simplify notation and write $f_1(D) = x$ to denote the relevant input and output of $f_1$ for this case.

Let $C_{n^k}(w,z)$ be a Boolean circuit that computes as $M$ on inputs of length $n^k$, where $z$ corresponds to the nondeterministic input. Since $M$ runs in time at most $2^{n^{o(1)}}$, $C_{n^k}$ has size at most $2^{n^{\delta}}$ when $n$ is sufficiently large. We partition its first input as $w=x\|w'$, where $|x| = n$ and $|w'| = n^k - n$. Now for a fixed string $w' \in \{0,1\}^{n^k - n}$, consider the circuit $D_{w'}(x,z) \triangleq \neg C_{n^k}(x\|w',z)$. Viewing $D_{w'}$ as a co-nondeterministic circuit, we get that
\begin{equation}\label{eq:def_Dw'}
D_{w'}(x) = 1 \;\Longleftrightarrow\; \forall z\, D_{w'}(x,z) = 1 \;\Longleftrightarrow\; \forall z\, C_{n^k}(x\|w',z) = 0 \;\Longleftrightarrow\; x\|w' \notin L(M). 
\end{equation}

Intuitively, if we ``learn'' the output bit of $D_{w'}(x)$ for some pair $(w',x)$, we also ``learn'' if the input string $x\|w'$ is in $L(M)$. As a consequence, the collection $\{D_{w'}(x)\}_{w' }$ of co-nondeterministic circuits $D_{w'}$ (defined over $n$ input bits) captures the computation of $L$ on inputs of length $n^k$. 

As each $D_{w'}$ has size at most $2^{n^{\delta}}$, we can invoke $f_1$ on them. Since $f_1(D_{w'})$ finds mistakes with respect to the nondeterministic computation of $M$, we know that $D_{w'}(x^*) \neq M(x^*)$ for $x^*\triangleq f_1(D_{w'})$. Since there are only $2^n$ possibilities for the output of $f_1$, the following holds.

\begin{fact}\label{fact:xstar1}
There is a string $x^* \in \{0,1\}^n$ such that
$$
\Pr_{w' \in \{0,1\}^{n^k - n}}[f_1(D_{w'}) = x^*] \geq 2^{-n}.
$$
\end{fact}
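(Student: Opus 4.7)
The plan is to prove this by a direct averaging (pigeonhole) argument over the range of $f_1$. Recall that for every $w' \in \{0,1\}^{n^k - n}$, the circuit $D_{w'}$ has size at most $2^{n^\delta}$ (assuming $n$ is large enough), so by the discussion preceding the fact we may legitimately invoke $f_1$ on $D_{w'}$, and its output is a string in $\{0,1\}^n$.

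First I would observe that the map $w' \mapsto f_1(D_{w'})$ has codomain of size exactly $2^n$. Hence, summing over $x^* \in \{0,1\}^n$,
\[
\sum_{x^* \in \{0,1\}^n} \Pr_{w' \in \{0,1\}^{n^k - n}}\!\bigl[f_1(D_{w'}) = x^*\bigr] \;=\; 1.
\]
By an averaging argument, at least one term in this sum must be at least $2^{-n}$. Choosing any such $x^*$ yields the claim.

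There is no real obstacle here: the statement is just the observation that a function into a set of size $2^n$ must have some fiber of relative density at least $2^{-n}$ under the uniform distribution on its domain. The only thing worth double-checking is that $f_1$ is well-defined on $D_{w'}$ for every $w'$, which follows because the KPT-extracted student $f_1$ is a total $\LPV$-term computation that accepts any co-nondeterministic circuit of size $\leq 2^{n^\delta}$ as input, and $D_{w'}$ meets this size bound for all sufficiently large $n$ by the choice of $k$ and $\delta$.
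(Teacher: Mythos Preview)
Your proof is correct and matches the paper's own justification: the paper simply notes that since there are only $2^n$ possibilities for the output of $f_1$, some value must occur with probability at least $2^{-n}$ by averaging.
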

Following our informal discussion from above, from the knowledge that $f_1(D_{w'}) = x^*$ (and of a bit encoding if $x^* \in L(M)$) we ``learn'' how to compute $L(M)$ on the input $w = x^*\|w'$. Since this will happen with non-trivial probability over a random choice of $w'$ and $x^*$, this can be used to non-trivially approximate $L(M)$ over input length $n^k$. 

Formally, let $b^* \in \{0,1\}$ be $1$ if and only if $M(x^*) = 1$, i.e., if $x^* \in L(M)$. The string $x^*$ and the bit $b^*$ will be stored as non-uniform advice in the deterministic circuit $B$ that we show to be correlated with $L(M)$ on input length $n^k$. First, consider the following randomized circuit $B'$:

\begin{algorithm}[htb]
    \SetKwInOut{Input}{Input}
    \SetKwInOut{Advice}{Advice}
    \caption{Randomized Circuit $B'$ for $L(M)$}\label{algo:B-for-l=1}
    \Input{The input $w \in \{0,1\}^{n^k}$ and a random $r \in \{0,1\}$}
    \Advice{$x^*\in\{0,1\}^n$ and $b^*\in\{0,1\}$}
    Let $w = x\|w'$ and compute $f_1(D_{w'})$\;
    \tcp{Note that we can construct a description of $D_{w'}$ from $w$ and $C_{n^k}$.}
    If $f_1(D_{w'})\ne x^*$, \Return{$r$}\;
    If $x\ne x^*$, \Return{$r$}\;
    Otherwise, \Return{$b^*$}\;
\end{algorithm}

Note that $B'$ can be computed with $2^{O(n)}$ gates, since $f_1$ runs in time $2^{O(n)}$. Next, we show that the randomized circuit $B'$ non-trivially correlates with $L(M)$ on inputs of length $n^k$. After that, fixing the random bit $b$ in $B'$ yields the desired deterministic circuit $A$.

\begin{fact}\label{fact:correct1}
If $B'$ reaches Line 4 on an input string $w$, then $B'(w,b) = M(w)$, i.e., $B'$ correctly decides $L(M)$ on input $w$.
\end{fact}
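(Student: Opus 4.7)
The plan is to trace through the definitions of $f_1$, $D_{w'}$, $x^*$, and $b^*$ and verify that the two conditions which trigger Line 4 force the returned value $b^*$ to coincide with $M(w)$. The conceptual point I expect to pin down is that a ``mistake'' $f_1(D_{w'}) = x^*$ does not merely witness a disagreement between $M$ and the co-nondeterministic circuit $D_{w'}$ on input $x^*$: because $D_{w'}$ was built from $C_{n^k}$ by hard-coding the suffix $w'$, such a mistake in fact encodes a \emph{consistency relation} between $M$ on the short input $x^*$ and $M$ on the long input $x^*\|w'$. This is the heart of Kraj\'{\i}\v{c}ek's trick, and everything else is bookkeeping.

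First I will unfold the hypothesis. Reaching Line 4 requires that Line 2 falls through, so $f_1(D_{w'}) = x^*$, and that Line 3 falls through, so $x = x^*$; consequently $w = x\|w' = x^*\|w'$. Next I will invoke the KPT witnessing assumption in the $\ell = 1$ case: the output triple $f_1(D_{w'}) = (x^*, y, z)$ (of which only the first component is used in $B'$) satisfies $\mathsf{Error}(x^*,y,z)$ with respect to the co-nondeterministic circuit $D_{w'}$, and unpacking the two disjuncts in (\ref{eq:Error}) yields $M(x^*) \neq D_{w'}(x^*)$.

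The final step invokes the defining equation (\ref{eq:def_Dw'}), which gives $D_{w'}(x^*) = 1$ iff $x^*\|w' \notin L(M)$; equivalently, $D_{w'}(x^*) = 1 - M(x^*\|w')$. Combining with $M(x^*) \neq D_{w'}(x^*)$ yields
$$
M(x^*) \;=\; M(x^*\|w') \;=\; M(w).
$$
Since $b^* = M(x^*)$ by the very choice of the advice bit, Line 4 returns $B'(w,r) = b^* = M(w)$, independently of the random bit. The only subtlety worth flagging is that $D_{w'}$ plays two roles: as a deterministic circuit in the variables $(x,z)$, and as a co-nondeterministic circuit in $x$ alone whose semantics concerns membership in $L$ at length $n^k$ rather than at length $n$. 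Once this distinction is made explicit, the verification is a one-line substitution and there is no real obstacle to overcome.
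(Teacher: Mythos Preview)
Your argument is correct and follows essentially the same route as the paper's proof: both unfold the conditions for reaching Line~4 to get $w = x^*\|w'$ and $f_1(D_{w'}) = x^*$, use the $\ell=1$ witnessing guarantee to conclude $M(x^*) \neq D_{w'}(x^*)$, and combine this with the defining equation \eqref{eq:def_Dw'} to deduce $b^* = M(x^*) = M(w)$. The paper writes this as a chain of equivalences $B'(w,r)=1 \Leftrightarrow \dots \Leftrightarrow M(w)=1$, while you phrase it as the equality $M(x^*) = M(x^*\|w')$, but these are the same argument.
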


\begin{proof}
Under the assumption that $B'$ reaches Line $4$ on an input string $w$, it follows that $w = x^*\|w'$ and $f_1(D_{w'}) = x^*$. Moreover, observe that the random bit $b$ does not affect the output of $B'$ in this case. We have
\begin{align*}
    B'(w,r) = 1 & \Longleftrightarrow b^* = 1  \\
     & \Longleftrightarrow M(x^*) = 1 
     \tag{by the definition of~$b^*$} \\
    & \Longleftrightarrow D_{w'}(x^*) = 0 \tag{using that $f_1$ finds a mistake} \\
    & \Longleftrightarrow w = x^*\|w' \in L(M) \tag{by Equation \ref{eq:def_Dw'}} \\ 
    & \Longleftrightarrow M(w)=1. \tag{since $M$ computes $L(M)$}
\end{align*}
In other words, $B'(w,r) = M(w)$.
\end{proof}

In addition, by Fact \ref{fact:xstar1},
$$
p \eqdef \Pr_{r, w}[B'~\text{reaches Line}~4] = \Pr_{x,w'}[x = x^* \wedge f_1(D_{w'}) = x^*] \geq 2^{-n} \cdot 2^{-n}.
$$
On the other hand, when $B'$ does not reach Line 4 it outputs a random bit that is independent of the input string $w$. Therefore, using Fact \ref{fact:correct1} and the lower bound on $p$,
$$
\Pr_{r,w}[B'(w,r) = M(w)] \geq p \cdot 1 + (1-p) \cdot 1/2 = 1/2 + p/2  \geq 1/2 + 2^{-2n + 1} = 1/2 + 2^{-O(n)}.
$$
Fixing the random bit $r$ in the best way maintains this advantage and completes the proof of Lemma \ref{lem:main} when $\ell = 1$.

\begin{remark}\label{remark:general_approx}
The same argument can be used to approximate \emph{any} nondeterministic circuit of size $2^{n^{k\delta}}$ defined over $n^k$ bits by a deterministic circuit of size $2^{O(n)}$, instead of just for $L(M) \cap \{0,1\}^{n^k}$. In other words, by connecting $D_{w'}$ to the computation of the appropriate co-nondeterminisetic circuit, ``learning'' output bits of $D_{w'}$ via $f_1$ translates into a non-trivial approximation \emph{(}using exactly the same strategy\emph{)}. This will also hold when analysing the case $\ell > 1$. In particular, from $\TPV \vdash \LBw(M,s,n_0)$ we are able to non-trivially approximate any language in $\mathsf{NSIZE}[2^{n^{o(1)}}]$ and not just $L(M)$.
\end{remark} 

\subsubsection{The case $\ell = 2$ via the Nisan-Wigderson generator}\label{sec:ell2case}

In this section, we consider the case where the disjunction obtained from KPT Witnessing (\Cref{thm:KPT}) has size $\ell = 2$. This essentially covers all difficulties in the general case. Before handling $\ell = 2$, it is instructive to highlight some key points of the proof when $\ell = 1$:

\begin{enumerate}
    \item\label{enum: step 1 ell=2} We implicitly relied on the ability of \emph{certifying} when an input $x$ is a mistake. More precisely, when $\ell = 1$, if $f_1(D_{w'}) = (x,y,z)$, we have the \emph{guarantee} that $D_{w'}(x) \neq M(x)$. This is because there is a \emph{single} round in the corresponding Student-Teacher protocol. 
    \item By an averaging argument, we fixed a good string $x^* \in \{0,1\}^n$ (\Cref{fact:xstar1}), which eventually allowed us to compute $M(x^*w')$ on a non-trivial fraction of $w'$, by storing $x^*$ and the corresponding bit $b^* = M(x^*)$.
    \item This was accomplished by considering a family $\{D_{w'}(x)\}_{w'}$ of co-nondeterministic circuits over $n$-bit inputs that compute according to a  circuit defined over input length $n^k$ that is related to the language we would like to approximate. 
    \item On an input $w \in \{0,1\}^{n^k}$ with $w = x\|w'$ for which the witnessing provided by $f_1(D_{w'})$ was inconsistent with the actual input part $x$ (we can easily detect this), we output a random bit.
\end{enumerate}

Note that this approach no longer works when $\ell > 1$: the first term obtained from KPT Witnessing might not succeed in finding a mistake. For this reason, we cannot assume in \Cref{enum: step 1 ell=2} that if $f_1(D_{w'}) = (x,y,z)$ then $D_{w'}(x) \neq M(x)$.

Let $f_1$ be the first term in the KPT disjunction when $\ell > 1$. Note that we can still fix a popular \emph{candidate} mistake $x^* \in \{0,1\}^n$, as in \Cref{fact:xstar1}. Recall that $f_1(D_{w'}) = (x_{w'}, y_{w'}, z_{w'})$ (we did not have to  use $y_{w'}$ and $z_{w'}$ in the argument for $\ell = 1$). We can check whether $x_{w'} = x^*$, as before, and we would like to use $y_{w'}$ and $z_{w'}$ together with some hard-coded information to decide if $x^* = x_{w'}$ is indeed a mistake for $D_{w'}$. While both $y_{w'}, z_{w'} \in \{0,1\}^{\leq 2^n}$, there are $2^{\Omega(n^k)}$ possible strings $w'$. Unfortunately, it is unclear how to store enough information in the non-uniform circuit $B'$ to certify that a mistake has been found by $f_1$ while maintaining a circuit size bound of $2^{O(n)}$. 

To reduce the amount of advice needed in $B'$ and address this difficulty, the solution \citep{DBLP:journals/jml/Krajicek11, DBLP:journals/apal/Pich15} is to employ a more sophisticated family $\{D_{w'}\}_{w'}$ of circuits constructed via the Nisan-Wigderson generator \citep{NW}.

For a nondeterministic machine $M$ that decides a language $L(M)$, we use the notation $\{\NW_{\overline{L(M)}}(w)\}_w$ to denote the collection of functions obtained from the Nisan-Wigderson generator when instantiated with the Boolean function $h$ that corresponds to the negation of $L(M)$ over inputs of length $n^{c/2}$.

\begin{fact} Let $M$ be a nondeterministic machine that runs in time $2^{m^{o(1)}}$ on inputs of length $m$.
For any constant $c \geq 1$ and every large enough $n$, each function in $\{\NW_{\overline{L(M)}}(w)\}_w$ can be computed by a co-nondeterministic circuit $D_w(x)$ of size at most $2^{n^{\delta}}$. 
\end{fact}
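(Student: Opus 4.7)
The plan is to build $D_w$ by composing an efficient ``seed-extraction'' circuit with a small co-non-deterministic simulator of $M$ on short inputs. Unwinding the definition of the NW generator, on input $x \in \{0,1\}^n$ we have $\NW_{\overline{L(M)}}(w)(x) = \overline{L(M)}(w|_{J_x})$, where $J_x \subseteq [n^c]$ is the design set of size $n^{c/2}$ selected by the row indexed by $x$.

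The first step is the extraction of $w|_{J_x}$. By the explicit design construction from \Cref{lem:designs} (the map $(x,j) \mapsto \mathbf{1}[j \in J_x]$ is computable in $\mathsf{poly}(n)$ time), hardwiring $w$ into the circuit yields a deterministic sub-circuit of size $\mathsf{poly}(n)$ that on input $x$ outputs the $n^{c/2}$-bit string $w|_{J_x}$.

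The second step is to evaluate $\overline{L(M)}$ on this string. Here we use the hypothesis on $M$: since $M$ is non-deterministic with running time $t(m) = 2^{m^{o(1)}}$, the complement $\overline{L(M)} \cap \{0,1\}^m$ is decided by a co-non-deterministic circuit of size $\mathsf{poly}(m) \cdot t(m)$ via the standard Turing-machine-to-circuit simulation. Instantiating at $m = n^{c/2}$, this size becomes $2^{(n^{c/2})^{o(1)}}$; since $c$ is a fixed constant and $\delta > 0$, for every sufficiently large $n$ the exponent $c\epsilon/2$ in $n^{c\epsilon/2}$ (for any small enough $\epsilon$ coming from the $o(1)$) is below $\delta$, so the size is at most $2^{n^{\delta}}/2$.

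Composing the two sub-circuits gives a co-non-deterministic circuit $D_w$ of total size at most $\mathsf{poly}(n) + 2^{n^{\delta}}/2 \leq 2^{n^{\delta}}$ whose output on $x$ is precisely $\overline{L(M)}(w|_{J_x}) = \NW_{\overline{L(M)}}(w)(x)$, as required. There is no genuine obstacle in the argument; the only bookkeeping point is to verify that, after substituting $m = n^{c/2}$, the sub-exponential running time of $M$ remains below the threshold $2^{n^{\delta}}$, which holds for every fixed $c$ and $\delta$ once $n$ is large enough.
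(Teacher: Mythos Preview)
Your argument is correct and is exactly the natural one: hardwire the seed $w$, use the explicitness of the design to extract $w|_{J_x}$ with a $\mathsf{poly}(n)$-size deterministic sub-circuit, and then plug in a co-nondeterministic circuit of size $2^{(n^{c/2})^{o(1)}} = 2^{n^{o(1)}} \le 2^{n^\delta}$ simulating $\overline{L(M)}$ on inputs of length $n^{c/2}$. The paper does not spell out a proof of this fact at all---it is stated without justification---so your write-up supplies precisely the details the paper leaves implicit, and there is nothing to compare against.
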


\paragraph{The case $\ell = 1$ via the NW generator.} Before handling the case $\ell = 2$, we sketch the proof of the case $\ell = 1$ using the collection $\{D_w\}_{w \in \{0,1\}^{n^c}}$ obtained from the nondeterministic machine $M$ and the NW generator, with parameters as above.

Consider the function $f_1(D_w) = (x,y,z)$ obtained by applying \Cref{thm:KPT}, and assume that $\ell = 1$. Again, we will not inspect $y$ and $z$ when $\ell = 1$.  Recall that $f_1(D_w)$ computes in time $2^{O(n)}$. We show how to  decide $L(M)$ on inputs of length $n^{c/2}$ by a deterministic circuit of size $2^{O(n)}$ that agrees with $L(M)$ with probability $\geq 1/2 + 2^{-O(n)}$ over a uniformly random input string.

Similarly to \Cref{fact:xstar1}, by a standard averaging argument we can establish the following fact.

\begin{fact}\label{fact:NWxstar1}
There is a string $x^* \in \{0,1\}^n$ such that
$$
\Pr_{w \in \{0,1\}^{n^c}}[f_1(D_w) = x^*] \geq 2^{-n}.
$$
\end{fact}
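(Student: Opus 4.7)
The plan is to apply a standard averaging (pigeonhole) argument, exactly mirroring the proof of \Cref{fact:xstar1}. First I would recall the notational convention introduced for the $\ell = 1$ analysis: although the witnessing term $f_1$ from the KPT disjunction formally returns a triple $(x,y,z)$ (a candidate mistake together with the nondeterministic witnesses needed to certify it), in the $\ell = 1$ case the certificates $y,z$ play no role, so the exposition writes $f_1(D) = x$ to refer only to the $n$-bit first component. In particular, applied to the co-nondeterministic NW-circuits $\{D_w\}_{w \in \{0,1\}^{n^c}}$, we have $f_1(D_w) \in \{0,1\}^n$.

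Next I would consider the distribution on $\{0,1\}^n$ induced by a uniformly random seed $w \sim \{0,1\}^{n^c}$ under the map $w \mapsto f_1(D_w)$. Since the codomain has exactly $2^n$ elements and the induced probability measure has total mass $1$, there must exist some $x^* \in \{0,1\}^n$ with
$$
\Pr_{w \in \{0,1\}^{n^c}}\bigl[f_1(D_w) = x^*\bigr] \;\geq\; \frac{1}{2^n}.
$$
Indeed, if every $x \in \{0,1\}^n$ had probability strictly less than $2^{-n}$, then summing over all $2^n$ values would give total mass strictly less than $1$, a contradiction. Fixing a maximiser $x^*$ yields the claim.

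The argument is essentially identical to that of \Cref{fact:xstar1}: the only change is the seed space, where we now range over $w \in \{0,1\}^{n^c}$ (the NW seed length) rather than over $w' \in \{0,1\}^{n^k - n}$. I do not anticipate any obstacle, since this is a one-line pigeonhole step whose role is merely to pin down the popular candidate mistake $x^*$, which will then be hard-wired as non-uniform advice in the approximating deterministic circuit constructed in the next stage of the proof (together with a bit $b^*$ indicating membership of $x^*$ in $L(M)$ on the appropriate input length).
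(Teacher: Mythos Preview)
Your proposal is correct and matches the paper's approach exactly: the paper simply states that this fact follows ``similarly to \Cref{fact:xstar1}, by a standard averaging argument,'' which is precisely the pigeonhole over the $2^n$ possible output values that you spell out. There is nothing more to it.
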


Recall that $J_{x^*}$ denotes the subset of $[n^c]$ of size $n^{c/2}$ corresponding to the $x^*$-row of the design in our NW generator; for $a\in\{0,1\}^{n^c-n^{c/2}}$ and $u\in\{0,1\}^{n^c}$, $r_{x}(a,u)$ denotes the ``concatenated'' string $a\cup u$ obtained by viewing $a\in\{0,1\}^{[n^c]\setminus J_{x}}$ and $u\in\{0,1\}^{J_x}$. By another averaging argument, we get the following consequence.

\begin{fact}\label{fact:fixinga}
There is a string $a \in \{0,1\}^{[n^c] \setminus J_{x^*}}$ of length $n^{c} - n^{c/2}$ such that
$$
\Pr_{\stackrel{u \sim \{0,1\}^{J_{x^*}}}{w \triangleq u \cup a}}[f_1(D_w) = x^*] \geq 2^{-n}.
$$
\end{fact}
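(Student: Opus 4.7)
The plan is to derive Fact \ref{fact:fixinga} from Fact \ref{fact:NWxstar1} by a straightforward averaging argument. The key observation is that since $J_{x^*} \subseteq [n^c]$, sampling $w$ uniformly from $\{0,1\}^{n^c}$ is equivalent to independently sampling $a \in \{0,1\}^{[n^c] \setminus J_{x^*}}$ and $u \in \{0,1\}^{J_{x^*}}$ uniformly and then setting $w = u \cup a$, viewing $w$ as a function from $[n^c]$ to $\{0,1\}$ as in the notation from \Cref{sec:NW}.

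First I would rewrite the bound from Fact \ref{fact:NWxstar1} using this product structure:
$$
2^{-n} \;\leq\; \Pr_{w \sim \{0,1\}^{n^c}}[f_1(D_w) = x^*] \;=\; \Ex_{a \sim \{0,1\}^{[n^c]\setminus J_{x^*}}}\Big[\Pr_{u \sim \{0,1\}^{J_{x^*}}}[f_1(D_{u \cup a}) = x^*]\Big].
$$
Then I would invoke the standard averaging principle: if the expectation of a non-negative random variable is at least $2^{-n}$, some outcome must realize a value that is at least $2^{-n}$. Applied here, this yields the existence of a particular $a^* \in \{0,1\}^{[n^c]\setminus J_{x^*}}$ of length $n^c - n^{c/2}$ such that
$$
\Pr_{u \sim \{0,1\}^{J_{x^*}}}[f_1(D_{u \cup a^*}) = x^*] \;\geq\; 2^{-n},
$$
which is exactly the claimed inequality.

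No genuine obstacle arises here; the entire content of the fact is the averaging argument just described, and it relies only on the product structure of the uniform distribution over $\{0,1\}^{n^c}$ together with the bound established in the preceding fact. The purpose of isolating this step is that the fixed string $a^*$ will later be hard-coded as non-uniform advice in the deterministic circuit, while the remaining randomness over $u$ will correspond to the portion of the input being ``read'' by the corresponding row of the NW design indexed by $x^*$, which is what makes the subsequent reduction to approximating $L(M)$ go through.
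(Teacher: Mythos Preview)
Your proposal is correct and matches the paper's approach exactly: the paper simply states that the fact follows ``by another averaging argument'' from Fact~\ref{fact:NWxstar1}, and your write-up spells out precisely that argument using the product decomposition of the uniform distribution over $\{0,1\}^{n^c}$.
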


We can view $D_w = \NW_{\overline{L(M)}}(w)$ as a co-nondeterministic circuit for computing $\overline{L(M)}$ over inputs of length $n^{c/2}$ derived from the seed $w$: 
$$
D_w(x) = 1 \quad \Longleftrightarrow \quad w|_{J_x} \in \overline{L(M)}.
$$
Given the previous discussion, we are interested in seeds $w \in \{0,1\}^{n^{2c}}$ of the form $w = a \cup u$, where $a\in\{0,1\}^{[n^c]\setminus J_{x^*}}$ is fixed, $u \in \{0,1\}^{J_{x^*}}$, and $f_1(D_w) = x^*$. We know that a non-trivial fraction of strings $u$ will satisfy this condition. Since $f_1$ witnesses mistakes with respect to $L(M)$ over inputs of length $n$ (note that $D_w$ is a conondeterministic circuit over $n$-bit inputs), whenever $f_1(D_w) = x^*$ we are guaranteed that
$$
D_w(x^*) = 1 \quad \Longleftrightarrow \quad M(x^*) = 0,
$$
which implies that $M(x^*) = 0$ if and only if $w|_{J_{x^*}} \notin L(M)$. Now $x^*$ is fixed, so the equality $M(x^*) = 0$ does not depend on other conditions. For instance, if $M(x^*) = 0$, we can conclude that on any input string $u \in \{0,1\}^{n^{c/2}}$, if for $w = a \cup u$ we have $f_1(D_w) = x^*$, then $u = w|_{J_{x^*}}$ is not in $L(M)$. Consequently, this allows us to correctly compute $L(M)$ on any such input $u \sim \{0,1\}^{n^{c/2}}$, which constitute a non-trivial fraction of inputs. Moreover, we can check whether an input $u$ satisfies $f_1(D_w) = x^*$ using a deterministic circuit of size $2^{O(n)}$.

Formally, consider the fixed strings $x^* \in \{0,1\}^n$ and  $a \in \{0,1\}^{[n^c] \setminus J_{x^*}}$ from above, and let $b^* \triangleq M(x^*) \in \{0,1\}$. We hardcode $x^*$, $a$, and $b^*$ in the randomised circuit $B(u)$ described below:

\begin{algorithm}[htb]
    \SetKwInOut{Input}{Input}
    \SetKwInOut{Advice}{Advice}
    \caption{Randomised Circuit $B$ for $L(M)$}\label{algo:B-for-l=2}
    \Input{The input $u \in \{0,1\}^{n^{c/2}}$ and a random $r \in \{0,1\}$}
    \Advice{$x^*\in\{0,1\}^n$ and $b^*\in\{0,1\}$}
    Let $w = r_{x^*}(a,u)$\;
    Let $x = f_1(D_w)$\;
    If $x \neq x^*$, output the random bit $r$\;
    Otherwise, \Return{$b^*$}\;
\end{algorithm}

\noindent Given the aforementioned discussion, it is easy to see that
$$
\Pr_{u, r}[B(u,r) = M(u)] \geq p \cdot 1 + (1-p) \cdot 1/2 = 1/2 + p/2 \geq 1/2 + 2^{-n + 1},
$$
where $p$ is the probability in the LHS of \Cref{fact:fixinga}. Consequently, by an averaging argument over the random bit $r$, there is a deterministic circuit of size $2^{O(n)}$ that computes $L(M)$ on inputs of length $n^{c/2}$ with the same advantage.

\paragraph{The case $\ell = 2$ via the NW generator.} Recall that 
$$
 \mathsf{Error}(x,y,z) \;\equiv\; \Big [ M(x,y)=1 \,\wedge\, D(x,z) = 0  \Big ] \;\vee\; \Big [\forall y'\;  M(x,y') = 0 \,\wedge\, \forall z' \; D(x,z') = 1 \Big ].
 $$
 We now have a function $f_1(D) = (x, y, z)$ that attempts to produce a triple $(x,y,z)$ satisfying $\mathsf{Error}(x,y,z)$, and a function $f_2(D, y',z')$ which given a pair $y', z'$ for which 
 \begin{equation}\label{eq:witness}
     \Big [ M(x,y)=0 \,\vee\, D(x,z) = 1  \Big ] \;\wedge\; \Big [ M(x,y') = 1 \,\vee\,  D(x,z') = 0 \Big ]
 \end{equation}
 is  able to produce an input $x'$ such that $D(x') \neq M(x')$. 
 
Again, we consider the family $\{D_w\}_{w \in \{0,1\}^{n^{c}}}$ of conondeterministic circuits $D_w$ of size $\leq 2^{n^\delta}$ that compute $\NW_{\overline{L(M)}}(w) \colon \{0,1\}^n \to \{0,1\}$ for a fixed seed $w$, with parameters as described above. In particular, this generator is instantiated with respect to the Boolean function $h$ corresponding to $\overline{L(M)}$ over inputs of length $n^{c/2}$, for a fixed but arbitrarily large constant $c \geq 1$. 

By an averaging argument, the following claim holds.

\begin{fact}\label{fact:NWx1}
There is a string $x_1 \in \{0,1\}^n$ such that
$$
\Pr_{w \in \{0,1\}^{n^c}}[f_1(D_w) = x_1] \geq 2^{-n}.
$$
\end{fact}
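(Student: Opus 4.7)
The plan is to apply the same pigeonhole (averaging) argument used in \Cref{fact:xstar1} and \Cref{fact:NWxstar1}. First I would observe that for each fixed seed $w \in \{0,1\}^{n^c}$, the value $f_1(D_w)$ is a deterministic triple $(x_w, y_w, z_w)$, and by the specification of $f_1$ its first component $x_w$ lies in $\{0,1\}^n$. The map $w \mapsto x_w$ therefore partitions the $2^{n^c}$ seeds into at most $2^n$ classes, so by averaging the largest class has relative measure at least $2^{-n}$, and we take $x_1$ to be the corresponding string in $\{0,1\}^n$.

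More formally, writing $f_1(D_w) = x$ as shorthand for the condition that the first component of $f_1(D_w)$ equals $x$, one has
$$\sum_{x \in \{0,1\}^n} \Pr_{w \in \{0,1\}^{n^c}}\bigl[f_1(D_w) = x\bigr] \;=\; 1,$$
and therefore some $x_1 \in \{0,1\}^n$ must satisfy $\Pr_w[f_1(D_w) = x_1] \ge 2^{-n}$, as required.

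The only preliminary check is that each $D_w$ actually lies in the input domain of $f_1$, i.e., that $D_w$ is a co-nondeterministic circuit of size at most $s(n) = 2^{n^\delta}$; this is guaranteed by the parameter choice for the Nisan--Wigderson generator together with the assumption that $M$ runs in time $2^{m^{o(1)}}$, as recorded in the fact stated just above Fact \ref{fact:NWx1}. The main obstacle is not this fact itself, which is essentially immediate, but what must follow: fixing a good seed restriction $a$ to the coordinates outside $J_{x_1}$ (in the style of \Cref{fact:fixinga}) and then leveraging both $f_1$ and $f_2$, together with appropriately hard-coded witnesses $(y',z')$, to certify mistakes using only $O(n)$ bits of non-uniform advice — this is where the $\ell = 2$ case departs substantively from the $\ell = 1$ analysis sketched above.
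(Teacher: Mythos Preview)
Your proposal is correct and matches the paper's approach exactly: the paper simply says ``By an averaging argument, the following claim holds,'' and you have spelled out that averaging argument. The additional domain check and the forward-looking remarks about $\ell=2$ are accurate context but not part of the fact itself.
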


Fix this $x_1$. We define the sets $S_{x_1}^\mathsf{mist}\subseteq S_{x_1}\subseteq \{0,1\}^{n^c}$ as follows:
\begin{eqnarray}
    S_{x_1} & \triangleq & \Big \{w \in \{0,1\}^{n^c}  \mid f_1(D_w) = x_1 \Big \}, \nonumber \\
    S_{x_1}^\mathsf{mist} & \triangleq & \Big \{ w \in S_{x_1}  \mid D_w(x_1) \neq M(x_1) \Big \}, \nonumber 
\end{eqnarray}
and consider the density of $S_{x_1}^\mathsf{mist}$ with respect to its superset $S_{x_1}$.\\

\noindent \textbf{Case 1.} $|S_{x_1}^\mathsf{mist}| > (2/3) \cdot |S_{x_1}|$. We can essentially proceed as in the case of $\ell= 1$, with the exception that one needs to be careful when invoking an analogue of \Cref{fact:fixinga}. This is because fixing a  string $a \in \{0,1\}^{[n^c] \setminus J_{x_1}}$ might keep the density of $S_{x_1}$ at least $2^{-n}$ but could significantly decrease the relative density of the set $S_{x_1}^\mathsf{mist}$ after the restriction.

To handle this, we introduce the following notation. For $m \geq 1$, a set $S \subseteq \{0,1\}^{[m]}$, and a string $a \in \{0,1\}^I$, where $I \subseteq [m]$, we define the \emph{restriction of} $S$ \emph{with respect to} $a$ as the set 
$$
S\uhr_a \eqdef \{w \in S \mid w|_I = a\}.
$$
Under the assumption that $|S_{x_1}^\mathsf{mist}| > (2/3) \cdot |S_{x_1}|$, it is possible to show by a counting argument (see, e.g.,~\Cref{lmm:counting}) that there exists a string $a \in \{0,1\}^{[n^c] \setminus J_{x_1}}$ such that
\begin{equation}\label{eq:goodstringa}
p \eqdef \frac{|S_{x_1}\uhr_a\!|}{2^{n^{c/2}}} \geq \frac{1}{n} \cdot 2^{-n} \quad \text{and} \quad \frac{|S_{x_1}^{\mathsf{mist}}\uhr_a\!|}{|S_{x_1}\uhr_a\!|} \geq \frac{2}{3} - \frac{1}{n}.
\end{equation}
While it is not clear how to decide in size $2^{O(n)}$ if a string $w \in S_{x_1}^{\mathsf{mist}}\uhr_a$, we can check whether $w \in S_{x_1}\uhr_a$. Since $S_{x_1}^{\mathsf{mist}}$ is dense in $S_{x_1}\uhr_a$, this is enough to adapt the original strategy used for $\ell = 1$. 

Formally, fix strings $x_1 \in \{0,1\}^n$ and $a \in \{0,1\}^{[n^c] \setminus J_{x_1}}$ as above, and let $b_1 \eqdef M(x_1) \in \{0,1\}$. We hardcode $x_1$, $a$, and $b_1$ in the randomised circuit $B_1(u,r)$ described below.

\begin{algorithm}[htb]
    \SetKwInOut{Input}{Input}
    \SetKwInOut{Advice}{Advice}
    \caption{Randomized Circuit $B_1$ for $L(M)$ when $|S^\mathsf{mist}_{x_1}|>(2/3)\cdot|S_{x_1}|$.}\label{algo:B-for-l=1-NL}
    \Input{The input $u \in \{0,1\}^{n^{c/2}}$ and a random $r \in \{0,1\}$}
    \Advice{$x_1\in\{0,1\}^n$, $a\in\{0,1\}^{n^c-n^{c/2}}$, and $b_1\in\{0,1\}$}
    Let $w = r_{x_1}(a,u)$\;
    Let $x = f_1(D_w)$\;
    If $x \neq x_1$, output the random bit $r$\;
    Otherwise, output the fixed bit $b_1 = M(x_1)$\;
\end{algorithm}

Clearly, $B_1$ is computed by a randomised circuit of size $2^{O(n)}$. To analyse its success probability, first note that if $u$ is such that $w = r_{x_1}(a,u) \notin S_{x_1}\uhr_a$, then $B_1(u) = M(u)$ with probability $1/2$. On the other hand, for those $u$ such that $w = r_{x_1}(a,u) \in S_{x_1}\uhr_a$, at least a $2/3 - 1/n$ fraction of them are in $S_{x_1}^{\mathsf{mist}}\uhr_a$, in which case $B_1(u)$ is correct. Since $S_{x_1}\uhr_a$ has density at least $1/n \cdot 2^{-n}$, it follows that 
$$
\Pr_{u, r}[B_1(u,r) = M(u)] = (1-p) \cdot \frac{1}{2} + p \cdot \! \left ( \frac{2}{3} - \frac{1}{n}\right ) = \frac{1}{2} + p \cdot \! \left  ( \frac{1}{6} - \frac{1}{n} \right ) = \frac{1}{2} + \Omega \! \left ( \frac{2^{-n}}{n} \right ),
$$
which is $1/2 + 2^{-O(n)}$. Fixing the random bit $r$ in the best way yields the desired deterministic circuit.\\

\noindent \textbf{Case 2.} $|S_{x_1}^\mathsf{mist}| < (2/3) \cdot |S_{x_1}|$. In this case, the mistakes of at least a $1/3$ fraction of the circuits $D_w$ for $w \in S_{x_1}$ must be witnessed by $f_2$. To make sure the output of $f_2(D_w, y',z')$ is indeed a string $x_2$ for which $M(x_2) \neq D_w(x_2)$, we must provide a pair $y', z'$ such that 
 \begin{equation}\label{eq:witness2}
     \Big [ M(x_1,y_1)=0 \,\vee\, D_w(x_1,z_1) = 1  \Big ] \;\wedge\; \Big [ M(x_1,y') = 1 \,\vee\,  D_w(x_1,z') = 0 \Big ],
 \end{equation}
 where $f_1(D_w) = (x_1, y_1, z_1)$. We consider the Teacher that to each $w \in S_{x_1} \setminus S_{x_1}^{\mathsf{mist}}$ and corresponding $(y_1,z_1)$ assign the lexicographic first pair $(y'_w,z'_w)$ for which \Cref{eq:witness2} holds. Note that such a pair always exists, since in this case for $x_1 = f_1(D_w)$ we have $M(x_1) = D_w(x_1)$. 
 
 By an averaging argument, the following claim holds.
 
 \begin{fact}\label{fact:densitySx1x2}
 Under this fixed Teacher, there is a string $x_2 \in \{0,1\}^n$ such that the set
 $$
 S_{x_1, x_2} \eqdef \{ w \in S_{x_1} \mid D_w(x_1) = M(x_1) \wedge f_2(D_w, y'_w, z'_w) = x_2\}
 $$
 has density at least $(1/3) \cdot 2^{-2n}$ in $\{0,1\}^{n^c}$.
 \end{fact}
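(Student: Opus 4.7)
The proof is a straightforward two-stage pigeonhole/averaging argument that exploits the Case 2 hypothesis to ensure $f_2$ is invoked on valid inputs for a large subset of seeds. My plan is as follows.

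First, I would isolate the ``complementary'' subset of $S_{x_1}$ on which $f_1$ fails to witness a mistake:
\[
T \eqdef S_{x_1} \setminus S_{x_1}^{\mathsf{mist}} \;=\; \{ w \in S_{x_1} \mid D_w(x_1) = M(x_1)\}.
\]
By the Case 2 assumption $|S_{x_1}^{\mathsf{mist}}| < (2/3)|S_{x_1}|$, we have $|T| \geq (1/3)|S_{x_1}|$. Combining this with the density lower bound $|S_{x_1}| \geq 2^{n^c - n}$ from Fact \ref{fact:NWx1}, I get $|T| \geq (1/3)\cdot 2^{n^c - n}$.

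Next, I would use that the fixed Teacher was specifically designed so that, for every $w \in T$, the pair $(y'_w, z'_w)$ it assigns satisfies the precondition \eqref{eq:witness2} for $f_2$. Indeed, on $T$ we have $D_w(x_1) = M(x_1)$, so $(y_1, z_1)$ (extracted from $f_1(D_w)$) fails to witness an error at $x_1$, which is exactly the hypothesis under which the KPT clause for $f_2$ guarantees that a valid continuation $(y',z')$ exists; the Teacher simply picks the lexicographically first one. Consequently, by the KPT witnessing property, $x_w \eqdef f_2(D_w, y'_w, z'_w)$ is an actual mistake of $D_w$ relative to $M$ for every $w \in T$.

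Finally, I apply pigeonhole to the map $w \mapsto x_w$ from $T$ to $\{0,1\}^n$: some $x_2 \in \{0,1\}^n$ receives at least $|T|/2^n \geq (1/3)\cdot 2^{n^c - 2n}$ preimages. Each such $w$ lies in $S_{x_1}$, satisfies $D_w(x_1) = M(x_1)$, and has $f_2(D_w, y'_w, z'_w) = x_2$, so it lies in $S_{x_1, x_2}$. Dividing by the ambient size $2^{n^c}$ gives the claimed density $(1/3)\cdot 2^{-2n}$. The argument is essentially routine; the only conceptual point worth emphasizing is why the Teacher's definition is legitimate (i.e., why a valid $(y'_w, z'_w)$ always exists for $w \in T$), which is exactly the content of the Case 2 branching and the semantics of the KPT disjunction.
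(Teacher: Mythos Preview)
Your proposal is correct and matches the paper's approach: the paper simply says ``by an averaging argument,'' and what you wrote is precisely that argument spelled out (lower-bound $|S_{x_1}\setminus S_{x_1}^{\mathsf{mist}}|$ via the Case~2 hypothesis and Fact~\ref{fact:NWx1}, then pigeonhole over the $2^n$ possible values of $f_2$). One small wording note: the existence of a valid $(y'_w,z'_w)$ for $w\in T$ is not a consequence of the KPT clause itself but of the semantics of $\mathsf{Error}$ together with $D_w(x_1)=M(x_1)$ (as the paper remarks just before the fact); your conclusion is right, only the attribution is slightly off.
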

 
 Note that, by construction, if $w \in S_{x_1, x_2}$ then for $x_2 = f_2(D_w, y'_w, z'_w)$ we have $D_w(x_2) \neq M(x_2)$. Note that $x_1\ne x_2$ because otherwise we have $S_{x_1,x_2}=\varnothing$.\footnote{Assume it is not the case, there is a $w\in S_{x_1,x_2}$ such that $D_w(x_2)\ne M(x_2)$. However, we know that $D_w(x_1)=M(x_1)$ by the definition of $S_{x_1,x_2}$, which is impossible when $x_1=x_2$.} Moreover, the set $S_{x_1, x_2}$ has enough density for our purposes. However, for this to be useful we must verify that a given circuit $D_w$ satisfies $w \in S_{x_1, x_2}$ using a deterministic circuit of size $2^{O(n)}$.

 By another averaging argument, we have the following result.
 
 \begin{fact}\label{fact:fixa-case2-ell2}
 There is a string $a \in \{0,1\}^{[n^c] \setminus J_{x_2}}$ such that
 $$
 \frac{|S_{x_1, x_2} \uhr_a \!|}{2^{n^{c/2}}} \;\geq\; \frac{1}{3} \cdot 2^{-2n}.
 $$
 \end{fact}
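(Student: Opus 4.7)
The plan is to prove this via a direct averaging (pigeonhole) argument, which is the simpler cousin of the more delicate argument used to derive Equation (\ref{eq:goodstringa}) in Case 1. Unlike that earlier step, here we only need to preserve \emph{one} density condition --- the absolute density of $S_{x_1,x_2}$ inside the fiber over $a$ --- so there is no tension between competing constraints, and a vanilla averaging argument suffices.

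Concretely, I would proceed as follows. First, by Fact \ref{fact:densitySx1x2} we have
$$
|S_{x_1,x_2}| \;\ge\; \tfrac{1}{3}\cdot 2^{-2n}\cdot 2^{n^c}.
$$
Next, partition $\{0,1\}^{n^c}$ according to the restriction of $w$ to coordinates outside $J_{x_2}$. That is, for each $a \in \{0,1\}^{[n^c]\setminus J_{x_2}}$ let $\mathsf{Fib}(a) \eqdef \{w \in \{0,1\}^{n^c} : w|_{[n^c]\setminus J_{x_2}} = a\}$; these fibers partition $\{0,1\}^{n^c}$, each has size $2^{|J_{x_2}|} = 2^{n^{c/2}}$, and by definition $S_{x_1,x_2}\uhr_a = S_{x_1,x_2} \cap \mathsf{Fib}(a)$. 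Consequently
$$
|S_{x_1,x_2}| \;=\; \sum_{a \in \{0,1\}^{[n^c]\setminus J_{x_2}}} |S_{x_1,x_2}\uhr_a\!|.
$$

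Finally, since there are $2^{n^c - n^{c/2}}$ choices of $a$, the pigeonhole principle yields some $a^\star$ with
$$
|S_{x_1,x_2}\uhr_{a^\star}\!| \;\ge\; \frac{|S_{x_1,x_2}|}{2^{n^c - n^{c/2}}} \;\ge\; \frac{1}{3}\cdot 2^{-2n}\cdot 2^{n^{c/2}}.
$$
Dividing through by $2^{n^{c/2}}$ delivers the claimed inequality. There is no real obstacle: the only subtlety to flag is that the restriction is taken with respect to $J_{x_2}$ (not $J_{x_1}$), which is the correct choice because the subsequent use of this fact in the circuit $B_2$ will feed the input $u$ into the coordinates $J_{x_2}$ in order to recognize membership in $S_{x_1,x_2}\uhr_{a^\star}$ via the call to $f_2$ that produces $x_2$.
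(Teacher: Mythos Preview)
Your argument is correct and is exactly the averaging argument the paper has in mind; the paper itself simply asserts the fact ``by another averaging argument'' without writing out the details, and what you have written is precisely that argument.
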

 
Fix this string $a\in\{0,1\}^{[n^c]\setminus J_{x_2}}$ together with the strings $x_1$ and $x_2$. We will assume that the following computation is possible in order to complete the proof, returning to it later on:
 
\begin{itemize}
    \item[($\nabla$)] There is a deterministic circuit $E(w)$ of size $2^{O(n)}$ as follows: Given a $w \in S_{x_1}$ of the form $a \cup u$ such that $D_w(x_1) = M(x_1)$, it outputs the lexicographic first pair $(y'_w, z'_w)$ for which \Cref{eq:witness2} holds, where $(x_1, y_1, z_1) = f_1(D_w)$.\footnote{Note that in this case $f_2(D_w, y'_w, z'_w)$ outputs a mistake of $D_w$, since $\mathsf{Error}(x_1, y_1, z_1)$ does not hold and correct witnesses for this are provided.}
\end{itemize}

Consider strings $x_1, x_2 \in \{0,1\}^n$ and $a \in \{0,1\}^{[n^c] \setminus J_{x_2}}$ as above, and let $b_2 \eqdef M(x_2) \in \{0,1\}$. We hardcode this information in the randomised circuit $B_2(u)$ described below, which includes the circuit $E(w)$ from ($\nabla$) as a subroutine:\\

\begin{algorithm}[htb]
    \SetKwInOut{Input}{Input}
    \SetKwInOut{Advice}{Advice}
    \caption{Randomized Circuit $B_2$ for $L(M)$ when $|S_{x_1}^\mathsf{mist}|\le (2/3)\cdot|S_{x_1}|$.}\label{algo:B-for-l=2-NL}
    \Input{The input $u \in \{0,1\}^{n^{c/2}}$ and a random $r \in \{0,1\}$}
    \Advice{$x_1,x_2\in\{0,1\}^n$, $a\in\{0,1\}^{n^c-n^{c/2}}$, and $b_2\in\{0,1\}$}
    Let $w = r_{x_2}(a,u)$\;
    Let $(x, y_1, z_1) = f_1(D_w)$\;
    If $x \neq x_1$, output the random bit $r$\;
    Let $(y'_w, z'_w) =  E(w)$\;
    If the tuple $(x_1, y_1, z_1, y'_w, z'_w)$ satisfies \Cref{eq:witness2} and $f_2(D_w, y'_w, z'_w)= x_2$, output $b_2$\;
    Otherwise output the random bit $r$.
\end{algorithm}

Note that, under assumption ($\nabla$), $B_2$ can be computed by a randomised circuit of size $2^{O(n)}$. Moreover, it follows from our discussion and from the density of $S_{x_1, x_2} \uhr_a$ that 
$$
\Pr_{u, r}[B_2(u,r) = M(u)] \;\geq\; \frac{1}{2} + \Omega\!\left ( 2^{-2n} \right ). 
$$
\noindent This yields a deterministic circuit with the same advantage.\footnote{Note that we cannot really guarantee that $D_w(x_1) = M(x_1)$ when invoking ($\nabla$), since this cannot be easily decided in deterministic size $2^{O(n)}$. This means that more inputs $u$ than those leading to strings $w \in S_{x_1, x_2} \uhr_a $ might reach Line 5 and be assigned output value $b_2$. Nevertheless, $B_2$ will be correct on any such input $u$, by virtue of the two checks performed in Line 5. Put another way, the argument ``covers'' the inputs $u$ leading to strings $w \in S_{x_1, x_2} \uhr_a$.}

\begin{proof}[Proof of $(\nabla)$] We will now use the main property of the combinatorial design behind the Nisan-Wigderson generator: the sets $J_{x_1}$ and $J_{x_2}$ overlap in at most $n$ coordinates. This will allow us to hardcode all relevant pairs $(y'_w, z'_w)$ using circuit size $2^{O(n)}$.

To implement ($\nabla$), we are given a string $w = a \cup u$, where $a \in \{0,1\}^{[n^c] \setminus J_{x_2}}$ is fixed and $u \in \{0,1\}^{J_{x_2}}$, such that the following conditions hold:
\begin{itemize}
    \item Let $(x, y_1, z_1) = f_1(D_w)$, then $x = x_1$.
    \item $D_w(x_1) = M(x_1)$.
\end{itemize}
Our goal is to output the lexicographic first pair $(y'_w, z'_w)$ such that:
\begin{equation*}
    \Big [ M(x_1,y_1)=0 \,\vee\, D_w(x_1,z_1) = 1  \Big ] \;\wedge\; \Big [ M(x_1,y') = 1 \,\vee\,  D_w(x_1,z') = 0 \Big ]. 
\end{equation*}
Note that such pair must exist since we assume that $D_w(x_1)=M(x_1)$. 

Recall that $D_w(x_1)=\NW_{\overline{L(M)}}(w,x_1)$. The crucial observation that leads to the use of NW generator is that the desired pair $(y'_w, z'_w)$ only depends on $w|_{J_{x_1}}$, which contains at most $n$ bits of the input $u\in\{0,1\}^{n^{c/2}}$. This is because $w=a\cup u$ is a concatenation of a fixed $a\in\{0,1\}^{[n^c]\setminus J_{x_2}}$ and $u$ viewed as $u\in\{0,1\}^{J_{x_2}}$, which means that  
\[
w|_{J_{x_1}} = (a\cup u)|_{J_{x_1}} = a|_{J_{x_1}}\cup u|_{J_{x_1}},    
\]
where $a|_{J_{x_1}}$ is fixed and $u|_{J_{x_1}}$ only consists of the indices within $J_{x_1}\cap J_{x_2}$ of size at most $n$. 

As $E(w)$ depends on at most $n$ bits of the input $u\in\{0,1\}^{n^c}$, we can implement it as a circuit that store all the answers for all $2^n$ possibilities, which requires at most $\poly(2^n)=2^{O(n)}$ gates. Concretely, the circuit works as follows: Given $w\in\{0,1\}^{n^c}$, we firstly obtain $u\in\{0,1\}^{J_{x_2}}$ such that $w=a\cup u$; let $u'=u|_{J_{x_1}}$ be of length at most $n$, we look up the table to find the answer corresponding to $u'$. 
\end{proof}

\ignore{First, arguing by contradiction, it easily follows from the second bullet that the LHS of \Cref{eq:witness2} holds, since we define the value $M(x_1)$ according to a nondeterministic computation and the value $D_w(x_1)$ according to a co-nondeterministic computation. On the other hand, in the RHS of \Cref{eq:witness2}, if $M(x_1) = 1$, we can fix the lexicographic first $y'$ such that $M(x_1, y') = 1$ and set $z'$ to be a sequence of zeroes. This can be hardcoded in the circuit $E(w)$ and works for all relevant inputs $w = a \cup u$ in this case. Using $D_w(x_1) = M(x_1) = 1$, it is easy to see that this is the lexicographic first pair that satisfies \Cref{eq:witness2}. In the case where $M(x_1) = 0$, we set $y'$ to be a sequence of zeroes (the left term in the RHS cannot be satisfied), and let $z'_w$ be the lexicographic first string such that $D_w(x_1, z'_w) = 0$. This string exists because $D_w(x_1) = M(x_1) = 0$. The crux of the argument is that in the computation of $D_w(x_1, z'_w)$, the first step is to disregard $z'_w$ and simply compute $w|_{J_{x_1}}$. Since $w = a \cup u$, $a \in \{0,1\}^{[n^c] \setminus J_{x_1}}$ is fixed, $u \in \{0,1\}^{J_{x_2}}$, and $|J_{x_1} \cap J_{x_2}| \leq n$, as $u$ varies there are at most $2^n$ possible outcomes for $w|_{J_{x_1}}$. For this reason, there is no need to store more than $2^n$ different strings $z'_w$ in $E(w)$. Overall, we get that the non-uniform circuit $E(w)$ from ($\nabla$) can be implemented with $2^{O(n)}$ many gates.}

\begin{remark}\label{remark:general_approx2} As in \Cref{remark:general_approx}, we note that the argument can be easily adapted to approximate any Boolean function $g$ defined over $n^k$ bits computable by a nondeterministic circuit of size $2^{n^{k\delta}}$ using a deterministic circuit of size $2^{O(n)}$, instead of for just $L(M) \cap \{0,1\}^{n^k}$. The provability of a circuit lower bound for a single language $L(M)$ provides non-trivial circuits for any such $g$.

Based on this, we can also prove that under the same assumption \emph{(}i.e., the provability of worst-case circuit lower bound in $\Tpv^i$\emph{)}, for every constant $\epsilon\in(0,1)$, $s=s(m)=2^{m^{o(1)}}$, and sufficiently large $m$, any Boolean function $g:\{0,1\}^m\to\{0,1\}$ that can be computable by a nondeterminisetic circuit of size $s$ can also be approximated by a co-nondeterministic circuit $D$ of size $2^{m^\epsilon}$, that is: 
\[
\Pr_{x\sim\{0,1\}^m}\Big[C(x)=D(x)\Big] \ge \frac{1}{2}+\frac{1}{2^{m^{\epsilon}}}.
\]
This can be done by setting $k=\lceil 20/\epsilon\rceil$, padding dammy inputs to $g:\{0,1\}^m\to\{0,1\}$ to obtain $g':\{0,1\}^{m'}\to\{0,1\}$, where $m'=\lceil m^{1/k}\rceil^k\le 2m$ for sufficiently large $m$, and applying the observation above to $g'$ with $n=\lceil m^{1/k}\rceil$. 
\end{remark} 

\subsubsection{Sketch of the general case}

We now sketch how the argument presented in \Cref{sec:ell2case} can be generalised to the case that the Student-Teacher protocol runs for $\ell\ge 3$ rounds. Recall that sets $S_{x_1},S_{x_1}^\mathsf{mist},S_{x_1,x_2}$ in \Cref{sec:ell2case} are defined as 
\begin{eqnarray}
    S_{x_1} & \eqdef & \Big \{w \in \{0,1\}^{n^c}  \mid f_1(D_w) = x_1 \Big \} \nonumber \\
    S_{x_1}^\mathsf{mist} & \eqdef & \Big \{ w \in S_{x_1}  \mid D_w(x_1) \neq M(x_1) \Big \} \nonumber  \\
 S_{x_1, x_2} & \eqdef & \{ w \in S_{x_1}\setminus S_{x_1}^\mathsf{mist} \mid f_2(D_w, y'_w, z'_w) = x_2\} \nonumber
\end{eqnarray}
In the general case, we will define a sequence of $x_1,x_2,\dots,x_{\ell}\in\{0,1\}^n$ as well as the sets  
\[ 
    S_1,S_1^\mathsf{mist}\subseteq S_1,S_2\subseteq S_1\setminus S_1^\mathsf{mist},S_2^\mathsf{mist}\subseteq S_2,\dots, S_\ell\subseteq S_{\ell-1}\setminus S_{\ell-1}^\mathsf{mist}, S_\ell^\mathsf{mist}\subseteq S_\ell. 
\]
For instance, if $\ell = 3$, we proceed as follows. %
\begin{enumerate}
\item We initially argue as in \Cref{sec:ell2case} with $\ell=2$. In Case 1 (i.e., $|S_{x_1}^{\mathsf{mist}}| > (2/3) \cdot |S_{x_1}|$), we can simply apply the aforementioned circuit $B_1$ to approximate $L(M)$. However, we can no longer conclude in its Case 2 (i.e.,~$|S_{x_1}^{\mathsf{mist}}| < (2/3) \cdot |S_{x_1}|$) that $x_2$ is a mistake of $D_w$ for every $w \in S_{x_1, x_2}$. To address this, we define the set
$$
S_{x_1, x_2}^{\mathsf{mist}} \eqdef \{ w \in S_{x_1, x_2} \mid  D_w(x_2) \neq M(x_2)\}\subseteq S_{x_1,x_2}
$$
and consider its density in $S_{x_1,x_2}$. 

\item If $|S_{x_1, x_2}^{\mathsf{mist}}|/|S_{x_1,x_2}| \geq 2/3$, we know that for at least a $2/3$ fraction of $w\in S_{x_1,x_2}$, $x_2$ is a mistake of $D_w$. As in Case 1 of \Cref{sec:ell2case}, we apply \Cref{lmm:counting} (instead of a direct counting argument in \Cref{fact:fixa-case2-ell2}) to find a ``good'' $a \in \{0,1\}^{[n^c] \setminus J_{x_2}}$ such that $S_{x_1,x_2}\uhr_a/2^{n^{c/2}}\ge \Omega(2^{-2n})$ and the density of $S_{x_1,x_2}^\mathsf{mist}\uhr_a$ in $S_{x_1,x_2}\uhr_a$ is at least $2/3-1/100$. By plugging in this $a$ into the circuit $B_2$, we will achieve agreement $\geq 1/2 + \Omega(2^{-2n})$ with $L(M)$.

\item Otherwise, we assume that $|S_{x_1, x_2}^{\mathsf{mist}}|/|S_{x_1,x_2}| \geq 2/3$. Let $(x_2,y_2,z_2) = f_2(D_w,y'_w,z'_w)$. Similar to $y'_w$ and $z'_w$, for every $w\in S_{x_1,x_2}\setminus S_{x_1,x_2}^\mathsf{mist}$, we define $(y''_w, z''_w)$ as the lexicographic first pair such that 
\begin{equation*}
    \Big [ M(x_2,y_2)=0 \,\vee\, D_w(x_2,z_2) = 1  \Big ] \;\wedge\; \Big [ M(x_2,y') = 1 \,\vee\,  D_w(x_1,z') = 0 \Big ], 
\end{equation*}
that is, $(y''_w,z''_w)$ is the output of the canonical Teacher in the second round of the Student-Teacher protocol. Since $S_{x_1,x_2}$ has density at least $\Omega(2^{-2n})$, we can find a string $x_3\in\{0,1\}^{n}$ such that the following set
$$
S_{x_1, x_2, x_3} \eqdef \{ w \in S_{x_1, x_2}\setminus S_{x_1,x_2}^\mathsf{mist} \mid f_3(D_w, y'_w, z'_w, y''_w, z''_w) = x_3 \},
$$
has density at least $\Omega(2^{-3n})$. Since $x_3$ must be a mistake of $D_w$ when $\ell = 3$ and $w \in S_{x_1,x_2,x_3}$, and this set is sufficiently dense, we can obtain a deterministic circuit of size $2^{O(n)}$ that achieves agreement $\geq 1/2 + \Omega(2^{-3n})$ with $L(M)$. 
\end{enumerate}

The argument can be generalised in the natural way, which allows us to obtain a circuit of size $2^{O(n)}$ that approximates $L(M)$ with advantage $\geq 1/2 + \Omega(2^{-\ell n})$ in the case of a disjunction of length $\ell$ in the application of the KPT Witnessing (see \Cref{thm:KPT}). This deterministic circuit computes $L(M)$ on inputs of length $n^{c/2}$, where $c$ is an arbitrary constant.

\begin{remark}
Note that the approach breaks down in theories where the number of rounds in the Student-Teacher game obtained from \Cref{thm:KPT} is polynomial in the relevant parameter, as in the case of Buss's theory $S^{1}_2$ (see, e.g.,~\citep{Krajicek92}). In the latter case, one can get up to $\ell = \mathsf{poly}(2^n)$ rounds in the corresponding witnessing theorem, and the advantage of the resulting deterministic circuit under a naive extension of the presented proof becomes trivial.
\end{remark}

\subsection{Extensions of the technique and unprovability of weaker lower bounds}\label{sec:hard_amplif}

As noted in \cite{PS21}, one can use hardness amplification  to weaken the average-case hardness in the unprovability result (\Cref{thm:unprov_lb}). By an adaptation of the proof of \Cref{thm:unprov_lb} via Remarks \ref{remark:general_approx} and \ref{remark:general_approx2} and an application of \Cref{thm:hardamp-ppoly}, we can obtain the following unprovability result.

\begin{theorem}\label{thm:unprov_lb2}
For every $n_0 \in \mathbb{N}$ and $\delta \in \mathbb{Q} \cap (0,1)$, if $M$ is a nondeterministic machine whose running time is bounded by some constructive function $t(n)=2^{n^{o(1)}}$, then\footnote{The original statement in \cite{PS21} is slightly weaker: they require the nondeterministic machine $M$ to be in polynomial-time instead of $t(n)$ time. We obtain such quantitative improvement by explicitly computing the complexity overhead of the hardness amplification in \cite{DBLP:journals/siamcomp/HealyVV06} (see \Cref{thm:hardamp-ppoly}).}
$$
\TPV \nvdash \LB(M,s, m,n_0),
$$
where $s(n) = 2^{n^\delta}$ and $m(n) = 2^{n}/n$.
\end{theorem}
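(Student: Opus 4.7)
The plan is to combine the proof of \Cref{thm:unprov_lb} with the hardness amplification in \Cref{thm:hardamp-ppoly}: for the weaker average-case parameter $m(n)=2^n/n$, the soundness of $\TPV$ only yields $(1-1/n)$-hardness, so I first \emph{amplify} this into strong hardness and then reach the same contradiction that drives \Cref{thm:unprov_lb}. Suppose, for contradiction, that $\TPV\vdash \LB(M,s,m,n_0)$ with $s(n)=2^{n^\delta}$ and $m(n)=2^n/n$.

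From this assumption I extract two ingredients. By the soundness of $\TPV$, for every $n\ge n_0$ the language $L(M)\in\cSig{1}\SIZE[2^{n^{o(1)}}]$ is $(1-1/n)$-hard against co-nondeterministic circuits of size $2^{n^\delta}$; in particular, it is also hard (with the same parameters) against deterministic circuits. Independently, since $\LB$ trivially implies $\LBw$, we also have $\TPV\vdash \LBw(M,s,n_0)$, so \Cref{lem:main} as extended by \Cref{remark:general_approx} and \Cref{remark:general_approx2} provides the proof-to-upper-bound implication: every $g\in\mathsf{NSIZE}[2^{m^{o(1)}}]$ admits, for each $\epsilon\in(0,1)$ and all sufficiently large $m$, an approximation by a (deterministic, hence also co-nondeterministic) circuit of size $2^{m^\epsilon}$ with advantage $\ge 1/2+1/2^{m^\epsilon}$.

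Next I apply \Cref{thm:hardamp-ppoly} with $i=1$ in the $\Sigma_1$-case to $L(M)$: its hypothesis is exactly the weak average-case hardness just derived from soundness, and its conclusion yields a function $h_\ell\colon\{0,1\}^\ell\to\{0,1\}$ on $\ell=\mathsf{poly}(n)$ bits lying in $\cSig{1}\SIZE[\mathsf{poly}(\ell)\cdot 2^{\ell^{o(1)}}]\subseteq \mathsf{NSIZE}[2^{\ell^{o(1)}}]$, such that every $\Sigma^p_0$-oracle (equivalently, deterministic) circuit of size at most $2^{\gamma\ell^{\gamma\delta}}$ agrees with $h_\ell$ on at most a $1/2+1/2^{\gamma\ell^{\gamma\delta}}$ fraction of inputs. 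Since $h_\ell\in\mathsf{NSIZE}[2^{\ell^{o(1)}}]$, the upper bound from the previous paragraph applies: choosing any $\epsilon\in(0,\gamma\delta)$, for sufficiently large $\ell$ it produces a circuit of size $2^{\ell^\epsilon}\ll 2^{\gamma\ell^{\gamma\delta}}$ with advantage $\ge 1/2+1/2^{\ell^\epsilon}>1/2+1/2^{\gamma\ell^{\gamma\delta}}$, contradicting the amplified lower bound and hence proving $\TPV\nvdash\LB(M,s,m,n_0)$.

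The main obstacle is keeping the two ingredients aligned with respect to the non-uniform circuit class. \Cref{thm:hardamp-ppoly} outputs hardness against $\Sigma^p_{i-1}$-oracle circuits, which for $i=1$ are just deterministic circuits (up to a polynomial size blowup), while the proof-derived upper bound naturally gives deterministic approximations as well (the approximating circuits in \Cref{algo:B-for-l=2-NL} and its generalizations are deterministic once the random bit is fixed), so the quantitative comparison in the previous paragraph goes through. What requires care is the relationship $s_2(\ell^\gamma)^\gamma$: it must be good enough that choosing $\epsilon<\gamma\delta$ indeed places the approximating circuit within the scope of the amplified hardness bound, and the running-time parameter $t(n)=2^{n^{o(1)}}$ (rather than $\mathsf{poly}(n)$ as in \cite{PS21}) must be absorbed by the amplification. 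Both are handled by the relativized version of the Healy--Vadhan--Viola construction in \Cref{sec: lmms for hardness amp}.
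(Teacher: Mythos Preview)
Your proposal is correct and follows essentially the same approach as the paper's proof: both derive $(1-1/n)$-hardness of $L(M)$ from soundness, derive the universal approximation of $\mathsf{NSIZE}[2^{n^{o(1)}}]$ functions from the provability of $\LBw(M,s,n_0)$ via \Cref{remark:general_approx} and \Cref{remark:general_approx2}, apply \Cref{thm:hardamp-ppoly} with $i=1$ to amplify the weak hardness, and then choose $\varepsilon$ small enough (the paper takes $\varepsilon=(1/2)\cdot\delta\cdot\gamma$, you take any $\varepsilon<\gamma\delta$) to produce the contradiction. The only cosmetic difference is that you phrase the soundness step as hardness against co-nondeterministic circuits before specializing to deterministic ones, whereas the paper goes directly to deterministic circuits.
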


As a consequence, for every language $L \in \NTIME[2^{n^{o(1)}}]$ and $\delta > 0$ it is consistent with $\TPV$ that there are infinitely many input lengths $n$ and a co-nondeterministic circuit $D_n$ of size $\leq 2^{n^{\delta}}$ such that 
$$
\Pr_{x \sim \{0,1\}^n}[L(x) = D_n(x)] \geq 1 -1/n.
$$

\begin{proof}[Proof of \Cref{thm:unprov_lb2}]
Let $n_0$, $\delta$, $M$, $s(n) = 2^{n^{\delta}}$, and $m(n) = 2^n/n$ be as above. Assume towards a contradiction that 
$$
\TPV \vdash \LB(M,s, m,n_0).
$$
Let $L\triangleq L(M)$ be the language defined by $M$. We argue as follows.
\begin{enumerate}
    \item Under the provability of an almost-everywhere average-case lower bound against conondeterministic circuits, it follows by the soundness of $\TPV$ that (in the standard model) for every sequence $\{E_n\}_{n \geq 1}$ of \emph{deterministic} circuits $E_n$ of size $\leq 2^{n^{\delta}}$, if $n \geq n_0$ then 
    $$
    \Pr_{x \sim \{0,1\}^n}[L(x) = E_n(x)] \leq 1 -1/n.
    $$
    \item From the provability of $\LB(M,s, m,n_0)$, it follows that $\TPV$ proves the sentence $\LBw(M,s,n_0)$ which states a \emph{worst-case} lower bound for $M$ against conondeterministic circuits of the same size. By adapting the argument presented in Section \ref{sec:proof_main} (see Remarks \ref{remark:general_approx} and \ref{remark:general_approx2}), the provability of $\LBw(M,s,n_0)$ in $\TPV$ implies that, in the standard model, for \emph{every} sequence $\{g_n\}_{n\ge 1}$ of functions in $\NSIZE[2^{n^{o(1)}}]$, $\varepsilon > 0$, and large enough $n$, there is a deterministic circuit $C'$ defined over $n$ input variables and of size $2^{n^{\varepsilon}}$ such that
    \begin{equation}\label{eq:approxLprime}
    \Pr_{x \sim \{0,1\}^{n}}[g_n(x) = C'(x)] \geq 1/2 + 2^{-n^{\varepsilon}}.
    \end{equation}
    \item Let $\{f_n\}_{n \geq 1}$ be the sequence of functions in $\NTIME[2^{n^{o(1)}}]$ obtained from $L$, i.e., $f(x)=1$ if and only if $x\in L$. Note that this sequence satisfies the hypothesis of \Cref{thm:hardamp-ppoly} for $s_1(n)=2^{n^{o(1)}}$ and $s_2(n) = 2^{n^{\delta}}$ for sufficiently large $n$. Let $\{h_m\}_{m \geq 1}$ be the sequence of functions in $\mathsf{NSIZE}[2^{m^{o(1)}}]$ obtained by an application of this result, we know that for sufficiently large $n$ and any deterministic circuit $C$ of size $(2^{m^{\gamma\delta}})^\gamma$, it holds that 
    \[
    \Pr_{x\sim\{0,1\}^m}[h_m(x) = C(x)]\le 1/2 + 2^{-\gamma m^{\gamma\delta}}. 
    \]  
    
    Now the hardness of $h_m$ according to \Cref{thm:hardamp-ppoly} contradicts the upper bound provided in \Cref{eq:approxLprime}, if we take $\varepsilon = (1/2) \cdot \delta \cdot \gamma$ and consider large enough input lengths.
    \end{enumerate}
This shows that $\TPV \nvdash \LB(M,s, m,n_0)$, as desired. 
\end{proof}
\addtocontents{toc}{\protect\setcounter{tocdepth}{3}}

\section{Unprovability of Strong Complexity Lower Bounds in Bounded Arithmetic}\label{sec:unprovability_main_section}

In this section, we establish the unprovability of strong $\Sigma_i^p$-vs-$\Pi^p_i$-style lower bounds in bounded arithmetic. Our result generalises a previous unprovability result from \cite{PS21} in two directions: (1) it holds for stronger theories $\Tpv^i$ instead of only $\Tpv^1$; and (2) the lower bound sentence in our unprovability result is more natural in the sense that the hard problem is quantified within the theory, instead of in the meta-theory. 

Due to the complexity of the argument, we will first show in \Cref{sec:general-ph} how to generalise the unprovability result in \cite{PS21} to $\Tpv^i$. Then in \Cref{sec:circuit-vs-circuit} we combine this extension with the new game-theoretic witnessing theorem and with other ideas to obtain our main result, which has both features mentioned above.

\subsection{Unprovability of lower bounds in expressive theories}
\label{sec:general-ph}

For $i \geq 1$, recall that $\Tpv^i$ is the theory consisting of all true (in the standard model) $\forall \Sigma_{i-1}^b(\PV)$ sentences. For instance, $\Tpv^1$ is the universal true theory of $\tPV$. We want to generalize the unprovability of strong nondeterministic circuit lower bounds in $\Tpv^1$ to $\Tpv^i$ for all $i \geq 1$, stated as follows.\footnote{While in \Cref{sec:PichSanthanam} we considered a lower bound for a nondeterministic machine against co-nondeterministic circuits, it will be more convenient for us in this section to phrase the statement as $\Pi_i$-machines against $\Sigma_i$-circuits. Note that this is inconsequential, as the results are equivalent via complementation.}

\begin{theorem}\label{thm:unprovability-in-PH}
     Fix $i \geq 1$. Let $t(n)=2^{n^{o(1)}}$ be a constructive time bound, and $M$ be a $\cPi{i}\TIME[t(n)]$ machine and $\LB^i(M,s,m,n_0)$ be the $\Lpv$-sentence: for all $n\in\Log\Log$ with $n> n_0$ and $C\in\cSig{i}\SIZE[s(n)]$, there exist $m$ distinct inputs $x_1,\dots,x_m$ such that $M(x_j)\ne C(x_j)$ for all $j\in[m]$. Then
    \[
    \Tpv^i\nvdash\LB^i(M,s,m,n_0)
    \]
    for $s(n)=2^{n^\delta}$, $m(n)=2^n/2-2^n/2^{n^\delta}$, and $\delta\in \mathbb{Q}\cap (0,1)$.
\end{theorem}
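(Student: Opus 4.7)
The plan is to generalize the $i=1$ argument of \Cref{sec:PichSanthanam} to higher levels of the polynomial hierarchy. The two new ingredients are (i) the universal extension $\UTpv^i$ of \Cref{thm:universal_theory}, whose Skolem function symbols $f_\alpha,g_{\beta,t}$ have standard interpretations in $\FP^{\Sigma_{i-1}^p}$ by \Cref{thm:lpvi-term-complexity}; and (ii) the simulation $\SIZE^{\Sigma_{i-1}^p}[s]\subseteq\cSig{i}\SIZE[\poly(s)]$ from \Cref{thm:oracle-circuit-to-nondet-circuit}. Assume toward a contradiction that $\Tpv^i\vdash\LB^i(M,s,m,n_0)$. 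Because average-case hardness trivially entails worst-case hardness, $\Tpv^i$, and hence $\UTpv^i$ (by \Cref{thm:universal_theory}), proves the worst-case sentence $\LBw^i(M,s,n_0)$: for every large $n\in\Log\Log$ and every $\Sigma_i$-circuit $C$ of size $\le s(n)$ there is an input $x$ with $M(x)\ne C(x)$.

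The first real step is to rewrite $\LBw^i$ as a short-quantifier $\LUT^i$-sentence. The $\Pi_i$-acceptance condition $M(x)=1$ is a $\Pi_i^b$-formula, which after replacing its inner $\Sigma_{i-1}^b$-part by the corresponding $f_\alpha$ collapses to $\forall z\,\tilde M(x,z)=1$; analogously $C(x)=1$ collapses to $\exists z\,\tilde C(x,z)=1$. Unfolding $(M(x)=1\wedge C(x)=0)\vee(M(x)=0\wedge C(x)=1)$ and pulling the inner existentials to the front then puts $\LBw^i$ in the form
\[
  \forall n,C~\exists x,\vec w~\forall \vec z~\psi(n,C,x,\vec w,\vec z),
\]
with $\psi$ quantifier-free over $\LUT^i$. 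Applying \Cref{thm:KPT} to the universal theory $\UTpv^i$ yields a constant $\ell\ge 1$ and $\LUT^i$-terms $f_1,\dots,f_\ell$ implementing an $\ell$-round Student--Teacher protocol. By \Cref{thm:lpvi-term-complexity} each such term lies in $\FP^{\Sigma_{i-1}^p}$, and since $n\in\Log\Log$ every Student move runs in time $2^{O(n)}$ with a $\Sigma_{i-1}^p$ oracle.

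I would then rerun the Nisan--Wigderson analysis of \Cref{sec:ell2case} on the family $\{D_w\}_{w\in\{0,1\}^{n^c}}$ defined by $D_w(x)\triangleq\NW_{\overline{L(M)}}(w,x)$. Because $M\in\cPi{i}\TIME[2^{n^{o(1)}}]$, the complement $\overline{L(M)}$ lies in $\cSig{i}\TIME[2^{n^{o(1)}}]$, and the standard NW construction of \Cref{sec:NW} gives $D_w\in\cSig{i}\SIZE[2^{n^\delta}]$ once $c$ is chosen large enough relative to $1/\delta$; each $D_w$ is thus a legitimate input for the Student. The popularity-plus-design-overlap argument of \Cref{sec:ell2case}, applied verbatim with the $\FP^{\Sigma_{i-1}^p}$-Student in place of the polynomial-time Student, produces for every $k\ge 1$ and large $n$ a randomized $\Sigma_{i-1}^p$-oracle circuit of size $2^{O(n)}$ that agrees with $L(M)$ on inputs of length $n^{c/2}$ with probability $\ge 1/2+2^{-O(\ell n)}$. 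Averaging the random bit and invoking \Cref{thm:oracle-circuit-to-nondet-circuit} converts this into a $\Sigma_i$-circuit of size $2^{O(n)}$ on input length $n'=n^{c/2}$ with the same advantage. Taking $c>2/\delta$ gives $2^{O(n)}\le 2^{n'^{\delta}}$ and $2^{-O(\ell n)}\ge 2^{-n'^{\delta}}$ for large $n'$, directly contradicting the soundness consequence of $\LB^i(M,s,m,n_0)$.

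The main obstacle I anticipate is making the Skolemization step crisp: verifying that the replacement of $M(x)$ and $C(x)$ by their $\LUT^i$-surrogates really reduces $\LBw^i$ to a $\forall\exists\forall$-sentence, and that the extracted Student terms keep the oracle complexity at $\Sigma_{i-1}^p$ rather than some higher level. Any slippage would either prevent \Cref{thm:oracle-circuit-to-nondet-circuit} from landing the final circuit in $\cSig{i}\SIZE$, or force a detour through hardness amplification (\Cref{thm:hardamp-ppoly}), which is unnecessary for the strong-average-case parameter $m(n)=2^n/2-2^n/2^{n^\delta}$ in the statement.
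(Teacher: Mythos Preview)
Your proposal is correct and mirrors the paper's proof almost exactly: the paper also passes to $\UTpv^i$, rewrites $\LBw^i$ as a $\forall\exists\forall$-sentence over $\LUT^i$ via the $f_\alpha$ symbols (\Cref{lmm:ulb-equiv-lbw}), applies KPT to extract an $\FP^{\Sigma_{i-1}^p}$ Student--Teacher protocol (\Cref{lmm:witnessing-for-uniform-vs-nonuniform}), and then reruns the NW/design-overlap case analysis of \Cref{sec:PichSanthanam} with $\Sigma_{i-1}^p$-oracle circuits in place of deterministic ones before finishing with \Cref{thm:oracle-circuit-to-nondet-circuit}. The Skolemization concern you flag is handled exactly as you outline and does keep the oracle level at $\Sigma_{i-1}^p$; no hardness amplification is needed for this choice of $m(n)$.
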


To obtain the unprovability of strong complexity lower bounds, we  rely on a witnessing theorem that extracts computational information from a proof of the lower bound sentence $\LB^i(M,s,m,n_0)$. 
We discuss the quantifier complexity of (the worst-case complexity analogue of) the $\LB^i(M,s,m,n_0)$ sentence in \Cref{sec:LBi-witnessing}. As its formalization results in a $\forall \Sigma^b_{i+1}(\PV)$ sentence, note that when $i > 1$ we can no longer directly apply the KPT Witnessing Theorem, as in  \Cref{sec:PichSanthanam}. (In addition, for $i > 1$ the theory $\Tpv^i$ is not universal, which is needed when applying this result.) A key aspect of our argument is to introduce an appropriate universal theory with the right abstractions and term complexity (see \Cref{sec:universal_theory}).

\subsubsection{Witnessing for $\Pi_i$ vs $\Sigma_i$  lower bounds}\label{sec:LBi-witnessing}

\ignore{
We will need the following four quantifier KPT witnessing theorem as a technical tool, which immediately follows from the KPT witnessing theorem for universal theory, given the fact that the newly introduced functions $f_\alpha$ and  $g_\beta$ are computable in $\TF\Sigma_{i-1}^p$. Here $\TF\Sigma_{i}^p$ denotes the set of total multi-functions with graphs computable in $\Pi_{i-1}^p$. 

\begin{theorem}
    Let $\varphi$ is any open formula in $\calL(\Upv^i)$, suppose that 
    \[
    \Upv^i\vdash\forall x\exists y\le t\forall z\exists w\le t'\varphi(x,y,z,w).    
    \]
    Then there is a finite sequence $t_1,\dots,t_k$ of $\FP^{\TF\Sigma^p_{i-1}}$ functions such that
    \[
    \mathbb{N}\vdash\forall x,z_1,\dots,z_k(\psi(x,t_1(x),z_1)\lor\psi(x,t_2(x,z_1),z_2)\lor\dots\lor\psi(x,t_k(x,z_1,\dots,z_{k-1}),z_k)),
    \] 
    where $\psi(x,y,z)\triangleq\exists w\le t'\varphi(x,y,z,w)$.  
\end{theorem}
}

Let $\LBw^i(M,s,n_0)$ be the following \emph{worst-case} lower bound sentence in the language $\Lpv$: 
\begin{quote}\it 
    For all $n\in\Log\Log$ with $n>n_0$ and circuit $D\in\cSig{i}\SIZE[s(n)]$, there exists an input $x$ of length $n$, such that $D(x)\ne M(x)$.
\end{quote} 
More formally, we have
\begin{eqnarray}
    \LBw^i(M,s,n_0) & \triangleq & \forall n \in \Log\Log~\text{with}~n >  n_0, \forall ~\text{circuit}~D\in\cSig{i}\SIZE[s(n)] \nonumber \\
    & ~ & \exists x \in \{0,1\}^n~\text{such that}~\mathsf{Error}(D,x), \nonumber
\end{eqnarray}
where $\mathsf{Error}(D,x)$ is a sentence stating that $M(x)\ne D(x)$. Since $M$ is a $\Pi_i$-machine and $D$ is a $\Sigma_i$-circuit, in the language $\Lpv$, the sentence $\phi_1(D,x)\triangleq (M(x)=1\land D(x)=0)$ is in $\Pi^b_i$ and the sentence $\phi_2(D,x)\triangleq (M(x)=0\land D(x)=1)$ is in $\Sigma^b_i$.  

\begin{fact}\label{fact:ave-to-worst}
Let $m(n) \geq 1$. If $\Tpv^i \vdash \LB^i(M,s,m,n_0)$ then $\Tpv^i \vdash \LBw^i(M,s,n_0)$.
\end{fact}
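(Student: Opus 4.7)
The plan is entirely routine: the worst-case sentence is a trivial logical consequence of the average-case one whenever $m(n) \geq 1$. I would proceed as follows.

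First, I would argue informally inside $\Tpv^i$, assuming the hypothesis $\LB^i(M,s,m,n_0)$. Fix an arbitrary $n \in \Log\Log$ with $n > n_0$ and an arbitrary circuit $D \in \cSig{i}\SIZE[s(n)]$. By $\LB^i(M,s,m,n_0)$ applied to this $n$ and $D$, there exist $m = m(n)$ distinct $n$-bit strings $x^1,\dots,x^m$ such that $M(x^j) \neq D(x^j)$ for every $j \in [m]$. Since the hypothesis $m(n) \geq 1$ is a true statement about the term $m$ (and hence provable in $\Tpv^i$, as $\Tpv^i$ contains all true universal $\Lpv$-sentences as axioms), we may in particular take $j = 1$, obtaining an $n$-bit string $x \triangleq x^1$ with $\mathsf{Error}(D,x)$, i.e., $M(x) \neq D(x)$. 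This is exactly the conclusion required by $\LBw^i(M,s,n_0)$.

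Second, I would formalise the above as a $\Tpv^i$-proof: the sentence $\LB^i(M,s,m,n_0)$ has the shape $\forall n\,\forall D\,\exists x^1,\dots,x^m\,\bigwedge_{j\in[m]}\mathsf{Error}(D,x^j)$, and standard first-order reasoning (existential instantiation on $x^1,\dots,x^m$ followed by $\wedge$-elimination to extract the $j=1$ conjunct, then existential generalisation on the single variable $x$) derives $\forall n\,\forall D\,\exists x\,\mathsf{Error}(D,x)$, which is $\LBw^i(M,s,n_0)$. The bounds $x^j \in \{0,1\}^n$ and $x \in \{0,1\}^n$ match, and the distinctness of the $x^j$'s is simply discarded. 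No arithmetic reasoning beyond this propositional/first-order manipulation is needed.

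There is essentially no obstacle here: the only point worth noting is that $m(n) \geq 1$ needs to be available as a fact inside $\Tpv^i$, which holds because $m$ is a specific $\Lpv$-term (in the intended instance $m(n) = 2^n/n$ or $m(n) = 2^n/2 - 2^n/2^{n^\delta}$) for which $m(n) \geq 1$ on all $n > n_0$ is a true universal sentence and hence in the axioms of $\Tpv^i \supseteq \Tpv^1$. Consequently the fact follows by a direct logical derivation, and no witnessing or complexity-theoretic input is required.
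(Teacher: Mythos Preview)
Your proposal is correct and matches the paper's own proof, which simply notes that the implication is immediate for any reasonable formalization since the average-case sentence asserts at least $m(n)\ge 1$ mistakes while the worst-case sentence asserts at least one. Both arguments are the same trivial logical weakening.
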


\begin{proof}
  This is immediate for any reasonable formalization of the sentence $\LB^i(M,s,m,n_0)$, since it states an average-case lower bound (at least $m(n) \geq 1$ mistakes) while $\LBw^i(M,s,n_0)$ states a worst-case lower bound (i.e.~at least one mistake).  
\end{proof}

Assume that 
\begin{align*}
  \phi_1(D,x)&\triangleq \forall y\in\{0,1\}^{O(s(n))}~\phi_1'(D,x,y), \\
  \phi_2(D,x)&\triangleq \exists z\in\{0,1\}^{O(s(n))}~\phi_2'(D,x,z), 
\end{align*}
for some $\Sigma^b_{i-1}$-formula $\phi_1'$ and $\Pi^b_{i-1}$-formula $\phi_2'$, respectively. Note that the lengths of the strings $y$ and $z$ are bounded by $O(s(n))$ since we obtain from them parts of the computation of the circuit $D$ (of size $s(n)$) and of the machine $M$ (with running time $2^{n^{o(1)}} < s(n)$). Then $\mathsf{Error}(D,x)\triangleq \phi_1(D,x)\lor\phi_2(D,x)$ is logically equivalent to the  formula
\[
\mathsf{Error}'(D,x)\triangleq \exists z\in\{0,1\}^{O(s(n))}~\forall y\in\{0,1\}^{O(s(n))}~(\phi_1'(D,x,y)\lor\phi_2'(D,x,z)).
\]

\newfunc{\ULB}{ULB}
\newcommand{\ULBw}{\mathsf{ULB}_{\mathsf{wst}}}
Next, consider the universal theories $\Upv^i$ and $\UTpv^i$ introduced in \Cref{sec:universal_theory}. 

\begin{lemma}\label{lmm:ulb-equiv-lbw}
    Let $\ULBw^i(M,s,n_0)$ be a $\Pi^b_3$-sentence in $\calL(\Upv^i)$ defined as follows:
    \begin{align*}
    \ULBw^i(M,s,n_0)\triangleq &\forall n\in\Log\Log\text{ with }n\ge n_0,~\forall\text{ circuit }D\in\cSig{i}\SIZE[s(n)] \\
    &\exists x\in\{0,1\}^n~\exists z\in\{0,1\}^{O(s(n))} \\ 
    &\forall y\in\{0,1\}^{O(s(n))}~(f_{\phi_1'}(D,x,y)=1\lor f_{\phi_2'}(D,x,z)=1).
    \end{align*}
    Then $\Upv^i$ proves $\LBw^i(M,s,n_0)\leftrightarrow\ULBw^i(M,s,n_0)$. Moreover, $\UTpv^i$ proves $\LBw^i(M,s,n_0)\leftrightarrow\ULBw^i(M,s,n_0)$.
\end{lemma}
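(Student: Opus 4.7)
The plan is to establish the equivalence inside $\Upv^i$ as a routine reformulation and then lift the result to $\UTpv^i$ via the extension property. There are essentially two ingredients: (i) a purely logical rewriting that transforms $\mathsf{Error}(D,x)$ into the $\exists\forall$-form appearing in $\ULBw^i$, and (ii) the defining axioms for the new function symbols $f_{\phi_1'}$ and $f_{\phi_2'}$.

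First, I would observe that the equivalence
\[
\bigl(\forall y \le t\; A(y)\bigr) \lor \bigl(\exists z \le t'\; B(z)\bigr) \;\leftrightarrow\; \exists z \le t'\; \forall y \le t\;\bigl(A(y)\lor B(z)\bigr)
\]
is a tautology of classical predicate logic, provided $z$ is not free in $A$ and $y$ is not free in $B$ (which is the case here). Applying this with $A=\phi_1'(D,x,y)$, $B=\phi_2'(D,x,z)$, and the bounds $|y|,|z|=O(s(n))$ yields, in pure first-order logic,
\[
\mathsf{Error}(D,x)\;\leftrightarrow\;\mathsf{Error}'(D,x),
\]
and hence $\LBw^i(M,s,n_0)$ is logically equivalent to the sentence obtained from $\ULBw^i(M,s,n_0)$ by replacing $f_{\phi_1'}(D,x,y)=1$ with $\phi_1'(D,x,y)$ and $f_{\phi_2'}(D,x,z)=1$ with $\phi_2'(D,x,z)$.

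Second, since $\phi_1'$ is a $\Sigma^b_{i-1}$-formula and $\phi_2'$ is a $\Pi^b_{i-1}$-formula, Lemma~\ref{lmm:defining-axiom-f} gives
\[
\Upv^i\vdash\forall D,x,y\;\bigl(\phi_1'(D,x,y)\leftrightarrow f_{\phi_1'}(D,x,y)=1\bigr),\quad \Upv^i\vdash\forall D,x,z\;\bigl(\phi_2'(D,x,z)\leftrightarrow f_{\phi_2'}(D,x,z)=1\bigr).
\]
Substituting these equivalences underneath the outer quantifier prefix (which is sound because both equivalences are provable universally in $\Upv^i$) then gives $\Upv^i\vdash \LBw^i(M,s,n_0)\leftrightarrow\ULBw^i(M,s,n_0)$.

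Finally, for the \emph{Moreover} clause, I would appeal to item~2 of Theorem~\ref{thm:universal_theory}, which states that every $\calL(\Upv^i)$-sentence provable in $\Upv^i$ is also provable in $\UTpv^i$. Applying this to the biconditional just derived yields $\UTpv^i\vdash\LBw^i(M,s,n_0)\leftrightarrow\ULBw^i(M,s,n_0)$. Since both ingredients are already in place, I do not expect any genuine obstacle; the only point that requires a small amount of care is checking that the bounded-quantifier version of the propositional rewriting goes through (the direction using $\exists z\le t'$ is trivial provided the term $t'$ can take value $0$, which is the case in our setup).
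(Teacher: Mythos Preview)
Your proposal is correct and follows essentially the same approach as the paper: the logical rewriting of $\mathsf{Error}(D,x)$ into the $\exists z\,\forall y$ form, followed by an application of Lemma~\ref{lmm:defining-axiom-f} to replace $\phi_1',\phi_2'$ by $f_{\phi_1'}=1,f_{\phi_2'}=1$, and finally Theorem~\ref{thm:universal_theory} to transfer the result to $\UTpv^i$. The paper's proof is more terse (it simply refers to ``the discussion above'' for the logical rewriting), but the content is identical.
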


\begin{proof}
    By the discussion above, $\LBw^i(M,s,n_0)$ is logically equivalent to 
    \begin{align*}
    &\forall n\in\Log\Log\text{ with }n\ge n_0,~\forall\text{ circuit }D\in\cSig{i}\SIZE[s(n)] \\
    &\exists x\in\{0,1\}^n~\exists z\in\{0,1\}^{O(s(n))} \\ 
    &\forall y\in\{0,1\}^{O(s(n))}~(\phi_1'(D,x,y)=1\lor \phi_2'(D,x,z)=1),
    \end{align*}
    which is further equivalent to $\ULBw^i(M,s,n_0)$ in $\Upv^i$ by Lemma \ref{lmm:defining-axiom-f}. The provability of the same sentence in $\UTpv^i$  follows from \Cref{thm:universal_theory}.
\end{proof}

Note that the $\calL(\Upv^i)$-sentence $\ULBw^i(M,s,n_0)$ has low quantifier complexity. 
By exploring the connection between $\Tpv^i$ and the universal theory $\UTpv^i$, we can show the following witnessing result.

\begin{lemma}[Witnessing lemma for $\LB(M,s,m,n_0)$]\label{lmm:witnessing-for-uniform-vs-nonuniform}
    Let $i\ge 1$, $M$ be a $\cPi i\TIME[2^{n^{o(1)}}]$ machine, $\delta\in (0,1)$, $n_0\ge 1$, $s(n)=2^{n^{\delta}}$, and $m(n)=2^n/2-2^n/2^{n^\delta}$. If $\Tpv^i\vdash \LB^i(M,s,m,n_0)$, then there exist  $\ell\in\bbN$ and $\ell$ algorithms $A_1,A_2,\dots,A_\ell$ such that:
    \begin{itemize}
        \item Every $A_i$ is computable in $\FP^{\Sigma_{i-1}^p}$ over inputs of length of order $N = 2^n$.   
        \item For every $i\in[\ell]$, the input of $A_i$ consists of $1^N$, $1^n$ for $n = \log N$, an $n$-input circuit $D\in\cSig{i}\SIZE[s(n)]$, and $i-1$ strings $y_1,\dots,y_{i-1}$; the output of $A_i$ is a pair $(x_i,z_i)\in\{0,1\}^n\times\{0,1\}^{O(s(n))}$.\footnote{Formally, since $n \in \Log\Log$ in our formalization, each $A_i$ has access to an input $\alpha$ of length $|\alpha| = N = 2^n$. For convenience of notation, when discussing $A_1, \ldots, A_\ell$ we often omit the input $1^N$ and concentrate on $n$, which is the key parameter.}
        \item Let $h:(n,D,x)\mapsto y$ be the following function. Given $n$, a string $x\in\{0,1\}^n$, and a circuit $D\in\cSig i\SIZE[s(n)]$, output a $y$ such that $\lnot\phi_1'(D,x,y)$ if such $y\in\{0,1\}^{O(s(n))}$ exists, or $0$ otherwise.
        \item For all $n >  n_0$ and circuit $D\in\cSig{i}\SIZE[s(n)]$, let
          \[
          \begin{matrix}
            & (x_1,z_1)\triangleq A_1(1^n,D) & y_1\triangleq h(n,D,x_1)  & \nonumber \\ 
            & (x_2,z_2)\triangleq A_2(1^n,D,y_1) & y_2\triangleq h(n,D,x_2) & \nonumber \\
            & \vdots & \vdots & \nonumber \\ 
            & (x_\ell,z_\ell)\triangleq A_\ell(1^n,D,y_1,\dots,y_{i-1}) & y_\ell\triangleq h(n,D,x_\ell). & \nonumber
          \end{matrix}
        \]
        Then there is an index $v \in [\ell]$ such that $D(x_v) \neq M(x_v)$.
    \end{itemize}
\end{lemma}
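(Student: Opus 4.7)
The plan is to apply the KPT Witnessing Theorem to the universal theory $\UTpv^i$ after reducing the average-case sentence $\LB^i(M,s,m,n_0)$ to a worst-case sentence in the language of $\UTpv^i$, and then to read off the Student-Teacher algorithms $A_1,\dots,A_\ell$ from the resulting witnessing terms. Starting from the hypothesis $\Tpv^i\vdash\LB^i(M,s,m,n_0)$, \Cref{fact:ave-to-worst} will yield $\Tpv^i\vdash\LBw^i(M,s,n_0)$ (using that $m(n)\ge 1$). Item (\ref{enum: provable translate}) of \Cref{thm:universal_theory} will then lift this to $\UTpv^i\vdash\LBw^i(M,s,n_0)$, and \Cref{lmm:ulb-equiv-lbw} will translate the latter into $\UTpv^i\vdash\ULBw^i(M,s,n_0)$. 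The key observation is that $\ULBw^i(M,s,n_0)$ is a strict $\Pi^b_3$-sentence of $\LUT^i$, of the shape $\forall n,D\,\exists x,z\,\forall y\,\psi_0(n,D,x,z,y)$ with $\psi_0$ quantifier-free, so the three-quantifier KPT Witnessing Theorem (\Cref{thm:KPT}) applies directly and produces a constant $\ell=O(1)$ together with $\LUT^i$-terms $t_1(n,D),\,t_2(n,D,y_1),\dots,t_\ell(n,D,y_1,\dots,y_{\ell-1})$, each evaluating to a pair $(x_v,z_v)\in\{0,1\}^n\times\{0,1\}^{O(s(n))}$, such that for every $n,D$ and every sequence $y_1,\dots,y_\ell$,
\[\bigvee_{v=1}^{\ell}\psi_0\!\left(n,D,\,t_v(n,D,y_1,\dots,y_{v-1}),\,y_v\right)\]
holds in $\mathbb{N}$.

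I will take $A_v$ to be the standard-model interpretation $t_v^{\mathbb{N}}$. By item (4) of \Cref{thm:universal_theory} every $\LUT^i$-term is interpreted by a function in $\FP^{\Sigma_{i-1}^p}$, and since the total input length of $A_v$ is $O(N)=O(2^n)$ (the bit-length of $1^N$, $1^n$, the description of $D$, and the $y_j$'s, each of size at most $O(s(n))\le N$), $A_v$ will run in polynomial time in $N$ with a $\Sigma^p_{i-1}$-oracle. Correctness then follows by chaining the $A_v$'s using the teacher response $y_v\triangleq h(n,D,x_v)$: the KPT disjunction forces some $v\in[\ell]$ to satisfy $f_{\phi_1'}(D,x_v,y_v)=1$ or $f_{\phi_2'}(D,x_v,z_v)=1$. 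In the second case, $\phi_2'(D,x_v,z_v)$ holds, witnessing $\phi_2(D,x_v)\equiv\exists z\,\phi_2'(D,x_v,z)$, so $M(x_v)=0$, $D(x_v)=1$ and $x_v$ is a mistake. In the first case, $\phi_1'(D,x_v,y_v)$ holds at $y_v=h(n,D,x_v)$; but by definition $h$ returns a falsifying string for $\phi_1'(D,x_v,\cdot)$ whenever one exists, so no such string can exist, forcing $\phi_1(D,x_v)\equiv\forall y\,\phi_1'(D,x_v,y)$ to hold, giving $M(x_v)=1$, $D(x_v)=0$, and again a mistake.

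The delicate ingredient is arranging the witnessing so that the resulting algorithms live in exactly $\FP^{\Sigma^p_{i-1}}$ rather than a more powerful class. This is precisely the reason the universal theory $\UTpv^i$ is engineered in \Cref{sec:universal_theory}: the Skolem symbols $g_{\beta,t}$ are introduced only for $\Sigma^b_{i-1}$-formulas $\beta$ so that their standard interpretations remain in $\FP^{\Sigma^p_{i-1}}$ (\Cref{lmm:complexity-g}), while the additional closure under all $\FP^{\Sigma^p_{i-1}}$-function symbols simultaneously secures both the if-then-else property required by KPT witnessing and the term-complexity bound needed to place the $A_v$'s in $\FP^{\Sigma^p_{i-1}}$. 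The remaining work is the bookkeeping above that aligns the teacher's definition with the quantifier structure of $\phi_1'$.
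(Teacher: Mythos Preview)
Your proposal is correct and follows essentially the same route as the paper's proof: reduce to $\LBw^i$ via \Cref{fact:ave-to-worst}, pass to $\UTpv^i$ and translate to $\ULBw^i$ via \Cref{lmm:ulb-equiv-lbw}, apply the three-quantifier KPT Witnessing Theorem (\Cref{thm:KPT}), and take the $A_v$'s to be the standard-model interpretations of the resulting $\LUT^i$-terms with complexity controlled by item~(4) of \Cref{thm:universal_theory}. Your case analysis on which of $f_{\phi_1'}$ or $f_{\phi_2'}$ is satisfied (and why the definition of $h$ forces the first case to yield $\forall y\,\phi_1'(D,x_v,y)$) spells out the ``easy to check'' step that the paper leaves implicit.
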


\begin{proof}
Let $i,M,\delta,n_0,s(n),m(n)$ be defined as above. Assume that $\Tpv^i\vdash\LB^i(M,s,m,n_0)$. Then, by  \Cref{fact:ave-to-worst}, it follows that  $\Tpv^i\vdash\LBw^i(M,s,m,n_0)$. Using \Cref{thm:Upv-extends-Tpv} and Lemma \ref{lmm:ulb-equiv-lbw}, we get that $\UTpv^i\vdash\ULBw^i(M,s,m,n_0)$ . 

Since $\ULBw^i(M,s,m,n_0)$ is a $\forall\Sigma^b_2$-sentence and $\UTpv^i$ is a universal theory, we can invoke the KPT witnessing theorem (\Cref{thm:KPT}) to obtain constantly many $\Lpv^i$-terms $A_1,A_2,\dots,A_\ell$ witnessing the existential quantifier given counter-examples to the innermost  universal quantifier. By Theorem \ref{thm:lpvi-term-complexity} and using that $n \in \Log\Log$, each $A_i$ is computable in $\FP^{\Sigma_{i-1}^p}$ over an input of order $N = 2^n$. Furthermore, since the function $h$ is a valid counter-example oracle for the innermost universal quantifier, it is easy to check that the conclusion of the lemma follows from the guarantee provided by KPT witnessing. 
\end{proof}

\ignore{
\paragraph{Warm up.} Now we assume that $\Tpv^i\vdash\LB^i(M,s,m,n_0)$. Since $\Upv^i$ is an extension of $\Tpv^i$, we know that $\Upv^i\vdash\LB^i(M,s,m,n_0)$. This means the following worst-case lower bound statement
\begin{quote}\it 
    for all $n\in\Log$ with $n>n_0$ and circuit $C\in\cSig{i}\SIZE[s(n)]$, there exists an input $x$ of length $n$, such that $C(x)\ne M(x)$,
\end{quote} 
which can be expressed as a $\forall\exists\forall\exists$-formula\footnote{The other $i-2$ quantifiers are replaced by a formula $f_\alpha$.} in $\calL(\Upv^i)$, is provable from $\Upv^i$. By the KPT-witnessing theorem above, there exists a constant-round KPT-learning protocol that finds an $x$ given $n>n_0$ and $C\in\cSig{i}\SIZE[s(n)]$ with $\FP^{\TF\Sigma^p_{i-1}}$ learner.

We claim that to utilize the original analysis we only need to find a $\Sigma_i^p$ function that could \emph{deterministically} simulate a $\FP^{\TF\Sigma^p_{i-1}}$ algorithm. In other words, the function should be in the form $\exists\Pi_{i-1}^p$, and in addtion, there exists \emph{exactly} one $y=(w,z)$ witnessing the existential quantifier, with $z$ to be a possible output of the $\FP^{\TF\Sigma^p_{i-1}}$ algorithm.
}

\subsubsection{Proof of Theorem \ref{thm:unprovability-in-PH}}

\begin{theorem*}[Theorem \ref{thm:unprovability-in-PH}, restated]
    Fix $i \geq 1$. Let $M$ be a $\cPi{i}\TIME[2^{n^{o(1)}}]$ machine and $\LB^i(M,s,m,n_0)$ be the $\Lpv$-sentence: for all $n\in\Log\Log$ with $n> n_0$ and $\cSig{i}\SIZE[s(n)]$-circuit $C$, there exist $m$ distinct inputs $x_1,\dots,x_m$ such that $M(x_j)\ne C(x_j)$ for all $j\in[m]$. Then
    \[
    \Tpv^i\nvdash\LB^i(M,s,m,n_0)
    \]
    for $s(n)=2^{n^\delta}$, $m(n)=2^n/2-2^n/2^{n^\delta}$, and $\delta\in \mathbb{Q}\cap (0,1)$.
\end{theorem*}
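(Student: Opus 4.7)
The plan is to follow the high-level template from Section \ref{sec:general-ph:intuition}, carefully generalising the Kraj\'{i}\v{c}ek--Pich--Santhanam argument reviewed in Section \ref{sec:PichSanthanam} from $\Tpv^1$ to the larger theory $\Tpv^i$ and from conondeterministic circuits to $\cSig{i}\SIZE$. Assume toward a contradiction that $\Tpv^i\vdash\LB^i(M,s,m,n_0)$ for $s(n)=2^{n^\delta}$ and $m(n)=2^n/2-2^n/2^{n^\delta}$. Since $\Tpv^i$ is sound, the average-case part of $\LB^i$ immediately tells us in the standard model that $M$ cannot be computed on more than a $1/2+2^{-n^\delta}$ fraction of inputs of length $n\ge n_0$ by any $\cSig{i}\SIZE[s(n)]$ circuit; by \Cref{thm:oracle-circuit-to-nondet-circuit}, the same average-case hardness holds against $\SIZE^{\Sigma^p_{i-1}}[\mathsf{poly}(s(n))^{1/c}]$ circuits for a suitable constant.

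Next, $\LB^i(M,s,m,n_0)$ trivially implies $\LBw^i(M,s,n_0)$ inside $\Tpv^i$ (\Cref{fact:ave-to-worst}), so by \Cref{lmm:witnessing-for-uniform-vs-nonuniform} I obtain a constant number $\ell=O(1)$ of $\FP^{\Sigma^p_{i-1}}$-algorithms $A_1,\ldots,A_\ell$ realising the KPT-style Student--Teacher protocol for the worst-case lower bound sentence, when the Teacher is instantiated by the canonical counter-example function $h$ that finds a $y$ refuting $\phi_1'(D,x,y)$ whenever one exists. In particular, on any input $n\geq n_0$ and any $\cSig{i}\SIZE[s(n)]$-circuit $D$ one of the candidate mistakes $x_1,\ldots,x_\ell$ produced by $A_1,\ldots,A_\ell$ is a true mistake for $D$, and each $A_j$ (together with $h$, which is itself computable by a $\Sigma^p_{i-1}$-oracle circuit) can be realised by a non-uniform $\SIZE^{\Sigma^p_{i-1}}[2^{O(n)}]$ circuit over inputs of length $2^n$.

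I then transplant the Kraj\'{i}\v{c}ek--Pich--Santhanam construction of Section \ref{sec:PichSanthanam} into this higher-level setting. Fix an arbitrary $g\in\mathsf{NSIZE}^{\Sigma^p_{i-1}}[2^{n^{o(1)}}]$ over $n^{c/2}$ bits and let $\{D_w\}_{w\in\{0,1\}^{n^c}}$ be the family of $\cSig{i}\SIZE[2^{n^\delta}]$-circuits obtained by feeding the Nisan--Wigderson generator of \Cref{sec:NW} instantiated with $\overline{g}$ into the $\Sigma^p_{i-1}$-oracle circuit realising $g$; here I use \Cref{thm:oracle-circuit-to-nondet-circuit} to absorb the oracle layer into the $\Sigma_i$-circuit. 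Running the protocol on $D_w$ with oblivious $w$, I iterate the case analysis of Section \ref{sec:ell2case}: at round $j$ either a popular candidate $x_j\in\{0,1\}^n$ is a true mistake of $D_w$ on a dense subset of seeds, in which case I hardwire the advice $(x_j,a_j,b_j)$ and exploit that $J_{x_j}\cap J_{x_{j-1}}$ has size at most $n$ to decide the next pair $(y'_w,z'_w)$ by a $\SIZE^{\Sigma^p_{i-1}}[2^{O(n)}]$ table lookup (this is where \Cref{lmm:counting} and Remarks \ref{remark:general_approx}--\ref{remark:general_approx2} get lifted to the $i$-th level), or the protocol must ``move on'' to the next round. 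After at most $\ell$ rounds I obtain a $\SIZE^{\Sigma^p_{i-1}}[2^{O(n)}]$ deterministic circuit $B$ that agrees with $g$ on at least a $1/2+2^{-O(\ell n)}=1/2+2^{-O(n)}$ fraction of inputs of length $n^{c/2}$. Choosing $c$ large then gives a $\SIZE^{\Sigma^p_{i-1}}[2^{m^{\varepsilon}}]$ approximator of advantage $2^{-m^{\varepsilon}}$ for any $g\in\mathsf{NSIZE}^{\Sigma^p_{i-1}}[2^{m^{o(1)}}]$ of length $m$, for every desired $\varepsilon>0$.

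Finally, I apply the hardness amplification theorem \Cref{thm:hardamp-ppoly} (the ``relativised'' Healy--Vadhan--Viola) in the direction that turns the worst-case/average-case hardness of $M$ against $\Sigma^p_{i-1}$-oracle circuits into a function $h_\ell\in\cPi i\SIZE[\poly(\ell)\cdot 2^{\ell^{o(1)}}]\subseteq\mathsf{NSIZE}^{\Sigma^p_{i-1}}[\ell^{o(1)}\text{-exp}]$ which no $\SIZE^{\Sigma^p_{i-1}}[2^{\ell^{\gamma\delta}\cdot\gamma}]$ circuit can compute with advantage better than $2^{-\ell^{\gamma\delta}\gamma}$. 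Picking $\varepsilon=\gamma\delta/2$ in the previous paragraph contradicts the non-trivial approximator I just built, and this contradiction gives $\Tpv^i\nvdash\LB^i(M,s,m,n_0)$. The main obstacle is the middle step: verifying carefully that the Nisan--Wigderson-plus-case-analysis construction of Section \ref{sec:PichSanthanam} goes through \emph{uniformly in the level $i$}, so that (i) the Student's $\FP^{\Sigma^p_{i-1}}$ computations, (ii) the oracle layer hidden in $\{D_w\}$, and (iii) the $\Sigma^p_{i-1}$-oracle gates of the resulting approximator $B$ all cohere into a single $\SIZE^{\Sigma^p_{i-1}}[2^{O(n)}]$ circuit; everything else is either bookkeeping or a direct invocation of results already proved in the paper.
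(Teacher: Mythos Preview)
Your overall strategy is correct and closely mirrors the paper's proof. There are, however, two points worth flagging.

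\textbf{A type mismatch.} You fix $g\in\mathsf{NSIZE}^{\Sigma^p_{i-1}}$, a $\Sigma_i$-type class, and instantiate the NW generator with $\overline{g}$. But then $\overline{g}$, and hence each $D_w(x)=\overline{g}(w|_{J_x})$, is a $\Pi_i$-type circuit, not a $\Sigma_i$-circuit; the witnessing algorithms $A_1,\ldots,A_\ell$ from \Cref{lmm:witnessing-for-uniform-vs-nonuniform} require a $\cSig{i}\SIZE[s(n)]$-circuit as input, so you cannot feed them $D_w$. The same confusion reappears when you write $h_\ell\in\cPi{i}\SIZE[\cdot]\subseteq\mathsf{NSIZE}^{\Sigma^p_{i-1}}[\cdot]$: this inclusion is $\Pi_i\subseteq\Sigma_i$ and would collapse $\mathsf{PH}$. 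The fix is simple and is exactly what the paper does: take the hard function to be $\Pi_i$-type (indeed the paper uses $M$ itself, setting $D_w(x)\triangleq\NW_{\overline{M}}(w,x)$), so that $\overline{M}$ and $D_w$ are genuinely $\Sigma_i$. The generalisation to arbitrary $\Pi_i$-circuits is recorded separately as \Cref{cor: consequence-of-provability-in-PH}.

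\textbf{An unnecessary detour.} For \emph{this} theorem, with $m(n)=2^n/2-2^n/2^{n^\delta}$, hardness amplification is not needed. Soundness of $\Tpv^i$ already gives $(1/2+2^{-n^\delta})$-hardness of $M$ against $\cSig{i}\SIZE[2^{n^\delta}]$; instantiating the NW construction with $\overline{M}$ produces a $\cSig{i}\SIZE[2^{O(n)}]$ approximator on input length $n^{c/2}$ with advantage $2^{-O(n)}$, and choosing $c>2/\delta$ yields a direct contradiction. The paper reserves the amplification step (\Cref{thm:hardamp-ppoly}) for \Cref{thm:unprovability-in-PH-avg}, where the weaker parameter $m(n)=2^n/n$ is in play. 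Your route via amplification is not wrong, but it proves a harder statement than required and introduces the second type error above.
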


\begin{proof}
  Suppose that $\Tpv^i\vdash\LB^{i}(M,s,m,n_0)$ for some $M\in\cPi i\TIME[2^{n^{o(1)}}]$, $n_0\in\bbN$, $s(n)=2^{n^\delta}$, $m(n)=2^n/2-2^n/2^{n^\delta}$, and $\delta \in \mathbb{Q}\cap (0,1)$, there exist an $\ell\in\bbN$ and $\ell$ algorithms $A_1,A_2,\dots,A_\ell$ as described by Lemma \ref{lmm:witnessing-for-uniform-vs-nonuniform}. Similar to \cite{PS21}, we will utilize the algorithms $A_i$ to show that $M$ can be non-trivially approximated by $\cSig i\SIZE[2^{n^\delta}]$ circuits for some $n>n_0$, leading to a contradiction to the soundness of $\Tpv^i$.

  Let $c$ be a constant to be determined later, and $\NW_f(w,x)$ be the Nisan-Wigderson generator with seed length $|w|=n^c$, output length $2^n$, ``hard'' function $f:\{0,1\}^{n^{c/2}}\to\{0,1\}$ (therefore each subset in the combinatorial design has size $n^{c/2}$), $|x|=n$, and any two distinct subsets in the combinatorial design with intersection of size at most $n$. For every seed $w\in\{0,1\}^{n^c}$, let $D_w:\{0,1\}^n\to\{0,1\}$ be a $\Sigma_i$-circuit computing $D_w(x)\triangleq\NW_{\overline{M}}(w,x)$, which is of size at most $2^{n^{o(c)}}\le 2^{n^\delta}$ for sufficient large $n$. We will find some $w\in\{0,1\}^{n^c}$ and use $D_w$ as $C$ in Lemma \ref{lmm:witnessing-for-uniform-vs-nonuniform} to obtain a $\cSig{i}\SIZE[2^{O(n)}]$ circuit $B$ approximating $M$ on input length $n^{c/2}$, i.e., $\Pr_{u\in\{0,1\}^{n^{c/2}}}[B(u)=M(u)]\ge\frac{1}{2}+2^{-O(n)}$. Then by choosing $c>2/\delta$ and sufficiently large $n$, we can prove the theorem.

  \paragraph{Case 1.} Recall that in Lemma \ref{lmm:witnessing-for-uniform-vs-nonuniform}, $A_1$ takes $1^n$ and an $n$-input circuit $D\in\cSig{i}\SIZE[s(n)]$ as input and output a pair $(x,y)\in\{0,1\}^n\times\{0,1\}^{O(s(n))}$. By an averaging argument, there is an $x_1\in\{0,1\}^n$ such that for a uniformly random $w\in\{0,1\}^{n^c}$, with probability at least $2^{-n}$, $A_1(1^n,D_w)$ outputs $(x_1,\cdot)$. Fix this $x_1$ and let
  \begin{align*}
    S_1&\triangleq\Big\{w\in\{0,1\}^{n^c}~\Big\vert~ A_1(1^n,D_w)=(x_1,\cdot)) \Big\}, \\
    S_1^{\sf mist} &\triangleq\Big\{w\in S_1~\Big\vert~ D_w(x_1)\ne M(x_1)\Big\}.
  \end{align*}
  By the definition of $x_1$, we know that $|S_{x_1}|/2^{n^c}\ge 2^{-n}$.

  In Case 1 we assume that $|S_1^{\sf mist}| >  (2/3)\cdot|S_1|$, handling the other scenario in a subsequent case. For any $w\in\{0,1\}^{n^c}$, we know that $D_w(x_1)=\NW_{\overline{M}}(w,x_1)=\overline{M}(w|_{J_{x_1}})$, where $J_{x_1}$ is the subset of indices corresponding to the $x_1$-th row of the combinatorial design. By Lemma \ref{lmm:counting}, there is an assignment $a\in\{0,1\}^{[n^c]\setminus J_{x_1}}$ for the indices outside of $J_{x_1}$ such that $|S_1\uhr_a|/2^{n^{c/2}}\ge 2^{-O(n)}$ and $|S_1^{\sf mist}\uhr_a|/|S_1\uhr_a|\ge 3/5$. Fix an $a\in\{0,1\}^{n^c\setminus J_{x_1}}$ as above. Let $b_1\triangleq M(x_1)\in\{0,1\}$. We define a randomized circuit $B_1:\{0,1\}^{n^{c/2}}\times\{0,1\}\to\{0,1\}$, where the second input is regarded as a random bit, as follows (see Algorithm \ref{algo:b1-uniform} and recall the notation from \Cref{sec:NW}).

  \begin{algorithm}[htb]
    \SetKwInOut{Input}{Input}
    \SetKwInOut{Advice}{Advice}
    \caption{Randomized Circuit $B_1$ for $M$}\label{algo:b1-uniform}
    \Input{The input $u\in\{0,1\}^{n^{c/2}}$ for $M$ and a bit $r\in\{0,1\}$}
    \Advice{$x_1\in\{0,1\}^n$, $a\in\{0,1\}^{[n^c]\setminus J_{x_1}}$, and $b_1=M(x_1)$}
    Let $w=r_{x_1}(a,u)$ and $(x,\cdot)=A_1(1^n,D_w)$\;
    If $x\ne x_1$, \Return{$r$}\;
    Otherwise, \Return{$b_1$}.
  \end{algorithm}

  We first analyse the complexity of $B_1$. Let $m=n^{c/2}=|u|$ be the input length. Since $A_1$ is computable in $\FP^{\Sigma_{i-1}^p}$, it is easy to see that $B_1\in \SIZE^{\Sigma_{i-1}^p}[2^{O(n)}]$. By Theorem \ref{thm:oracle-circuit-to-nondet-circuit}, we get that $B_1\in \cSig{i}\SIZE[2^{O(n)}]$. So we only need to show that for an random bit $r\in\{0,1\}$, $B(x,r)$ approximates $M(x)$ well.

  For any input $u\in\{0,1\}^{n^{c/2}}$ such that $u\in S_1\uhr_a$, we know that
  \begin{align*}
    B(u,r)=M(u) & \iff M(x_1)=M(u) \tag{$x=x_1$ by the definition of $S_{1}$, $B(u,r)=b_1 = M(x_1)$} \\ 
    & \iff M(x_1)\ne D_w(x_1) \tag{$D_w(x_1)=\NW_{\overline{M}}(w,x_1)=\overline{M}(w|_{J_{x_1}})=\overline{M}(u)$} \\ 
    & \iff u\in S_{1}^{\sf mist}\uhr_a.
  \end{align*}
  This means that $B(u,r)$ and $M$ agree on at least $3/5$ of the inputs $u\in S_1\uhr_a$. In the other case, the circuit $B$ outputs the random bit $r$, therefore for some fixed bit $r^*\in\{0,1\}$, $B_1(u,r^*)$ and $M(u)$ agree on at least $1/2$ of the inputs $u\notin S_1\uhr_a$. Since $|S_1\uhr_a|/2^{n^{c/2}}\ge 2^{-O(n)}$, we obtain that
  \[
    \Pr_{u\in\{0,1\}^{n^{c/2}}}\Big[B_1(u,r^*)=M(u)\Big]\ge \frac{3}{5}\cdot \frac{|S_{1}\uhr_a|}{2^{n^{c/2}}} + \frac{1}{2}\cdot\left(1-\frac{|S_{1}\uhr_a|}{2^{n^{c/2}}}\right) = \frac{1}{2}+2^{-O(n)}.
  \]

  \paragraph{Case 2.} Assume that $|S_1^{\sf mist}|\le (2/3)\cdot|S_1|$ instead. For every $w\in S_1$, we define $y_1(w)\triangleq h(n,D_w,x_1)$ to be the output of the counter-example oracle $h$ in Lemma \ref{lmm:witnessing-for-uniform-vs-nonuniform}. Again by an averaging argument, there is an $x_2\in\{0,1\}^n$ such that for a uniformly random $w\in S_1\setminus S_1^{\sf mist}$, with probablity at least $2^{-n}$, $A_2(1^n,D_w,y_1(w))=(x_2,\cdot)$. Fix this $x_2$. Let $S_2$ an $S_2^{\sf mist}$ be the sets defined as follows:
  \begin{align*}
    &S_2\triangleq\Big\{w\in S_1\setminus S_1^{\sf mist}~\Big|~A_2(1^n,D_w,y_1(w))=(x_2,\cdot)\Big\}, \\
    &S_2^{\sf mist}\triangleq\{w\in S_2\mid D_w(x_2)\ne M(x_2)\}. 
  \end{align*}
  By the definition of $x_2$ we know that $|S_2|/2^{n^c}\ge (1/3)\cdot 2^{-O(n)}=2^{-O(n)}$.

  In this case, we further assume that $|S_2^{\sf mist}| > (2/3)\cdot |S_2|$. By construction, for any $w\in\{0,1\}^{n^c}$, $D_w(x_2)=\overline M(w|_{J_{x_2}})$. By Lemma \ref{lmm:counting}, there is an assignment $a\in\{0,1\}^{[n^c]\setminus J_{x_2}}$ for the indices outside of $J_{x_2}$ such that $|S_2\uhr_a|/2^{n^{c/2}}\ge 2^{-O(n)}$ and $|S^{\sf mist}_2\uhr_a|/|S_2\uhr_a|\ge 3/5$. Fix this $a$. We will need the following subroutine to complete this case.\\

   \vspace{-0.2cm}
 
   \noindent ($\nabla$) Given $w \in S_{1}$ of the form $a \cup u$ ($u\in\{0,1\}^{J_{x_2}}$), there is a deterministic circuit $E(w)$ of size at most $2^{O(n)}$ that outputs $(y_1(w), e_1(w))$, where $y_1(w)=h(n,D_w,x_1)$ and $e_1(w)\in\{0,1\}$ such that $e_1(w)=1$ if and only if $w\in S_{1}^{\sf mist}$. \\
 
   \vspace{-0.2cm}

   Note that the circuit $E(w)$ is used to simulate the counter-example oracle $h$ in the first round of the KPT-style game. Let $b_2\triangleq M(x_2)$. We construct a randomized circuit $B_2$ as follows (see Algorithm \ref{algo:b2-uniform}), discussing the claim ($\nabla$) later in the proof.

   \begin{algorithm}
     \SetKwInOut{Input}{Input}
     \SetKwInOut{Advice}{Advice}
     \caption{Randomized circuit $B_2$ for $M$}\label{algo:b2-uniform}
     \Input{The input $u\in\{0,1\}^{n^{c/2}}$ for $M$ and random bit $r\in\{0,1\}$}
     \Advice{$x_1,x_2\in\{0,1\}^n$, $a\in\{0,1\}^{[n^c]\setminus J_{x_2}}$ as discussed, $b_2=M(x_2)$, and $\Gamma$ to support the subroutine $(\nabla)$} 
     Let $w = r_{x_2}(a,u)$ and $(\hat x_1,\hat z_1)=A_1(1^n, D_{w})$\;
     If $\hat x_1\ne x_1$, then \Return{the random bit $r$}; 
     \tcp{after this step, $w\in S_{1}$}
     Let $(y_1(w),e_1(w))=E(w)$ by $(\nabla)$\;
     If $e_1(w)=1$, then \Return{the random bit $r$};
     \tcp{after this step, $w\in S_{1}\setminus S_{1}^{\sf mist}$}
     Let $(\hat x_2,\hat z_2)=A_2(1^n,D_{w},y_1(w))$\;
     If $\hat x_2\ne x_2$, then \Return{the random bit $r$}\; 
     Otherwise, \Return{$b_2$}. \tcp{reaching this line if and only if $w\in S_{2}$}
   \end{algorithm}

   Let $m=n^{c/2}=|u|$ be the input length. We first show that $B_2:\{0,1\}^{m}\times\{0,1\}\to\{0,1\}$ can be implemented by a $\Sigma_{i}$-circuit with size $2^{O(n)}$. Both $A_1$ and $A_2$ are $\FP^{\Sigma^p_{i-1}}$ algorithms with input length $\poly(2^n)$, so both of them can be implemented by $\SIZE^{\Sigma^p_{i-1}}[2^{O(n)}]$ circuits. By ($\nabla$) we also know that $E(w)$ can be implemented by a $2^{O(n)}$-size circuit. As a result, $B_2\in \SIZE^{\Sigma^p_{i-1}}[2^{O(n)}]\subseteq\cSig i\SIZE[2^{O(n)}]$, where the last inclusion follows from \Cref{thm:oracle-circuit-to-nondet-circuit}.

   Now we prove the correctness of the algorithm $B_2$. By construction, it is easy to verify that the algorithm reaches the last line if and only if $w=r_{x_2}(a,u)\in S_2$. Therefore $B_2$ will output a random bit when $w\notin S_2$ (i.e., $u\notin S_2\uhr_a$) and output $b_2$ when $w\in S_2$ (i.e., $u\in S_2\uhr_a$). In the former case, $B_2$ agrees with $M$ on $1/2$ of the inputs for an $r^*\in\{0,1\}$, which will be hard-wired into the circuit. In the latter case, with probability at least $3/5$ over $u\in\{0,1\}^{n^{c/2}}$, $u\in S_2^{\sf mist}\uhr_a$, which further means that
   \[
     M(u)=M(w|_{J_{x_2}})=\overline{D_w}(x_2)=M(x_2)=b_2=B_2(u,r).
   \]
   Since $|S_2\uhr_a|/2^{n^{c/2}}\ge 2^{-O(n)}$, we can conclude that $B_2(u,r^*)$ agrees with $M(u)$ on $\frac{1}{2}+2^{-O(n)}$ of the inputs $u\in\{0,1\}^{n^{c/2}}$.

   \paragraph{Case $j\ge 2$.} Using the technique for Case 2, we can in fact deal with all the remaining cases. Let $j\in\{2,3,\dots,\ell\}$. We define the following notations recursively:
   \begin{enumerate}
   \item $y_{j-1}(w)\triangleq h(n, D_w, x_{j-1})$.
   \item $x_j\in\{0,1\}^n$ be the lexicographically first string such that for a uniformly random string $w\in S_{j-1}\setminus S_{j-1}^{\sf mist}$, with probability at least $2^{-n}$, $A_j(1^n,D_w,y_{1}(w),\dots,y_{j-1}(w))=(x_j,\cdot)$. 
   \item Define $S_j$ and $S_j^{\sf mist}$ as the sets
     \begin{align*}
       &S_j\triangleq \Big\{w\in S_{j-1}\setminus S_{j-1}^{\sf mist}~\Big|~ A_j(1^n,D_w,y_{1}(w),\dots,y_{j-1}(w))=(x_j,\cdot)\Big\} \\
       &S_{j}^{\sf mist}\triangleq \{w\in S_j\mid D_w(x_j)\ne M(x_j)\}. 
     \end{align*}
   \end{enumerate}

   In Case $j \geq 2$ we assume that (1) $|S_j^{\sf mist}| > (2/3)\cdot |S_j|$, and (2) for every $i\in\{1,2,\dots,j-1\}$, $|S_i^{\sf mist}|/|S_i| \leq 2/3$. Crucially, by the definition of each of these sets and the conclusion of Lemma \ref{lmm:witnessing-for-uniform-vs-nonuniform}, we get that by reaching $j = \ell$ we necessarily have $S_\ell=S_\ell^{\sf mist}$, so the case analysis is complete. 
   
   The following lemma will be needed later in the proof.
   
   \begin{lemma}\label{lem:different_x}
       For every $1\le i<j$, we have $x_i\ne x_j$.
   \end{lemma}
   
   \begin{proof}
        Suppose that $x_i = x_j$ for $i < j$. First, it follows from the construction that $S_j \cap S_i^{\sf mist} = \emptyset$. Therefore, for any $w \in S_j^{\sf mist} \subseteq S_j$, we have $M(x_i) = D_w(x_i)$. On the other hand, by definition, for any $w \in S_j^{\sf mist}$, we have $M(x_j) \neq D_w(x_j)$. Note that the assumption of Case $j \geq 2$ implies that $S_j^{\sf mist}$ is nonempty. Take any $w^* \in S_j^{\sf mist}$. Under the hypothesis that $x_i = x_j$, the previous claims yield that both $M(x_i) = D_{w^*}(x_i)$ and $M(x_i) \neq D_{w^*}(x_i)$, which is contradictory.
   \end{proof}

  Under assumptions (1) and (2), one can prove by induction that $|S_j|/2^{n^c}=2^{-O(n)}$, therefore by Lemma \ref{lmm:counting}, there is an assignment $a\in\{0,1\}^{[n^c]\setminus J_{x_j}}$ such that $|S_j\uhr_a|/2^{n^{c/2}}\ge 2^{-O(n)}$ and $|S_j^{\sf mist}\uhr_a|/|S_j\uhr_a|\ge 3/5$. Similarly to $(\nabla)$ in Case 2, we need the following computation $(\nabla^i_j)$ for every $i\in\{1,2,\dots,j-1\}$.\\

 \vspace{-0.2cm}
 
 \noindent ($\nabla_j^i$) Given $w \in S_{i}$ of the form $a \cup u$ ($u\in\{0,1\}^{J_{x_j}}$), there is a deterministic circuit $E_i(w)$ of size at most $2^{O(n)}$ that outputs $(y_i(w),e_{i}(w))$, where $y_i(w)=h(n,D_w,x_{i})$ and $e_{i}(w)\in\{0,1\}$ such that $e_{i}(w)=1$ if and only if $w\in S_{i}^{\sf mist}$.\\
 
 \vspace{-0.2cm}

 Note that $(\nabla_2^1)$ is simply $(\nabla)$ in Case 2. With these subroutines we can construct a randomized circuit $B_j$ that approximates $M$ well as follows (see Algorithm \ref{algo:bj-uniform}).

 \begin{algorithm}[ht]
   \SetKwInOut{Input}{Input}
   \SetKwInOut{Advice}{Advice}
   \caption{Randomized circuit $B_j$ for $M$}\label{algo:bj-uniform}
   \Input{The input $u\in\{0,1\}^{n^{c/2}}$ for $M$ and random bit $r\in\{0,1\}$}
   \Advice{$x_1,\dots,x_j\in\{0,1\}^n$, $a\in\{0,1\}^{[n^c]\setminus J_{x_j}}$ as discussed above, $b_j=M(x_j)$, and $\Gamma$ to support the subroutines $(\nabla^i_j)$}
   Let $w=r_{x_j}(a,u)$\;
   \For{$i=1,2,\dots,j$}{
     Let $(\hat x_i,\hat z_i)=A_i(1^n,D_{w},y_{1},\dots,y_{i-1})$\;
     If $\hat x_i\ne x_i$, then \Return{the random bit $r$}\;
     \tcp{reaching this line iff $w\in S_{i}$}
     \If{$i<j$}{
       Let $(y_i(w),e_i(w))=E_i(w)$ by $(\nabla_j^i)$\; 
       If $e_1(w)=1$, then \Return{the random bit $r$}\;
       \tcp{otherwise, $x\in S_{i}\setminus S_{i}^{\sf mist}$}
     }
   }
   \tcp{reaching this line iff $w\in S_{j}$}
   \Return{$b_j$}\; 
 \end{algorithm}

 Now we analyze the complexity and correctness of the algorithm $B_j$.
 \begin{description}
 \item[(Complexity).] Let $m=n^{c/2}$ be the input length. Since every $A_i$ is computable in $\FP^{\Sigma^p_{i-1}}$ with input length $\poly(2^n, s(n))=2^{O(n)}$, and every $E_i$ is computable by a $2^{O(n)}$-size deterministic circuit, we know that $B_j\in \SIZE^{\Sigma^p_{i-1}}[2^{O(n)}]\subseteq\cSig{i}\SIZE[2^{O(n)}]$ (see Theorem \ref{thm:oracle-circuit-to-nondet-circuit}).
 \item[(Correctness).] The correctness of $B_j$ is proved similarly to Case 2. As noted in the comments appearing in the pseudo-code, for any $w = r_{x_j}(a,u)$ with $u \in \{0,1\}^{n^{c/2}}$, we can prove by induction that the algorithm reaches the end of the $i$-th iteration within the for-loop if and only if $w\in S_i\setminus S_i^{\sf mist}$ for every $i\in\{1,2,\dots,j-1\}$. Furthermore, on such inputs the algorithm reaches the last line if and only if $w\in S_j$. This means that, for an appropriate fixed bit $r^*\in\{0,1\}$, on inputs of this form the algorithm agrees with $M$ on at least $1/2$ of $w\notin S_j$ and on at least $3/5$ of $w\in S_j$. Since $|S_j \uhr_a|/2^{n^{c/2}}=2^{-O(n)}$, $B_j(\cdot, r^*)$ achieves an advantage of $2^{-O(n)}$ with $M(\cdot)$.   
 \end{description}

 \paragraph{Implementation of $(\nabla)$.} To finish the proof, we need to upper bound the circuit complexity of the computation $E_i(w)$ in $\nabla^i_j$ for $1\le i < j\le \ell$, which is used to simulate the counter-example oracle $h$ and to check if $w \in S_i^{\sf mist}$, for $w$ of the appropriate form. Let $j\in\{2,3,\dots,\ell\}$ and $i\in\{1,2,\dots,j-1\}$. Thanks to \Cref{lem:different_x}, we know that $x_i\ne x_j$. Recall that $D_w(x)\triangleq \NW_{\overline M}(w,x)$, where $w\triangleq r_{x_j}(a,u)$. Since any two distinct subsets in the combinatorial design of the Nisan-Wigderson generator have intersection size at most $n$, we get that $|J_{x_i}\cap J_{x_j}|\le n$. Notice that for  $u\in\{0,1\}^{n^{c/2}}$, $h(n,D_w,x_i)$ for $w=r_{x_j}(a,u)$ only depends on $w|_{J_{x_i}}$, which contains at most $n$ bits of $u$. As a result, we can hard-wire the answers of all $2^{n}$ cases and construct a $2^{O(n)}$ circuit to compute $y_i(w) = h(n,D_w,x_i)$. The value $e_i(w)$ can be computed in a similar way. Overall, we obtain that $E_i(w) \eqdef (y_i(w),e_i(w))$ can be computed on all relevant inputs by a (non-uniform) deterministic circuit of size $2^{O(n)}$.

 \paragraph{Wrapping things up.} By the case analysis above, for every $c\ge 2$ and every sufficiently large $n$, there always exists a  $\cSig{i}\SIZE[2^{O(n)}]$ circuit $B$ with input length $m=n^{c/2}$ such that
 \[
   \Pr_{u\in\{0,1\}^m}[B(u)=M(u)]\ge \frac{1}{2}+2^{-O(n)}. 
 \]
 By taking $c\ge 2/\delta$ we get, that for infinitely many values of $n$, $L(M)\cap\{0,1\}^n$ can be approximated with advantage at least $2^{-n^{\delta}}$ by $\cSig{i}\SIZE[2^{n^{\delta}}]$ circuits. This leads to a contradiction, since under $\Tpv^i\vdash\LB^i(M,s,m,n_0)$ and from the soundness of $\Tpv^i$ we obtain that, for sufficiently large $n$, $M$ cannot be approximated with advantage $2^{-n^{\delta}}$ by $\cSig{i}\SIZE[2^{n^{\delta}}]$ circuits. 
\end{proof}

As pointed out in \Cref{remark:general_approx} and \Cref{remark:general_approx2}, we note that assuming the provability of \emph{worst-case} circuit lower bound, the approximation of the machine $M\in\cPi{i}\TIME[2^{n^{o(1)}}]$ by deterministic $\Sigma^p_{i-1}$-oracle circuits of small size also works for any sequence $\{f_n\}_{n\ge 1}$ of functions computable in $\cPi{i}\SIZE[2^{n^{o(1)}}]$. Concretely, instead of defining $D_w(x)\eqdef\NW_{\overline{M}}(w,x)$, we define $D_w(x)\eqdef\NW_{f'}(w,x)$ for $f'(x)\eqdef \lnot f(x)$, and proceed the argument as above. By padding dammy input bits as in \Cref{remark:general_approx2}, we obtain the following corollary. 

\begin{corollary}\label{cor: consequence-of-provability-in-PH}
  Fix $i\ge 1$. Let $M$ be a $\cPi{i}\TIME[2^{n^{o(1)}}]$ machine and $\LBw^i(M,s,n_0)$ be the worst-case lower bound sentence defined as above. Assume that for some $\delta\in(0,1)\cap\bbQ$ and $s(n)=2^{n^\delta}$, $\Tpv^i$ proves $\LB^i_w(M,s,n_0)$. Then for every constant $\epsilon\in(0,1)$, every sufficiently large $n$, and circuit $C\in\cPi{i}\SIZE[2^{n^\delta}]$, there is a $\Sigma^p_{i-1}$-oracle circuit $D$ of size $2^{n^\epsilon}$ such that 
  \[
  \Pr_{x\sim\{0,1\}^n}\Big[C(x)=D(x)\Big]\ge \frac{1}{2}+\frac{1}{2^{n^\epsilon}}.   
  \]   
\end{corollary}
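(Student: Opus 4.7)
The plan is to mirror the proof of \Cref{thm:unprovability-in-PH} with two modifications: replace the role of the target machine $M$ by the arbitrary $\Pi_i$-circuit $C$ to be approximated, and include a padding step so that the approximation produced on the natural NW input length $N^{c/2}$ descends to the arbitrary input length $n$ in the statement. The starting point is the witnessing lemma (\Cref{lmm:witnessing-for-uniform-vs-nonuniform}): applied to $\LBw^i(M,s,n_0)$, which follows from the hypothesis as in \Cref{fact:ave-to-worst}, it yields $\FP^{\Sigma^p_{i-1}}$-computable algorithms $A_1,\dots,A_\ell$ that, given any $\Sigma_i$-circuit $D$ on $N$ bits of size at most $2^{N^\delta}$, produce through a constant-round KPT-style exchange a sequence of candidate inputs $x_1,\dots,x_\ell\in\{0,1\}^N$ at least one of which witnesses $D(x_j)\ne M(x_j)$. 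This guarantee is uniform in $D$.

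Given a target $C$, I would fix $c\ge 2/\epsilon$ (with a little slack for constants) and set $N=\lceil n^{2/c}\rceil$, padding $C$ by dummy input bits to obtain $C'$ on $N^{c/2}$ bits. I would then instantiate the Nisan-Wigderson generator from \Cref{sec:NW} with hard function $\overline{C'}$, producing $\Sigma_i$-circuits $D_w(x)=\overline{C'}(w|_{J_x})$ on $\{0,1\}^N$ of size at most $2^{N^{o(1)}}\le 2^{N^\delta}$ for sufficiently large $N$. The crucial observation is that the witnessing algorithms do not depend on which function is placed as the NW hard function: whenever the game produces an $x_j$ with $D_w(x_j)\ne M(x_j)$, the NW identity $D_w(x_j)=\overline{C'}(w|_{J_{x_j}})$ forces $C'(w|_{J_{x_j}})=M(x_j)$, so a single hardcoded bit suffices as the advice needed to exploit the mistake.

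From here, the case analysis of \Cref{thm:unprovability-in-PH} carries over almost verbatim. I would recursively define the nested sets $S_j$ and $S_j^{\sf mist}$ from the $A_j$ outputs, use the counting lemma to fix a good partial assignment $a\in\{0,1\}^{[N^c]\setminus J_{x_j}}$, and build randomized circuits $B_j(u,r)$ that, given $u\in\{0,1\}^{N^{c/2}}$ and a single random bit $r$, match $C'(u)$ with probability at least $1/2+2^{-O(N)}$. The $A_i$ contribute $\FP^{\Sigma^p_{i-1}}$ subcircuits on inputs of length $2^{O(N)}$, and the subroutines simulating the counter-example oracle $h$ on inputs of the form $w=r_{x_j}(a,u)$ remain of size $2^{O(N)}$ thanks to the NW design's intersection bound $|J_{x_i}\cap J_{x_j}|\le N$; overall $B_j\in\SIZE^{\Sigma^p_{i-1}}[2^{O(N)}]$. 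Fixing $r$ optimally and then fixing the padding bits by averaging delivers a deterministic $\Sigma^p_{i-1}$-oracle circuit on $\{0,1\}^n$ of size $2^{O(n^{2/c})}\le 2^{n^\epsilon}$ that agrees with $C$ on at least a $1/2+2^{-n^\epsilon}$ fraction of inputs.

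There is no genuinely new obstacle beyond \Cref{thm:unprovability-in-PH}. The only point requiring verification is that every quantitative bound in the case analysis depends only on (i) the uniform witnessing guarantee of the $A_j$, (ii) the NW decoding identity $D_w(x)=f(w|_{J_x})$ together with the intersection bound of the design, and (iii) the sizes and quantifier structures of the involved circuits; none of these are tied to the identity of the hard function $f$. The original target $M$ survives only as the hardcoded bit $M(x_j)$ used as advice, and no further structural role of $M$ is required, which is exactly what makes the upgrade from Theorem 6.1 to this corollary mechanical.
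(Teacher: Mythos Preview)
Your proposal is correct and matches the paper's approach essentially verbatim: the paper's proof of the corollary is just the one-line observation that in the argument for \Cref{thm:unprovability-in-PH} one may replace $\NW_{\overline{M}}$ by $\NW_{\overline{C'}}$ for an arbitrary $\Pi_i$-circuit $C'$ (since the witnessing guarantee and the $\nabla$ subroutines depend only on the generic NW structure, not on the identity of the hard function), followed by the padding trick of \Cref{remark:general_approx2}. Your identification of the advice bit as $M(x_j)$ and of the role of the design intersection bound is exactly what the paper relies on.
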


\subsubsection{Relaxing the average-case complexity parameter}
\label{sec: relaxing-avg-case-parameter}

Recall that in \Cref{sec:hard_amplif}, we showed how to obtain the unprovability of circuit lower bounds with a weaker average-case complexity parameter via hardness amplification. This will also be the case here, since the hardness amplification in \cite{DBLP:journals/siamcomp/HealyVV06} generalises to all levels in the polynomial hierarchy (see \Cref{thm:hardamp-ppoly}). 

\begin{theorem}\label{thm:unprovability-in-PH-avg}
  Fix $i\ge 1$. Let $M$ be a $\cPi{i}\TIME[t(n)]$ machine for some constructive function $t(n)=2^{n^{o(1)}}$ and $\LB^i(M,s,m,n_0)$ be defined as in \Cref{thm:unprovability-in-PH}. Then for every constant $\delta\in\bbQ\cap(0,1)$, $s(n)\eqdef 2^{n^\delta}$, $m(n)\eqdef 2^n/n$, and $n_0\in\bbN$, $\Tpv^i\nvdash\LB^i(M,s,m,n_0)$. 
\end{theorem}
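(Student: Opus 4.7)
The plan is to lift the argument of \Cref{thm:unprov_lb2} from $\Tpv^1$ to $\Tpv^i$, mirroring the relationship between \Cref{thm:unprov_lb} and \Cref{thm:unprovability-in-PH}: we already established the stronger average-case unprovability (with $m(n) = 2^n/2 - 2^n/2^{n^\delta}$) in \Cref{thm:unprovability-in-PH}, and we now use hardness amplification in the polynomial hierarchy (\Cref{thm:hardamp-ppoly}) as a black box to relax the hardness parameter to $m(n) = 2^n/n$.

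Assume toward a contradiction that $\Tpv^i \vdash \LB^i(M,s,m,n_0)$ with $s(n) = 2^{n^\delta}$ and $m(n) = 2^n/n$. First, by the soundness of $\Tpv^i$, the language $L(M)\cap\{0,1\}^n$ cannot be $(1-1/n)$-approximated by $\cSig{i}\SIZE[2^{n^\delta}]$ circuits for all sufficiently large $n$. Using \Cref{thm:oracle-circuit-to-nondet-circuit}, any $\Sigma^p_{i-1}$-oracle circuit of size $s'(n)$ is simulable by a $\Sigma_i$-circuit of size $\poly(s'(n))$, so for a fixed $\delta' < \delta$ the same hardness holds against $\Sigma^p_{i-1}$-oracle circuits of size $2^{n^{\delta'}}$.

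Second, from $\Tpv^i \vdash \LB^i(M,s,m,n_0)$ we trivially obtain $\Tpv^i \vdash \LBw^i(M,s,n_0)$. By \Cref{cor: consequence-of-provability-in-PH}, for every $\epsilon \in (0,1)$, every sufficiently large $n$, and every circuit $C \in \cPi{i}\SIZE[2^{n^\delta}]$, there exists a $\Sigma^p_{i-1}$-oracle circuit $D$ of size $2^{n^\epsilon}$ such that $\Pr_{x\sim\{0,1\}^n}[C(x) = D(x)] \geq 1/2 + 2^{-n^\epsilon}$.

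Third, we apply the hardness amplification \Cref{thm:hardamp-ppoly} (in its $\cPi{i}$-variant) with $f_n \eqdef L(M)\cap\{0,1\}^n$, $s_1(n) = \poly(t(n)) = 2^{n^{o(1)}}$, and $s_2(n) = 2^{n^{\delta'}}$. The first step furnishes the required hypothesis, and the theorem produces a sequence $\{h_\ell\}_{\ell\ge 1}$ with $h_\ell \in \cPi{i}\SIZE[\poly(\ell)\cdot s_1(\ell^\gamma)] \subseteq \cPi{i}\SIZE[2^{\ell^{o(1)}}]$ such that for sufficiently large $\ell$ every $\Sigma^p_{i-1}$-oracle circuit $B$ of size $\leq s_2(\ell^\gamma)^\gamma = 2^{\gamma \ell^{\gamma\delta'}}$ satisfies $\Pr_y[h_\ell(y) = B(y)] \leq 1/2 + 2^{-\gamma \ell^{\gamma\delta'}}$. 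Now choose any fixed $\epsilon \in (0,\gamma\delta'\cdot\gamma)$ (e.g.\ $\epsilon = \gamma^2\delta'/2$) and apply the second step to $h_\ell$, which lies in $\cPi{i}\SIZE[2^{\ell^\delta}]$ for large $\ell$: we obtain a $\Sigma^p_{i-1}$-oracle circuit $D$ of size $2^{\ell^\epsilon} \leq 2^{\gamma \ell^{\gamma\delta'}}$ that matches $h_\ell$ on a $1/2 + 2^{-\ell^\epsilon} > 1/2 + 2^{-\gamma \ell^{\gamma\delta'}}$ fraction of inputs, contradicting the hardness of $h_\ell$.

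The only real work is bookkeeping of parameters across the three ingredients (soundness, worst-case-to-average-case from provability, and hardness amplification). The main ``obstacle'' is verifying that \Cref{thm:hardamp-ppoly} applied at the $\cPi{i}$-side yields a target $h_\ell$ in $\cPi{i}\SIZE[2^{\ell^{o(1)}}]$ whose hardness against $\Sigma^p_{i-1}$-oracle circuits truly dominates the advantage produced by \Cref{cor: consequence-of-provability-in-PH}; this is handled by choosing $\epsilon$ smaller than $\gamma^2\delta'$, exactly as in the proof of \Cref{thm:unprov_lb2}. We defer the detailed verification of the hardness amplification variant to \Cref{sec: lmms for hardness amp}.
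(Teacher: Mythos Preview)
Your proposal is correct and follows essentially the same three-step argument as the paper's proof: soundness gives $(1-1/n)$-hardness of $L(M)$ against $\Sigma^p_{i-1}$-oracle circuits, provability of $\LBw^i$ yields the universal approximation of \Cref{cor: consequence-of-provability-in-PH}, and hardness amplification via \Cref{thm:hardamp-ppoly} applied to $L(M)$ produces an $h_\ell$ that contradicts this approximation. Your explicit introduction of $\delta'<\delta$ to absorb the polynomial overhead of \Cref{thm:oracle-circuit-to-nondet-circuit} is a bit more careful than the paper's presentation (which simply writes $s_2(n)=2^{n^\delta}$ directly for the oracle-circuit hardness), but the structure and parameter choices are otherwise the same.
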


\begin{proof}
  Towards a contradiction, we assume that $\Tpv^i\vdash\LB^i(M,s,m,n_0)$ and argue as follows. 
  \begin{enumerate}
    \item Under the provability of the almost-everywhere average-case lower bound $\LB(M,s,m,n_0)$, we obtain from the soundness of $\Tpv^i$ that (in the standard model) for every sequence $\{E_n\}_{n\ge 1}$ of $\Sigma^p_{i-1}$-oracle circuits of size $\le 2^{n^\delta}$ and $n\ge n_0$, we have 
    \[
    \Pr_{x\sim\{0,1\}^n}[M(x)=E_n(x)]\le 1-\frac{1}{n}.   
    \] 
    \item From the provability of $\LB(M,s,m,n_0)$, under reasonable formalization, we can also show that $\LBw^i(M,s,n_0)$ is provable in $\Tpv^i$. We then get from \Cref{cor: consequence-of-provability-in-PH} that for every $\epsilon\in(0,1)$, every sufficiently large $n$, and circuit $C\in\cPi{i}\SIZE[2^{n^\delta}]$, there is a $\Sigma^p_{i-1}$-oracle circuit $D$ of size $2^{n^\epsilon}$ such that 
    \begin{equation}\label{equ: universal-approx-in-ph}
      \Pr_{x\sim\{0,1\}^n}[C(x)=D(x)]\ge\frac{1}{2}+\frac{1}{2^{n^\epsilon}}. 
    \end{equation}
    \item Assume that $n$ is sufficiently large and $f_n:\{0,1\}^n\to\{0,1\}$ is defined as $f(x)=M(x)$. Note that this function satisfies the hypothesis of \Cref{thm:hardamp-ppoly} for $s_1(n)=2^{n^{o(1)}}$ and $s_2(n)=2^{n^\delta}$, hence we can obtain a function $h_\ell:\{0,1\}^\ell\to\{0,1\}$ for some $\ell=O(n^2)$ such that for \emph{every} $\Sigma^p_{i-1}$-oracle circuit $D$ of size $2^{\gamma\ell^{\gamma\delta}}$, it holds that 
    \[
    \Pr_{x\in\{0,1\}^\ell}[h_\ell(x)=D(x)] \le \frac{1}{2}+\frac{1}{2^{\gamma\ell^{\gamma\delta}}}.
    \] 
    By setting $\epsilon=(1/2)\cdot\delta\cdot\gamma$, this violates the upper bound in \Cref{equ: universal-approx-in-ph} when $n$ is sufficiently large. 
  \end{enumerate}
  As a result, we know that $\Tpv^i\nvdash\LB^i(M,s,m,n_0)$ for every $i\ge 1$. 
\end{proof}

\newcommand{\Error}{\mathsf{Error}}

\subsection{Unprovability of lower bound sentences of higher quantifier complexity}\label{sec:circuit-vs-circuit}

In this section, we extend the unprovability results to sentences of higher quantifier complexity that formalize separations between non-uniform circuit classes. Recall that $\cSig{i}\SIZE[s(n)]$ and $\cPi{i}\SIZE[s(n)]$ refer to $\Sigma_i$-circuits and $\Pi_i$-circuits of size $s(n)$, respectively. Let $\LB^i(s_1,s_2,m,n_0)$ denote the following  $\Lpv$-sentence: 
\begin{align*}
&\forall n\in\Log\Log\text{ with }n\ge n_0~\exists C\in\cPi{i}\SIZE[s_1(n)]~\forall D\in\cSig{i}\SIZE[s_2(n)] \\ 
&\exists m=m(n)\mbox{ distinct $n$-bit strings }x^1,\dots,x^m \mbox{ s.t. } \Error(C,D,x^i)\mbox{ for all }i\in[m],
\end{align*}
where $\Error(C,D,x)$ means that the circuits $C$ and $D$ do not agree on the input $x$. It's easy to see that $\mathsf{Error}(C,D,x)$ is a disjunction of a $\Sigma_i^b$-formula and a $\Pi_i^b$-formula. Observe that, already for $i = 1$, $\LB^i(s_1,s_2,m,n_0)$ is a $\forall \Sigma^b_4$-sentence.

\begin{theorem}\label{thm:unprovable-ckt-vs-ckt}
    For every $i\ge 1$, $n_0\in\bbN$, $\delta\in\mathbb{Q}\cap (0,1)$ and $d\ge 1$, $\TPV^i\nvdash\LB^i(s_1, s_2,m,n_0)$, where $s_1(n)=n^d$, $s_2(n)=2^{n^\delta}$ and $m(n)=2^n/2-2^n/2^{n^\delta}$.
\end{theorem}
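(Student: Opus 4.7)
The plan is to combine the argument of \Cref{sec:general-ph} (which handles a fixed target machine) with the game-theoretic witnessing theorem for oblivious falsifiers (\Cref{thm:witnessing-general}) in order to absorb the additional outermost existential quantifier over the hard circuit $C$. Assume towards contradiction that $\TPV^i \vdash \LB^i(s_1,s_2,m,n_0)$ for $s_1(n)=n^d$, $s_2(n)=2^{n^\delta}$, and $m(n)=2^n/2-2^n/2^{n^\delta}$. First, I would observe that this immediately yields $\TPV^i \vdash \LBw^i(s_1,s_2,n_0)$, the corresponding worst-case sentence obtained by replacing the existential quantification over $m$ distinct mistakes by the existence of a single mistake. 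Next, using \Cref{thm:universal_theory}, I lift this to provability in the universal theory $\UTpv^i$; after introducing Skolem functions $f_{\phi_1'}$ and $f_{\phi_2'}$ for the inner $\Sigma^b_{i-1}$- and $\Pi^b_{i-1}$-parts of $\Error(C,D,x)$ exactly as in \Cref{lmm:ulb-equiv-lbw}, the sentence simplifies to a $\forall\exists\forall\exists\forall$-formula of the shape $\forall n\,\exists C\,\forall D\,\exists x\,\exists z\,\forall y\;\psi$ with $\psi$ quantifier-free in $\LUT^i$.

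Now I apply \Cref{thm:witnessing-general} to this formula in $\UTpv^i$. This produces a constant $\ell\in\bbN$ and a sequence of $\LUT^i$-strategies (with ancillary information) $\tau^{\ttt}_1,\dots,\tau^{\ttt}_\ell$ for the truthifier in the evaluation game. Concretely, each $\tau^{\ttt}_v$ consists of two $\LUT^i$-terms: one, given $n$ and the transcripts of previous games, produces a candidate hard circuit $C_v\in\cPi{i}\SIZE[n^d]$; the other, given additionally the falsifier's reply $D_v\in\cSig{i}\SIZE[2^{n^\delta}]$, produces a candidate mistake $x_v$ together with a $\Pi^b_{i-1}$-witness $z_v$. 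By \Cref{thm:lpvi-term-complexity} (and the complexity guarantee for $\UTpv^i$-terms), each of these functions is computable in $\FP^{\Sigma^p_{i-1}}$ on inputs of length $\poly(2^n)$. Against any oblivious falsifier strategy, after at most $\ell$ sequential plays of the evaluation game there is some round $v$ in which $x_v$ is a genuine mistake of $C_v$ versus $D_v$.

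The heart of the argument is then to fix a \emph{specific} oblivious falsifier strategy designed around an arbitrary target $\Pi_i$-circuit $E_m\colon\{0,1\}^m\to\{0,1\}$ of size $\poly(m)$ on input length $m=n^{c/2}$ (for a large constant $c$ depending on $\delta$), and to extract from the truthifier's response a $\Sigma^p_{i-1}$-oracle circuit of size $2^{m^{o(1)}}$ that non-trivially approximates $E_m$. Adapting the Pich--Santhanam template, given the truthifier's proposed $C_v$ the falsifier answers with the circuit $D_{w_v}(x)\triangleq \NW_{\overline{E_m}}(w_v,x)$, where the seed $w_v$ is chosen via a canonical construction that takes $C_v$ into account (e.g., by using some bits of a fixed string $a$ outside a design block indexed by a ``popular'' mistake); since $\overline{E_m}\in\cSig{i}\SIZE[\poly(m)]$ and the design parameters give seed length $n^c$, these $D_{w_v}$ lie in $\cSig{i}\SIZE[2^{n^\delta}]$. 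Given the truthifier's candidate mistake $x_v$, the falsifier plays the lexicographically first $y_v$ witnessing $\lnot\phi_1'(C_v,D_{w_v},x_v,y_v)$, if such a $y_v$ exists. After fixing popular values $x^*_v$ for each round (by averaging) and a good partial assignment $a$ to the NW generator seed (via \Cref{lmm:counting}), we recurse through an $\ell$-level case analysis as in the proof of \Cref{thm:unprovability-in-PH}: in round $v$ either the set of seeds yielding a true mistake is relatively dense, in which case we obtain the desired approximating circuit $B_v\in\SIZE^{\Sigma^p_{i-1}}[2^{O(n)}]$; or not, in which case the interaction continues to round $v+1$ with strictly smaller density but still bounded below by $2^{-O(vn)}$.

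The main obstacle is exactly what the introduction flags as Item (2) of \Cref{sec:general-ph:intuition}: because the truthifier may output a \emph{different} candidate hard circuit $C_v$ in each sequential game, the family $\{D_{w_v}\}_v$ of falsifier attacks is no longer a single NW family, and one has to make sure that the overlap property of the design still suffices to implement the analogue of subroutine $(\nabla)$, which simulates the counter-example oracle and the membership test in $S_v^{\sf mist}$ within deterministic size $2^{O(n)}$. The resolution is to choose the mapping $C\mapsto w(C)$ so that the design blocks selected across rounds are consistent with a single hard-wireable outer seed $a$, which is possible because the truthifier's outputs $C_1,\dots,C_\ell$ in the game on input $n$ are determined by only $\FP^{\Sigma^p_{i-1}}$-computable functions of the transcripts and of $E_m$; then \Cref{lem:different_x}-style arguments guarantee that the indices $x^*_v$ across rounds are distinct, so pairwise design intersections of size $\leq n$ suffice to hard-code the necessary look-up tables of size $2^{O(n)}$. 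Combining the resulting $\cSig{i}\SIZE[2^{m^{o(1)}}]$-approximation of $E_m$ with the hardness amplification of \Cref{thm:hardamp-ppoly} (applied on input length $n$ to a hard function built from $E_m$), we contradict the soundness consequence that every $\Pi_i$-circuit of size $n^d$ disagrees with every $\Sigma_i$-circuit of size $2^{n^\delta}$ on more than a $m(n)/2^n$ fraction of inputs, completing the proof.
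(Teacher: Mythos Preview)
Your high-level plan (pass to $\UTpv^i$, apply \Cref{thm:witnessing-general}, define an oblivious falsifier based on an NW generator, and run an $\ell$-round case analysis) matches the paper. However, the step you flag as ``the main obstacle'' is exactly where your proposal breaks down, and your resolution is not the paper's.

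You propose to let the falsifier reply with $D_{w_v}(x)=\NW_{\overline{E_m}}(w_v,x)$ for a seed $w_v$ depending on $C_v$, and then invoke a ``\Cref{lem:different_x}-style argument'' to conclude that the popular mistakes $x^*_v$ are \emph{distinct} across rounds. That inference fails here: in round $i$ we only know $D_{w}(x_i)=C_i(x_i)$ (from $w\notin S_i^{\sf mist}$), while in round $j$ we know $D_{w}(x_j)\neq C_j(x_j)$ (from $w\in S_j^{\sf mist}$). If $C_i\neq C_j$, these two statements are not contradictory even when $x_i=x_j$, so you cannot force $x_i\neq x_j$. Without distinct design indices, the $(\nabla)$ subroutine cannot be implemented with a look-up table of bounded size, and the argument collapses. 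Your alternative (``choose $C\mapsto w(C)$ so the design blocks are consistent with a single outer seed $a$'') is too vague to rescue this; in particular, the falsifier's strategy must be oblivious and the seed in the final approximator must be tied to the \emph{input} $u$, not to the truthifier's choice of $C$.

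The paper's fix is to bake $C$ into the \emph{design index} rather than the seed: it sets $D_{w,C}(x)\eqdef \NW_{\overline{M}}(w,\,x\|C)$, using an $(n^c,n^{c/2},O(n^d))$-design indexed by strings of length $n+n^d$. The falsifier's oblivious reply to any $C$ is then $D_{w,C}$ for the \emph{same} seed $w$. One now proves (\Cref{lem:different_Cx}) that the \emph{pairs} $(C_i,x_i)$ are distinct across rounds, hence the NW rows $x_i\|C_i$ are distinct and overlap in at most $O(n^d)$ coordinates. This makes each $(\nabla^i_j)$ implementable by a deterministic circuit of size $2^{O(n^d)}$ (not $2^{O(n)}$ as you wrote), and the final approximator lives in $\SIZE^{\Sigma^p_{i-1}}[2^{O(n^d)}]\subseteq\cSig{i}\SIZE[2^{O(n^d)}]$; taking $c>2d/\delta$ then directly contradicts soundness. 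Finally, hardness amplification (\Cref{thm:hardamp-ppoly}) is \emph{not} used for \Cref{thm:unprovable-ckt-vs-ckt}: the stated $m(n)=2^n/2-2^n/2^{n^\delta}$ is obtained straight from the NW-based approximation; amplification only enters in the subsequent corollary relaxing $m$ to $2^n/n$.
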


\subsubsection{Witnessing lemma for lower bound sentences}

Similar to the technique we used in the previous section, we need to apply the witnessing theorem to the lower bound sentences. We define the worst-case version of this lower bound to be the following formula $\LBw^i(s_1,s_2,n_0)$. 
\begin{align*}
&\forall n\in\Log\Log\text{ with }n\ge n_0~\exists C\in\cPi{i}\SIZE[s_1(n)]~\forall D\in\cSig{i}\SIZE[s_2(n)] \\ 
&\exists x\in\{0,1\}^n \mbox{ s.t. } \Error(C,D,x).
\end{align*}
Let $\phi_1(C,D,x)\triangleq (C(x)=1\land D(x)=0)$ be a $\Pi_i^b$-formula and $\phi_2(C,D,x)\triangleq (C(x)=0\land D(x)=1)$ be a $\Sigma_i^b$-formula. Note that $\Error(C,D,x)\triangleq \phi_1(C,D,x)\lor\phi_2(C,D,x)$. Assume that 
\begin{align*}
    \phi_1(C,D,x)&\triangleq\forall y\in\{0,1\}^{O(s(n))}\phi_1'(C,D,x,y), \\ 
    \phi_2(C,D,x)&\triangleq\exists z\in\{0,1\}^{O(s(n))}\phi_2'(C,D,x,z),
\end{align*}
where $\phi_1'$ is a $\Sigma_{i-1}^b$-formula and $\phi_2'$ is a $\Pi_{i-1}^b$-formula. Note that the lengths of $y$ and $z$ are bounded by $O(s(n))$ since they are parts of the computation of the circuits $C$ and $D$. 

\begin{lemma}\label{lmm:equiv-lb'-ulb}
    Let $\ULBw^i(s_1,s_2,n_0)$ be a $\forall \Sigma_4^b$-sentence in $\calL(\Upv^i)$ defined as follows:
    \begin{align*}
    \ULBw^i(s_1,s_2,n_0)~\triangleq~&\forall n\in\Log\Log\text{ with }n\ge n_0,~\exists\text{ circuit }C\in\cPi{i}\SIZE[s_1(n)] \\
    &\forall\text{ circuit }D\in\cSig{i}\SIZE[s_2(n)], \\ 
    &\exists x\in\{0,1\}^n~\exists z\in\{0,1\}^{O(s(n))}~\forall y\in\{0,1\}^{O(s(n))} \\ 
    &\left(f_{\phi_1'}(C,D,x,y)=1\lor f_{\phi_2'}(C,D,x,z)=1\right).
    \end{align*}
    Then $\Upv^i$ proves $\LBw^i(s_1,s_2,n_0)\leftrightarrow\ULBw^i(s_1,s_2,n_0)$. Moreover, $\UTpv^i$ proves $\LBw^i(s_1,s_2,n_0)\leftrightarrow\ULBw^i(s_1,s_2,n_0)$. 
\end{lemma}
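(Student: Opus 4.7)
The plan is to first establish the equivalence in $\Upv^i$ by manipulating $\LBw^i(s_1,s_2,n_0)$ using pure first-order logic plus the defining axioms for the new function symbols; the $\UTpv^i$ half then follows immediately from part (\ref{enum: provable translate}) of \Cref{thm:universal_theory}, which tells us that every $\Lpv^i$-sentence provable in $\Upv^i$ is also provable in $\UTpv^i$.

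For the $\Upv^i$ proof, I would start by unfolding $\Error(C,D,x)$ in $\LBw^i$. By the very choice of $\phi_1'$ and $\phi_2'$, we have (as a matter of logical equivalence in $\Lpv$)
\[
\Error(C,D,x) \;\equiv\; \forall y \le O(s(n))\,\phi_1'(C,D,x,y) \;\lor\; \exists z \le O(s(n))\,\phi_2'(C,D,x,z).
\]
Applying \Cref{lmm:defining-axiom-f} to the $(\Pi^b_{i-1}\cup\Sigma^b_{i-1})$-formulas $\phi_1'$ and $\phi_2'$, we can replace each occurrence of $\phi_1'(\cdots)$ with $f_{\phi_1'}(\cdots)=1$ and each occurrence of $\phi_2'(\cdots)$ with $f_{\phi_2'}(\cdots)=1$, with the equivalence being provable in $\Upv^i$. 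This rewriting leaves the quantifier prefix untouched and only changes the innermost quantifier-free core.

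The remaining step is a purely logical one: the classical tautology
\[
\bigl(\forall y\;A(y)\bigr) \lor \bigl(\exists z\;B(z)\bigr) \;\leftrightarrow\; \exists z\;\forall y\;\bigl(A(y) \lor B(z)\bigr),
\]
instantiated with $A(y)\equiv f_{\phi_1'}(C,D,x,y)=1$ and $B(z)\equiv f_{\phi_2'}(C,D,x,z)=1$, lets us push the existential $z$ outside the disjunction and the universal $y$ inside. The bounds on $y$ and $z$ (which are $O(s(n))$-term bounds independent of each other) cause no trouble, since the tautology is provable in first-order logic even when the quantifiers are bounded. Merging the resulting $\exists z$ with the already-present $\exists x$ yields precisely the quantifier shape $\exists C\,\forall D\,\exists x\,\exists z\,\forall y$ of $\ULBw^i(s_1,s_2,n_0)$.

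No genuine obstacle is expected here; the only mild care needed is to ensure that the bounding terms on $y$ and $z$ are syntactically the same in $\LBw^i$ and $\ULBw^i$ (so that the quantifier-swap step is literally valid and not just ``up to trivial padding''). Finally, since every sentence provable in $\Upv^i$ is provable in $\UTpv^i$ by \Cref{thm:universal_theory}, the moreover clause is immediate, completing the proof of the lemma.
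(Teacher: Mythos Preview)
Your argument is correct and matches the paper's approach: unfold $\Error$, apply \Cref{lmm:defining-axiom-f} to replace $\phi_1',\phi_2'$ by $f_{\phi_1'}=1,f_{\phi_2'}=1$, use the logical equivalence $(\forall y\,A)\lor(\exists z\,B)\leftrightarrow \exists z\,\forall y\,(A\lor B)$, and then invoke \Cref{thm:universal_theory} for $\UTpv^i$. One small fix: the label \texttt{enum: provable translate} points to item~(iii) of \Cref{thm:universal_theory} (transfer from $\Tpv^i$, for $\Lpv$-sentences), whereas the property you actually describe and need is item~(ii) (transfer from $\Upv^i$, for $\Lpv^i$-sentences), since $\ULBw^i$ uses the symbols $f_{\phi_1'},f_{\phi_2'}\in\Lpv^i\setminus\Lpv$.
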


\begin{proof}
    The provability in $\Upv^i$ follows from the provability of the defining axioms for $f_\alpha$ (see Lemma \ref{lmm:defining-axiom-f}). In turn, the provability in $\UTpv^i$ follows from \Cref{thm:universal_theory}. 
\end{proof}

\begin{lemma}\label{lmm:kpt-for-lb'}
    Assume that $\TPV^i\vdash\LB^i(s_1,s_2,m,n_0)$. There is an integer $\ell\in\bbN$ and $\FP^{\Sigma^p_{i-1}}$ algorithms $P_1,Q_1,P_2,Q_2,\dots,P_\ell,Q_\ell$ such that the following condition holds.\footnote{As in the statement of \Cref{lmm:witnessing-for-uniform-vs-nonuniform}, the input length of these algorithms is of order $N = 2^n$, since in our formalisation $n \in \Log \Log$. In order to be succinct, we simply write $1^n$ as one of the inputs, since $n$ is the key parameter for us.}
    
    Let $n>n_0$, $g$ be a function that maps a $\cPi{i}\SIZE[s_1(n)]$-circuit to a $\cSig{i}\SIZE[s_2(n)]$ circuit, and $D_C\triangleq g(C)$. Let $h:(n,C,D,x)\mapsto y$ be the function such that $y$ is the lexicographic first string in $\{0,1\}^{O(s(n))}$ such that $\neg \phi_1'(C,D,x,y)$ holds or $0$ if such a string does not exist. Let  
    \[
    \begin{matrix}
    P_1(1^n)= C_1 & 
    Q_1(1^n,D_{C_1})= (x_1,z_1) & 
    h(n,C_1,D_{C_1},x_1)=y_1 \\
    
    P_2(1^n,D_{C_1},y_1)=C_2 & 
    Q_2(1^n,D_{C_1},D_{C_2},y_1)=(x_2,z_2) & 
    h(n,C_2,D_{C_2},x_2)=y_2 \\

    \vdots & \vdots & \vdots \\
    
    P_\ell(1^n,D_{C_{1\dots\ell-1}},y_{1\dots\ell-1})=C_\ell & 
    Q_\ell(1^n,D_{C_{1\dots\ell}},y_{1\dots\ell-1})=(x_\ell,z_\ell) & 
    h(n,C_\ell,D_{C_{\ell}},x_\ell)=y_\ell\;.
    \end{matrix}
    \]
    Then there is $k \in [\ell]$ such that $\Error(C_k,D_{C_k},x_k)$ holds.
\end{lemma}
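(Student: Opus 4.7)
The plan is to reduce the hypothesis to a single application of the game-theoretic witnessing theorem for oblivious falsifiers (\Cref{thm:witnessing-general}) in the auxiliary universal theory $\UTpv^i$. Since the average-case sentence trivially entails the worst-case sentence (any $m \ge 1$ mistakes witness one mistake), the assumption $\TPV^i \vdash \LB^i(s_1,s_2,m,n_0)$ gives $\TPV^i \vdash \LBw^i(s_1,s_2,n_0)$. By item~(3) of \Cref{thm:universal_theory} this lifts to $\UTpv^i \vdash \LBw^i(s_1,s_2,n_0)$, and by \Cref{lmm:equiv-lb'-ulb} we obtain
\[
\UTpv^i \vdash \ULBw^i(s_1,s_2,n_0),
\]
whose inner formula $\varphi(n) \triangleq \exists C \le t_1 \, \forall D \le t_2 \, \exists (x,z) \le t_3 \, \forall y \le t_4 \, \phi$ is in the quantifier shape required by \Cref{thm:witnessing-general} with $k=2$ (the two existential blocks $\exists x\,\exists z$ being merged into one).

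Since $\UTpv^i$ is universal and closed under if-then-else, \Cref{thm:witnessing-general} applied to $\varphi$ provides a constant $\ell \in \bbN$ and $\LUT^i$-strategies $\tau_1^{\ttt}, \ldots, \tau_\ell^{\ttt}$ (with ancillary information) for the truthifier. Each $\tau_k^{\ttt}$ consists of two $\LUT^i$-terms: one describing the round-1 move $C$ (as a function of $n$ and of the transcripts of the prior $k-1$ evaluation games), and one describing the round-2 move $(x,z)$ (as a function of $n$, $D$, and those transcripts). By item~(4) of \Cref{thm:universal_theory}, the standard-model interpretations of these terms are $\FP^{\Sigma^p_{i-1}}$-computable on inputs of length $\poly(2^n)$; these are the algorithms $P_k,Q_k$ claimed by the lemma. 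The listed input signature suffices because each prior-game value $C_j,x_j,z_j$ can be regenerated internally by running $P_j,Q_j$ on $1^n, D_{C_1},\dots,D_{C_{j-1}},y_1,\dots,y_{j-1}$.

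It remains to fix the falsifier strategy and read off the conclusion. Given an arbitrary $g$ and the counter-example function $h$ from the lemma, define a falsifier strategy $\tau^{\ttf}_{g,h}$ for the evaluation game of $\varphi(n)$ on the board $(\bbN,n)$: in round~1 respond to the truthifier's move $C$ by $D \eqdef g(C)$, and in round~2 respond to $(x,z)$ by $y \eqdef h(n,C,D,x)$. Both responses depend only on $n$ and on moves of the current evaluation game, so $\tau^{\ttf}_{g,h}$ is a legitimate oblivious falsifier. \Cref{thm:witnessing-general} then guarantees that for some $k \in [\ell]$ the truthifier strategy $\hat\tau_k^{\ttt}$ (feeding in the transcripts of the earlier games as ancillary information) beats $\tau^{\ttf}_{g,h}$; unpacking, this says that the resulting $k$-th game transcript $(C_k,D_{C_k},x_k,z_k,y_k)$ satisfies the quantifier-free matrix $\phi$, which translates back to $\mathsf{Error}(C_k,D_{C_k},x_k)$ via \Cref{lmm:defining-axiom-f}.

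The one delicacy is verifying that the falsifier induced by $(g,h)$ is \emph{oblivious}; otherwise one would be forced to use the considerably more intricate tree exploration game of \Cref{thm:witnessing-tree-exploration} and the statement of $P_k,Q_k$ would become unwieldy. Obliviousness is immediate in our setup because $g$ and $h$ take only parameters and moves from the current evaluation game, so no information from the parallel prior games enters the falsifier's decisions; this is precisely why the oblivious version \Cref{thm:witnessing-general} is applicable here.
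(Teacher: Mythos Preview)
Your proposal is correct and follows essentially the same route as the paper: pass from the average-case to the worst-case sentence, lift to $\UTpv^i$ and to $\ULBw^i$, apply the oblivious-falsifier witnessing theorem (\Cref{thm:witnessing-general}) with $k=2$, read off the $\FP^{\Sigma^p_{i-1}}$ algorithms from the $\LUT^i$-terms via \Cref{thm:universal_theory}, and then specialize to the falsifier strategy $(g,h)$. Your explicit remark that $(g,h)$ defines an oblivious falsifier is exactly the point that makes \Cref{thm:witnessing-general} applicable here, and your final unpacking (that the matrix holding at the falsifier's move $y_k=h(\cdot)$ forces $\mathsf{Error}(C_k,D_{C_k},x_k)$) is the content the paper summarizes as ``easy to see.''
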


\begin{proof}
    Suppose that $\Tpv^i\vdash\LB^i(s_1,s_2,m,n_0)$. Then we also have $\Tpv^i\vdash\LBw^i(s_1,s_2,n_0)$, which further means by  \Cref{thm:Upv-extends-Tpv},  \Cref{lmm:equiv-lb'-ulb}, and \Cref{thm:universal_theory} that $\UTpv^i\vdash\ULBw^i(s_1,s_2,n_0)$. Recall that $\UTpv^i$ is a universal theory closed under if-then-else (see \Cref{thm:universal_theory}). By Theorem \ref{thm:witnessing-general}, there are an $\ell\in\bbN$ and a sequence of $\ell$ $\calL$-strategies $\tau^\ttt_1,\tau^\ttt_2,\dots,\tau^\ttt_\ell$ for the truthifier such that for any fixed strategy $\tau^\ttf$ of the falsifier, at least one of the strategies beats $\tau^\ttf$ in $\ell$ sequential plays of the evaluation game with $\tau^\ttt_1,\tau^\ttt_2,\dots,\tau^\ttt_\ell$ vs $\tau^\ttf$. 
    
    In particular, consider the following strategy for the falsifier: if the truthifier chooses $C$ in the first round of the game, the falsifier will choose $D_C$; then if the truthifier chooses $x,z$ in the second round, the falsifier will choose $h(n,C,D_C,x)$. It is easy to see that the claim in the lemma is precisely the winning property of $\tau^\ttt_1,\dots,\tau^\ttt_\ell$ against this particular strategy for the falsifier, given the corresponding auxiliary information. 
\end{proof}

\subsubsection{Proof of Theorem \ref{thm:unprovable-ckt-vs-ckt}}

\begin{theorem*}[Reminder of Theorem \ref{thm:unprovable-ckt-vs-ckt}]
    For every $i\ge 1$, $n_0\in\bbN$, $\delta\in\mathbb{Q}\cap (0,1)$ and $d\ge 1$, $\TPV^i\nvdash\LB^i(s_1, s_2,m,n_0)$, where $s_1(n)=n^d$, $s_2(n)=2^{n^\delta}$ and $m(n)=2^n/2-2^n/2^{n^\delta}$.
\end{theorem*}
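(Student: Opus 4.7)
The plan follows the structure of the proof of \Cref{thm:unprovability-in-PH-avg}, but adapted to the more demanding setting in which the candidate hard $\Pi_i$-circuit is produced by the truthifier inside the witnessing game rather than being a fixed machine from the meta-theory. Assume for contradiction that $\Tpv^i \vdash \LB^i(s_1,s_2,m,n_0)$; since $m(n)\ge 1$ this also yields $\Tpv^i \vdash \LBw^i(s_1,s_2,n_0)$, and by \Cref{lmm:kpt-for-lb'} we obtain a constant $\ell$ together with $\FP^{\Sigma^p_{i-1}}$ algorithms $P_1,Q_1,\ldots,P_\ell,Q_\ell$ such that, against any falsifier strategy $g\colon C\mapsto D_C$, at least one of the $\ell$ sequential plays produces $(C_k,x_k,z_k)$ with $C_k(x_k)\ne D_{C_k}(x_k)$. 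The goal, as in \Cref{sec:general-ph}, is to convert this game-theoretic guarantee into a $\SIZE^{\Sigma^p_{i-1}}[2^{O(n)}]$-approximator for an arbitrary $\Pi_i$-circuit $E_m$ of size $\poly(m)$ on $m=n^{c/2}$ inputs, and then to derive a contradiction using the hardness amplification theorem (\Cref{thm:hardamp-ppoly}) together with the soundness of $\Tpv^i$.

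Fix a large constant $c$ and an arbitrary target $E_m$, and instantiate the Nisan--Wigderson generator with $\overline{E_m}$: for a seed $w\in\{0,1\}^{n^c}$, set the falsifier's strategy so that $D_C\triangleq D_w$ with $D_w(x)\triangleq \NW_{\overline{E_m}}(w,x)$, so the entire game is driven by $w$. I would then carry out a round-by-round averaging argument paralleling the proof of \Cref{thm:unprovability-in-PH}: at each round $k\in[\ell]$ I average over $w$ to fix a popular first coordinate $x_k^*\in\{0,1\}^n$ of $Q_k$'s output, a popular value $b_k^*\in\{0,1\}$ for $C_k(w)(x_k^*)$, and an advice string $a\in\{0,1\}^{[n^c]\setminus J_{x_k^*}}$ via \Cref{lmm:counting}. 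The case analysis splits on whether the round-$k$ mistake density $|S_k^{\mathsf{mist}}|/|S_k^{\mathrm{pop}}|$ exceeds $2/3$; if so, one builds the approximator for $E_m$ that on input $u\in\{0,1\}^{n^{c/2}}$ sets $w=r_{x_k^*}(a,u)$, runs the truthifier's strategy through round $k$, checks the popular conditions, and on success outputs $b_k^*$, which equals $E_m(u)$ because on $S_k^{\mathsf{mist}}\uhr_a$ we have $\overline{E_m}(u) = D_w(x_k^*) = \overline{b_k^*}$; otherwise one passes to round $k+1$. By round $\ell$ the analysis must terminate since the witnessing forces a mistake somewhere along the game. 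Since $\ell=O(1)$, the $1/2$ multiplicative loss per round incurred by fixing $b_k^*$ multiplies out to a $2^{-O(1)}$ constant, leaving an overall advantage of $2^{-O(n)}$.

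The main technical obstacle, and what requires genuine care compared with \Cref{thm:unprovability-in-PH}, is that the popular bit $b_k^*$ is only guaranteed to agree with $C_k(w)(x_k^*)$ on $S_k^{\mathrm{pop}}$, and the approximator cannot in general verify membership in $S_k^{\mathrm{pop}}$ at runtime: the predicate $C_k(w)(x_k^*)=b_k^*$ requires evaluating a $\Pi_i$-circuit, which is $\Pi_i^p$-hard and therefore beyond the power of a $\SIZE^{\Sigma^p_{i-1}}[2^{O(n)}]$ circuit. The resolution combines two ingredients: (a) the standard locality property of the NW generator, $|J_{x_j^*}\cap J_{x_k^*}|\le n$ for $j\ne k$, which implies that the queries $D_w(x_j^*)$, the counter-examples $y_j(w)=h(n,C_j,D_w,x_j^*)$, and the bits $C_j(w)(x_j^*)$ for $j<k$ each depend on $u$ through only $O(n)$ bits and can therefore be cached in a $2^{O(n)}$-size lookup table (with the $\Pi_i^p$ evaluations performed in the meta-theory and baked into the non-uniform advice); and (b) a modification of the NW-based falsifier strategy, anticipated in the introduction, that encodes information about the current candidate circuit $C$ into $D_C$ so that the popularity test at round $k$ can be performed by the approximator while the total size remains $2^{O(n)}$. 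This is the direct generalization of the subroutines $(\nabla)$ and $(\nabla^i_j)$ from \Cref{thm:unprovability-in-PH}; verifying that it goes through uniformly across the $O(1)$ rounds, despite the truthifier being allowed to change the candidate hard circuit in each round, is the single most delicate step of the proof.

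Combining all $\ell$ cases yields, for every constant $\varepsilon>0$ and every sufficiently large $m$, a $\SIZE^{\Sigma^p_{i-1}}[2^{m^\varepsilon}]$-approximator for every $\Pi_i$-circuit $E_m$ of size $\poly(m)$ on $m$-bit inputs, achieving advantage $2^{-m^\varepsilon}$ (this is the analog, in the circuit-vs-circuit setting, of \Cref{cor: consequence-of-provability-in-PH}). Applying \Cref{thm:hardamp-ppoly} to a $\Pi_i$-function promised to be hard by $\LB^i(s_1,s_2,m,n_0)$, and using the soundness of $\Tpv^i$, then contradicts this uniform approximability, yielding the desired contradiction and completing the proof.
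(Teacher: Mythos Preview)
Your overall plan is right, and you have correctly identified the central difficulty: the truthifier may change its candidate $\Pi_i$-circuit $C_k$ across rounds, so the ``mistake bit'' $C_k(w)(x_k^*)$ is a $\Pi_i^p$ quantity that the approximator cannot evaluate. However, the two resolutions you propose do not close this gap.

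For (a): with the falsifier strategy $D_C\triangleq D_w$ \emph{independent of $C$}, the distinctness claim $x_j^*\ne x_k^*$ is simply false in general. The argument from \Cref{lem:different_x} used $D_w(x_i)\ne M(x_i)$ and $D_w(x_i)=M(x_i)$ for the \emph{same} fixed $M$ to force $x_i\ne x_j$; here the two conditions are $D_w(x_i^*)\ne C_i(w)(x_i^*)$ and $D_w(x_j^*)=C_j(w)(x_j^*)$ for possibly different circuits $C_i(w),C_j(w)$, and these are perfectly compatible with $x_i^*=x_j^*$. Once $x_j^*=x_k^*$ the locality $|J_{x_j^*}\cap J_{x_k^*}|\le n$ is vacuous. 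Separately, the claim that ``the bits $C_j(w)(x_j^*)$ depend on $u$ through only $O(n)$ bits'' is not correct: $C_j(w)$ is produced by $P_j$ from the \emph{entire} description of $D_w$, hence from all of $w$, so a lookup table would need $2^{n^c}$ entries.

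For (b): you acknowledge the falsifier must encode $C$ into $D_C$, but you leave the construction unspecified and assert a $2^{O(n)}$ size bound. The paper's actual mechanism is concrete and different in parameters: take the NW design over indices of length $n+n^d$ (intersection bound $O(n^d)$), set $D_{w,C}(x)\triangleq \NW_{\overline{E_m}}(w,\,x\|C)$, and at each round fix a \emph{popular circuit} $C_k$ (not merely a popular bit $b_k^*$). The runtime check then becomes the trivial string comparison $\hat C_k=C_k$, avoiding any $\Pi_i^p$ evaluation; the analogue of \Cref{lem:different_x} becomes $(C_i,x_i)\ne(C_j,x_j)$ (\Cref{lem:different_Cx}), and the $(\nabla_j^i)$ subroutines now cost $2^{O(n^d)}$ rather than $2^{O(n)}$. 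One compensates by choosing $c>2d/\delta$. Finally, for the stated $m(n)=2^n/2-2^n/2^{n^\delta}$ no appeal to \Cref{thm:hardamp-ppoly} is needed: the $1/2+2^{-O(n^d)}$ agreement already contradicts soundness directly once $c$ is chosen as above (hardness amplification is only used later to relax $m(n)$ to $2^n/n$).
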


\begin{proof}
    Assume that $\TPV^i\vdash\LB^i(s_1,s_2,m,n_0)$. We will derive a contradiction to the soundness of $\Tpv^i$ by showing that for sufficiently large $n$ and all $\Pi_i$-circuits $M:\{0,1\}^{n^{c/2}}\to\{0,1\}$ of size $s_1(n^{c/2})$, there is a $\Sigma_i$-circuit $B:\{0,1\}^{n^{c/2}}\to\{0,1\}$ of size at most $s_2(n^{c/2})$ that agrees with $M$ on all but at most $m(n^{c/2})$ inputs, for some constant $c\in\bbN$ which will be determined later.   
    
    Let $\NW_{f}(w,x)$ be the Nisan-Wigderson generator with: $f:\{0,1\}^{n^{c/2}}\to\{0,1\}$, seed length $|w|=n^c$, $|x| = n+n^d$, and any two distinct subsets in the combinatorial design of intersection of size at most $O(n^d)$. Designs with these parameters are known to exist (see \Cref{sec:NW}).
    
    By Lemma \ref{lmm:kpt-for-lb'}, we have $\ell\in\bbN$ and $\FP^{\Sigma^p_{i-1}}$ machines $P_1,P_2,\dots,P_\ell, Q_1,\dots,Q_\ell$ as described. Let $M:\{0,1\}^{n^{c/2}}\to\{0,1\}$ be a $\Pi_i$-circuit of size $s_1(n^{c/2})$ as described above. Let $D_{w,C}:\{0,1\}^n\to\{0,1\}$ be a $\Sigma_i$-circuit of size at most $s_2(n)$ computing $D_{w,C}(x)\triangleq \NW_{\overline M}(w,x\| C)$ for $w\in\{0,1\}^{n^c}$ and $C\in\{0,1\}^{n^d}$.\footnote{We use $u \| v$ to denote the concatenation of binary strings $u$ and $v$. Jumping ahead, the idea of concatenating $x \|C$ when defining the NW generator will allow us to establish an analogue of \Cref{lem:different_x} in this proof.} We would like to find some suitable $w$ and apply Lemma \ref{lmm:kpt-for-lb'} with $g:C\mapsto D_{w,C}$ to obtain a circuit $B\in\SIZE^{\Sigma^p_{i-1}}[2^{O(n^d)}]\subseteq\cSig{i}\SIZE[2^{O(n^d)}]$ approximating $M$, i.e. $\Pr_{u}[B(u)=M(u)]\ge \frac{1}{2}+2^{-O(n^d)}$. By choosing $c$ as a constant much larger than $d$, we can prove the theorem.
    
    \paragraph{Case 1.} Let $C_1\triangleq P_1(1^n)$. By an averaging argument, there is an $x_1\in\{0,1\}^n$ such that for a uniformly random $w\in\{0,1\}^{n^{c}}$, with probability at least $2^{-n}$, the first coordinate of $Q_1(1^n,D_{w,C_1})$ is $x_1$. Fix this $x_1$ and let
    \begin{align*}
    S_{1}&\triangleq\Big\{w\in\{0,1\}^{n^c}\mid Q_1(1^n,D_{w,C_1})=(x_1,\cdot)\Big\}, \\  
    S_{1}^{\sf mist}&\triangleq\Big\{w\in S_1\mid D_{w,C_1}(x_1)\ne C_1(x_1)\Big\}.
    \end{align*}
    By the definition of $x_1$ we get that $|S_{1}|/2^{n^c}\ge 2^{-n}$. 
    
    In this case we assume that $|S_{1}^{\sf mist}|\ge(2/3)\cdot|S_{1}|$, dealing with the other situation in a subsequent case analysis. For any $w$, we know that $D_{w,C_1}(x_1)=\NW_{\overline{M}}(w,x_1\|C_1)=\overline{M}(w|_{J_{x_1\|C_1}})$, where $J_{x_1\|C_1}$ is the subset of indices corresponding to the $(x_1\|C_1)$-th row of the combinatorial design. By Lemma \ref{lmm:counting}, there is an assignment $a\in\{0,1\}^{[n^c]\setminus J_{x_1\|C_1}}$ for the indices outside of $J_{x_1\|C_1}$ such that $|S_{1}\uhr_a|/2^{n^{c/2}} \ge 2^{-O(n)}$ and $|S_{1}^{\sf mist}\uhr_a|/|S_{1}\uhr_a|\ge 3/5$. 
    
    Now we fix $a\in\{0,1\}^{[n^c]\setminus J_{x_1\|C_1}}$ as above. Let $b_1\triangleq C_1(x_1)\in\{0,1\}$. We define a randomized circuit $B_1$ with access to a random bit $r\in\{0,1\}$ as follow (see Algorithm \ref{algo:b1}). 
    
    \begin{algorithm}[htb]
    \SetKwInOut{Input}{Input}
    \SetKwInOut{Advice}{Advice}
    \caption{Randomized circuit $B_1$ for $M$}\label{algo:b1}
    \Input{The input $u\in\{0,1\}^{n^{c/2}}$ for $M$ and random bit $r\in\{0,1\}$}
    \Advice{$x_1\in\{0,1\}^n$, $C_1=P_1(1^n)$, $a\in\{0,1\}^{[n^c]\setminus J_{x_1\|C_1}}$ as discussed, and $b_1=C_1(x_1)$}
    Let $w=r_{x_1\|C_1}(a,u)$ and $(x,\cdot)=Q_1(1^n,D_{w,C_1})$\;
    If $x\ne x_1$, \Return{the random bit $r$}\;
    Otherwise, \Return{$b_1$}.
    \end{algorithm}
    
    Since $Q_1\in\FP^{\Sigma^p_{i-1}}$ and $|D_{w,C_1}|=2^{n^{o(1)}}$, it is clear that $B_1\in\SIZE^{\Sigma^p_{i-1}}[2^{O(n)}]\subseteq \cSig{i}\SIZE[2^{O(n^d)}]$, so we only need to verify that the randomized circuit $B_1$ approximates $M$. For $u\in\{0,1\}^{n^{c/2}}$ such that $u\in S_{1}\uhr_a$, we have that 
    \begin{align*}
    B_1(u,r)=M(u) & \iff C_1(x_1)=M(u) \tag{$x=x_1$ by the definition of $S_{1}$, $B(u,r)=b_1 = C(x_1)$} \\ 
    & \iff C_1(x_1)\ne D_{w,C_1}(x_1) \tag{$D_{w,C_1}(x_1)=\NW_{\overline{M}}(w,x_1\|C_1)=M(u)$} \\  
    & \iff u\in S_{1}^{\sf mist}\uhr_a.
    \end{align*}
    Therefore $B_1$ and $M$ agree on at least $3/5$ of the inputs $u\in S_{1}\uhr_a$. In the other case, the circuit $B$ simply outputs the random bit $r$, therefore for a specific $r^*\in\{0,1\}$, $B_1(u,r^*)$ and $M(u)$ agree on at least $1/2$ of the inputs $u\notin S_{1}\uhr_a$. Since $|S_{1}\uhr_a\!|/2^{n^{c/2}}\ge 2^{-O(n)}$, we obtain that
    \[
    \Pr_{u\in\{0,1\}^{n^{c/2}}}\Big[B_1(u,r^*)=M(u)\Big]\ge \frac{3}{5}\cdot \frac{|S_{1}\uhr_a|}{2^{n^{c/2}}} + \frac{1}{2}\cdot\left(1-\frac{|S_{1}\uhr_a|}{2^{n^{c/2}}}\right) = \frac{1}{2}+2^{-O(n)}.
    \]

\paragraph{Case 2.} Assume that $|S_{1}^{\sf mist}|\le(2/3)\cdot|S_{1}|$. Let $h(n,C,D,x_1)$ be the function described in Lemma \ref{lmm:kpt-for-lb'}. Let $y_1(w)\triangleq h(n,C_1,D_{w,C_1},x_1)$ and $C_2^w=P_2(1^n,D_{w,C_1},y_1(w))$. Again, by an averaging argument, there are $C_2\in\{0,1\}^{n^d}$ and $x_2\in\{0,1\}^n$ such that for a uniformly random $w\in S_{1}\setminus S_{1}^{\sf mist}$, with probability at least $2^{-O(n^d)}$, $C_2=C_2^w$ and $Q_2(1^n,D_{w, C_1}, D_{w,C_2},y_1(w))=(x_2,\cdot)$. Fix this $C_2$ and $x_2$. Let $S_{2}$ and $S_{2}^{\sf mist}$ be sets defined as follows: 
\begin{align*}
&S_{2}\triangleq\Big\{w\in S_{1}\setminus S_{1}^{\sf mist}\mid C_2=C_2^w\land Q_2(1^n,D_{w,C_1}, D_{w,C_2},y(w))=(x_2,\cdot)\Big\} \\
&S_{2}^{\sf mist}\triangleq \{w\in S_{2}\mid D_{w,C_2}(x_2)\ne C_2(x_2)\}
\end{align*}
By the definitions of $C_2$ and $x_2$, we know that $|S_{2}|/2^{n^c}\ge(1/3)\cdot 2^{-O(n^d)}=2^{-O(n^d)}$.

In this case we assume that $|S_{2}^{\sf mist}|\ge (2/3)\cdot |S_{2}|$. Similarly to Case 1, for any $w\in\{0,1\}^{n^c}$, $D_{w,C_2}(x_2)=\overline{M}(w|_{J_{x_2\|C_2}})$. By Lemma \ref{lmm:counting}, there is an assignment $a\in\{0,1\}^{[n^c]\setminus J_{x_2\|C_2}}$ for the indices outside of $J_{x_2\|C_2}$ such that $|S_{2}\uhr_{a}|/2^{n^{c/2}}\ge 2^{-O(n^d)}$ and $|S_{2}^{\sf mist}\uhr_a|/|S_{2}\uhr_a|\ge 3/5$. Fix this string $a$. We will assume the following computation is possible in order to complete this case, returning to it later on:\\

 \vspace{-0.2cm}
 
 \noindent ($\nabla$) Given $w \in S_{1}$ of the form $a \cup u$ ($u\in\{0,1\}^{J_{x_2\|C_2}}$), there is a deterministic circuit $E(w)$ of size at most $2^{O(n^d)}$ that outputs $(y_1(w),e_1(w))$, where $y_1(w)=h(n,C_1,D_{w,C_1},x_1)$ and $e_1(w)\in\{0,1\}$ such that $e_1(w)=1$ if and only if $w\in S_{1}^{\sf mist}$. \\
 
 \vspace{-0.2cm}
 
Note that if $w\in S_{1}\setminus S_{1}^{\sf mist}$, $y_1(w)$ given by $E(w)$ witnesses that $\neg \Error(C_1,D_{w,C_1},x_1)$. Let $b_2\triangleq C_2(x_2)$. We construct a randomized circuit $B_2$ as follows (see Algorithm \ref{algo:b2}).

\begin{algorithm}
\SetKwInOut{Input}{Input}
\SetKwInOut{Advice}{Advice}
\caption{Randomized circuit $B_2$ for $M$}\label{algo:b2}
\Input{The input $u\in\{0,1\}^{n^{c/2}}$ for $M$ and random bit $r\in\{0,1\}$}
\Advice{$x_1,x_2\in\{0,1\}^n$, $C_1,C_2\in\{0,1\}^{n^d}$, $a\in\{0,1\}^{[n^c]\setminus J_{x_2\|C_2}}$ as discussed, $b_2=C_2(x_2)$, and $\Gamma$ to support the subroutine $(\nabla)$}
Let $w = r_{x_2\|C_2}(a,u)$ and $(\hat x_1,\hat z_1)=Q_1(1^n, D_{w,C_1})$\;
If $\hat x_1\ne x_1$, then \Return{the random bit $r$}; 
\tcp{after this step, $w\in S_{1}$}
Let $(y_1(w),e_1(w))=E(w)$ by $(\nabla)$\;
If $e_1(w)=1$, then \Return{the random bit $r$};
\tcp{after this step, $w\in S_{1}\setminus S_{1}^{\sf mist}$}
Let $\hat C_2=P_2(1^n, D_{w,C_1},y_1(w))$ and $(\hat x_2,\hat z_2)=Q_2(1^n,D_{w,C_1},D_{w,C_2},y_1(w))$\;
If $\hat x_2\ne x_2$ or $\hat C_2\ne C_2$, then \Return{the random bit $r$}\; 
Otherwise, \Return{$b_2$}. \tcp{reaching this line if and only if $w\in S_{2}$}
\end{algorithm}

First, we analyze the complexity of $B_2$. Since $Q_1,P_2,Q_2\in\FP^{\Sigma^p_{i-1}}$ and the input length for each of them is of order $2^{O(n)}$, they can be implemented by circuits of size $2^{O(n)}$ with $\Sigma^p_{i-1}$ oracles. We need $2^{O(n^d)}$ gates to support the computation $(\nabla)$. Therefore, $B_2\in\SIZE^{\Sigma^{p}_{i-1}}[2^{O(n^d)}] \subseteq \cSig{i}\SIZE[2^{O(n^d)}]$. 

By construction, it is easy to verify that the algorithm reaches the last line if and only if $w=r_{x_2\|C_2}(a,u)\in S_{2}$. Therefore $B_2$ will output a random bit when $w\notin S_{2}$ and output $b_2$ when $w\in S_{2}$. In the former case, $B_2$ agrees with $M$ on $1/2$ of the inputs for an $r^*\in\{0,1\}$. In the latter case, with probability at least $3/5$, $w|_{J_{x_2\|C_2}}\in S_{2}^{\sf mist}\uhr_a$, which further means that 
\[
M(u) = M(w|_{J_{x_2\|C_2}})=\overline{D_{w,C_2}}(x_2)=C_2(x_2)=b_2=B_2(u,r). 
\]
As a result, it follows that 
\[
\Pr_{u\in\{0,1\}^{n^{c/2}}}\Big[B_2(u,r^*)=M(u)\Big]\ge \frac{3}{5}\cdot \frac{|S_{2}\uhr_a|}{2^{n^{c/2}}} + \frac{1}{2}\cdot\left(1-\frac{|S_{2}\uhr_a|}{2^{n^{c/2}}}\right) = \frac{1}{2}+2^{-O(n^d)}.
\]

\paragraph{Case $j\ge 2$.} Using the technique for Case $2$, we can in fact deal with all the remaining cases. Let $j\in\{2,3,\dots,\ell\}$. We recursively define the following values:
\begin{enumerate}[itemsep=0pt]
    \item $y_{j-1}(w)\triangleq h(n,C_{j-1},D_{w,C_{j-1}},x_{j-1})$. 
    \item $C_j^w\triangleq P_j(1^n, D_{w,C_1},\dots,D_{w,C_{j-1}}, y_1(w),\dots, y_{j-1}(w))$.
    \item Let $C_j\in\{0,1\}^{n^d}$ be the lexicographical first string (encoding an circuit) such that for a uniformly random string $w\in S_{j-1}\setminus S_{j-1}^{\sf mist}$, with probability at least $2^{-O(n^d)}$, $C_j^w=C_j$. The existence of $C_j$ follows from a counting argument.  
    \item Let $x_j\in\{0,1\}^n$ be the lexicographical first string such that for a uniformly random string $w\in (S_{j-1}\setminus S_{j-1}^{\sf mist})\cap\{w\in\{0,1\}^{n^c}\mid C_j^w=C_j\}$, with probability at least $2^{-n}$, 
    \[ 
    Q_j(1^n,D_{w,C_{1}},\dots,D_{w,C_j},y_{1}(w),\dots,y_{j-1}(w))=(x_j,\cdot).
    \]
    Thus, for a uniformly random string $w\in S_{j-1}\setminus S_{j-1}^{\sf mist}$, with probability at least $2^{-O(n^d)}\cdot 2^{-n}=2^{-O(n^d)}$, $C_j^w = C_j$ and $Q_j(1^n,D_{w,C_{1}},\dots,D_{w,C_j},y_{1}(w),\dots,y_{j-1}(w))=(x_j,\cdot)$.
    \item $S_{j}$ and $S_{j}^{\sf mist}$ be sets recursively defined as 
\begin{align*}
    S_{j}\triangleq&\Big\{w\in S_{j-1}\setminus S_{j-1}^{\sf mist}\mid C_j^w=C_j\land\\
    &Q_j(1^n,D_{w,C_1},\dots,D_{w,C_j},y_{1}(w),\dots,y_{j-1}(w))=(x_j,\cdot)\Big\} \\
    S_{j}^{\sf mist}\triangleq&\{w\in S_{j}\mid D_{w,C_j}(x_j)\ne C_j(x_j)\}
\end{align*}
\end{enumerate}

In Case $j$ we will assume that (1) $|S_{j}^{\sf mist}|/|S_{j}|\ge 2/3$ and (2) for any $i\in\{1,2,\dots,j-1\}$, $|S_{i}^{\sf mist}|/|S_{i}|< 2/3$. In particular, by Lemma \ref{lmm:kpt-for-lb'} we know that if we reach $j = \ell$ then $S_{\ell}=S_{\ell}^{\sf mist}$, so all the cases can be resolved in this way. 

The following lemma will be useful later in the proof.

 \begin{lemma}\label{lem:different_Cx}
       For every $1\le i<j$, we have $(C_i,x_i)\ne (C_j,x_j)$.
   \end{lemma}
   
   \begin{proof}
   First, note that $S_j \cap S_i^{\sf mist} = \emptyset$. Also, since we are in case $j$, $S_j^{\sf mist} \neq \emptyset$, given that $|S_j^{\sf mist}| \geq 2/3 \cdot |S_j|$ and the (inductively established) density lower bound for $|S_j|$. Now take any $w^* \in S_j^{\sf mist}$, i.e.,
   \begin{equation}
       \label{eq:first_ineq_Cx}
       C_j(x_j) \;\neq\; D_{w^*, C_j}(x_j).
   \end{equation}
   Since $S_j^{\sf mist} \subseteq S_j$ and $S_j \cap S_i^{\sf mist} = \emptyset$, we have that $w^* \notin S_i^{\sf mist}$, i.e.,
    \begin{equation}
       \label{eq:second_ineq_Cx}
       C_i(x_i) \;=\; D_{w^*, C_i}(x_i).
   \end{equation}
   Now if we had $(C_i,x_i) = (C_j, x_j)$, this would be in contradiction with \Cref{eq:first_ineq_Cx} and \Cref{eq:second_ineq_Cx}. Consequently, either $C_i \neq C_j$ or $x_i \neq x_j$.
   \end{proof}

We can prove by induction that $|S_{j}|/2^{n^c}=2^{-O(n^d)}$, therefore by Lemma \ref{lmm:counting}, there is an assignment $a\in\{0,1\}^{[n^c]\setminus J_{x_j\| C_j}}$ such that $|S_{j}\uhr_a|/2^{n^{c/2}}\ge 2^{-O(n^d)}$ and $|S_{j}^{\sf mist}\uhr_a|/|S_{j}\uhr_a|\ge 3/5$. Fix this string $a$. Similar to $(\nabla)$ in Case 2, we need the following computation $(\nabla_j^i)$ for every $i\in\{1,2,\dots,j-1\}$. \\

 \vspace{-0.2cm}
 
 \noindent ($\nabla_j^i$) Given $w \in S_{i}$ of the form $a \cup u$ ($u\in\{0,1\}^{J_{x_j\|C_j}}$), there is a deterministic circuit $E_i(w)$ of size at most $2^{O(n^d)}$ that outputs $(y_{i}(w),e_{i}(w))$, where $y_{i}(w)=h(n,C_{i},D_{w,C_i},x_{i})$ and $e_{i}(w)\in\{0,1\}$ such that $e_{i}(w)=1$ if and only if $w\in S_{i}^{\sf mist}$.\\
 
 \vspace{-0.2cm}
 
Note that $(\nabla_2^1)=(\nabla)$ by definition. Let $b_j\triangleq C_j(x_j)$. Using the subroutines described above, We can now present a randomized circuit $B_j$ that approximates $M$ (see Algorithm \ref{alg:bj-for-case-j-ckt}).

\begin{algorithm}[ht]
\SetKwInOut{Input}{Input}
\SetKwInOut{Advice}{Advice}
\caption{Randomized circuit $B_j$ for $M$}\label{alg:bj-for-case-j-ckt}
\Input{The input $u\in\{0,1\}^{n^{c/2}}$ for $M$ and random bit $r\in\{0,1\}$}
\Advice{$x_1,\dots,x_j\in\{0,1\}^n$, $C_1,\dots,C_j\in\{0,1\}^{n^d}$, $a\in\{0,1\}^{[n^c]\setminus J_{x_j\|C_j}}$ as discussed, $b_j=C_j(x_j)$, and $\Gamma$ to support the subroutines $(\nabla^i_j)$}
Let $w=r_{x_j}(a,u)$\;
\For{$i=1,2,\dots,j$}{
Let $\hat C_i=P_i(1^n,D_{w,C_1},\dots,D_{w,C_{i-1}},y_{1}(w),\dots,y_{i-1}(w))$\;
If $\hat C_i\ne C_i$, then \Return{the random bit $r$}\;
Let $(\hat x_i,\hat z_i)=Q_i(1^n,D_{w,C_1},\dots,D_{w,C_i},y_{1}(w),\dots,y_{i-1}(w))$\;
If $\hat x_i\ne x_i$, then \Return{the random bit $r$}\;
\tcp{reaching this line iff $w\in S_{i}$}
\If{$i<j$}{
Let $(y_i(w),e_i(w))=E_i(w)$ by $(\nabla_j^i)$\; 
If $e_i(w)=1$, then \Return{the random bit $r$}\;
\tcp{otherwise, $x\in S_{i}\setminus S_{i}^{\sf mist}$}
}
}
\tcp{reaching this line iff $w\in S_{j}$}
\Return{$b_j$}\; 
\end{algorithm}

Similarly to Case 2, we can see that $B_j\in\SIZE^{\Sigma^p_{i-1}}[2^{O(n^d)}]$. Now we prove the correctness of $B_j$. By the definition of $S_{i}$, we can prove by induction that the algorithm reaches the end of the $i$-th iteration within the for-loop if and only if $w\in S_{i}\setminus S_{i}^{\sf mist}$ for any $i\in\{1,2,\dots,j-1\}$. We can further check that the algorithm reaches the last line if and only if $w\in S_{j}$. This means that, by fixing an appropriate bit $r^*$ as the random bit, the algorithm agrees with $M$ on at least $1/2$ of $w\notin S_{j}$ of the form $w = r_{x_j}(a,u)$, and on at least $3/5$ of $w\in S_{j}$ of the form $w = r_{x_j}(a,u)$. As before, this translates into a correlation of $2^{-O(n^d)}$ over a random input $u \in \{0,1\}^{n^{c/2}}$ using the lower bound on the density of $S_j \uhr_a$. 

\paragraph{Implementation of $(\nabla)$.} To complete the proof it is sufficient to show that $(\nabla_j^i)$ in the $j$-th step is computable by $2^{O(n^d)}$-size circuits, for all $j\in\{2,3,\dots,\ell\}$ and $1\le i<j$. Fix any $j\in\{2,3,\dots,\ell\}$ and $i<j$. Recall that $h(n,C_i,D_{w,C_i},x_i)$ finds the minimal $y_i$ such that $\lnot\phi_1'(C_i,D_{w,C_i},x_i,y_i)$ holds if $\lnot\phi_1(C_i,D_{w,C_i},x_i)$, where $\lnot\phi_1(C_i,D_{w,C_i},x_i)$ means that $C_i(x_i)=0\lor D_{w,C_i}(x_i)=1$. In case $C_i(x_i)=0$, we only need to hard-wire a witness of it, since $C_i$ and $x_i$ are fixed with respect to $w$. 

Now we assume that $C_i(x_i)=1$. By the definition of $D_{w,C_i}$, we know that $D_{w,C_i}(x_i)=\NW_{\overline{M}}(w,x_i\|C_i)$, where $w=a\cup u$ for $a\in \{0,1\}^{[n^c]\setminus J_{x_j\|C_j}}$,  $u\in\{0,1\}^{J_{x_j\|C_j}}$. By \Cref{lem:different_Cx}, $(x_i,C_i)\ne (x_j,C_j)$. Therefore, by the definition of the $\NW$ generator, for $w=a\cup u$ with the input $u\in\{0,1\}^{J_{x_j\|C_j}}$, the output $D_{w,C_i}(x_i)$, as well as the desired witness of the outer-most quantified variable in case that $D_{w,C_i}(x_i)=1$, depends on at most $O(n^d)$ bits of $u$. In such case, we can hard-wire all the $2^{O(n^d)}$ answers with a deterministic $2^{O(n^d)}$-size circuit. 

Similarly, it is not hard to show that the computation $e_i(w)$ can also be implemented by a deterministic circuit of at most this size.

\paragraph{Wrapping things up.} Finally we can combine all these facts to conclude this theorem. By assuming $\TPV^i\vdash \LB^i(s_1,s_2,m,n_0)$, we proved that for sufficiently large $n$ and all $\Pi_i$-circuits $M:\{0,1\}^{n^{c/2}}\to\{0,1\}$ of size $s_1(n^{c/2})=n^{dc/2}$, there is a deterministic circuit $B:\{0,1\}^{n^{c/2}}\to\{0,1\}$ with $\Sigma^p_{i-1}$ oracle gates of size at most $2^{O(n^d)}$ that agrees with $M$ on a $1/2+2^{-O(n^d)}$ fraction of inputs. By Theorem \ref{thm:oracle-circuit-to-nondet-circuit}, we know that $B$ can be implemented by $\Sigma_i$-circuits of size $2^{O(n^d)}$. If we choose $c>2d/\delta$, $B$ is of size $\leq 2^{(n^{c/2})^\delta}$ and agrees with $M$ on a $\geq 1/2+2^{-(n^{c/2})^\delta}$ fraction of the inputs $u \in \{0,1\}^{n^{c/2}}$, which means that $\bbN\vDash \lnot \LB(s_1,s_2,m,n_0)$ for the corresponding choice of $m(n)$. This is a contradiction to the soundness of $\TPV^i$.
\end{proof}

Similarly to what was noted in \Cref{remark:general_approx2} and \Cref{cor: consequence-of-provability-in-PH}, the proof presented above shows that one can approximate \emph{every} $\cPi{i}\SIZE[2^{n^{o(1)}}]$ circuit $M$ by small-size $\Sigma^p_{i-1}$-oracle circuits, assuming the provability of the worst-case circuit lower bound sentence $\LBw(s_1,s_2,n_0)$. We simply use $D_{w,C}(x)\eqdef \NW_{\overline{M}}(w,x\| C)$ and proceed as above. By padding dummy input bits as in \Cref{remark:general_approx2}, we can obtain the following corollary. 

\begin{corollary}\label{cor: unprovability-ckt-average-parameter}
    Fix $i\ge 1$. Assume that for some $n_0\in\bbN$, $\delta\in\bbQ\cap(0,1)$, and $d\ge 1$, $\Tpv^i\vdash\LBw^i(s_1,s_2,n_0)$ for $s_1(n)=n^d$ and $s_2(n)=2^{n^\delta}$. Then for every constant $\epsilon>0$, every sufficiently large $n$, and circuit $A\in\cPi{i}\SIZE[t(n)]$ where $t(n)=2^{n^{o(1)}}$ is some constructive funtion, there is a $\Sigma^p_{i-1}$-oracle circuit $B$ of size $2^{n^\epsilon}$ such that 
    \[
    \Pr_{x\sim\{0,1\}^n}[A(x)=B(x)]\ge\frac{1}{2}+\frac{1}{2^{n^\epsilon}}.     
    \]
\end{corollary}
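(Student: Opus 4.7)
The plan is to rerun the construction from the proof of \Cref{thm:unprovable-ckt-vs-ckt} with the hard function replaced by the arbitrary $\Pi_i$-circuit $A$ we wish to approximate, and to use padding to handle the mismatch between the input length parameter $n$ appearing in the NW generator and the actual input length of $A$. Crucially, since we already assume provability of the worst-case sentence $\LBw^i(s_1,s_2,n_0)$, the proof of \Cref{lmm:kpt-for-lb'} applies verbatim and supplies the $\FP^{\Sigma^p_{i-1}}$ truthifier strategies $P_1,Q_1,\dots,P_\ell,Q_\ell$; we then use these to build the approximating circuit directly, rather than to derive a contradiction.

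Concretely, fix $\epsilon \in (0,1)$ and set $c \triangleq \lceil 4d/\epsilon \rceil$. Given an input length $m$ and a $\Pi_i$-circuit $A\colon \{0,1\}^m \to \{0,1\}$ of size $t(m) = 2^{m^{o(1)}}$, let $n \triangleq \lceil m^{2/c}\rceil$ and pad $A$ with dummy input bits to obtain a $\Pi_i$-circuit $M \colon \{0,1\}^{n^{c/2}} \to \{0,1\}$ of size $t(m) \le 2^{(n^{c/2})^{o(1)}}$. Now define the family of $\Sigma_i$-circuits $D_{w,C}(x) \triangleq \NW_{\overline{M}}(w, x\|C)$ using the Nisan-Wigderson generator with seed length $n^c$, ``hard function'' $\overline{M} \colon \{0,1\}^{n^{c/2}}\to\{0,1\}$, input length $|x| = n + n^d$, and designs whose pairwise intersections have size at most $O(n^d)$, as in \Cref{thm:unprovable-ckt-vs-ckt}. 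Since evaluating $\NW_{\overline{M}}$ on a fixed seed requires computing the design (in $\poly(n)$) and then invoking $\overline{M}$ once on an $n^{c/2}$-bit substring, each $D_{w,C}$ is a $\Sigma_i$-circuit of size $\poly(n) \cdot 2^{(n^{c/2})^{o(1)}} \le 2^{n^\delta}$ for all sufficiently large $n$, so it fits within the $s_2$ budget and Lemma \ref{lmm:kpt-for-lb'} may be applied with the map $g\colon C \mapsto D_{w,C}$.

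From this point the case analysis of \Cref{thm:unprovable-ckt-vs-ckt} proceeds unchanged: we iteratively define popular candidates $(C_1,x_1),(C_2,x_2),\dots$ together with sets $S_j$ and $S_j^{\sf mist}$, split on whether $|S_j^{\sf mist}| \ge (2/3)\,|S_j|$, apply \Cref{lmm:counting} to fix a good outer assignment $a \in \{0,1\}^{[n^c]\setminus J_{x_j\|C_j}}$, and then build the randomized $\Sigma^p_{i-1}$-oracle circuit $B_j$ described by \Cref{alg:bj-for-case-j-ckt}. The analogue of \Cref{lem:different_Cx} still holds (since we concatenate $C$ with $x$ inside the seed index of the NW generator), so the subroutines $(\nabla^i_j)$ can again be implemented with $2^{O(n^d)}$ gates by hard-wiring all $2^{O(n^d)}$ relevant substrings of the seed. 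The resulting circuit $B_j$ lies in $\SIZE^{\Sigma^p_{i-1}}[2^{O(n^d)}]$ and agrees with $M$ on a $1/2 + 2^{-O(n^d)}$ fraction of inputs in $\{0,1\}^{n^{c/2}}$. Fixing the random bit optimally yields a deterministic $\Sigma^p_{i-1}$-oracle circuit $B$; identifying the first $m$ input coordinates with those of $A$ and ignoring the $n^{c/2} - m$ dummy bits (equivalently, averaging the advantage preserves it) gives an approximator for $A$ on $\{0,1\}^m$.

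For the parameter calculation, our choice $c \ge 4d/\epsilon$ guarantees $n^d \le m^{2d/c} \le m^{\epsilon/2}$, so both the size of $B$ and the reciprocal of its advantage are bounded by $2^{m^{\epsilon/2}} \le 2^{m^\epsilon}/2$ for large enough $m$, which is the claimed bound $2^{n^\epsilon}$ of the corollary (with $n$ playing the role of the input length of $A$). The only genuinely new point compared to the original theorem is to verify that the $D_{w,C}$ circuits remain within the $\Sigma_i$-size budget despite $M$ being allowed to have size $2^{m^{o(1)}}$ rather than $\poly(m)$; this is where the flexibility $t(n) = 2^{n^{o(1)}}$ in the hypothesis is used, and it is the only step that could plausibly fail if $t$ were too large. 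Everything else, including the complexity of the $(\nabla^i_j)$ subroutines (which depend only on $O(n^d)$ bits of the seed by the design property), goes through without modification.
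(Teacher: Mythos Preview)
Your proposal is correct and follows essentially the same approach as the paper: the paper states the corollary as an immediate consequence of the proof of \Cref{thm:unprovable-ckt-vs-ckt}, observing that the argument there approximates an \emph{arbitrary} $\cPi{i}\SIZE[2^{n^{o(1)}}]$ circuit $M$ once one sets $D_{w,C}(x)\eqdef\NW_{\overline{M}}(w,x\|C)$, and then pads dummy inputs exactly as in \Cref{remark:general_approx2}. You have simply spelled out the padding parameters (with $c\approx 4d/\epsilon$ in place of the paper's coarser constant) and explicitly noted that the proof of \Cref{lmm:kpt-for-lb'} only needs provability of $\LBw^i$, not of the average-case sentence.
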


\subsubsection{Relaxing the average-case complexity parameter}

As in \Cref{sec: relaxing-avg-case-parameter}, we now utilize the hardness amplification theorem (see \Cref{thm:hardamp-ppoly}) to relax the average-case complexity parameter. 

\begin{theorem}
    For every $i\ge 1$, $n_0\in\bbN$, $\delta\in\bbQ\cap(0,1)$, and $d\ge 1$, $\Tpv^i\nvdash\LB^i(s_1,s_2,m,n_0)$, where $s_1(n)=n^d$, $s_2(n)=2^{n^\delta}$, and $m=2^n/n$.
\end{theorem}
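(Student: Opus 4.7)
The plan is to follow the template of \Cref{thm:unprovability-in-PH-avg}, substituting \Cref{cor: unprovability-ckt-average-parameter} for its uniform counterpart. First I would suppose toward contradiction that $\Tpv^i\vdash\LB^i(s_1,s_2,m,n_0)$ with $s_1(n)=n^d$, $s_2(n)=2^{n^\delta}$, and $m(n)=2^n/n$. By the soundness of $\Tpv^i$, on every sufficiently large $n$ there exists a $\cPi{i}\SIZE[n^d]$ circuit $C_n$ that disagrees with every $\cSig{i}\SIZE[2^{n^\delta}]$ circuit on at least a $1/n$ fraction of inputs. Since an almost-everywhere average-case lower bound with $m\ge 1$ trivially implies the worst-case version, $\Tpv^i\vdash\LBw^i(s_1,s_2,n_0)$. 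Hence \Cref{cor: unprovability-ckt-average-parameter} applies and provides, for every rational $\varepsilon>0$, every sufficiently large $\ell$, and every $\cPi{i}\SIZE[2^{\ell^{o(1)}}]$ circuit $A$, a $\Sigma^p_{i-1}$-oracle circuit $B$ of size $2^{\ell^\varepsilon}$ satisfying $\Pr_x[A(x)=B(x)]\ge 1/2+2^{-\ell^\varepsilon}$.

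The central step is hardness amplification applied to $\{C_n\}$. Since \Cref{thm:hardamp-ppoly} measures hardness against $\Sigma^p_{i-1}$-oracle circuits while the lower bound on $\{C_n\}$ is stated against $\Sigma_i$-circuits, I would first use \Cref{thm:oracle-circuit-to-nondet-circuit} to pick a constant $c\ge 1$ with $\SIZE^{\Sigma^p_{i-1}}[s]\subseteq\cSig{i}\SIZE[s^c]$; then $\{C_n\}$ is $(1-1/n)$-hard against $\Sigma^p_{i-1}$-oracle circuits of size $\tilde s_2(n)\eqdef 2^{n^\delta/c}$. Since $\tilde s_2(n)=n^{\omega(1)}$, the hypothesis of \Cref{thm:hardamp-ppoly} is met with $\tilde s_1(n)=n^d$, producing a constant $\gamma>0$ and $\ell=\poly(n)$ together with a function $h_\ell\in\cPi{i}\SIZE[\poly(\ell)\cdot\ell^{d\gamma}]\subseteq\cPi{i}\SIZE[2^{\ell^{o(1)}}]$ such that every $\Sigma^p_{i-1}$-oracle circuit of size $\tilde s_2(\ell^\gamma)^\gamma=2^{\gamma\ell^{\gamma\delta}/c}$ has correlation at most $1/2+2^{-\gamma\ell^{\gamma\delta}/c}$ with $h_\ell$.

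Combining the two, I would fix any rational $\varepsilon\in(0,\gamma\delta)$ and instantiate the approximation from the first paragraph with $A=h_\ell$ on input length $\ell$. This yields a $\Sigma^p_{i-1}$-oracle circuit of size $2^{\ell^\varepsilon}$ with correlation at least $1/2+2^{-\ell^\varepsilon}$ with $h_\ell$. For sufficiently large $n$ one has $\ell^\varepsilon<\gamma\ell^{\gamma\delta}/c$, so this circuit is simultaneously smaller and more accurate than the hardness guarantee of $h_\ell$ permits, giving the desired contradiction. The only subtlety is the bookkeeping: one must ensure that after passing from $\Sigma_i$-circuit hardness to $\Sigma^p_{i-1}$-oracle hardness (losing a factor of $c$ in the exponent) and then amplifying (losing a factor of $\gamma$ in both the size exponent and the correlation parameter), the resulting hardness still strictly beats the uniform sub-exponential approximation; picking $\varepsilon<\gamma\delta$ closes the argument.
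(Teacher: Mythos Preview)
Your proposal is correct and follows essentially the same three-step template as the paper's proof: soundness gives a mildly average-case hard $\cPi{i}\SIZE[n^d]$ circuit, provability of $\LBw^i$ together with \Cref{cor: unprovability-ckt-average-parameter} yields a universal sub-exponential approximation for all small $\Pi_i$-circuits, and hardness amplification (\Cref{thm:hardamp-ppoly}) applied to the hard circuit produces a contradiction for a suitably small $\varepsilon$. The one place you are more careful than the paper is in explicitly invoking \Cref{thm:oracle-circuit-to-nondet-circuit} to pass from $\Sigma_i$-circuit hardness to $\Sigma^p_{i-1}$-oracle-circuit hardness with a constant-factor loss $c$ in the exponent; the paper simply writes the hardness against $\Sigma^p_{i-1}$-oracle circuits of size $2^{n^\delta}$ directly, absorbing this loss into the slack of the final parameter choice $\epsilon=(1/2)\delta\gamma$.
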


\begin{proof}
    Suppose that $s_1=s_1(n)$, $s_2=s_2(n)$, $m$, and $n_0$ are defined as above. Towards a contradiction, we assume that $\Tpv^i\vdash\LB^i(s_1,s_2,m,n_0)$. 
    \begin{enumerate}
        \item\label{enum: step 1 approx final} Under the unprovability of the almost-everywhere average-case lower bound $\LB(s_1,s_2,m_0)$, we obtain from the soundness of $\Tpv^i$ that (in the standard model) for sufficiently large $n$, there is a circuit $C\in\cPi{i}\SIZE[s_1(n)]$ such that for every $\Sigma^p_{i-1}$-oracle circuit $D$ of size $2^{n^\delta}$, we have 
        \[
        \Pr_{x\sim\{0,1\}^n}[C(x)=D(x)]\le 1-\frac{1}{n}.     
        \]
        \item By the assumption that $\Tpv^i\vdash\LB^i(s_1,s_2,m,n_0)$, under any reasonable formalization, we know that $\Tpv^i$ also proves the \emph{worst-case} version of the lower bound $\LBw^i(s_1,s_2,n_0)$. Then by \Cref{cor: unprovability-ckt-average-parameter}, we get that for every constant $\epsilon>0$, every sufficiently large $n$, and every $\cPi{i}\SIZE[2^{n^{o(1)}}]$ circuit, there is a $\Sigma^p_{i-1}$-oracle circuit $D$ of size $2^{n^\epsilon}$ such that
        \begin{equation}\label{equ: universal-approx-in-ckt}
        \Pr_{x\sim\{0,1\}^n}[C(x)=D(x)]\ge\frac{1}{2}+\frac{1}{2^{n^{\epsilon}}}.     
        \end{equation}
        \item Now we assume that $n$ is sufficiently large and $f_n:\{0,1\}^n\to\{0,1\}$ is the function computable by $\cPi{i}\SIZE[s_1(n)]$ circuits in \Cref{enum: step 1 approx final} that is hard on average against $\Sigma^p_{i-1}$-oracle circuit. By \Cref{thm:hardamp-ppoly}, there is a function $h_\ell:\{0,1\}^\ell\to\{0,1\}$ for some $\ell=O(n^2)$ that is computable by $\cPi{i}\SIZE[\poly(n)\cdot s_1(n)]$ circuits, such that for every $\Sigma^p_{i-1}$-oracle circuit $D$ of size $2^{\gamma\ell^{\gamma\delta}}$, 
        \[
        \Pr_{x\sim\{0,1\}^\ell}[h_\ell(x)=D(x)]\le \frac{1}{2}+\frac{1}{2^{\gamma\ell^{\gamma\delta}}}.     
        \]
        By setting $\epsilon=(1/2)\cdot \delta\cdot \gamma$, this contradicts \Cref{equ: universal-approx-in-ckt}. 
    \end{enumerate}
    Therefore we conclude that $\Tpv^i\nvdash\LB^i(s_1,s_2,m,n_0)$. 
\end{proof}

\bibliographystyle{alpha}	
\bibliography{references}

\appendix

\addtocontents{toc}{\protect\setcounter{tocdepth}{2}}

\section{Provability in $\Tpv^i$}\label{sec:provability_in_Tpv}

\newclass{\ioP}{(i.o.)P}

In this section, we further elaborate on the strength of the theories $\Tpv^i$. Similarly to the relation between the complexity classes $\mathsf{P}$, $\mathsf{NP}$, and the different levels of $\mathsf{PH}$, it is currently open if the theories $\Tpv^i$ form a proper hierarchy, i.e., if $\Tpv^j$ can prove more sentences than $\Tpv^i$ when $j > i$. However, as explained in this section, this is the case under standard computational hardness assumptions. Conversely, separating the theories would lead to new complexity class separations.

In Section \ref{sec:sharply-bounded}, we show that $\Tpv^i$ proves every true $\forall\Sigma^b_{i-1}$-sentence extended with sharply bounded quantifiers. 

In Sections \ref{sec:TPVi_total} and \ref{sec:TPVipolhier}, we relate the relative strength of these theories to the hierarchy of total functions and to the polynomial time hierarchy, respectively. The results presented in these sections are closely related to results from \citep{KrajicekPT91}, which explore the strength of Buss's theories $\mathsf{S}^i_2$ and $\mathsf{T}^i_2$ and related questions. 

In Section \ref{sec:NPnotinioP}, we exhibit a complexity lower bound statement of comparatively higher quantifier complexity that is provable in $\Tpv^2$ (under a minimal assumption). In more detail, we show that if $\NP\nsubseteq\ioP$ is true then it is provable in $\Tpv^2$. %

\subsection{Sentences with sharply bounded quantifiers}
\label{sec:sharply-bounded}

Recall that we have defined $\Tpv^i$ as the theory consisting of all true (strict) $\forall\Sigma^b_{i-1}$-sentences. In some contexts, it can also be desirable to allow sharply bounded quantifiers of the form $\forall x\le |t|$ and $\exists x\le |t|$ to appear arbitrarily in a $\Sigma^b_i$-formula (without increasing its quantifier complexity). It is therefore also reasonable to consider a ``stronger'' theory $\widetilde{\mathsf{T}}_\PV^i$ that consists of all true $\forall\Sigma^b_{i-1}$-sentences where sharply bounded quantifiers can freely appear in the axioms (see a standard reference such as \citep{Krajicek-book} for the formal definition of this more general class of sentences). Note that the introduction of sharply bounded quantifiers could in principle be an issue in our unprovability results, since the \emph{replacement principle} \cite{Buss} that is used to manipulate sharply bounded quantifiers is unlikely to be provable in weak bounded theories \cite{DBLP:journals/tocl/CookT06}. 

In this subsection, we show that $\widetilde{\mathsf{T}}_\PV^i=\Tpv^i$. Consequently, we can use without loss of generality strict $\forall \Sigma^b_{i-1}$-sentences when defining each theory $\Tpv^i$.  

\begin{lemma}\label{lmm:sharply-bounded-over-open}
    For every formula $\varphi(\vec x)$ that contains only sharply bounded quantifiers, there is a quantifier-free formula $\hat\varphi(\vec x)$ such that $\Tpv^1\vdash \forall\vec x~(\varphi(\vec x)\leftrightarrow \hat\varphi(\vec x))$. 
\end{lemma}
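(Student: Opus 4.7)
I will proceed by induction on the logical complexity of $\varphi(\vec x)$. In the base case $\varphi$ is atomic (or quantifier-free), and there is nothing to show. The Boolean connectives are handled in the obvious way by applying the inductive hypothesis to the subformulas. The only interesting case is a sharply bounded quantifier at the top, so the plan is to reduce every occurrence of such a quantifier to a single application of a polynomial-time function symbol of $\Lpv$.

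Suppose then that $\varphi(\vec y) \equiv Q x \le |t(\vec y)|\, \psi(x,\vec y)$, where $Q \in \{\forall,\exists\}$ and $\psi$ contains only sharply bounded quantifiers. By the inductive hypothesis there is a quantifier-free $\Lpv$-formula $\hat\psi(x,\vec y)$ with $\Tpv^1 \vdash \forall x,\vec y\,(\psi(x,\vec y)\leftrightarrow \hat\psi(x,\vec y))$. Treating $\hat\psi$ as a polynomial-time predicate (recall that every $\Lpv$-term computes an $\FP$ function and that $=$ is decidable in $\FP$), consider the two functions
\begin{align*}
g_\psi(\vec y) &\;=\; \begin{cases} 1 & \text{if } \hat\psi^{\mathbb N}(x,\vec y) \text{ holds for every } x \le |t^{\mathbb N}(\vec y)|, \\ 0 & \text{otherwise}, \end{cases} \\
w_\psi(\vec y) &\;=\; \text{least } x \le |t^{\mathbb N}(\vec y)| \text{ with } \neg\hat\psi^{\mathbb N}(x,\vec y), \text{ or } 0 \text{ if no such } x \text{ exists}.
\end{align*}
Both are computable in polynomial time (one sweep through $x = 0,1,\dots,|t(\vec y)|$ suffices, because $|t(\vec y)|$ is bounded by a polynomial in the input length), so they are named by function symbols of $\Lpv$.

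The key step is then to write down \emph{universal} defining axioms for $g_\psi$ and $w_\psi$ that are true in $\mathbb N$ and therefore belong to $\Tpv^1$:
\begin{align*}
(\mathrm{A}_1)\quad & \forall \vec y\, x \; \bigl(\, x \le |t(\vec y)| \wedge \neg\hat\psi(x,\vec y) \,\rightarrow\, g_\psi(\vec y) = 0 \,\bigr), \\
(\mathrm{A}_2)\quad & \forall \vec y \; \bigl(\, g_\psi(\vec y) = 0 \,\rightarrow\, w_\psi(\vec y) \le |t(\vec y)| \wedge \neg\hat\psi(w_\psi(\vec y),\vec y) \,\bigr), \\
(\mathrm{A}_3)\quad & \forall \vec y \; \bigl(\, g_\psi(\vec y) = 0 \vee g_\psi(\vec y) = 1 \,\bigr).
\end{align*}
Each of these is a $\forall \Sigma^b_0$-sentence, true in $\mathbb N$, so it is an axiom of $\Tpv^1$. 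For the $\forall$-case take $\hat\varphi(\vec y) \equiv (g_\psi(\vec y)=1)$. The right-to-left direction $\hat\varphi\to\varphi$ follows from $(\mathrm{A}_1)$ by contrapositive reasoning: if $g_\psi(\vec y)=1$, then for any $x$ with $x\le|t(\vec y)|$ we cannot have $\neg\hat\psi(x,\vec y)$. The left-to-right direction $\varphi\to\hat\varphi$ uses $(\mathrm{A}_2)$ and $(\mathrm{A}_3)$: assuming $\forall x\le|t(\vec y)|\,\hat\psi(x,\vec y)$ and, for contradiction, $g_\psi(\vec y)\neq 1$, by $(\mathrm{A}_3)$ we get $g_\psi(\vec y)=0$, whence $(\mathrm{A}_2)$ provides the counterexample $w_\psi(\vec y)\le|t(\vec y)|$ with $\neg\hat\psi(w_\psi(\vec y),\vec y)$, which is refuted by instantiating the assumption at $x=w_\psi(\vec y)$. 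The $\exists$-case is dual: write $\exists x\le|t(\vec y)|\,\hat\psi(x,\vec y)$ as $\neg\forall x\le|t(\vec y)|\,\neg\hat\psi(x,\vec y)$ and apply the construction just carried out to $\neg\hat\psi$.

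The argument is essentially routine; the only point that needs a little care is ensuring that the defining axioms for $g_\psi$ and $w_\psi$ remain strictly universal (no bounded quantifiers in the matrix), which is what lets us place them inside $\Tpv^1$ rather than having to appeal to a replacement-type principle. Because the entire induction only multiplies the formula size by a constant per connective and introduces one fresh $\Lpv$-symbol per sharply bounded quantifier, the resulting $\hat\varphi$ is indeed a quantifier-free $\Lpv$-formula, and $\Tpv^1\vdash \forall \vec x\,(\varphi(\vec x)\leftrightarrow\hat\varphi(\vec x))$ holds, as required.
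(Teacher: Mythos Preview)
Your proof is correct and follows essentially the same approach as the paper: both argue by induction on the number of sharply bounded quantifiers and eliminate each such quantifier by introducing a polynomial-time function that enumerates over its (logarithmic-length) domain, then observe that the resulting defining equivalences are true universal $\Lpv$-sentences and hence axioms of $\Tpv^1$. The paper's proof is only a sketch that omits the details you have spelled out (the explicit functions $g_\psi$, $w_\psi$ and the axioms $(\mathrm{A}_1)$--$(\mathrm{A}_3)$), so your write-up is in fact a faithful expansion of that sketch.
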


\begin{proof}[Proof Sketch]
    We use induction on the number of (sharply bounded) quantifiers in $\varphi(\vec x)$. By replacing sharply bounded quantifiers with polynomial time functions that enumerate over their domains, we can reduce the number of sharply bounded quantifiers while maintaining the equivalence over $\Tpv^1$. We omit the details.
\end{proof}

\begin{lemma}\label{lmm:merge-quantifier}
    For every $i\ge 1$, if $\exists y\le t~\phi$ is a $\Sigma^b_i$-formula without sharply bounded quantifiers, then there exists a $\Pi^b_{i-1}$-formula $\phi'$ without sharply bounded quantifiers and a $\Lpv$-term $s$ such that $\Tpv^1\vdash \exists y\le t~\phi\leftrightarrow \exists z\le s~\phi'$. 
    
    Similarly, if $\forall y\le t~\phi$ is a $\Pi^b_i$-formula without sharply bounded quantifiers, then there exists a $\Sigma^b_{i-1}$-formula $\phi'$ without sharply bounded quantifiers and a $\Lpv$-term $s$ such that $\Tpv^1\vdash \forall y\le t~\phi\leftrightarrow \forall z\le s~\phi'$
\end{lemma}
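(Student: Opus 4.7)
My plan is to first establish an auxiliary prenex-normal-form claim and then obtain the lemma as a direct consequence by contracting adjacent existential (resp.\ universal) quantifiers via pairing.

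\textbf{Normal form claim.} I would prove the following by simultaneous induction on $i$ and on the structure of $\phi$: for every $i\ge 1$, every $\Sigma^b_i$-formula $\phi$ without sharply bounded quantifiers is $\Tpv^1$-equivalent to a formula of the form $\exists z\le s\;\phi'$, where $\phi'$ is a $\Pi^b_{i-1}$-formula without sharply bounded quantifiers and $s$ is an $\Lpv$-term; dually for $\Pi^b_i$. If $\phi\in\Pi^b_{i-1}$, I prepend a vacuous quantifier $\exists z\le 0$; if $\phi\in\Sigma^b_{i-1}$, I apply the induction hypothesis at level $i-1$ to obtain $\phi\leftrightarrow \exists z\le s\;\phi'$ with $\phi'\in\Pi^b_{i-2}\subseteq\Pi^b_{i-1}$. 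For the Boolean cases, I use the prenex-pushing equivalences
\[
(\exists z_1\le s_1\;\phi'_1)\,\circ\,(\exists z_2\le s_2\;\phi'_2)\;\leftrightarrow\;\exists z_1\le s_1\;\exists z_2\le s_2\;(\phi'_1\circ\phi'_2),
\]
for $\circ\in\{\wedge,\vee\}$ (after renaming bound variables so that $z_1,z_2$ do not occur freely on the opposite side). These are valid in predicate logic, hence provable in $\Tpv^1$, and $\phi'_1\circ\phi'_2$ remains in $\Pi^b_{i-1}$ by closure under Boolean connectives. For $\phi=\exists y\le t\;\phi_1$ with $\phi_1\in\Sigma^b_i$, I apply the induction hypothesis to $\phi_1$ and collapse the two adjacent existential quantifiers via pairing (see below).

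\textbf{Collapsing consecutive $\exists$'s via pairing.} The language $\Lpv$ contains a pairing function $\langle\cdot,\cdot\rangle$ and projections $\pi_1,\pi_2$ (all in $\mathsf{FP}$), and the identities $\pi_j(\langle x_1,x_2\rangle)=x_j$ as well as a polynomial bound $\langle x_1,x_2\rangle\le r(x_1,x_2)$ for a fixed $\Lpv$-term $r$ are true universal sentences, hence axioms of $\Tpv^1$. Therefore
\[
\exists y\le t\;\exists z\le s(y)\;\phi'(y,z)\;\leftrightarrow\;\exists w\le r(t,s(t))\;\phi'(\pi_1(w),\pi_2(w))
\]
is provable in $\Tpv^1$. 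Because $\pi_1,\pi_2$ are $\Lpv$-terms, the substitution on the right preserves the $\Pi^b_{i-1}$-class and does not introduce sharply bounded quantifiers.

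\textbf{Deducing the lemma.} Given $\exists y\le t\;\phi$ with $\phi\in\Sigma^b_i$, the normal-form claim provides $\phi\leftrightarrow \exists z\le s\;\phi'$ with $\phi'\in\Pi^b_{i-1}$, and contraction of $\exists y\le t\;\exists z\le s(y)$ into a single $\exists w\le r'$ yields the required form. The dual statement for $\Pi^b_i$ follows by negation (or by a symmetric argument with a universal-quantifier normal form).

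\textbf{Expected obstacle.} I do not anticipate any serious obstacle; this is routine prenex manipulation. The main care required is bookkeeping: ensuring that bounds $s$ may depend on previously quantified variables (so that $s(y)$ and $s(t)$ make sense), that pairing bounds compose uniformly using the polynomial-time closure of $\Lpv$, and that the base case $i=1$ degenerates correctly (here $\Pi^b_0$ is just the class of quantifier-free formulas, so the normal form is $\exists z\le s\;\phi'$ with $\phi'$ open).
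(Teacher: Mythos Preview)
Your approach is correct and essentially identical to the paper's (which just sketches ``merge the outermost existential block via $\PV$-definable pairing,'' pointing to the proof of the preceding normal-form lemma for details). One small correction: your displayed pairing equivalence is false as written in the $\leftarrow$ direction, since $w\le r(t,s(t))$ does not force $\pi_1(w)\le t$ or $\pi_2(w)\le s(\pi_1(w))$; you need to conjoin these bound checks on the right-hand side (they are quantifier-free, so the $\Pi^b_{i-1}$ classification is unaffected), and you should use a monotone majorant of $s$ in the outer bound rather than $s(t)$ itself.
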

\begin{proof}[Proof Sketch.]
    We can make the upper bound $s$ sufficiently large, merge all outermost bounded existential quantifiers of $\phi$ into a single existential quantifier $\exists z\le s$ (or $\forall z\le s$ in the other case), and use $\PV$-definable pairing functions to simulate the original block of existential quantifiers. See, e.g.,~the proof of \Cref{lmm:bounded-normal-form} for more details. 
\end{proof}

\begin{lemma}\label{lmm:move-sharply-bounded-in}
    Let $i \geq 2$. For every true $\forall\Sigma^b_{i-1}$-sentence $\varphi$ with sharply bounded quantifiers, there is a true $\forall\Sigma^b_{i-1}$-sentence $\hat\varphi$ such that $\Tpv^i\vdash\hat\varphi(x)\to\varphi(x)$, where no sharply bounded quantifier in $\hat\varphi$ appears outside of a (non-sharply) bounded quantifier.
\end{lemma}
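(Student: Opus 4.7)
The strategy is to rewrite $\varphi$ into a provably equivalent sentence $\hat\varphi$ in which every sharply bounded quantifier is nested inside at least one non-sharply bounded quantifier; since the rewriting is carried out inside $\Tpv^i$, the resulting $\hat\varphi$ will imply $\varphi$ and inherit its truth. I would proceed by structural induction on $\varphi$, combining three tools: a leaf-reduction that kills sharply-bounded-only subformulas, a merge move for adjacent same-kind quantifiers, and a swap move based on bounded collection.

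First, by \Cref{lmm:sharply-bounded-over-open}, any maximal subformula of $\varphi$ whose quantifiers are all sharply bounded is provably equivalent in $\Tpv^1 \subseteq \Tpv^i$ to a quantifier-free formula, and I replace each such subformula by its equivalent. After this step, every remaining sharply bounded quantifier either already lies in the scope of some non-sharply bounded quantifier (and we are done with it) or has a non-sharply bounded quantifier in its own scope. Second, for each sharply bounded quantifier $Qx \le |t|$ that still appears outside every non-sharply bounded quantifier, I let $Q'y \le s$ be the outermost non-sharply bounded quantifier in its scope. If $Q = Q'$, I merge them into a single non-sharply bounded quantifier using polynomial-time pairing from $\Lpv$, absorbing the side condition $\pi_1(z) \le |t|$ into the matrix; this step is provable already in $\Tpv^1$. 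If $Q \ne Q'$, I swap them via an instance of the bounded-collection (replacement) scheme, for example
\[
\forall x \le |t|\,\exists y \le s\,\chi(x,y)\;\leftrightarrow\;\exists Y \le r\,\forall x \le |t|\,\chi(x,(Y)_x),
\]
where $Y$ encodes a $\PV$-definable sequence of witnesses of length $|t|+1$ bounded by $s$, and $r$ is a polynomial bound. After either operation, the sharply bounded quantifier has been pushed strictly inside a non-sharply bounded one.

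Each replacement instance invoked above is a true sentence in $\forall \Sigma^b_j$-form for some $j \le i-1$, hence an axiom of $\Tpv^i$; this is where the hypothesis $i \ge 2$ enters, since $\Tpv^1$ does not prove non-trivial instances of bounded collection. Iterating the pushing-inward procedure terminates, because each step strictly reduces the number of sharply bounded quantifiers outside the scope of all non-sharply bounded ones, and produces the desired sentence $\hat\varphi \in \forall \Sigma^b_{i-1}$.

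I expect the main obstacle to be the careful bookkeeping needed to show that every replacement instance invoked in the swap step is indeed a $\forall \Sigma^b_j$-sentence with $j \le i-1$, so that it remains an axiom of $\Tpv^i$. This requires analysing the quantifier class of the matrix $\chi$ at each swap and, where necessary, reapplying the leaf reduction to sharply-bounded-only inner subformulas before invoking the replacement scheme. Once this bookkeeping is in place, the induction closes and the theorem follows.
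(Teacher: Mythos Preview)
Your approach is essentially the paper's: eliminate sharply-bounded-only leaves via \Cref{lmm:sharply-bounded-over-open}, merge like quantifiers, and use the replacement (collection) scheme to swap an outer sharply bounded quantifier past an inner non-sharply bounded one, checking that the needed instance lives in $\forall\Sigma^b_{i-1}$ so that it is an axiom of $\Tpv^i$. The paper's induction measure (number of ``bad pairs'') is exactly your count of sharply bounded quantifiers outside every non-sharply bounded one.

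Two small points where the paper is more careful than your sketch. First, the paper \emph{prenexifies} (and puts the sentence in negation normal form) before the induction; this guarantees that when it picks the bad pair $(Q_1,Q_2)$ there are no Boolean connectives or other quantifiers between $Q_1$ and $Q_2$, so the merge/swap really applies to adjacent quantifiers. In your version, ``let $Q'y\le s$ be the outermost non-sharply bounded quantifier in its scope'' is not well defined when $Qx\le|t|$ sits over a disjunction or conjunction with non-sharply bounded quantifiers in several branches, and your merge/swap step does not literally apply across a connective. Prenexifying first fixes this cleanly. Second, because the goal is only $\hat\varphi\to\varphi$, the paper uses just \emph{one} direction of the replacement equivalence at each step (in NNF all subformula occurrences are positive, so the direction ``new $\Rightarrow$ old'' suffices); the quantifier-complexity bookkeeping you flag as the main obstacle is done for that single direction, not for the biconditional you wrote. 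With these two adjustments your proof goes through exactly as in the paper.
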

\begin{proof}
By applying prenexification rules, we can obtain a $\forall\Sigma^b_{i-1}$-sentence $\varphi'$ that is logically equivalent to $\varphi$. We will also assume without loss of generality that $\varphi'$ is in negation normal form. A pair of quantifiers $(Q_1,Q_2)$ in $\varphi'$ is said to be a \emph{bad pair} if $Q_1$ is a sharply bounded quantifier, $Q_2$ is a (non-sharply) bounded quantifier, and $Q_1$ quantifies over a subformula containing $Q_2$. We prove the lemma by induction on the number of bad pairs within $\varphi'$. If there is no bad pair, we simply let $\hat\varphi=\varphi'$ and the lemma follows. 

    Now we assume that $\varphi'$ contains $\ell\ge 1$ bad pairs. Fix a bad pair $(Q_1, Q_2)$ such that $Q_2$ is innermost and $Q_2$ is outermost, so that there are no other quantifiers in between and there is no bad pair within the formula quantified by $Q_2$. By \Cref{lmm:sharply-bounded-over-open}, we can further assume without loss of generality that there is no sharply bounded quantifier within the formula quantified by $Q_2$. Consider $Q_1 x\le |t|~Q_2 y\le s~\phi$ as a subformula of $\varphi'$. If $Q_1=Q_2=\forall$ or $Q_1=Q_2=\exists$, we can simply exchange them to obtain a logically equivalent sentence with $\ell-1$ bad pairs, which completes the proof via the induction hypothesis.
    \begin{description}
        \item[Case 1.] Assume that $Q_1=\forall$ and $Q_2=\exists$. By the replacement axiom (see, e.g., \cite{Buss}), there are $\Lpv$ terms $\alpha,\beta$ and a quantifier-free formula $\gamma$ such that 
        \begin{equation}
        \big(\forall x\le |t|~\exists y\le s~\phi\big) \leftrightarrow \big(\exists w\le \alpha(s,t)~\forall x\le |t|~(\phi(y/\beta(x,w))\land \gamma(x,w))\big)\label{equ:replacement-axiom}
        \end{equation}
        holds in the standard model. (Note that $\phi$ may have free variables $\vec v$ other than $x,y$.) Since $\varphi'$ is a $\forall\Sigma^b_{i-1}$-sentence containing $\forall x\le |t|~\exists y\le s~\phi$ as a subformula, we can see that $\exists y\le s~\phi$ must be a $\Sigma^b_{i-1}$-formula. By \Cref{lmm:merge-quantifier}, we can assume without loss of generality that $\phi$ is a $\Pi^b_{i-2}$-formula. Then
        \begin{align*} 
        \Psi&\eqdef \forall \vec v~\Big(\big(\exists w\le \alpha(s,t)~\forall x\le |t|~(\phi(y/\beta(x,w))\land \gamma(x,w))\big)\to \big(\forall x\le |t|~\exists y\le s~\phi\big)\Big) \\ 
        & \Leftrightarrow \forall \vec v~\Big(\big(\forall w\le \alpha(s,t)~\exists x\le |t|~(\lnot\phi(y/\beta(x,w))\lor \lnot\gamma(x,w))\big)\lor \big(\forall x\le |t|~\exists y\le s~\phi\big)\Big)
        \end{align*} 
        is a true $\forall\Sigma^b_{i-1}$-sentence (where $\Leftrightarrow$ is in the meta-language and denotes logical equivalence). Moreover, since $\phi$ contains no sharply bounded quantifiers, we know that $\Psi$ is a $\forall\Sigma^b_{i-1}$-sentence even if we treat sharply bounded quantifiers simply as bounded quantifiers. Therefore $\Tpv^i\vdash\Psi$. 
        
        Let $\varphi''$ be the $\forall\Sigma^b_{i-1}$-sentence (with sharply bounded quantifiers) obtained from $\varphi'$ by replacing the LHS of \Cref{equ:replacement-axiom} with the RHS. Since $\Tpv^i\vdash\Psi$, we know that $\Tpv^i\vdash \varphi''\to\varphi'$ as $\varphi'$ is in negation normal form. Moreover, $\varphi''$ has $\ell-1$ bad pairs. This completes the proof by the induction hypothesis.
    
    \item[Case 2.] Assume that $Q_1=\exists$ and $Q_2=\forall$. Again, by the replacement axiom, we know that 
    \begin{equation}
        \big(\exists x\le |t|~\forall y\le s~\phi\big) \leftrightarrow \big(\forall w\le \alpha(s,t)~\exists x\le |t|~(\phi(y/\beta(x,w))\lor \gamma(x,w))\big)\label{equ:replacement-axiom-rev}
    \end{equation}
    is true in the standard model, where $\alpha,\beta$ are $\Lpv$ terms $\gamma$ is a quantifier-formula. Since $\varphi'$ is a $\forall\Sigma^b_{i-1}$-sentence containing $\exists x\le |t|~\forall y\le s~\phi$ as a subformula, we get that $\forall y\le s~\phi$ is a $\Pi^b_{i-2}$-formula and $i > 2$. By \Cref{lmm:merge-quantifier}, we can assume without loss of generality that $\phi$ is a $\Sigma^b_{i-3}$-formula. Then
    \begin{align*} 
        \Psi&\eqdef \forall \vec v~\Big(\big(\forall w\le \alpha(s,t)~\exists x\le |t|~(\phi(y/\beta(x,w))\lor \gamma(x,w))\big)\to\big(\exists x\le |t|~\forall y\le s~\phi\big)\Big) \\ 
        & \Leftrightarrow \forall \vec v~\Big(\big(\exists w\le \alpha(s,t)~\forall x\le |t|~(\lnot\phi(y/\beta(x,w))\land \lnot\gamma(x,w))\big)\lor\big(\exists x\le |t|~\forall y\le s~\phi\big)\Big)
    \end{align*} 
    is a true $\forall \Sigma^b_{i-1}$-sentence even if we treat sharply bounded quantifiers as bounded quantifiers. Therefore $\Tpv^i\vdash\Psi$. Then we can resolve this case as in Case 1. \qedhere
    \end{description}
\end{proof}

\begin{theorem}
   For every $i \geq 1$, $\Tpv^i$ proves every true $\forall\Sigma^b_{i-1}$-sentence even if sharply bounded quantifiers are allowed to appear arbitrarily in the sentence. In other words, $\widetilde{\mathsf{T}}_\PV^i=\Tpv^i$.
\end{theorem}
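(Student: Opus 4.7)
The plan is to reduce the problem to the already-defined form of $\Tpv^i$ by showing that any true $\forall\widetilde\Sigma^b_{i-1}$-sentence is provably implied in $\Tpv^i$ by a true \emph{strict} $\forall\Sigma^b_{i-1}$-sentence, which is then an axiom by \Cref{def:theoryTiPV}. The three preceding lemmas are designed exactly to make this reduction go through: \Cref{lmm:move-sharply-bounded-in} pushes sharply bounded quantifiers past non-sharply bounded ones, while \Cref{lmm:sharply-bounded-over-open} collapses a pure block of sharply bounded quantifiers into a quantifier-free equivalent inside $\Tpv^1$.

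Concretely, let $\varphi$ be a true $\forall\widetilde\Sigma^b_{i-1}$-sentence. First I would apply \Cref{lmm:move-sharply-bounded-in} to obtain a true $\forall\widetilde\Sigma^b_{i-1}$-sentence $\hat\varphi$ with $\Tpv^i\vdash\hat\varphi\to\varphi$ and in which no sharply bounded quantifier appears outside any non-sharply bounded one. Putting $\hat\varphi$ into prenex form (via the standard prenexification rules that preserve $\forall\widetilde\Sigma^b_{i-1}$), this structural constraint becomes linear: every sharply bounded quantifier must come after every non-sharply bounded quantifier in the quantifier sequence. Hence $\hat\varphi$ has the shape
\[
Q_1 x_1\le t_1\,\cdots\, Q_m x_m\le t_m\ \widetilde{Q}_1 y_1\le |s_1|\,\cdots\, \widetilde{Q}_r y_r\le |s_r|\ \phi(\vec x,\vec y),
\]
where the $Q_j$ form a $\forall\Sigma^b_{i-1}$-prefix of non-sharply bounded quantifiers, the $\widetilde Q_j$ are sharply bounded, and $\phi$ is quantifier-free.

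Next I would focus on the tail $\xi(\vec x)\triangleq \widetilde{Q}_1 y_1\le |s_1|\,\cdots\,\widetilde{Q}_r y_r\le |s_r|\ \phi(\vec x,\vec y)$, which contains only sharply bounded quantifiers. By \Cref{lmm:sharply-bounded-over-open}, there is a quantifier-free $\hat\phi(\vec x)$ such that $\Tpv^1\vdash \forall \vec x\,(\xi(\vec x)\leftrightarrow\hat\phi(\vec x))$. Substituting $\hat\phi$ for $\xi$ inside $\hat\varphi$ yields a strict $\forall\Sigma^b_{i-1}$-sentence $\psi$, and $\Tpv^1\vdash \psi\leftrightarrow\hat\varphi$ by propagating the equivalence through the outer block of bounded quantifiers. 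Since $\hat\varphi$ holds in $\mathbb{N}$, $\psi$ is a true $\forall\Sigma^b_{i-1}$-sentence and therefore an axiom of $\Tpv^i$.

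Chaining the implications $\psi\to\hat\varphi\to\varphi$ inside $\Tpv^i$ then gives $\Tpv^i\vdash\varphi$, as desired. The only non-routine step is the first: verifying that \Cref{lmm:move-sharply-bounded-in} really does let us separate the two kinds of quantifiers in prenex form, which is exactly the content of the condition ``no sharply bounded quantifier appears outside of a non-sharply bounded one.'' Everything else is a direct application of \Cref{lmm:sharply-bounded-over-open} together with soundness of $\Tpv^i$, and the inclusion $\Tpv^i\subseteq\widetilde{\mathsf T}_\PV^i$ is immediate from the definitions, establishing the equality.
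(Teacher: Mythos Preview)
Your proposal is correct and follows essentially the same two-step strategy as the paper: use \Cref{lmm:move-sharply-bounded-in} to push sharply bounded quantifiers innermost, then eliminate them with \Cref{lmm:sharply-bounded-over-open}. One small point: \Cref{lmm:move-sharply-bounded-in} is stated only for $i\ge 2$, so you should handle $i=1$ separately (there the matrix contains only sharply bounded quantifiers, and \Cref{lmm:sharply-bounded-over-open} applies directly), exactly as the paper does.
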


\begin{proof}
If $i = 1$, the result immediately follows from \Cref{lmm:sharply-bounded-over-open}. For $i \geq 2$, we can first move sharply bounded quantifiers via \Cref{lmm:move-sharply-bounded-in} so they only appear as innermost quantifiers. We can then remove the sharply bounded quantifiers using \Cref{lmm:sharply-bounded-over-open}.
\end{proof}

\subsection{Strength of $\Tpv^i$ and the hierarchy of total functions}\label{sec:TPVi_total}

In this section, we show that separating the theories $\Tpv^i$ would lead to new complexity class separations. For instance, we prove that $\Tpv^2 = \Tpv^1$ if and only if the search problem of every $\TFNP$ relation can be solved in polynomial time. A related  result holds for every $i \geq 1$ (see \Cref{thm:ph-and-Tpv-hierarchy} below for the precise statement).

The relationship between these theories and the corresponding complexity collapses provides evidence that the theories $\Tpv^i$ form a strict hierarchy. However, it also shows that unconditionally establishing that this is the case will be quite difficult.

For convenience, in the statements below we identify $\TF\Sigma^p_0$ with $\FP$. We refer to \Cref{sec:general-ph:complexity-basic} for definitions and to \citep{DBLP:conf/innovations/KleinbergKMP21} for more information about total functions in the polynomial hierarchy. Abusing notation, in the statements below we view $\TF\Sigma^p_{i}$ as a class of search problems, i.e., given $x$ the goal is to find $y$ such that $R(x,y)$ holds, where $R \in \TF\Sigma^p_{i}$.

We will need  the following lemmas, which can be proved using standard techniques from complexity and logic.

\begin{lemma}
For every $i\ge 1$, $\P^{\TF\Sigma^p_{i-1}}\subseteq \Sigma^p_{i-1}\subseteq \P^{\TF\Sigma^p_i}$. 
\end{lemma}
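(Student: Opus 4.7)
The case $i=1$ is trivial: using the convention $\TF\Sigma^p_0 = \FP$, both inclusions reduce to $\P \subseteq \P$ and $\P \subseteq \P^{\TFNP}$. So the plan is to handle $i \geq 2$, establishing each containment separately.

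For the first containment $\P^{\TF\Sigma^p_{i-1}} \subseteq \Sigma^p_{i-1}$, the plan is to simulate an arbitrary polynomial-time oracle machine $M$ with oracle for some relation $R \in \TF\Sigma^p_{i-1}$ using a $\Sigma^p_{i-1}$-computation. On input $x$, I would existentially guess the polynomial-length sequence of oracle answers $y_1, y_2, \ldots, y_k$ that $M$ receives during its run. Since the $j$-th query $x_j$ is determined by $x$ and $y_1, \ldots, y_{j-1}$ via polynomial-time computation, the verification that each $y_j$ is a valid oracle answer amounts to checking $R(x_j, y_j)$, which is a $\Pi^p_{i-2}$-condition by definition of $\TF\Sigma^p_{i-1}$. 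Combining the polynomial-time simulation of $M$ (which is in $\P$) with the conjunction of the $k$ $\Pi^p_{i-2}$-checks (which remains $\Pi^p_{i-2}$), the whole computation is $\exists \cdot \Pi^p_{i-2} = \Sigma^p_{i-1}$.

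For the second containment $\Sigma^p_{i-1} \subseteq \P^{\TF\Sigma^p_i}$, the plan is to associate to each $L \in \Sigma^p_{i-1}$ a ``decision-with-witness'' relation in $\TF\Sigma^p_i$ whose oracle answer immediately reveals membership in $L$. Writing $x \in L \Leftrightarrow \exists y \, \psi(x,y)$ for some $\Pi^p_{i-2}$ predicate $\psi$ with polynomially bounded $y$, I would define
\[
  R(x, (b,y)) \;\equiv\; \bigl(b=1 \wedge \psi(x,y)\bigr) \;\vee\; \bigl(b=0 \wedge y=0 \wedge \forall y' \,\neg\psi(x,y')\bigr).
\]
Totality is clear (return $(1,y)$ for any witness if $x\in L$, else $(0,0)$), and the quantifier analysis places the defining condition in $\Pi^p_{i-1}$: the first disjunct is $\Pi^p_{i-2}$, the second is of the form $\forall y'\,\Sigma^p_{i-2}$, which is $\Pi^p_{i-1}$. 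Hence $R \in \TF\Sigma^p_i$. Crucially, every valid answer $(b,y)$ has $b=1$ iff $x \in L$, so a single oracle query suffices to decide $L$ in $\P^{\TF\Sigma^p_i}$.

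The only subtle point that needs a brief sanity check is in the second containment: when $x \notin L$, in principle the oracle could return any pair $(0,y')$ since the universal clause is vacuously true, but the bit $b$ is still correctly $0$ across all such solutions, so the decision is unambiguous regardless of the oracle's choice. No serious obstacle is expected; this is essentially the standard translation between polynomial-time oracle machines and the polynomial hierarchy adapted to the total-search setting.
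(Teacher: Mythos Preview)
The paper does not supply a proof of this lemma; it is listed among results that ``can be proved using standard techniques from complexity and logic.'' Your plan is correct and fills in exactly such standard details: guessing and verifying oracle answers for the first inclusion, and packaging a decision bit together with a witness as a total $\Pi^p_{i-1}$-search problem for the second.

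One minor remark: in your final paragraph you worry that when $x\notin L$ the oracle could return any pair $(0,y')$, but your own definition already forces $y=0$ in the second disjunct, so the solution is in fact unique in that case. Either way, the bit $b$ is correct across all valid solutions, which is all the argument needs.
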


\begin{lemma}\label{lmm:search-decision-collapse}
For every $i\ge 1$, $\Sigma^p_i\subseteq\Sigma^p_{i-1}$ if and only if $\Sigma^p_i\subseteq\P^{\TF\Sigma^p_{i-1}}$. 
\end{lemma}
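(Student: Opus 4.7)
The plan is to dispose of the two directions separately, since one of them follows immediately from the preceding lemma and the other requires a single but essential twist on top of the standard PH-collapse machinery.

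For the direction $\Sigma^p_i\subseteq\P^{\TF\Sigma^p_{i-1}} \Rightarrow \Sigma^p_i\subseteq\Sigma^p_{i-1}$, I would just chain through the preceding lemma, which yields $\P^{\TF\Sigma^p_{i-1}}\subseteq\Sigma^p_{i-1}$. Combining with the hypothesis gives the conclusion at once.

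For the forward direction, assume $\Sigma^p_i\subseteq\Sigma^p_{i-1}$. I would first recall the standard PH-collapse consequence: since $\Pi^p_{i-1}\subseteq\Sigma^p_i\subseteq\Sigma^p_{i-1}$, and $\Sigma^p_{i-1}$ is closed under complementation inside itself, we obtain $\Sigma^p_{i-1}=\Pi^p_{i-1}$. (The case $i=1$ is degenerate: both sides of the equivalence just say $\mathsf{NP}=\mathsf{P}$, so I would dispose of it separately and focus on $i\ge 2$.) Now take any $L\in\Sigma^p_i\subseteq\Sigma^p_{i-1}$; write $L(x)\Leftrightarrow\exists y\,R(x,y)$ with $R\in\Pi^p_{i-2}$, and, using $\Sigma^p_{i-1}=\Pi^p_{i-1}$, also write the complement as $\bar L(x)\Leftrightarrow\exists z\,Q(x,z)$ with $Q\in\Pi^p_{i-2}$. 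Define a total search relation $S(x,(b,w))$ that is satisfied iff either $b=1$ and $R(x,w)$, or $b=0$ and $Q(x,w)$. For every $x$ exactly one of $L(x),\bar L(x)$ holds, so a witness of the appropriate type exists and $S$ is total; and its verification predicate is a disjunction of two $\Pi^p_{i-2}$ predicates, hence is itself $\Pi^p_{i-2}$. Therefore $S\in\TF\Sigma^p_{i-1}$, and a polynomial-time algorithm with an $S$-oracle decides $L$ on input $x$ by querying $S(x,\cdot)$ and outputting the tag bit $b$ of the oracle's response.

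The step that deserves care, and is really the only non-trivial point, is the choice of search relation $S$. The naive ``find a witness for $L$ or certify that none exists'' formulation has verification predicate $\Pi^p_{i-1}$, landing it in $\TF\Sigma^p_i$ and only reproducing the weaker containment from the preceding lemma. Packaging $L$ and $\bar L$ \emph{symmetrically} via the collapse hypothesis $\Sigma^p_{i-1}=\Pi^p_{i-1}$ is precisely what lets the verification predicate drop down to $\Pi^p_{i-2}$, which is exactly what is needed to land $S$ in $\TF\Sigma^p_{i-1}$ and thereby decide $L$ inside $\P^{\TF\Sigma^p_{i-1}}$.
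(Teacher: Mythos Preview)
Your proof is correct. The paper does not actually supply a proof of this lemma; it is grouped with the preceding lemma under the remark that both ``can be proved using standard techniques from complexity and logic,'' so there is nothing to compare against. Your argument---using the preceding lemma for one direction, and for the other direction deriving $\Sigma^p_{i-1}=\Pi^p_{i-1}$ and then packaging symmetric $\Pi^p_{i-2}$ witnesses for $L$ and $\bar L$ into a single $\TF\Sigma^p_{i-1}$ search problem---is exactly the standard route, and your emphasis on why the symmetric packaging is needed (to keep the verification predicate at level $\Pi^p_{i-2}$ rather than $\Pi^p_{i-1}$) identifies the one genuinely non-trivial point.
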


\begin{lemma}\label{lmm:witnessing-standard}
    Let $i\ge 1$, $t(x)$ be an $\Lpv$ term, and $\phi(x,y)$ be a $\Pi^b_{i-1}(\Lpv)$-formula. If $\Tpv^i\vdash\forall x~\exists y\le t(x)~\phi(x,y)$, then there exists a $\FP^{\Sigma^p_{i-1}}$ algorithm $A(x)$ such that for every $x\in\bbN$, $\phi^\bbN(x,A(x))$ holds. 
\end{lemma}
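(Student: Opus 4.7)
The plan is to reduce this to Herbrand's Theorem (\Cref{thm:herbrand}) applied to the universal theory $\UTpv^i$ constructed in \Cref{sec:universal_theory}. First, by \Cref{thm:universal_theory}\,(\ref{enum: provable translate}), the hypothesis $\Tpv^i\vdash\forall x\,\exists y\le t(x)\,\phi(x,y)$ gives $\UTpv^i\vdash\forall x\,\exists y\le t(x)\,\phi(x,y)$. Since $\phi(x,y)$ is a $\Pi^b_{i-1}(\Lpv)$-formula, in particular a $(\Pi^b_{i-1}\cup\Sigma^b_{i-1})$-formula, \Cref{lmm:defining-axiom-f} supplies a function symbol $f_\phi$ in $\Lpv^i\subseteq\LUT^i$ such that $\UTpv^i\vdash\forall x,y\,(\phi(x,y)\leftrightarrow f_\phi(x,y)=1)$. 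Substituting this equivalence, we obtain
\[
\UTpv^i\;\vdash\;\forall x\,\exists y\le t(x)\;\bigl(f_\phi(x,y)=1\bigr),
\]
which is a $\forall\exists$-sentence whose matrix is quantifier-free in the language $\LUT^i$.

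Next, the plan is to apply Herbrand's Theorem (\Cref{thm:herbrand}) to this sentence in the universal theory $\UTpv^i$. By \Cref{thm:universal_theory}, $\UTpv^i$ is a universal theory closed under if-then-else, so Herbrand's Theorem yields a single $\LUT^i$-term $s(x)$ such that
\[
\UTpv^i\;\vdash\;\forall x\;\bigl(s(x)\le t(x)\,\wedge\,f_\phi(x,s(x))=1\bigr).
\]
Let $A(x)\triangleq s^{\mathbb N}(x)$, i.e., the interpretation of the term $s$ in the standard model. By the soundness of $\UTpv^i$ (\Cref{thm:universal_theory}), $s^{\mathbb N}(x)\le t^{\mathbb N}(x)$ and $f_\phi^{\mathbb N}(x,s^{\mathbb N}(x))=1$ hold for every $x\in\bbN$, and the defining property of $f_\phi$ then gives $\phi^{\mathbb N}(x,A(x))$.

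Finally, the complexity bound on $A$ follows directly from \Cref{thm:universal_theory}: every $\LUT^i$-term $s$ satisfies $s^{\mathbb N}\in\FP^{\Sigma^p_{i-1}}$, so in particular $A\in\FP^{\Sigma^p_{i-1}}$, as required. The only non-routine point in this plan is to ensure that the witness delivered by Herbrand's Theorem automatically respects the bound $t(x)$; this is why it is important that $\UTpv^i$ is closed under if-then-else, which lets us collapse the disjunction $\bigvee_j\bigl(t_j(x)\le t(x)\wedge f_\phi(x,t_j(x))=1\bigr)$ provided by the basic form of Herbrand's Theorem into a single term whose value lies in $[0,t(x)]$ (e.g., replacing out-of-range candidates by $0$), so that both the bound and the witnessing condition are simultaneously realised by one $\LUT^i$-definable function.
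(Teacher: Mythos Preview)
Your proof is correct. The paper does not actually supply a proof of this lemma, stating only that it ``can be proved using standard techniques from complexity and logic''; your argument via $\UTpv^i$, \Cref{lmm:defining-axiom-f}, and Herbrand's Theorem (\Cref{thm:herbrand}) is precisely the natural route that the machinery of \Cref{sec:universal_theory} is set up to support, and mirrors how the paper proves the closely related \Cref{lmm:witnessing-for-uniform-vs-nonuniform}.
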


The next theorem relates the relative strength of theories $\Tpv^i$ to the computational complexity of the search problems associated with the relations in $\TF\Sigma^p_{j}$.

\begin{theorem}\label{thm:ph-and-Tpv-hierarchy}
For every $i\ge 1$, the following propositions hold:
\begin{enumerate}
    \item If $\TF\Sigma^p_i\subseteq \FP^{\TF\Sigma^p_{i-1}}$, then $\Tpv^i\equiv\Tpv^{i+1}$.
    \item If $\Tpv^i\equiv\Tpv^{i+1}$, then $\TF\Sigma^p_i\subseteq \FP^{\Sigma^p_{i-1}}$.
\end{enumerate} 
In particular, $\TFNP=\FP$ if and only if $\Tpv^1\equiv\Tpv^2$. 
\end{theorem}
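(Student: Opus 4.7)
The plan is to handle the two implications separately, with Part~(2) following from standard witnessing and Part~(1) from a direct simulation by the oracle algorithm given by the hypothesis.

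\textbf{Part (2).} Let $R \in \TF\Sigma^p_i$ be arbitrary. Then $R(x,y)$ is equivalent to a $\Pi^b_{i-1}(\Lpv)$-formula $\phi(x,y)$ for some polynomial length bound $t(x)$ on $y$. Totality of $R$ yields the true $\forall \Sigma^b_i$-sentence $\forall x\, \exists y\le t(x)\, \phi(x,y)$, which is an axiom of $\Tpv^{i+1}$ and hence, by the assumption $\Tpv^i\equiv\Tpv^{i+1}$, provable in $\Tpv^i$. Lemma~\ref{lmm:witnessing-standard} then extracts an $\FP^{\Sigma^p_{i-1}}$ algorithm solving the search problem for $R$, giving $\TF\Sigma^p_i \subseteq \FP^{\Sigma^p_{i-1}}$.

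\textbf{Part (1).} Since $\Tpv^i \subseteq \Tpv^{i+1}$ is trivial, it suffices to show that every true $\forall \Sigma^b_i$-sentence is provable in $\Tpv^i$. By Lemma~\ref{lmm:merge-quantifier} and the sharply-bounded analysis of Section~\ref{sec:sharply-bounded}, we reduce to sentences of the form $\sigma \equiv \forall x\, \exists y \le t(x)\, \phi(x,y)$ with $\phi \in \Pi^b_{i-1}(\Lpv)$. By hypothesis, the associated $\TF\Sigma^p_i$ search problem is solved by a polynomial-time algorithm $\mathcal A$ with $\TF\Sigma^p_{i-1}$-oracle access. Encode a full execution of $\mathcal A$ on input $x$ as a single polynomial-size object $\rho$ recording the internal configurations together with the oracle responses $r_1,\ldots,r_k$. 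Let $\mathrm{Valid}(x,\rho)$ assert that every transition in $\rho$ is consistent with the $\PV$ code of $\mathcal A$ (quantifier-free in $\Lpv$) and that each $r_j$ satisfies the $\Pi^b_{i-2}$-relation of the $j$-th oracle call; using a sharply-bounded universal quantifier over the step index, $\mathrm{Valid}(x,\rho)$ can be written as a $\Pi^b_{i-2}$-formula, and let $\mathrm{Output}(x,\rho)$ denote the PV-computable final output of the trace.

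Two true sentences then become axioms of $\Tpv^i$: the \emph{existence of a valid trace}
\[
\forall x\, \exists \rho \le T(|x|)\, \mathrm{Valid}(x,\rho),
\]
which is of quantifier complexity $\forall\Sigma^b_{i-1}$ and is true by totality of the oracles in the standard model; and the \emph{correctness of $\mathcal A$},
\[
\forall x, \rho\, \bigl[\mathrm{Valid}(x,\rho)\to \phi(x, \mathrm{Output}(x,\rho))\bigr],
\]
whose body, after writing $\phi = \forall \vec w\, \rho_0(x,y,\vec w)$ for some $\rho_0 \in \Sigma^b_{i-2}$ and pulling $\vec w$ outside, reduces to a prenex implication $\Pi^b_{i-2} \to \Sigma^b_{i-2}$ whose body is $\Sigma^b_{i-2}$, making the whole correctness sentence a $\forall \Sigma^b_{i-2}\subseteq\forall \Sigma^b_{i-1}$-sentence. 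Combining these two axioms in $\Tpv^i$ immediately yields $\sigma$, concluding the proof.

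\textbf{Main obstacle.} The delicacy lies in verifying that the correctness sentence really has quantifier complexity $\forall \Sigma^b_{i-1}$ (so as to be available as an axiom of $\Tpv^i$), and that the entire polynomial-length oracle trace can be packaged into a single existential quantifier without inflating the quantifier depth; both rely on the ability to freely manipulate sharply-bounded quantifiers as justified by Section~\ref{sec:sharply-bounded}, and on the asymmetry between the $\Pi^b_{i-2}$ premise and $\Pi^b_{i-1}$ conclusion in the correctness statement.
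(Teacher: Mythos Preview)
Your proof is correct and follows essentially the same approach as the paper. Part~(2) is identical. For Part~(1), the paper phrases the argument more succinctly by directly positing a $\Sigma^b_{i-1}$-formula $\beta(x,y)$ that is total and contained in $R$, then invoking the two axioms $\forall x\,\exists y\le t(x)\,\beta(x,y)$ and $\forall x\,\forall y\,(\beta(x,y)\to\phi(x,y))$; your trace predicate is precisely the explicit construction of this $\beta$, namely $\beta(x,y)\equiv\exists\rho\,(\mathrm{Valid}(x,\rho)\wedge\mathrm{Output}(x,\rho)=y)$, and your two axioms unpack the paper's two in the same way. One small caveat: for $i=1$ your existence axiom $\forall x\,\exists\rho\,\mathrm{Valid}(x,\rho)$ is $\forall\Sigma^b_1$ rather than $\forall\Sigma^b_0$, so it is not literally an axiom of $\Tpv^1$; but since in that case the (oracle-free) trace is itself a $\PV$-function of $x$, the existential collapses and the sentence becomes universal, which you may want to remark explicitly.
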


\begin{proof}
$(1)$ Assume that $\TF\Sigma^p_i\subseteq \FP^{\TF\Sigma^p_{i-1}}$. We need to show that for every $\varphi\in \Tpv^{i+1}$, $\Tpv^{i}\vdash \varphi$. Since it is enough to prove this for the axioms of $\Tpv^{i+1}$, we can assume without loss of generality that $\varphi=\forall x~\exists y\le t(x)~\phi(x,y)$ for some $\Pi^b_{i-1}$-formula $\phi$. Let $R\subseteq\{0,1\}^*\times\{0,1\}^*$ be the search problem such that $(x,y)\in R$ if and only if $y\le t(x)$ and $\phi^\bbN(x,y)$. Using the assumption in Item (\emph{i}), we get that this search problem can be solved in  $\FP^{\TF\Sigma^p_{i-1}}$. In particular, there is a $\Sigma^b_{i-1}$-formula $\beta(x,y)$ that is total over $\mathbb{N}$ and only accepts a pair $(x,y)$ if $(x,y)\in R$. Thus $\Tpv^i\vdash\forall x~\exists y\le t(x)~\beta(x,y)$ and $\Tpv^i\vdash\forall x~\forall y\le t(x)~(\beta(x,y)\to\phi(x,y))$ by counting the quantifier complexity of these two sentences. It then follows that $\Tpv^i\vdash \varphi$.

$(2)$ Assume that $\Tpv^{i+1}\equiv\Tpv^i$. Let $R\in\TF\Sigma^p_i$ be a total relation such that for every $(x,y)\in R$, $|y|\le |x|^c$. Let $\beta(x,y)$ be a $\Pi^b_{i-1}$-formula that captures over the standard model that $(x,y)\in R$. Then $\Tpv^{i+1}\vdash \forall x~\exists y\in\{0,1\}^{|x|^c}~\beta(x,y)$, which further means by the assumption in Item (\emph{ii}) that $\Tpv^i\vdash\forall x~\exists y\in\{0,1\}^{|x|^c}~\beta(x,y)$. By Lemma \ref{lmm:witnessing-standard}, we get that the search problem corresponding to $R$ can be solved in $\FP^{\Sigma^p_{i-1}}$. 
\end{proof}

\subsection{Strength of $\Tpv^i$ and the polynomial hierarchy}\label{sec:TPVipolhier}

In this section, we relate the collapse of theories $\Tpv^i$ to a collapse of the polynomial hierarchy. More precisely, we  show that if $\Tpv^{i+2} = \Tpv^i$ then $\Sigma^p_{i+1}=\Pi^p_{i+1}$. Consequently, under the widely believed assumption that $\mathsf{PH}$ does not collapse, the theories $\Tpv^i$ can prove more sentences as $i$ increases. 

We will need a technical lemma from \citep{KrajicekPT91} employed there to relate a certain collapse in Buss's hierarchy of theories of bounded arithmetic to a corresponding collapse of the polynomial hierarchy. First, we review the following statement.

\paragraph{Principle $\bm{\Omega(i)}$.} There is a constant $k\in\bbN$ such that the following holds. For every relation $P(x,y)\in\Pi^p_i$, there are $\FP^{\Sigma_i^p}$ functions $f_1(a),f_2(a,b_1),\dots,f_k(a,b_1,\dots,b_{k-1})$ such that:
\begin{itemize}[itemsep=0pt]
    \item Either $\forall z~P^*(a,f_1(a),z)$ is true, or for every $b_1$ s.t. $\lnot P^*(a,f_1(a),b_1)$, it holds that: 
    \item Either $\forall z~P^*(a,f_2(a,b_1),z)$ is true, or for every $b_2$ s.t. $\lnot P^*(a,f_2(a,b_1),b_2)$, it holds that:
    \item Either $\forall z~P^*(a,f_3(a,b_1,b_2),z)$ is true, or $\dots$
    \item $\dots$
    \item $\forall z~P^*(a,f_k(a,b_1,b_2,\dots,b_k),z)$ is true;
\end{itemize}
where $P^*(x,y,z)\triangleq |y|\le |x|\land (y=0\lor P(x,y))\land (|y|<|z|\le |x|\to \lnot P(x,z))$.

\begin{lemma}[\citep{KrajicekPT91}, Lemma 2.2]\label{lmm:kpt-collapse}
    For every $i\ge 0$, if Principle $\Omega(i)$ is true, then $\Sigma^p_{i+1}\subseteq\P^{\Sigma^p_i}_{/\poly}$ and thus also $\Sigma^p_{i+2}=\Pi^p_{i+2}$. 
\end{lemma}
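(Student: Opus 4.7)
The lemma contains two implications: (1) $\Omega(i)$ implies $\Sigma^p_{i+1}\subseteq\P^{\Sigma^p_i}_{/\poly}$, and (2) $\Sigma^p_{i+1}\subseteq\P^{\Sigma^p_i}_{/\poly}$ implies $\Sigma^p_{i+2}=\Pi^p_{i+2}$. The second is the standard relativised Karp--Lipton collapse, so my focus will be on (1).

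For (1), the plan is as follows. Fix $L\in\Sigma^p_{i+1}$, written as $L(x)\iff\exists y\in\{0,1\}^{|x|^c}\,P(x,y)$ with $P\in\Pi^p_i$. Apply Principle $\Omega(i)$ to the relation $P$ to obtain the constant $k$ and $\FP^{\Sigma^p_i}$ functions $f_1,\ldots,f_k$. By construction, on any input $x$ and any sequence of valid teacher responses, after at most $k$ rounds the student outputs a value $y^*(x)$ satisfying $\forall z\,P^*(x,y^*(x),z)$; by the definition of $P^*$, any such $y^*(x)$ is a ``longest'' $P$-witness for $x$ (or the sentinel $0$ if none exists), and so $x\in L$ if and only if $y^*(x)\neq 0$. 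Thus it suffices to compute $y^*(x)$ in $\P^{\Sigma^p_i}_{/\poly}$.

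To achieve this, I will simulate the protocol non-uniformly. At each round $j$, the student's move $y_j=f_j(x,b_1,\ldots,b_{j-1})$ is computable directly with the $\Sigma^p_i$-oracle, and checking whether $y_j$ already satisfies $\forall z\,P^*(x,y_j,z)$ (a $\Pi^p_i$-predicate) costs a single oracle query. When $y_j$ is not winning, a counterexample $b_j$ with $\lnot P^*(x,y_j,b_j)$ must be provided; this is inherently a $\Sigma^p_{i+1}$-search, and I will resolve it using the polynomial advice $\alpha_n$. My proposal is to encode in $\alpha_n$ a polynomial-size list of candidate counterexamples, together with enough auxiliary bookkeeping, so that for every input $x\in\{0,1\}^n$ and every reachable student state the algorithm can locate a valid counterexample by enumerating entries of $\alpha_n$ and verifying each with a $\Sigma^p_i$-oracle call. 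The main obstacle is bounding $|\alpha_n|$ by $\poly(n)$: a naive construction needs one counterexample per input and would require exponential advice. I expect the bound to follow by induction on the round number $j$, exploiting the constant-round guarantee of $\Omega(i)$ and the $\FP^{\Sigma^p_i}$-uniformity of the student to show, via a pigeonhole/counting argument, that polynomially many canonical counterexamples suffice at each round. An alternative route I will attempt in parallel is to bootstrap: use the advice inductively on $j$, where the $j$-th layer of advice is itself a $\P^{\Sigma^p_i}$-circuit with shorter advice supplied by the $(j-1)$-th layer, exploiting the fact that $k$ is an absolute constant.

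Implication (2) will follow from the relativised Karp--Lipton argument. Given $L\in\Pi^p_{i+2}$ of the form $\forall y\,R(x,y)$ with $R\in\Sigma^p_{i+1}$, I will use the $\Sigma^p_i$-oracle circuit family $\{C^{\Sigma^p_i}(\cdot,\cdot,\alpha_n)\}$ for $R$ supplied by (1) and express $x\in L$ as ``there exists advice $\alpha$ of the appropriate length such that $\alpha$ correctly decides $R$ on all short inputs and $\forall y\,C^{\Sigma^p_i}(x,y,\alpha)=1$''. The correctness condition is $\Pi^p_{i+1}$ when formulated using self-reducibility of the $\Sigma^p_{i+1}$-search problem (the circuit is required to be complete, i.e.\ output $1$ precisely on $(x',y')\in R$, which is checked by a universal comparison that reduces to verifying witnesses in $\P^{\Sigma^p_i}$). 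The whole statement therefore takes the form $\exists\alpha\,\forall y\,\forall (x',y')\,(\cdots)$ with a $\Pi^p_i$ matrix, placing it in $\Sigma^p_{i+2}$ and yielding $\Pi^p_{i+2}\subseteq\Sigma^p_{i+2}$, hence $\Sigma^p_{i+2}=\Pi^p_{i+2}$.
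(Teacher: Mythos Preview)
The paper does not prove this lemma; it is quoted from \cite{KrajicekPT91} as a black box, so there is no paper proof to compare against and I can only evaluate your sketch on its own merits.

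Your plan for (1) contains a concrete error that turns into the central gap. You claim that ``checking whether $y_j$ already satisfies $\forall z\,P^*(x,y_j,z)$ (a $\Pi^p_i$-predicate) costs a single oracle query,'' but this is false: the conjunct $(|y_j|<|z|\le|x|)\to\lnot P(x,z)$ inside $P^*$ is $\Sigma^p_i$ (since $\lnot P\in\Sigma^p_i$), so the outer $\forall z$ pushes the winning test into $\Pi^p_{i+1}$. With only a $\Sigma^p_i$ oracle you cannot decide whether $y_j$ wins, and this is the \emph{same} obstruction as finding a counterexample---both reduce to ``does there exist $z$ with $|z|>|y_j|$ and $P(x,z)$?'', which is exactly the $\Sigma^p_{i+1}$ membership question you are trying to solve. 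For the counterexample step itself, in the only interesting case $y_j=0$ a valid $b_j$ must satisfy $P(x,b_j)$, and this is $x$-specific; nothing in your setup collapses the $x$-dependence to a polynomial list, and neither ``pigeonhole/counting'' nor ``bootstrap over rounds'' supplies such a mechanism (there are $2^n$ inputs, each potentially needing its own witness, and round $j$ is no easier than round~$1$). You have correctly located the difficulty but not the idea that resolves it; the argument in \cite{KrajicekPT91} does not proceed by feeding the instance $x$ into $f_1,\dots,f_k$ and then hoping a short advice string covers all $x$. Your outline of (2) is the standard relativised Karp--Lipton collapse and is fine in spirit, though as written your correctness clause still references $R\in\Sigma^p_{i+1}$; to stay within $\Sigma^p_{i+2}$ you must invoke self-reducibility so that correctness of the advice is certified by producing and checking $\Pi^p_i$-witnesses rather than by equality with $R$.
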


\begin{theorem} 
For every $i\ge 1$, if $\Tpv^i\equiv\Tpv^{i+2}$, then $\Sigma^p_{i}\subseteq\P^{\Sigma^p_{i-1}}_{/\poly}$ and thus also $\Sigma^p_{i+1}=\Pi^p_{i+1}$. 
\end{theorem}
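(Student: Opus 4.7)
The plan is to derive Principle~$\Omega(i-1)$ from the hypothesis $\Tpv^i\equiv \Tpv^{i+2}$ and then invoke \Cref{lmm:kpt-collapse} with parameter $i-1$. Fix any predicate $P(x,y)\in \Pi^p_{i-1}$ and let $P^*(x,y,z)$ be the auxiliary formula from Principle~$\Omega$. Since $P(x,y)$ is $\Pi^b_{i-1}$ and $\lnot P(x,z)$ is $\Sigma^b_{i-1}$, the matrix $P^*$ is a Boolean combination of formulas in $\Sigma^b_{i-1}\cup\Pi^b_{i-1}$, hence lies in $\Sigma^b_i\cap \Pi^b_i$. Consider the sentence
\[
\sigma_P \;\triangleq\; \forall x~\exists y\le 2^{|x|}~\forall z\le 2^{|x|}~P^*(x,y,z).
\]
This is a true $\forall \Sigma^b_{i+1}$-sentence (take $y$ as the lexicographic-first witness of $P(x,\cdot)$ when one exists and $y=0$ otherwise), so it belongs to the axioms of $\Tpv^{i+2}$. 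By the assumption, $\Tpv^i\vdash \sigma_P$ as well.

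The next step is to extract $\FP^{\Sigma^p_{i-1}}$ witnesses by KPT witnessing. By \Cref{thm:universal_theory}, $\UTpv^i\vdash \sigma_P$, and the matrix $P^*$ becomes quantifier-free over $\LUT^i$ once we replace $P$ and $\lnot P$ by the function symbol $f_P\in\LUT^i$ introduced for the $\Pi^b_{i-1}$-formula $P(x,y)$. Thus $\sigma_P$ is a $\forall\exists\forall$-sentence over the universal theory $\UTpv^i$ with a quantifier-free matrix, and \Cref{thm:KPT} produces a constant $k$ and $\LUT^i$-terms $t_1,\dots,t_{k}$ with
\[
\UTpv^i\;\vdash\;\forall x~\forall b_1,\dots,b_{k}~\bigvee_{j=1}^{k} P^*\!\big(x,\,t_j(x,b_1,\dots,b_{j-1}),\,b_j\big).
\]
By the soundness of $\UTpv^i$ this disjunction is true in $\mathbb{N}$, and by \Cref{thm:universal_theory} each $t_j^{\mathbb{N}}$ lies in $\FP^{\Sigma^p_{i-1}}$.

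Setting $f_j\triangleq t_j^{\mathbb{N}}$, a short induction on the number of counterexamples verifies Principle~$\Omega(i-1)$ for the predicate $P$: if at every stage $j$ the falsifier could supply a $b_j$ with $\lnot P^*(x,f_j(x,b_1,\dots,b_{j-1}),b_j)$, plugging $(b_1,\dots,b_k)$ into the disjunction above would falsify every disjunct, a contradiction; hence some $f_j$ satisfies $\forall z~P^*(x,f_j(\cdots),z)$. Applying \Cref{lmm:kpt-collapse} with parameter $i-1$ then yields $\Sigma^p_i\subseteq \P^{\Sigma^p_{i-1}}_{/\poly}$ and therefore $\Sigma^p_{i+1}=\Pi^p_{i+1}$. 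The only delicate point is checking that $\sigma_P$ has quantifier complexity $\forall \Sigma^b_{i+1}$ so that it is an axiom of $\Tpv^{i+2}$; everything else is a direct use of KPT witnessing in $\UTpv^i$ combined with the $\FP^{\Sigma^p_{i-1}}$ term-complexity clause of \Cref{thm:universal_theory}.
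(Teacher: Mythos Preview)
Your proposal is correct and follows essentially the same route as the paper: derive Principle~$\Omega(i-1)$ by observing that the sentence $\forall x\,\exists y\,\forall z\,P^*(x,y,z)$ is a true $\forall\Sigma^b_{i+1}$-sentence, transfer its provability from $\Tpv^{i+2}$ to $\Tpv^i$, pass to a universal extension where $P^*$ becomes quantifier-free via the $f_\alpha$ symbols, apply KPT witnessing, and read off $\FP^{\Sigma^p_{i-1}}$ functions from the term complexity. The only cosmetic differences are that the paper works in $\Upv^i$ and cites \Cref{lmm:defining-axiom-f} and \Cref{thm:lpvi-term-complexity} directly, whereas you route through $\UTpv^i$ and \Cref{thm:universal_theory}; and the paper writes the bounds as $|y|\le|x|$ rather than $y\le 2^{|x|}$.
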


\begin{proof}
By \Cref{lmm:kpt-collapse}, it suffices to show that $\Tpv^i\equiv\Tpv^{i+2}$ implies Principle $\Omega(i-1)$, for each $i\ge 1$. Assume that $\Tpv^i\equiv\Tpv^{i+2}$. For every relation $P(x,y)\in\Pi^p_{i-1}$, consider the $\Pi^b_{i-1}(\Lpv)$-formula $\alpha(x,y)$ that defines it and let $\alpha^*(x,y,z)$ be defined as
\[
\alpha^*(x,y,z)\triangleq |y|\le |x|\land (y=0\lor \alpha(x,y))\land (|y|<|z|\le |x|\to \lnot \alpha(x,z)).
\]
Let $\varphi\triangleq \forall x~\exists |y|\le |x|~\forall |z|\le|x|~\alpha^*(x,y,z)$. Since $\bbN\vDash\varphi$ and $\varphi$ is a $\forall \Sigma^b_{i+1}$ sentence, we obtain that $\Tpv^{i+2}\vdash\varphi$ and thus $\Tpv^i\vdash\varphi$ by the assumption that $\Tpv^i\equiv\Tpv^{i+2}$. By Theorem \ref{thm:Upv-extends-Tpv}, it follows that $\Upv^i\vdash\varphi$. Moreover, by \Cref{lmm:defining-axiom-f}, we know that 
\[
\Upv^i\vdash\forall x~\exists y~\forall z~|y|\le |x|\land (y=0\lor f_\alpha(x,y))\land (|y|<|z|\le |x|\to \lnot f_\alpha(x,z)),
\]
where $f_\alpha^\bbN(x,y)$ is exactly $P(x,y)$. Principle $\Omega(i-1)$ then follows directly from the KPT Witnessing Theorem (\Cref{thm:KPT}) and \Cref{thm:lpvi-term-complexity}.
\end{proof}

\subsection{On the provability of $\NP\nsubseteq\ioP$}\label{sec:NPnotinioP}

Recall that the axioms of $\Tpv^2$ consist of all true $\forall \Sigma^b_1$-sentences in the language $\Lpv$. In this section, we give a simple example of a complexity lower bound  encoded by a collection of  $\forall \Sigma^b_2(\Lpv)$-sentences provable in $\Tpv^2$, assuming the lower bound holds. (Note that the formalization below uses $n \in \mathsf{Log}$, while our unprovability results hold even for $n \in \mathsf{Log} \mathsf{Log}$.)

\newcommand{\Sat}{\mathsf{3SAT}}

\begin{theorem}
Assume that $\NP\nsubseteq\ioP$. For every polynomial-time Turing machine $A$, there is a constant $n_0\in\bbN$ such that $\Tpv^2$ proves
\begin{align*}
\mathsf{Fail}(A)\triangleq\forall n\in\Log~\exists \varphi(x_1,\dots,x_m)\in\{0,1\}^n~\Big(n>n_0\to\mathsf{Error}(A,\varphi)\Big),
\end{align*}
where $\varphi$ is an 3-CNF formula, and 
\[
\mathsf{Error}(A,\varphi)\triangleq (\exists x\in\{0,1\}^m~\varphi(x)=1\land A(\varphi)=0)
\lor(\forall x\in\{0,1\}^m~\varphi(x)=0\land A(\varphi)=1).
\]
\end{theorem}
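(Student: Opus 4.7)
The plan is to prove $\mathsf{Fail}(A)$ inside $\Tpv^2$ by deducing it classically from a carefully chosen true $\forall\Sigma^b_1(\Lpv)$-sentence $\Theta_A$, which is automatically an axiom of $\Tpv^2$.

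First, I would build a polynomial-time algorithm $A^\ast$ from $A$ by performing the standard search-to-decision self-reduction for 3-SAT. Given a 3-CNF $\varphi \in \{0,1\}^n$, $A^\ast$ iteratively sets variables: with partial assignment $\alpha$ and next variable $x_i$, it queries $A$ on the two restrictions $\varphi|_{\alpha,\,x_i=b}$ for $b\in\{0,1\}$, each padded back to length $n$ by appending tautological 3-clauses $(z\vee \bar z\vee w)$ over fresh variables, and extends $\alpha$ with the bit for which $A$ accepts (halting with output $0$ if $A$ rejects both). The self-reduction succeeds only by producing a full satisfying assignment, in which case $A^\ast$ outputs $1$. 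Consequently $A^\ast(\varphi)=1$ implies $\varphi\in\Sat$, so the only errors of $A^\ast$ are of the witnessable type $A^\ast(\varphi)=0 \wedge \varphi(x)=1$. Applying $\NP\nsubseteq\ioP$ to an appropriately padded version of 3-SAT (hence to the polynomial-time $A^\ast$) yields a constant $n_0$ such that for every $n>n_0$ there exist $\varphi\in\{0,1\}^n$ and $x$ with $\varphi(x)=1$ and $A^\ast(\varphi)=0$. The resulting sentence
\[
  \Theta_A \triangleq \forall n\in\Log~\exists \varphi\in\{0,1\}^n~\exists x~\bigl( n>n_0 \to \varphi(x)=1 \wedge A^\ast(\varphi)=0 \bigr)
\]
is a true $\forall\Sigma^b_1(\Lpv)$-sentence (all predicates are polynomial-time) and therefore an axiom of $\Tpv^2$.

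Next, I would derive $\mathsf{Fail}(A)$ from $\Theta_A$ inside $\Tpv^2$. Fix $n>n_0$ and $(\varphi,x)$ as given. An $\Lpv$-term inspects the polynomial-length trace of $A^\ast(\varphi)$ and extracts either (i) a direct failure $A(\varphi)=0$, in which case $\mathsf{Error}(A,\varphi)$ holds via $x$, or (ii) a stuck point $(\alpha,i)$ producing three length-$n$ padded 3-CNFs $\psi_\ast,\psi_0,\psi_1$ with $A(\psi_\ast)=1$ and $A(\psi_0)=A(\psi_1)=0$. In case (ii), the law of excluded middle on the bounded formula $\exists y\, \psi_\ast(y)=1$ closes the argument: if $\psi_\ast$ is satisfiable with witness $y$, the fresh-variable padding makes it an $\Lpv$-provable transformation to derive a satisfying assignment of $\psi_{y_i}$ from $y$, yielding $\mathsf{Error}(A,\psi_{y_i})$; otherwise the universal claim $\forall y\, \psi_\ast(y)=0$ combined with $A(\psi_\ast)=1$ gives $\mathsf{Error}(A,\psi_\ast)$. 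In every branch, some 3-CNF of length exactly $n$ witnesses $\mathsf{Error}(A,\cdot)$, so $\mathsf{Fail}(A)$ follows.

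The main obstacle is the padding: it must produce a 3-CNF of length exactly $n$ at every step, preserve satisfiability, and use only fresh variables so that restriction commutes with padding at the satisfiability level, guaranteeing that a witness of $\psi_\ast$'s satisfiability can be converted into a witness of $\psi_{y_i}$'s satisfiability by a $\Tpv^1$-derivable $\Lpv$-term. A routine but careful encoding using tautological clauses on fresh variables, after enlarging $n_0$ to absorb $O(\log n)$ bookkeeping, accomplishes this; beyond it, the argument is just the standard self-reduction combined with a two-case LEM application inside $\Tpv^2$.
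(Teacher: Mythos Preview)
Your proposal is correct and essentially identical to the paper's argument: both construct a one-sided-error decider from $A$ via the standard search-to-decision self-reduction for 3-SAT, observe that its failure on satisfiable instances is a true $\forall\Sigma^b_1(\Lpv)$-sentence (hence an axiom of $\Tpv^2$), and then derive $\mathsf{Fail}(A)$ inside $\Tpv^2$ by analyzing the trace of the self-reduction. The paper packages the last step as a contrapositive (if $A$ makes no errors on length-$n$ instances then the self-reduction succeeds, provable already in $\PV$) whereas you do an explicit case split with LEM; your split should also cover the terminal case where the self-reduction reaches a full assignment $z$ with $\varphi(z)=0$, but that is handled by the same idea (the last restricted formula accepted by $A$ is then unsatisfiable).
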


\begin{proof}
    Assume that $\NP\nsubseteq\ioP$. Then $\Sat\nsubseteq\ioP$, which means that for every polynomial-time Turing machine $A$, there exists a constant $n_0$ such that $A$ does not solve $\Sat$ on instances of length $n>n_0$. Let $A$ be an arbitrary polynomial-time Turing machine. We would like to show that $\Tpv^2\vdash\mathsf{Fail}(A)$. 
    
    \paragraph{Search-to-Decision Reduction.} We firstly use a standard search-to-decision reduction to construct an efficient algorithm $S$ that searches for a satisfying assignment, assuming that $A$ solves $\SAT$. An explicit description of $S$ appears below (see  Algorithm \ref{algo:S-search-sat}: Search-SAT Algorithm $S$).
    
    \begin{algorithm}[htb]
    \SetKwInOut{Input}{Input}
    \SetKwInOut{Advice}{Advice}
    \caption{Search-SAT Algorithm $S$}\label{algo:S-search-sat}
    \Input{A string $\varphi\in\{0,1\}^{n}$ encoding a 3-CNF formula.}
    Let $\varphi_1(x_1,x_2,\dots,x_m)$ be $\varphi$\;
    Let $z\in\{0,1\}^m$ be a string to be determined\;
    If $A(\varphi_1)=0$, \Return{$(0,0)$}\;
    \For{$i=1,2,\dots,m$} {
    \uIf{$A(\varphi_{i}(x_i/0))=1$}{
    Let $z_i=0$ and $\varphi_{i+1}=\varphi_{i}(x_i/0)$\;
    }\uElseIf{$A(\varphi_i(x_i/1))=1$}{
    Let $z_i=1$ and $\varphi_{i+1}=\varphi_{i}(x_i/1)$\;
    }\Else{
    \Return{$(1,\varphi_i)$};
    }
    }
    \Return{$(2,z)$}\;
    \end{algorithm}
    
    Without loss of generality, we assume that for every $\varphi=\varphi_1\in\{0,1\}^n$ and $i \in [m]$, the corresponding formulas $\varphi_i(x_i/0)$ and $\varphi_i(x_i/1)$ can also be encoded as $n$-bit strings. Let $A'$ be the following polynomial-time algorithm: given an instance $\varphi\in\{0,1\}^n$ encoding a 3-CNF formula; run $S(\varphi)=(b,z)$; accept if and only if $b=2$ and $\varphi(z)=1$. 
    
    \begin{claim}\label{claim:error-A'}
    There is a constant $n_1\in\bbN$ such that $\Tpv^2\vdash\forall n\in\Log~\exists \varphi(x_1,\dots,x_m)\in\{0,1\}^n~\exists x\in\{0,1\}^m~(n>n_1\to \varphi(x)=1\land A'(\varphi)=0)$.
    \end{claim}
    \begin{proof}
    By the definition of $A'$ we can see that it has only one-sided error, i.e., for every $\varphi$ such that $A'(\varphi)=1$, $\varphi$ is satisfiable. Since $\Sat\nsubseteq\ioP$, the sentence is a $\forall \Sigma^b_1$-sentence that is true in the standard model provided that $n_1$ is large enough, which further means that it is provable in $\Tpv^2$. 
    \end{proof}
    
    \begin{claim}\label{claim:find-counterexample-from-A'} We have that 
    $\Tpv^2\vdash\forall n\in\Log~\forall \varphi(x_1,\dots,x_m)\in\{0,1\}^n~(A(\varphi)=1\land A'(\varphi)=0\to \exists\varphi'\in\{0,1\}^n~\mathsf{Error}(A,\varphi'))$.
    \end{claim}
    \begin{proof}
     Indeed, it is possible to establish even in $\mathsf{PV}$ that if $\lnot \mathsf{Error}(A,\varphi')$ holds for every $\varphi' \in \{0,1\}^n$ then the search-to-decision reduction  works as desired and consequently $\lnot (A(\varphi) = 1 \land A'(\varphi) = 0)$. We omit the details.
    \end{proof}

    \paragraph{Provability of the Hardness of $\Sat$.} Now we prove in $\Tpv^2$ that $\mathsf{Fail}(A)$ holds for $n_0\triangleq n_1$, where $n_1\in\bbN$ is the constant in Claim \ref{claim:error-A'}. Arguing in the theory, let $n\in\Log$ be larger than $n_1$. Towards a contradiction, assume that for every $\varphi(x_1,\dots,x_m)\in\{0,1\}^n$, $\lnot\mathsf{Error}(A,\varphi)$. Let $\varphi(x_1,\dots,x_m)\in\{0,1\}^n$ be a 3-CNF formula from Claim \ref{claim:error-A'} such that $\exists x\in\{0,1\}^m~(n>n_1\to \varphi(x)=1\land A'(\varphi)=0)$. Since $\varphi$ is satisfiable and by assumption $\lnot\mathsf{Error}(A,\varphi)$, we get that $A(\varphi) = 1$. Consequently, we have both $A(\varphi)= 1 \land A'(\varphi) = 0$. In turn, \Cref{claim:find-counterexample-from-A'} yields the existence of $\varphi' \in \{0,1\}^n$ such that $\mathsf{Error}(A, \varphi')$. This is in contradiction to the initial assumption on the correctness of $A'$ on all instances of length $n$.\qedhere
\end{proof}

\section{Proofs of the Witnessing Theorems}\label{sec:appendix_witnessing}

In this section, we present some omitted proofs for the witnessing theorems discussed in \Cref{sec:witnessing-1}.

\subsection{Proof of \Cref{thm:witnessing-tree-exploration} via Herbrand's Theorem} \label{sec:proof_herbrand}

We now demonstrate a proof of the witnessing theorem using Herbrand's Theorem.\footnote{We thank an anonymous reviewer for suggesting this perspective, which simplified our previous presentation relying on a direct analysis based on a sequent calculus. J.~Kraj\'{i}\v{c}ek has also proposed a simplification of the original proof via Gentzen's Midsequent Theorem that we do not explore here.} The latter appears in different forms in the literature; in this section, we refer to the exposition in \cite{buss1998handbook}.\footnote{See also \cite{mckinley2010sequent} for a correction in the proof of Herbrand's  Theorem presented in \cite{buss1998handbook}.} 

We start by clarifying some basic definitions. We work with connectives and quantifiers $\{\forall,\exists,\land,\lor,\lnot\}$ and define other connectives from them. We always assume that first-order sentences are written in \emph{negation normal form}, i.e., negations are placed only over atoms. A formula is said to be in \emph{prenex normal form} if it can be written as $Q_1 x_1~Q_2 x_2~\dots~Q_k x_k~P$, where $Q_1,\dots,Q_k\in\{\forall,\exists\}$ and $P$ is quantifier-free. We identify formulas that differ only by a renaming of bounded variables. 

\begin{definition}
    Let $\varphi(x)$ be a formula. A \emph{prenexification} of $\varphi(x)$ is a formula in prenex normal form obtained by successive applications of the following operations $Qx~\phi \star \psi\mapsto Qx~(\phi\star\psi)$ and $\phi \star Qx~\psi \mapsto Qx~(\phi\star\psi)$,
    where $Q\in\{\forall,\exists\}$ and $\star\in\{\land,\lor\}$. As usual, variables are renamed whenever necessary.
\end{definition}

\begin{definition} An $\lor$\emph{-expansion} of a formula $\varphi$ is any formula obtained from $\varphi$ through applications of the following operation:
\begin{itemize}
    \item[] If $\psi$ is a subformula of an $\lor$-expansion $\varphi'$ of $\varphi$, replacing $\psi$ in $\varphi'$ with $\psi \lor \psi$ produces another $\lor$-expansion of $\varphi$. 
\end{itemize}
    A \emph{strong $\lor$-expansion} of a formula $\varphi$ restricts $\psi$ to be a subformula where the outermost connective is an existential quantifier. Similarly to the previous definition, multiple applications of the rule are allowed. 
\end{definition}

\begin{definition}
    Let $\calT$ be a theory and $\varphi$ be a formula in prenex normal form: 
    \[
    \varphi\eqdef \forall x_1\dots\forall x_{n_1}~\exists y_1~\forall x_{n_1+1}\dots \forall x_{n_2}~\exists y_2\dots   \forall x_{n_{r-1}+1}\dots\forall x_{n_{r}}~\exists y_r~\forall x_{n_{r}+1}\dots\forall x_{n_{r+1}}~\psi(\vec x, \vec y).
    \]
    A \emph{witnessing substitution for $\varphi$ over $\calT$} is a sequence of terms $t_1,\dots,t_r$ such that $\calT\vdash \forall\vec x~\varphi(\vec x,\vec y/\vec t)$, where $t_i$ contains variables only from $x_1,\dots, x_{n_i}$ for every $i\in[r]$.
\end{definition}

\begin{theorem}[Herbrand's Theorem (see, e.g.,~\cite{buss1998handbook,mckinley2010sequent})]\label{thm:herbrand-general}
    Let $\calT$ be a universal theory and $\varphi$ be a first-order formula. Then $\calT\vdash\varphi$ if and only if there is a prenexification of a strong $\lor$-expansion of $\varphi$ that admits a witnessing substitution over $\calT$. 
\end{theorem}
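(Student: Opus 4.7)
The plan is to prove the two directions of the biconditional separately.

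For the $(\Leftarrow)$ direction, suppose $\varphi^{*}$ is a strong $\lor$-expansion of $\varphi$, $\varphi^{**}$ is a prenexification of $\varphi^{*}$, and terms $t_1,\ldots,t_r$ witness the existential quantifiers of $\varphi^{**}$ with $\calT\vdash \forall\vec x~\varphi^{**}(\vec x,\vec y/\vec t)$. Successive applications of $\exists$-introduction give $\calT\vdash \varphi^{**}$. Prenexification preserves classical equivalence (a straightforward induction on the operations $Qx\,\phi\star\psi\mapsto Qx\,(\phi\star\psi)$, using variable renaming to ensure $x$ is not free in $\psi$), so $\calT\vdash \varphi^{*}$. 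Finally, any $\lor$-expansion is classically equivalent to the original formula since $\psi\leftrightarrow\psi\lor\psi$ is a tautology, hence $\calT\vdash \varphi$.

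The $(\Rightarrow)$ direction is the substantive one, and my plan is to use Gentzen-style sequent calculus $\LK$ together with cut-elimination and Gentzen's Midsequent Theorem. Assuming $\calT\vdash\varphi$, by compactness a finite set of universal axioms $A_1=\forall\vec z_1\,B_1,\ldots,A_n=\forall\vec z_n\,B_n$ proves $\varphi$ in $\LK$, with each $B_i$ quantifier-free. I would first pass to a prenexification $\varphi^{\mathrm{pre}}$ of $\varphi$ (valid, again, because prenexification preserves equivalence), obtaining an $\LK$-proof of $A_1,\ldots,A_n\Rightarrow \varphi^{\mathrm{pre}}$, and then apply cut-elimination. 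In the resulting cut-free proof, each $A_i$ is introduced only via $\forall$-left applied to finitely many ground instances $B_i(\vec s_{i,j})$.

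Now Gentzen's Midsequent Theorem furnishes a \emph{midsequent} $\Gamma^{*}\Rightarrow\Delta^{*}$ which is quantifier-free: above it appear only quantifier rules and structural (weakening/contraction) rules, below it only propositional and structural rules. The antecedent $\Gamma^{*}$ is a multiset of instances of the matrices $B_i$, and the succedent $\Delta^{*}$ is a disjunction of instances of the matrix of $\varphi^{\mathrm{pre}}$. Each disjunct in $\Delta^{*}$ is generated by a chain of $\exists$-right inferences that introduce the terms witnessing the successive existential quantifiers of that copy, while duplicated copies correspond to contractions on existentially-quantified subformulas applied below the midsequent. Reading off the terms and the contraction pattern yields a strong $\lor$-expansion $\psi^{*}$ of $\varphi^{\mathrm{pre}}$ together with a witnessing substitution over $\calT$, where each witnessing term depends only on the universally quantified variables lying to its left, as required.

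The main obstacle is the final bookkeeping step: the statement of Herbrand's Theorem requires the strong $\lor$-expansion to be applied to $\varphi$ \emph{before} prenexification, while the extraction above naturally produces an $\lor$-expansion of the prenex form $\varphi^{\mathrm{pre}}$. To bridge this gap, I would trace each existential occurrence in $\varphi^{\mathrm{pre}}$ back to its originating existential occurrence in $\varphi$ and verify that duplicating the former corresponds to a legitimate strong $\lor$-expansion step on the latter (after renaming bound variables), and likewise that the prenexification operations commute with these duplications in the appropriate sense. This compositional detail is precisely what distinguishes the stated version from the classical prenex form of Herbrand's Theorem, and it is worked out in Buss's handbook chapter together with McKinley's correction \cite{buss1998handbook,mckinley2010sequent}, whose argument I would follow for the detailed verification.
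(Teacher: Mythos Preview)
The paper does not supply its own proof of this theorem: it is stated as a known result with a citation to \cite{buss1998handbook,mckinley2010sequent}, and is then used as a black box in the proof of \Cref{thm:witnessing-tree-exploration}. So there is no in-paper proof to compare your proposal against.

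That said, your sketch is the standard route taken in those very references: cut-elimination in $\LK$ followed by the Midsequent Theorem, reading off the Herbrand disjunction and the contraction pattern to recover the $\lor$-expansion and the witnessing terms. You also correctly flag the one genuinely delicate point, namely that the naive argument produces an $\lor$-expansion of a prenexification of $\varphi$ rather than a prenexification of an $\lor$-expansion of $\varphi$, and that reconciling the two orders is exactly the content of McKinley's correction to Buss's presentation. Since you explicitly defer that step to \cite{buss1998handbook,mckinley2010sequent}, your proposal is in effect the same deferral the paper makes, just with more of the classical scaffolding spelled out.
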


\begin{reminder}[\Cref{thm:witnessing-tree-exploration}]
    Let $\calT$ be a universal bounded theory  with vocabulary $\calL$ that is closed under if-then-else. Let $\varphi$ be a bounded $\calL$-formula  of the form
    \begin{align*}
    \varphi(x)~\triangleq~&\exists y_1\le t_1(x)~\forall x_1\le s_1(x,y_1)~\exists y_2\le t_2(x,y_1,x_1)\dots \forall x_{k-1}\le s_{k-1}(x,y_1,x_1,\dots,y_{k-1})\\
  ~&\exists y_k\le t_k(x,y_1,x_1,\dots,y_{k-1}, x_{k-1})~\forall x_{k}\le s_k(x,y_1,x_1,\dots,y_{k})~\phi(x,x_1,\dots,x_k, y_1,\dots,y_k),
\end{align*}
where $\phi(x, \vec{x}, \vec{y})$ is a quantifier-free $\calL$-formula. Then $\calT\vdash\forall x~\varphi(x)$ if and only if there is a universal winning $\calL$-strategy of length $O(1)$ for the truthifier in the corresponding tree exploration game of $\varphi(x)$. 
\end{reminder}

The ``if'' direction of the theorem is simpler. Assume that $\calT\nvdash\forall x~\varphi(x)$. Then by the completeness theorem there is a model $\calM=(\calD,\calI)$ and $n_0\in\calD$ such that $\varphi^\calM(n_0)$ is false, which further means that there is a winning strategy of the falsifier in the evaluation game of $\varphi(x)$ on the broad $(\calM,n_0)$. Consider the strategy of the falsifier in the tree exploration game that simply simulates this winning strategy, i.e., after the truthifier adds a node and specifies an element on the edge, the falsifier treats the path from the root to this node as a partial transcript of the evaluation game and chooses an element according to the strategy of the evaluation game. It is clear that the truthifier cannot reach a winning node, thus it does not have a universal winning $\calL$-strategy of the tree exploration game.

In the rest of this sub-section, we prove the ``only if'' direction of the theorem, that is to extract a winning strategy from $\calT\vdash\forall x~\varphi(x)$.

\paragraph{Step 1: Unbounded Tree Exploration Game.} Due to technical reasons, we need to define unbounded variants of the tree exploration games.

Let $\varphi(x)$ be a $\Sigma^b_k$-formula in prenex normal form (with bounded quantifiers). We define the following translation $[\cdot]_{\sf imp}$ that transforms a bounded formula in prenex normal form into a logically equivalent formula with only unbounded quantifiers:
\begin{itemize}[itemsep=0pt]
    \item If $\varphi(\vec x)$ is quantifier free, $[\varphi(\vec x)]_{\sf imp}\triangleq\varphi(\vec x)$. 
    \item If $\varphi(\vec x)=\forall y\le t(\vec x)~\phi(\vec x,y)$ and $[\phi]_{\sf imp}= Q_1 z_1~Q_2 z_2\dots Q_k z_k~\alpha(\vec x,y,\vec z)$, where $Q_i\in\{\forall,\exists\}$ for $i\in[k]$ and $\alpha$ is quantifier-free, then $[\varphi(\vec x)]_{\sf imp}\triangleq \forall y~Q_1 z_1~Q_2 z_2\dots Q_k z_k(\neg( y\le t(\vec x)) \lor \alpha(\vec x,y,\vec z))$.
    \item If $\varphi(\vec x)=\exists y\le t(\vec x)~\phi(\vec x,y)$ and $[\phi]_{\sf imp}= Q_1 z_1~Q_2 z_2\dots Q_k z_k~\alpha(\vec x,y,\vec z)$, where $Q_i\in\{\forall,\exists\}$ for $i\in[k]$ and $\alpha$ is quantifier-free, then $[\varphi(\vec x)]_{\sf imp}\triangleq \exists y~Q_1 z_1~Q_2 z_2\dots Q_k z_k(y\le t(\vec x)\land \alpha(\vec x,y,\vec z))$.
\end{itemize}
We say a formula $\varphi$ is \emph{implicitly bounded} if there is a bounded formula $\psi$ in prenex normal form such that $\varphi=[\psi]_{\sf imp}$.

Let $\varphi(x)=\exists y_1~\forall x_1\dots \exists y_k~\forall x_k~\phi(x,\vec x,\vec y)$ be an implicitly bounded $\calL$-formula as discussed above and $(\calM=(\calD,\calI),n_0)$ be a board. The \emph{unbounded tree exploration game} of $\varphi$ is defined as follows. In each round, the truthifier chooses a node $u$ on the tree (which only consists of the root at the beginning) and specifies a number $m\in\calD$; the falsifier then specifies a number $n\in\calD$; after this round, a child of $u$ is added to the tree by an edge labeled $(m,n)$. The truthifier wins if and only if there is a node on the tree such that the pairs on the path from the root to the node form a satisfying assignment of $\phi(x/n_0,\vec x,\vec y)$ within $\calM$, where the truthifier's moves are for $\vec y$ and the falsifier's moves are for $\vec x$. 

An $\calL$-strategy of the truthifier of length $\ell\in\bbN$ is a sequence $$\tau=\left<p_1,r_1,p_2,r_2,\dots,p_\ell,r_\ell\right>\,,$$  where $p_i$ is an $\calL$-term and $r_i\in\bbN$ such that $1\le r_i\le i$. Let $(\calM,n_0)$ be a board. The game-theoretic strategy for the unbounded tree exploration game induced by $\tau$ is the following strategy: 
\begin{itemize}
    \item In the $i$-th move, the truthifier introduces a node numbered $i+1$ as a child of the node $r_i$, and chooses the element $v_i\triangleq p_i^{\calM}(n_0,T,\Gamma)\in\calM$, where $\Gamma$ describes the moves of previous rounds (including $v_1,\dots,v_{i-1}$ and the falsifier's moves). 
\end{itemize}
A length-$\ell$ $\calL$-strategy is said to be a universal winning strategy if the truthifier playing the induced game-theoretic strategy wins within $\ell$ moves against any strategy of the falsifier on any board $(\calM, n_0)$. 

\begin{lemma}\label{lmm:strategy-unbounded-to-bounded}
    Let $\calT$ be a bounded theory over the language $\calL$ that is closed under if-then-else. If there is an $O(1)$-length $\calL$-strategy that is a universal winning strategy of the truthifier for the unbounded tree exploration game of $\varphi = [\psi]_{\sf imp}$, then there is an $O(1)$-length $\calL$-strategy that is a universal winning strategy of the truthifier for the tree exploration game of $\psi$. 
\end{lemma}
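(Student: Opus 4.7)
The plan is to transform the given length-$\ell$ universal winning $\calL$-strategy $\tau=\langle p_1,r_1,\ldots,p_\ell,r_\ell\rangle$ of the unbounded tree exploration game of $\varphi=[\psi]_{\sf imp}$ into a strategy $\tau'$ of the bounded game for $\psi$ by \emph{clipping} each term to its legal bound. The $i$-th move is always inserted at position $r_i$, a node at a fixed depth $d_i$ in the tree; this corresponds to a specific existential quantifier of $\psi$ whose bound term $t_{j_i}$ is computable from the edge labels along the root-to-$r_i$ path. Using that $\calT$ is closed under if-then-else, I form
$$p'_i(n_0,\Gamma)\eqdef \mathsf{ite}\bigl(p_i(n_0,\Gamma)\le t_{j_i},\ p_i(n_0,\Gamma),\ 0\bigr),$$
where $t_{j_i}$ is the $\calL$-term obtained by projecting $\Gamma$ onto the root-to-$r_i$ labels and substituting into $\psi$'s bound. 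Each $p'_i$ evaluates to a value $\le t_{j_i}$, so $\tau'\eqdef\langle p'_1,r_1,\ldots,p'_\ell,r_\ell\rangle$ is a valid length-$\ell$ $\calL$-strategy of the bounded game.

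To show that $\tau'$ is universally winning, fix any board $(\calM,n_0)$ and any bounded falsifier strategy $\sigma'$. Extend $\sigma'$ to an unbounded falsifier strategy $\sigma$ that always plays a fixed bounded value (say $0$) on configurations not reachable through purely bounded play. Because $\tau$ is universally winning in the unbounded game, running $\tau$ against $\sigma$ on $(\calM,n_0)$ produces a partial game tree $T_{\sf ub}$ containing a winning node $u^*$ for the truthifier. Unpacking the recursive translation $[\cdot]_{\sf imp}$ by induction on the quantifier depth of $\psi$ shows that its quantifier-free matrix, restricted to a path whose falsifier moves all respect their bounds, simplifies over $\calM$ to the conjunction $\bigwedge_i(y_i\le t_i)\land\phi(\vec x,\vec y)$. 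Since every move of $\sigma$ is bounded by construction, this simplification applies to the root-to-$u^*$ path in $T_{\sf ub}$, forcing every truthifier move along the path to be in-bound and $\phi$ to hold at $u^*$.

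Consequently, each clipping operation is a no-op along the winning branch: $p'_i$ evaluates to the same value as $p_i$ at every step on the root-to-$u^*$ path. A straightforward induction on the depth then shows that the bounded game tree $T_{\sf b}$ produced by $\tau'$ vs $\sigma'$ agrees with $T_{\sf ub}$ along the entire root-to-$u^*$ path, because at each step the truthifier's moves coincide and $\sigma$ and $\sigma'$ respond identically to in-bound configurations (by construction of $\sigma$ as an extension of $\sigma'$). Hence $u^*$ appears in $T_{\sf b}$ with identical labels; its root-to-$u^*$ path respects every bound and makes $\phi$ true, so $u^*$ is a winning node of the bounded game, showing $\tau'$ beats $\sigma'$. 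Since $\sigma'$ was arbitrary, $\tau'$ is a universal winning $\calL$-strategy of length $\ell=O(1)$, as required. The main technical step is the matrix-simplification claim: one must unroll the recursive $[\cdot]_{\sf imp}$ translation to confirm that the innermost quantifier-free formula really collapses to $\bigwedge_i(y_i\le t_i)\land\phi$ once all falsifier bounds are satisfied. Everything else—defining $\tau'$, extending $\sigma'$ to $\sigma$, and matching labels along the winning path—is routine bookkeeping enabled by the closure of $\calT$ under if-then-else.
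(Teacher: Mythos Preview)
Your construction has a genuine gap. In the tree exploration game the truthifier's $i$-th move $p_i(n_0,\Gamma)$ depends on the \emph{entire} sequence $\Gamma=(v_1,w_1,\ldots,v_{i-1},w_{i-1})$ of all previous moves in all rounds, not just on the labels along the root-to-$r_i$ path. Your clipping term $p'_i$ is fed the actual bounded-game transcript, so for the induction ``$T_{\sf b}$ agrees with $T_{\sf ub}$ along the winning path'' to go through you would need the two transcripts to coincide at \emph{every} round prior to $i$, including rounds that create off-path nodes. But nothing forces the unbounded truthifier's off-path moves to respect the bounds; at such a round the clipped strategy plays $0$, the unbounded strategy plays some out-of-bound $v$, the falsifiers $\sigma'$ and $\sigma$ may respond differently, and from that round onward $\Gamma_{\sf b}\neq\Gamma_{\sf ub}$. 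This divergence then feeds into $p_j$ at later on-path rounds $j$, and your ``no-op on the winning branch'' claim no longer applies because $p'_j(\Gamma_{\sf b})$ and $p_j(\Gamma_{\sf ub})$ are evaluated on different inputs.

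The paper repairs this by having $p'_i$ internally rebuild a \emph{virtual} transcript $\hat\Gamma$: it replays $p_1,\ldots,p_{i-1}$ on $\hat\Gamma$, and whenever the replayed move differs from the recorded bounded move (i.e., the original move was out of bound and got clipped) it substitutes the pair $(p_{j}(\hat\Gamma),0)$---the unclipped move together with a falsifier response of $0$. Then $p'_i$ outputs $p_i(\hat\Gamma)$, clipped to $0$ if out of bound. With this definition the bounded strategy against any $\sigma'$ is, by construction, an exact simulation of the unbounded strategy against the specific unbounded falsifier that plays $0$ whenever the truthifier's move is out of bound; the contrapositive argument then goes through cleanly. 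The virtual-transcript reconstruction (which is an $\calL$-term thanks to closure under if-then-else) is the missing idea in your proposal.
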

\begin{proof}
Assume that $\tau=\left<p_1,r_1,p_2,r_2,\dots,p_\ell,r_\ell\right>$ is a universal winning $\calL$-strategy of length $\ell\in\bbN$ for the unbounded tree exploration game. Let $p_i'(x,\Gamma)$ be the term defined as follows:

\begin{enumerate}[topsep=0pt, itemsep=0pt]
    \item Parse $\Gamma=(m_1,n_1,m_2,n_2,\dots,m_{i-1},n_{i-1})$ as the moves in previous rounds. 
    \item Define $\hat \Gamma_0$ to be the empty list and $\hat \Gamma_{j+1}$ to be 
    \[
    \hat \Gamma_{j+1}=\begin{cases}
    \hat \Gamma_{j};(m_{j+1},n_{j+1}) & \text{if } m_{j+1}=p_{j+1}(n_0,\hat \Gamma_j) \\ 
    \hat \Gamma_{j};(p_{j+1}(n_0,\hat \Gamma_j),0) & \text{otherwise}
    \end{cases}
    \]
    \item Output $0$ if $p_i(n_0,\hat \Gamma_{i-1})$ is not a valid move (i.e., it violates the inequality for the bounded variable); and output $p_i(n_0,\hat \Gamma_{i-1})$ otherwise.  
\end{enumerate}
Note that such $p_i'$ always exists as $\calT$ is closed under if-then-else. We now argue that the $\calL$-quasi-strategy $\tau'\triangleq \left<p_1',r_1,p_2',r_2,\dots,p_\ell',r_\ell\right>$ is indeed a universal winning $\calL$-strategy for the tree exploration game of $\psi$. Intuitively, $\tau'$ is the following $\calL$-quasi-strategy: it simulates $\tau$ if it gives a valid move; otherwise, it simply outputs $0$ and ``forgets'' the response of the falsifier, pretending that in this round it simulates $\tau$ and the falsifier's response were $0$. 

By the definition of $p_i'$ it is easy to see that $\tau'$ is an $\calL$-strategy for the tree exploration game of $\psi$, since it will never output an invalid move. Towards a contradiction we assume that it is not a universal winning strategy. In such case, there exist a board $(\calM,n_0)$ and a strategy $\tau_\ttf'$ for the falsifier that prevents the truthifier from winning within $\ell$ rounds on the board against the truthifier playing the induced strategy of $\tau'$. We now construct a strategy $\tau_\ttf$ of the falsifier for the unbounded tree exploration game of $\varphi$ on the board $(\calM,n_0)$ that prevents $\tau$ from winning within $\ell$ rounds and thus leads to a contradiction.

\begin{itemize}[topsep=0pt]
    \item Assume that the moves of $\tau_\ttf'$ against $\tau'$ are $n_1',n_2',\dots,n_\ell'$. In the $i$-th move, if the truthifier's move is an invalid move in the (bounded) tree exploration game (i.e., it violates the inequality for the bounded variable), the falsifier chooses $n_i\triangleq 0$; otherwise the falsifier chooses $n_i\triangleq n_i'$.   
\end{itemize}
It is easy to check that against this strategy of the falsifier, $\tau$ cannot win within $\ell$ rounds. This is because the transcript of $\tau_\ttf$ vs $\tau$ is exactly the lists $\hat \Gamma$ in the definition of $\tau'$; and since $\tau'$ cannot win against $\tau'_\ttf$ within $\ell$ rounds, $\tau$ also cannot win against $\tau_\ttf$ within $\ell$ rounds.  
\end{proof}

This lemma shows that to obtain a winning strategy of the tree exploration game, we only need to construct a winning strategy of the unbounded tree exploration game. In practice, this means that we do not need to treat bounded quantifiers in a special way.

\newcommand{\phiexp}{\varphi^{\mathsf{exp}}}
\newcommand{\phipre}{\varphi^{\mathsf{pre}}}
\newcommand{\phipn}{\varphi^{\mathsf{pn}}}

\paragraph{Step 2: Strategy from Herbrand's Theorem.} Let $\varphi(x)$ be any implicitly bounded formula of form $\varphi(x)=\exists y_1~\forall x_1\dots \exists y_k~\forall x_k~\phi(x,\vec x,\vec y)$ and $\calT$ be a universal theory, where $\phi$ is quantifier-free. Assume that $\calT\vdash\forall x~\varphi(x)$. Then by \Cref{thm:herbrand-general} there is a prenexification of a strong $\lor$-expansion of $\forall x~\varphi(x)$ that admits a witnessing substitution over $\calT$. Our goal is to extract a winning strategy for the unbounded tree exploration game of $\varphi(x)$ from the strong $\lor$-expansion, prenexification, and witnessing substitution.  

Let $\phiexp$ be the strong $\lor$-expansion of $\forall x~\varphi(x)$ and $\phipre$ be a prenexification of $\phiexp$. We can see that the existential quantifiers within $\phiexp$ form a tree structure with respect to the sub-formula relation. More formally: we introduce a node $\eps_i$ for each existential quantifier $\exists_i$ in $\phiexp$, and define the node $\eps_i$ to be a child of $\eps_j$ if and only if $\exists_i$ is inside $\exists_j$ within $\phiexp$ and there is no other existential quantifiers in between. We introduce a root node $\eps_0$ corresponding to the entire sentence that has all nodes without parent as children. 

Let $T$ be the tree defined above. It is easy to see that the tree has depth $k$ and every leaf in $T$ is in the $k$-th level (the root is in the $0$-th level). 

We can observe that for every existential quantifier $\exists_i$ in $\phiexp$, there is a universal quantifier $\forall_i$ that immediately follows it (i.e.~$\forall_i$ is the outermost quantifier of the formula quantified by $\exists_i$), and this pair $(\exists_i,\forall_i)$ is a copy of an adjacent pair of quantifiers in $\varphi(x)$. Conversely, every universal quantifier (except for the outermost one) directly follows an existential quantifier. Therefore the quantifiers of $\phiexp$ except for the outermost one are partitioned into disjoint pairs $(\exists_i,\forall_i)$ as defined above. 

Recall that $\phipre$ is a prenexification of $\phiexp$, that is, we turn $\phiexp$ into its prenex normal form by prenexification rules. Moreover, the order of existential quantifiers of $\phipre$ is essentially a traversal of $T$. That is, for every $\eps_i$ and $\eps_j$ in $T$ such that $\eps_j$ is a child of $\eps_i$, where $\eps_i$ corresponds to $\exists_i$ and $\eps_j$ corresponds to $\exists_j$, then $\exists_i$ is to the left of $\exists_j$ in $\phipre$. In addition, for every existential quantifier $\exists_i$, its corresponding universal quantifier $\forall_i$ appears to the right of $\exists_i$. Let $\phipn$ be the sentence obtained from $\phipre$ by moving $\forall_i$ to the immediate right of $\phipre$, for every pair $(\exists_i,\forall_i)$ of corresponding quantifiers, as defined above. We note that if $\phipre$ has a witnessing substitution, then $\phipn$ also has a witnessing substitution. (Intuitively, this is because $\exists x~\forall y ~\phi(x,y)$ implies $\forall y~\exists x~\phi(x,y)$.) 

Now we focus on the structure of $\phipn$. The quantifiers of $\phipn$ (except for the outermost universal quantifier) are obtained from a traversal of $T$, where every universal quantifier immediate follows its existential quantifier. The quantifier-free formula within all the quantifiers is a disjunction of copies of $\phi(x,\vec x,\vec y)$, where: 
\begin{itemize} 
\item $x$ is a bounded variable quantified by the outermost universal quantifier; 
\item $\vec x=(x_1,\dots,x_k)$ and $\vec y=(y_1,\dots,y_k)$ are bounded variables quantified by universal and existential quantifiers within $\phipn$, respectively, where $y_i$ and $x_i$ are quantified by a pair of corresponding pairs of existential and universal quantifiers.
\item Assume that $y_i,x_i$ are quantified by $\exists_i,\forall_i$, respectively, for every $i\in[k]$. Let $\eps_i$ be the node in $T$ corresponding to $\exists_i$. Then $\eps_1,\eps_2,\dots,\eps_k$ forms a path in $T$ from the root to a leaf. 
\item Conversely, for every such path $\eps_1,\eps_2,\dots,\eps_k$ corresponding to $\exists_1,\exists_2,\dots,\exists_k$, there is a copy of $\phi(x,\vec x,\vec y)$ appearing in the disjunction in $\phipn$ such that $\vec y$ are quantified by $\exists_1,\dots,\exists_k$ and $\vec x$ are quantified by the universal quantifiers corresponding to $\exists_1,\dots,\exists_k$. 
\end{itemize}
Therefore, the paths from the root to the leaves in $T$ corresponds to the copies of $\phi(x,\vec x,\vec y)$ in the disjunction in $\phipn$.

Now we spell out the strategy for the (unbounded) tree exploration game of $\varphi(x)$ from the tree $T$, $\phipn$, and the witnessing substitution of $\phipn$. Let 
\[ 
    \phipn=\forall_0 x~\exists_1 y_1~\forall_1 x_1~\exists_2 y_2~\forall_2 x_2\dots\exists_d y_d~\forall_d x_d~\hat\phi,~~~~\hat\phi\eqdef \bigvee_{i=1}^\ell \phi^i~,
\]
where each $\phi^i$ is a copy of $\phi(x,\vec x,\vec y)$ corresponding to a leaf in $T$. (We add subscripts to the quantifiers for  simplicity of the presentation.) Let $t_1,t_2,\dots,t_d$ is a witnessing substitution of $\phipn$, where $t_j$ contains $x,x_1,\dots,x_{j-1}$ as free variables for every $j\in[n]$. The strategy is as follows. 
\begin{itemize}
    \item Fix a model $\calM=(\calD,\calI)$ and any $n_0\in\calD$. In the first round, the truthifier chooses the root, adds a child, and puts $t_1^\calM(n_0)$ on the edge. Suppose that the falsifier puts $n_1\in\calD$ on the edge (so that the edge is labeled with $(t_1(n_0),n_1)$). In the second round, the truthifier works as follows.
    \begin{itemize}
        \item If $\exists_2$ is a child of $\exists_1$ in $T$, then the truthifier chooses the node corresponding to $\exists_1$, adds a child, and puts $t_2^\calM(n_0,n_1)$ on the edge. 
        \item Otherwise, $\exists_2$ must be a child of the root in $T$. Then the truthifier chooses the root, adds a child, and puts $t_2^\calM(n_0,n_1)$ on the edge.
    \end{itemize} 
    \item Suppose that the falsifier's responses in the first $i-1$ rounds are $n_1,n_2,\dots,n_{i-1}$. The parent of $\exists_i$ in $T$ is either $\exists_j$ for some $j<i$ or the root. In the $i$-th round, the truthfier chooses the node introduced in the $j$-th round if $\exists_j$ is the parent of $\exists_i$ in $T$, and chooses the root if the parent of $\exists_i$ is the root in $T$. The truthifer then adds a child, and puts $t_i^{\calM}(n_0,n_1,\dots,n_{i-1})$ on the edge. 
\end{itemize}

It is clear that the strategy can be described by an $\calL$-term strategy, so it remains to show that it wins the game within at most $d$ rounds. Suppose that the falsifier's responses in the first $d$ rounds are $n_1,n_2,\dots,n_d\in\calD$. We observe that: 
\begin{enumerate}
\item\label{obs:tree-struct} The game tree explored by the truthifier is identical to $T$. Moreover, the order of explored nodes follows exactly the traversal of $T$ specified by the order of existential quantifiers in $\phipre$ (and $\phipn$). 
\item \label{obs:edge-label} In the $i$-th round, the truthifier puts $t_i^\calM(n_0,n_1,\dots,n_{i-1})$ on the edge. Therefore in the explored game tree, the edge connecting the $i$-th explored node and its parent is labeled by $(t_i^\calM(n_0,n_1,\dots,n_{i-1}), n_i)$. 
\end{enumerate}

Let $\sigma$ be the assignment of variables $\{x\mapsto n_0,x_i\mapsto n_i, y_i\mapsto t_i^\calM(n_0,n_1,\dots,n_{i-1})\}$. Since $t_1,\dots,t_d$ come from witnessing substitution, $\hat\phi[\sigma]$ is true in $\calM$. In other words, for some $i\in[\ell]$, $\phi^i[\sigma]$ is true in $\calM$. Fix this $i\in[\ell]$. Recall that $\phi^i$ is a copy of $\phi(x,\vec x,\vec y)$ and corresponds to a path in $T$ from the root to a leaf, in the sense that the bounded variables in $\phi^i$ appears on the path. By \Cref{obs:tree-struct} above, it also corresponds to a path in the \emph{explored game tree} from the root to a leaf. 

Suppose that the path corresponds to $\exists_{j_1}, \exists_{j_2}, \dots, \exists_{j_k}$ from the root to the leaf, where $1\le j_1<j_2<\dots<j_k\le d$. By \Cref{obs:edge-label}, the labels on the edges in the path are
\begin{equation}
(t_{j_1}^\calM(n_0,\dots,n_{j_1-1}),n_{j_1}), (t^\calM_{j_2}(n_0,\dots,n_{j_2-1}),n_{j_2}),\dots,(t^\calM_{j_k}(n_0,\dots,n_{j_k-1}),n_{j_k}).     \label{equ:path-labels}
\end{equation}
By the definition of $\sigma$, the path (\ref{equ:path-labels}) is a truthifier's winning transcript in the evaluation game. This  shows that the truthifier wins the tree exploration game. As the responses of the falsifier can be arbitrary, the aforementioned strategy is a winning strategy of the tree exploration for the truthifier. This completes the proof.

\subsection{Oblivious falsifiers: Self-contained proof of Theorem \ref{thm:witnessing-general} via Herbrandization }
\label{sec:no-counterexample}

The \emph{no-counterexample interpretation} (see, e.g., \citep[Section 2.3]{applied-proof-theory} and \citep{Krajicek92}) is a standard tool in proof theory to extract computational content from provable sentences of high quantifier complexity. In this section, we use this perspective to provide a different proof of \Cref{thm:witnessing-general}. We refer to \Cref{sec:witnessing_special_case} for the necessary definitions and notation.

Let $\calT$ be a universal theory over $\calL$, and let $$\varphi(x)\triangleq \exists y_1~\forall x_1~\exists y_2\dots \forall x_{k-1}~\exists y_k~\forall x_{k}~\phi(x,\vec x, \vec y)$$ be an $\calL$-formula, where $\phi$ is quantifier-free. The \emph{Herbrand normal form} of $\varphi(x)$ is defined as
\[
\varphi^H(x)\triangleq\exists y_1~\exists y_2\dots\exists y_k~\phi(x,x_1/f_1(x,y_1),x_2/f_2(x,y_1,y_2),\dots,x_k/f_k(x,y_1,y_2,\dots,y_k),\vec y), 
\]
where $f_1,f_2,\dots,f_k$ are new function symbols not in $\calL$. By a simple model-theoretical argument, $\calT\vdash\forall x~\varphi(x)$ if and only if $\calT\vdash\forall x~\varphi^H(x)$. Under the assumption that  $\calT\vdash\forall x~\varphi(x)$, we can apply Theorem \ref{thm:herbrand} to extract $\calL(f_1,f_2,\dots,f_k)$-terms that witness the existential quantifiers. In particular, if $\calT$ is $\Tpv$ and $\calL$ is $\Lpv$, this witnessing result implies that for every $x \in \mathbb{N}$ and all  interpretations of $f_1,f_2,\dots,f_k$ over $\mathbb{N}$, we can find suitable $y_1,y_2,\dots,y_k \in \mathbb{N}$ in polynomial-time given oracle access to $f_1^{\mathbb{N}},f_2^{\mathbb{N}},\dots,f_k^{\mathbb{N}}$.

Let $\calM$ be a structure over the vocabulary $\calL$ such that $\calM\vDash\calT$ (e.g., $\calT = \Tpv$ and $\calM=\bbN$), and let $n_0$ be an object in the domain of $\calM$. It is instructive to consider the following game on the board $(\calM,n_0)$. There are two players in the game: a truthifier (or student) that claims $\calM\vDash\varphi(n_0)$, and a falsifier (or teacher) that claims $\calM\vDash\lnot\varphi(n_0)$. In the $i$-th step, first the truthifier chooses an element $n_i$ for $y_i$, then the falsifier chooses an element $m_i$ for $x_i$. The truthifier (resp. falsifier) wins if and only if $\calM\vDash\varphi(n_0,m_1,\dots,m_k,n_1,\dots,n_k)$ holds (resp. does not hold). It is easy to see that $\calM\vDash\varphi(n_0)$ if and only if the truthifier has a winning strategy for the game on board $(\calM,n_0)$. The interpretation of the function symbols $f_1,\dots,f_k$ corresponds naturally to a strategy for the falsifier. The no-counterexample interpretation essentially means that if $\calT\vdash\forall x~\varphi(x)$, for every board $(\calM,n_0)$ and every strategy $f_1,\dots,f_k$ of the falsifier, the truthifier has a winning strategy that can be expressed by terms in $\calL(f_1,f_2,\dots,f_k)$. Next, we transform such a strategy into $\calL$-strategies with ancillary information for the truthifier in the evaluation game of $\varphi(x)$. 

\begin{theorem*}[Reminder of Theorem \ref{thm:witnessing-general}]
  Let $\calT$ be a universal theory over the language $\calL$ that is closed under if-then-else. Let $\varphi(x)$ be the formula
  \begin{align*}
    \varphi(x)~\triangleq~&\exists y_1\le t_1(x)~\forall x_1\le s_1(x,y_1)~\exists y_2\le t_2(x,y_1,x_1)\dots \forall x_{k-1}\le s_{k-1}(x,y_1,x_1,\dots,y_{k-1})\\
  ~&\exists y_k\le t_k(x,y_1,x_1,\dots,y_{k-1}, x_{k-1})~\forall x_{k}\le s_k(x,y_1,x_1,\dots,y_{k})~\phi(x,x_1,\dots,x_k, y_1,\dots,y_k),
  \end{align*}
  where $\phi(x, \vec{x}, \vec{y})$ is a quantifier-free $\calL$-formula.
  If $\calT\vdash\forall x~\varphi(x)$, then there  is a constant $\ell\in\bbN$ and  $\calL$-strategies $\tau^\ttt_1,\tau^\ttt_2,\dots,\tau^\ttt_\ell$ \emph{(}with ancillary information\emph{)} such that, for any board $(\calM,n_0)$ and  evaluation game of $\varphi(x)$ on $(\calM, n_0)$, for every strategy $\tau^\ttf$ of the falsifier:
  \begin{itemize}[itemsep=0pt]
  \item either $\hat\tau^\ttt_1\triangleq \tau^\ttt_1[\varnothing]$ beats $\tau^\ttf$,
  \item or $\hat\tau^{\ttt}_2\triangleq \tau^\ttt_2[\left<\hat\tau^\ttt_1:\tau^\ttf\right>]$ beats $\tau^\ttf$,
  \item or $\hat\tau^\ttt_3\triangleq \tau^\ttt_3[\left<\hat\tau^\ttt_1:\tau^\ttf\right>,\left<\hat\tau^\ttt_2:\tau^\ttf\right>]$ beats $\tau^\ttf$,
  \item $\dots$,
  \item or $\hat \tau^\ttt_\ell\triangleq \tau^\ttt_\ell[\left<\hat\tau^\ttt_1:\tau^\ttf\right>,\left<\hat\tau^\ttt_2:\tau^\ttf\right>,\dots,\left<\hat\tau^\ttt_{\ell-1}:\tau^\ttf\right>]$ beats $\tau^\ttf$.
  \end{itemize}
\end{theorem*}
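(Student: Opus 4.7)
The plan is to follow the no-counterexample interpretation, essentially as sketched in the introductory discussion preceding the theorem. First I would Herbrandize the formula by introducing new function symbols $f_1,\ldots,f_k$, with $f_i$ of arity $i+1$, letting $\calL' \triangleq \calL \cup \{f_1,\ldots,f_k\}$. Setting
\[
\chi(x,\vec y) \;\triangleq\; \phi\bigl(x,\; f_1(x,y_1),\; f_2(x,y_1,y_2),\; \ldots,\; f_k(x,y_1,\ldots,y_k),\; y_1,\ldots,y_k\bigr),
\]
with the bounds $y_i \leq t_i(\cdots)$ absorbed as quantifier-free side conditions, the Herbrand normal form of $\forall x\,\varphi(x)$ is $\forall x\,\exists y_1\cdots\exists y_k\,\chi(x,\vec y)$. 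Viewed as an $\calL'$-theory with no new axioms, $\calT$ remains universal and closed under if-then-else, and by the standard proof-theoretic equivalence we have $\calT \vdash \forall x\,\varphi(x)$ iff $\calT \vdash \forall x\,\exists \vec y\,\chi(x,\vec y)$. Applying Herbrand's Theorem (Theorem~\ref{thm:herbrand}, in its multi-variable form obtainable by pairing) to the latter yields a constant $\ell \geq 1$ and $\calL'$-term tuples $\bigl(t^j_1(x),\ldots,t^j_k(x)\bigr)$ for $j \in [\ell]$ such that
\[
\calT \;\vdash\; \forall x\,\bigvee_{j=1}^{\ell}\chi\bigl(x,\,t^j_1(x),\,\ldots,\,t^j_k(x)\bigr).
\]

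The second step would be to interpret the Herbrand symbols through the falsifier. Fix any board $(\calM,n_0)$ and any strategy $\tau^\ttf$ of the falsifier. Expand $\calM$ to an $\calL'$-structure $\calM'$ by setting $f_i^{\calM'}(n,m_1,\ldots,m_i)$ to be the move the falsifier makes in round $i$ of the evaluation game on board $(\calM,n)$ when the truthifier has played $m_1,\ldots,m_i$ (extended arbitrarily on inputs that do not correspond to a valid partial transcript). Because $\calT$ has no axioms mentioning the $f_i$, we have $\calM' \models \calT$, and therefore $\calM' \models \bigvee_{j=1}^{\ell}\chi\bigl(n_0,(t^j_1)^{\calM'}(n_0),\ldots,(t^j_k)^{\calM'}(n_0)\bigr)$. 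Let $j^* = j^*(\calM,n_0,\tau^\ttf)$ be a witnessing disjunct. By the very definition of $f_i^{\calM'}$, if the truthifier plays $y_i = (t^{j^*}_i)^{\calM'}(n_0)$ in round $i$ against $\tau^\ttf$, then the falsifier's actual response coincides with $f_i^{\calM'}(n_0,y_1,\ldots,y_i)$, and the full transcript satisfies $\phi$; i.e., this is a winning play for the truthifier.

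The main obstacle, and the core of the remaining argument, is to convert each $\calL'$-value $(t^{j^*}_i)^{\calM'}(n_0)$ into moves that an $\calL$-strategy (with ancillary information) can actually compute. The difficulty is that each $t^j_i$ may be built from nested calls to the Herbrand functions $f_l$, whereas in one play of the evaluation game the truthifier can learn at most $k$ values of the $f_l$, namely $f_l^{\calM'}(n_0,m_1,\ldots,m_l)$ for $l \in [k]$ along the particular path of truthifier moves it actually chooses in that play. I would address this by unfolding the (finite, fixed) $f$-call DAG of the family $\{t^j_i : j \in [\ell], i \in [k]\}$ and simulating its evaluation through a bounded sequence of plays $\tau^\ttt_1,\ldots,\tau^\ttt_{\ell'}$ with $\ell' = O(1)$: in each play, the truthifier either (a) uses $f_l$-values recorded in the ancillary transcripts to extend the partial evaluation of some $\vec t^j$ — branching, via closure under if-then-else, on which earlier transcript supplies each needed value — while simultaneously learning one more required $f_l$-value from the current falsifier response, or (b) completes the evaluation of $(\vec t^j)^{\calM'}(n_0)$ for some $j$, and thereby by the previous paragraph wins if $j = j^*$. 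A depth induction on the $f$-nesting of the finitely many fixed terms $t^j_i$, combined with iterating over the $\ell$ candidates $j \in [\ell]$, bounds the total number of plays by a constant $\ell' = O(1)$, and for the value $j^*$ determined by $\tau^\ttf$ the play that completes the simulation of $\vec t^{j^*}$ is guaranteed to win. The delicate point — which I expect to require the most care — is verifying that the bookkeeping (which earlier transcript to consult next, and what move to play to learn the next needed $f$-value) can itself be expressed by $\calL$-terms on the ancillary transcripts; this is precisely where closure under if-then-else is invoked repeatedly, and it is what makes the theorem a genuine generalization of KPT witnessing rather than an immediate corollary.
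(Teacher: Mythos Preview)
Your proposal is correct and follows essentially the same route as the paper's self-contained proof in Appendix~\ref{sec:no-counterexample} via Herbrandization: introduce Herbrand symbols $f_1,\ldots,f_k$ for the falsifier, apply Herbrand's theorem to the resulting $\forall\exists$-sentence to obtain a finite disjunction witnessed by $\calL\cup\{f_1,\ldots,f_k\}$-terms, and then eliminate each $f_j$-call by structural induction on those terms---spending one additional play of the evaluation game per call and reading off the needed $f_j$-value from the falsifier's $j$-th response, with closure under if-then-else handling the bounds and bookkeeping exactly as you anticipate. (The paper also gives an alternative derivation, in Section~\ref{sec:witnessing_special_case}, as a corollary of the tree exploration game witnessing theorem, Theorem~\ref{thm:witnessing-tree-exploration}.)
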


\begin{proof}
    We introduce Herbrandization functions $f_1,f_2,\dots,f_k$ such that in $\calL^*\triangleq\calL\cup\{f_1,\dots,f_k\}$,  
    \[
        \calT\vdash\forall x~\exists \vec y\le\vec t~\phi(x,\vec x^*,\vec y),
    \]
    where $x^*_j=f_j(x,y_1,y_2,\dots,y_j)$ for all $j\in[k]$. By Herbrand's Theorem (\Cref{thm:herbrand}), there is a constant $r\in\bbN$ and $\calL^*$-terms $q^i_j(x)$ ($i\in[r],j\in[k]$) such that 
  \[
    \calT\vdash\forall x\left(\bigvee_{i=1}^r\phi_i(x)\right),    
  \]
  where $\phi_i(x)\triangleq \phi(x,x_1/f_1(x,q^i_1(x)),\dots,x_k/f_k(x,q_1^i(x),\dots,q_k^i(x)),y_1/q^i_1(x),\dots,y_k/q^i_k(x))$.
  
  We will translate $(q^i_1,q^i_2,\dots,q^i_k)$ into $\ell_i$ $\calL$-strategies $\tau^\ttt_{i,1},\tau^\ttt_{i,2},\dots,\tau^\ttt_{i,\ell_i}$ for some $\ell_i\in\bbN$, such that for every board $(\calM=(\calD,\calI),n_0)$ and every interpretation $F_1,F_2,\dots,F_k$ of $f_1,f_2,\dots,f_k$ over $\calD$ derived from $\hat \tau^\ttf$, if $\calM(F_1,F_2,\dots,F_k)\vDash\phi_i(x/n_0)$, then $\tau^\ttt_{i,1},\tau^\ttt_{i,2},\dots,\tau^\ttt_{i,\ell_i}$ will satisfy the conclusion of the theorem against the strategy $\hat \tau^\ttf$. 
    If this is possible, then 
    \[
    \tau^\ttt_{1,1},\tau^\ttt_{1,2},\dots,\tau^\ttt_{1,\ell_1},\tau^\ttt_{2,1},\tau^\ttt_{2,2},\dots,\tau^\ttt_{2,\ell_2},\dots,\tau^\ttt_{r,1},\tau^\ttt_{r,2},\dots,\tau^\ttt_{r,\ell_r}
    \]
    is a sequence of $\calL$-strategies as required. The argument is as follows. Fix any board $(\calM=(\calD,\calI),n_0)$ and any strategy $\tau^\ttf$ of the falsifier. Let $F_1,\dots,F_k$ be the interpretation of $f_1,\dots,f_k$ corresponding to this strategy, i.e., for every $j\in[k]$, 
    \[
    F_j(n,m_1,m_2,\dots,m_k)\triangleq\begin{cases}
        \begin{aligned}
        &\text{the move of }\tau^\ttf \\ 
        &\text{in the }j\text{-th step}
        \end{aligned}~~~~ & \begin{aligned} 
        &\text{if}~n = n_0~\;\text{and}\;~n_1,F_1(n_0,m_1),n_2,F_2(n_0,m_1,m_2), \\
        &\dots,n_{j-1},F_{j-1}(n_0,m_1,\dots,m_{j-1})~\text{is a prefix of a}  \\
        &\text{valid transcript over}~(\mathcal{M},n_0);
        \end{aligned} \\ 
        0 & \text{otherwise.}
    \end{cases}
    \]
    Then there is an index $i\in[r]$ such that $\calM(F_1,F_2,\dots,F_j)\vDash\phi_i(x/n_0)$ holds. 

    Before presenting the translation, we explain the main difficulty and how to address it. The issue is that $(q^i_1, q^i_2, \ldots, q^i_k)$ are $\calL^*$-terms, while the desired strategy in the evaluation game consists of $\calL$-terms only. For simplicity, suppose $q^i_j(x)$ invokes a single function from the list $f_1, \ldots, f_k$ of new function symbols, and assume it is $f_1(x,y_1)$. The idea is to replace the computation $F_1(w_1,w_2)$ over inputs $w_1, w_2$ by forcing the falsifier to compute its value in a previously played game. To achieve this, we use that $\tau^\ttf$ is fixed. In other words, if $w_1 = n_0$, the falsifier must play and reveal $F_1(n_0,w_1)$ if the truthifier plays $w_1$ in the first round. (On the other hand, if $w_1 \neq n_0$ we have $F_1(w_1,w_2) = 0$ by definition.) Consequently, by playing more games we guarantee that the necessary information appears in the transcript, which allows us to replace calls to functions $f_j$ and express the winning strategy using $\calL$-terms. To streamline the presentation, in the description below we omit the trivial case where the first input to a function $f_j$ is different than $x$, the input to the $\calL^*$-terms $q^i_j$ (corresponding to the case $w_1 \neq n_0$ we have just explained).

    Let $\tau^\ttf$ be the strategy specified by $f_1,f_2,\dots,f_k$ (i.e. $f_i$ denotes the falsifier's move in the $i$-th round). We prove by structural induction on the terms that we can decompose each $q^i_j$ ($j\in[k]$), which consists of $\calL$-functions and $f_1,f_2,\dots,f_k$, into finitely many $\calL$-strategies $\tau^{i,j}_1,\tau^{i,j}_2,\dots,\tau^{i,j}_{d_{i,j}}$ and an  $\calL$-term $p^{i,j}$ such that $\calM(F_1,F_2,\dots,F_j)\vDash q^{i}_j(n_0)=p^{i,j}(\Gamma(n_0))$ for every board $(\calM,\calI)$ and strategy $\tau^\ttf$ of the falsifier, where $F_1,F_2,\dots,F_j$ is the interpretation of $\tau^\ttf$ corresponding to the strategy and $\Gamma(n_0)$ is a sequence of transcripts produced as follows.
    \begin{itemize}
    \item For each $u\in[d_{i,j}]$, let $\Gamma_u(n_0)\triangleq\left<\tau^{i,j}_u[\Gamma_1(n_0),\Gamma_2(n_0),\dots,\Gamma_{u-1}(n_0)]:\tau^\ttf\right>$. 
    \item Let $\Gamma(n_0)\triangleq (\Gamma_1(n_0),\Gamma_2(n_0),\dots,\Gamma_{d_{i,j}}(n_0))$.
    \end{itemize} 
    Concretely, we translate each term as follows. Let the term be $g(v_1,v_2,\dots,v_d)$. By induction hypothesis, for every $r\in[d]$, we can decompose the term $v_r$ into a sequence of $c_r\in\bbN$ $\calL$-strategies $\tau^r_1,\tau^r_2,\dots,\tau^r_{c_r}$ and an $\calL$-term $p_r$, such that for every board $(\calM,m)$, $\calM\vDash v_r(n_0)=p_r(\Gamma^r(n_0))$ (where $\Gamma^r(n_0)$ is the transcript of games as described above). 
    \begin{itemize}
        \item If $g(\cdot)$ is a function symbol in the original language $\calL$, it is easy to see that the $\calL$-term $$g(p_1(\Gamma(n_0)),p_2(\Gamma(n_0)),\dots,p_d(\Gamma(n_0)))$$ and the strategies 
        \[
        (\tau^1_1,\tau^1_2,\dots,\tau^1_{c_1},\tau^2_1,\tau^2_2,\dots,\tau^2_{c_2},\dots,\tau^d_1,\tau^d_2,\dots,\tau^d_{c_d})
        \]
        provide what we want, where $\Gamma(n_0)\triangleq (\Gamma^1(n_0),\Gamma^2(n_0),\dots,\Gamma^{c_d}(n_0))$. 
        \item If $g(\cdot)=f_j$ for some $j\in[k]$, we define a new $\calL$-strategy $\tau^{f_j}$ as follows: suppose that the ancillary information  consists of the transcripts $\Gamma$ of $\tau^1_1,\tau^1_2,\dots,\tau^1_{c_1},\tau^2_1,\tau^2_2,\dots,\tau^2_{c_2},\dots,\tau^d_1,\tau^d_2,\dots,\tau^d_{c_d}$ vs $\tau^\ttf$; in the $i$-th round for $i\le j$, the truthifier's move is 
        \[
            \hat p_i(n_0,m_1,n_1,\dots,n_{i-1},\Gamma)\triangleq\begin{cases}
                p_i(\Gamma) & p_i(\Gamma)\le t_i(n_0,m_1,n_1,\dots,n_{i-1}) \\ 
                0 & \text{otherwise}
            \end{cases}
        \]
        while in the remaining $n-i$ rounds the truthifier always chooses $0$. Note that $\hat p_i$ is expressible since $\calT$ is closed under if-then-else. It is clear that the following $\calL$-term $v_g$ that takes $\Gamma$ and the transcript $\left<\tau^{f_j}[\Gamma]:\tau^\ttf\right>$ as input parameters outputs $f_j(v_1(n_0),\dots,v_d(n_0))$:
        \begin{itemize}
            \item If $\bigvee_i p_i(\Gamma)> t_i(n_0,m_1,n_1,\dots,n_{i-1})$ holds, then $v_g$ outputs $0$.
            \item Otherwise, $v_g$ outputs the $j$-th move of the falsifier in the transcript $\left<\tau^{f_j}[\Gamma]:\tau^\ttf\right>$.  
        \end{itemize}
        Therefore, we can obtain a term $v_g$ that simply reads the transcripts $\Gamma,\left<\tau^{f_j}[\Gamma]:\tau^\ttf\right>$ and outputs $f_j(v_1(n_0),\dots,v_d(n_0))$, together with the strategies 
        \[ 
            \tau^1_1,\tau^1_2,\dots,\tau^1_{c_1},\tau^2_1,\tau^2_2,\dots,\tau^2_{c_2},\dots,\tau^d_1,\tau^d_2,\dots,\tau^d_{c_d},\tau^{f_j},
        \] 
        as promised in the induction hypothesis.
    \end{itemize}

    Now we go back to the translation of $(q^i_1,q^i_2,\dots,q^i_k)$ into $\calL$-strategies. Assume that each $q^i_j$ for $j\in[k]$ has been decomposed into strategies $\tau^{i,j}_1,\dots,\tau^{i,j}_{d_{i,j}}$ and a term $p^{i,j}(\Gamma)$ as discussed above. Define an $\calL$-strategy $\tau^{q^i}$ with ancillary information as follows: suppose that the ancillary information is the transcripts $\Gamma(n_0)$ of $\tau^\ttf$ vs
    \[ 
        \tau^{i,1}_1,\tau^{i,1}_2,\dots\tau^{i,1}_{d_{i,1}},\tau^{i,2}_1,\tau^{i,2}_2,\dots\tau^{i,2}_{d_{i,2}},\dots,\tau^{i,k}_1,\tau^{i,k}_2,\dots\tau^{i,k}_{d_{i,k}}
    \]
    in which the latter strategies are given the transcripts of $\tau^\ttf$ vs previous strategies. In the $j$-th round, the truthifier's move is $p^{i,j}(\Gamma(n_0))$. By construction, it is easy to see that for every board $(\calM,n_0)$ and every strategy $\tau^\ttf$ of the falsifier, given correct ancillary information $\Gamma(n_0)$, $\tau^{q^i}$ will choose $q^i_j(n_0)$ in the $j$-th round. Therefore, the strategy will beat $\tau^\ttf$ as long as $\calM(F_1,F_2,\dots,F_j)\vDash\phi_i(n_0)$, where $F_1,\dots,F_j$ constitute the interpretation of $f_1,\dots,f_j$ corresponding to $\tau^\ttf$. This completes the proof by previous discussions.
    \end{proof}

    \ignore{
    Note that the standard KPT-witness theorem is a special-case of this theorem with $k=1$ and $\tau^\ttf$ being any fixed optimal strategy.
    
    \begin{enumerate}
        \item If the theory $\calT$ is the true universal theorem of polynomial-time computation, we can slightly strengthen the witnessing theorem: the last strategy (instead of one of the strategies) will always win. This is because we can find a winning strategy through the transcripts in polynomial-time and replay it as the last strategy. 
        \item The idea of oracle elimination using padding and Nisan-Wigderson generator in \cite{PS21} seems still work in this general case, can we find some natural sentences so that we can make use of both their idea and this general witnessing theorem? 
        \item Is the case when $\tau^\ttf$ is also a polynomial-time strategy meaningful? It seems to me that it corresponds to an ``algorithmic semantic'' of first-order logic, where the univesral quantifiers are not interpreted as a universal search in the domain, but a uniform polynomial-time search. Can this kind of logic be axiomatized? Will it be an ideal playground for cryptography? (Say, e.g., the information-theoretical security of a cipher as a first-order sentence holds in this logic if the it is semantically secure.) 
        \end{enumerate}
      }

\section{Proof of Hardness Amplification in $\PH$}
\label{sec: lmms for hardness amp}

\begin{reminder}[\Cref{thm:hardamp-ppoly}] There is a constant $\gamma>0$ and $\ell=\ell(n)=\poly(n)$ such that the following holds for every $i\ge 1$. Let $s_1,s_2 \colon \mathbb{N} \to \mathbb{N}$ be non-decreasing functions, where $s_2(n) = n^{\omega(1)}$, and suppose there is a function $f_n \colon \{0,1\}^n \to \{0,1\}$ computable by $\cSig{i}\mathsf{SIZE}[s_1(n)]$ circuits (resp.~$\cPi{i}\mathsf{SIZE}[s_1(n)]$ circuits) such that each $\Sigma^p_{i-1}$-oracle circuit $A_n$ of size at most  $s_2(n)$ satisfies 
    $$
    \Pr_{x \in \{0,1\}^n}[f_n(x) = A_n(x)] \;\leq\; 1 - \frac{1}{n}.
    $$ 
    Then there exist a function $h_\ell \colon \{0,1\}^\ell \to \{0,1\}$ computable by $\cSig{i}\mathsf{SIZE}[\poly(\ell)\cdot s_1(\ell)]$ circuits (resp. $\cPi{i}\SIZE[\poly(\ell)\cdot s_1(\ell^\gamma)]$ circuits) such that each $\Sigma^p_{i-1}$-oracle circuit $B_\ell$ of size at most $s_2(\ell^{\gamma})^\gamma$ satisfies
    $$
    \Pr_{y \in \{0,1\}^\ell}[h_\ell(y) = B_\ell(y)] \;\leq\; \frac{1}{2} + \frac{1}{s_2(\ell^{\gamma})^{\gamma}}.
    $$
\end{reminder}

\newcommand{\Bias}{\mathsf{Bias}}
\newcommand{\ExpBias}{\mathsf{ExpBias}}
\newcommand{\NoiseStab}{\mathsf{NoiseStab}}
\newcommand{\U}{\mathcal{U}}
\renewcommand{\epsilon}{\varepsilon}
\newcommand{\Tribes}{\mathsf{Tribes}}

Note that since $\Sigma^p_{i-1}$-oracle circuits are closed under complementation, we only need to prove the case where $f_n$ is computable by $\Sigma_i$ circuits. More formally, given a function $f:\{0,1\}^n\to\{0,1\}$ computable by $\cPi{i}\SIZE[s_1(n)]$ circuits that is hard on average against $\Sigma^p_{i-1}$-oracle circuits of size $s_2(n)$, we can consider $g(x)\eqdef \lnot f(x)$ that is computable by $\cSig{i}\SIZE[s_1(n)]$ circuits and still hard on average against the same class. By the hardness amplification theorem for $\Sigma_i$ circuits, we can obtain a function $h_\ell$ computable by $\cSig{i}\SIZE[\poly(\ell)\cdot s_1(\ell)]$ circuits that is strongly hard on average against $\Sigma^p_{i-1}$-oracle circuits of size $s_2(\ell^\gamma)^\gamma$. The negation of $h_\ell$ is then the required hard function computable in $\cPi{i}\SIZE[\poly(\ell) \cdot s_1(\ell)]$. 

We first fix the notation. 
\begin{itemize}  
\item A \emph{probabilistic function} is a Boolean function with two inputs $h(x;r)$ where the second input is treated as random bits. If the random bits are omitted, a probabilistic function is treated as a function mapping the input to a random variable distributed according to the output of the function over the random bits. 
\item Let $g$ be a function probabilistic function) with input length $n$, the \emph{$k$-th direct product} is defined as the function (resp. probabilistic function) with input length $k\cdot n$ and output length $k$ as follows:
\[
  g^{\otimes k}(x_1,\dots,x_k) \eqdef g(x_1)\| \dots\| g(x_k)\,.  
\]
\item The \emph{bias} of a random variable $X$ is defined as $\Bias(X)\triangleq \big |\!\Pr[X=0]-\Pr[X=1]\big|$. The \emph{bias} of a probabilistic function $h(x;r)$ is defined as the bias of the random variable $h(x;r)$ for a uniformly random $x$ and $r$. The probabilistic function $h$ is said to be \emph{balanced} if $\Bias(h)=0$. 
\item A probabilistic function $h:\{0,1\}^n\times\{0,1\}^r\to\{0,1\}$ is \emph{$\delta$-random} if $h$ is balanced and there is a subset $H\subseteq\{0,1\}^n$ of size $2\delta\cdot 2^n$ such that $h$ is a ``coin flip'' over $H$ and deterministic outside $H$ (i.e.,~$\Pr[h(x)=1]=1/2$ for every $x\in H$, and $h(x)$ is deterministic for every $x\notin H$). 
\item The \emph{expected bias} of a probabilistic function $h$ is defined as $\ExpBias(h)\eqdef\Ex_x\left[\Bias(h(x))\right]$. 
\item The \emph{noise stability} of a Boolean function $C:\{0,1\}^k\to\{0,1\}$ with respect to the noise rate $\delta$ is defined as 
\[
\NoiseStab_\delta(g)\eqdef 2\cdot \Pr_{x,\eta}[C(x)=C(x\oplus \eta)]-1,  
\]
where $x\sim\{0,1\}^k$ and each bit of $\eta$ is $1$ independently with probability $\delta$. By Lemma 3.7 of \cite{DBLP:journals/siamcomp/HealyVV06},  $\ExpBias[C\circ g^{\otimes k}]\le\sqrt{\NoiseStab_\delta[C]}$ for every $\delta$-random probabilistic function $g$.  
\item Two random variables $X_1$ and $X_2$ are said to be $\epsilon$-indistinguishable for size $s$, denoted by $X_1\approx_\epsilon^s X_2$, if for every $\Sigma^p_{i-1}$-oracle circuit $C$ of size $s$, $\big |\!\Pr[C(X_1)=1]-\Pr[C(X_2)=1]\big |\le \epsilon$. Note that our definition of the indistinguishability differs from the original definition in \cite{DBLP:journals/siamcomp/HealyVV06} since we are proving hardness amplification against $\Sigma^p_{i-1}$-oracle circuits.
\item For simplicity, we say a function $f:\{0,1\}^n\to\{0,1\}$ is $\epsilon$-hard for size $s$, if for every $\Sigma^p_{i-1}$-oracle circuit $C$ of size $s$, $C(x)=f(x)$ for at most an $\epsilon$ fraction of $x\in\{0,1\}^n$.
\end{itemize}

We assume that $f_n$ is \emph{balanced}, that is, $\Pr_x[f_n(x)=1]=1/2$ for every $n \geq 1$. This is without loss of generality, since we can first increase the input length by one then use non-uniformity to make the resulting function balanced, without a relevant change of parameters.

The hardness amplification of \cite{DBLP:journals/siamcomp/HealyVV06} proceeds as follows.
  
\paragraph{The Construction.} Fix any $n\ge 1$. Let $f:\{0,1\}^n\to\{0,1\}$ be the hard function and $C:\{0,1\}^k\to\{0,1\}$ be an explicit circuit to be determined later. Let $G:\{0,1\}^\ell\to(\{0,1\}^n)^k$ be an explicit function in the sense that given $\sigma\in\{0,1\}^\ell$ and $i\in[k]$, we can compute the $i$-th block $X_i\in\{0,1\}^n$ of the output of $G(\sigma)$ in $\poly(\ell,\log k)$ time.\footnote{We note that $C$ is used to replace the XOR function in the standard hardness amplification based on Yao's XOR Lemma (see, e.g., Theorem 19.2 of \cite{Arora-Barak09}), while $G$ is used as a pseudorandom generator that (in some sense) ``fools'' $C \circ f^{\otimes k}$.} The amplified function is defined as $\Amp_f:\{0,1\}^\ell\to\{0,1\}$:
\[
  \Amp_f(\sigma)\triangleq C(f(X_1),f(X_2),\dots,f(X_k)),
\]
where $(X_1,X_2,\dots,X_k)\triangleq G(\sigma)$. We need to carefully choose $C$ and $G$ such that $\Amp_f(\sigma)$ is computable in $\cSig{i}\SIZE[\poly(n)\cdot s_1(n)]$ and can amplify the hardness of $f$. 

  \paragraph{The Choice of $G$.} To ensure the hardness of $\Amp_f$, the function $G_k:\{0,1\}^\ell\to(\{0,1\}^n)^k$ should satisfy the following two technical requirements.
  \begin{itemize}
    \item $G_k$ is \emph{indistinguishability-preserving for size $t=k^2$}: Let $f_1,\dots,f_k,g_1,\dots,g_k$ be probabilistic functions such that for every $i\in[k]$, $x \| f_i(x) \approx_\epsilon^s x \| g_i(x)$ for $x\sim\{0,1\}^n$, then 
    \[
    \sigma\| f_1(X_1) \| \dots\| f_k(X_k) \approx^{s-t}_{k\cdot\epsilon} \sigma \| g_1(X_1)\|\dots\| g_k(X_k),
    \] 
    where $\sigma\sim\{0,1\}^\ell$ and $(X_1,\dots,X_k)\eqdef G_k(\sigma)$.
    \item $G_k$ is $2^{-n}$-pseudorandom against (read-once oblivious) branching programs of size $2^n$ and block-size $n$:\footnote{See \citep[Definition 5.4]{DBLP:journals/siamcomp/HealyVV06} for the precise definition of this branching program model.}  for every branching program $B$ of size $2^n$ and block-size $n$, we have 
    \[
    \left|\Pr_{x\sim\{0,1\}^\ell}[B(G_k(x))=1]-\Pr_{y\sim\{0,1\}^{nk}}[B(y)=1]\right|\le 2^{-n}.
    \]
  \end{itemize} 

  \begin{lemma}[Generalized version of {\cite[Theorem 5.12]{DBLP:journals/siamcomp/HealyVV06}}]\label{lmm: generator-g-amp}
    For every $k\le 2^n$, there is an explicit computable generator $G_k:\{0,1\}^\ell\to(\{0,1\}^n)^k$ that satisfies the requirements below: 
    \begin{enumerate}
        \item\label{enum: item 1 hard} There is an algorithm that computes the $i$-th block of $G_k(\sigma)$ in $\poly(\ell,\log k)$ time given $\sigma,i$. 
        \item\label{enum: item 2 hard} $G_k$ is indistinguishability-preserving for size $t=k^2$.
        \item\label{enum: item 3 hard} $G_k$ is $2^{-n}$-pseudorandom against branching programs of size $2^n$ and block-size $n$. 
    \end{enumerate}
  \end{lemma}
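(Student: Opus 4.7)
The plan is to follow the template of \cite[Theorem 5.12]{DBLP:journals/siamcomp/HealyVV06} and build $G_k$ as a composition of two standard derandomization primitives: a random walk on a constant-degree expander on the vertex set $\{0,1\}^n$, followed by (or combined with) a pseudorandom generator for read-once oblivious branching programs in the spirit of Nisan and Impagliazzo--Nisan--Wigderson. Concretely, let $H$ be an explicit constant-degree expander on $\{0,1\}^n$ with second-largest eigenvalue bounded by a small constant. A random walk on $H$ of length $k$ starting at a uniformly random vertex can be described by $n + O(k)$ bits and produces $k$ blocks $X_1, \dots, X_k \in \{0,1\}^n$. On top of this, one applies a Nisan-style PRG with seed length $O(n \log k)$ that fools read-once oblivious branching programs of block-size $O(n)$ and size $2^{O(n)}$. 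Composing these, the total seed length becomes $\ell = \poly(n, \log k)$, and the block-wise explicitness (\Cref{enum: item 1 hard}) is inherited from the explicitness of $H$ and of Nisan's generator: the $i$-th vertex of an expander walk is computable in $\poly(\ell, \log k)$ time, as is each output block of Nisan's PRG.

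For \Cref{enum: item 3 hard}, pseudorandomness against branching programs of size $2^n$ and block-size $n$ with error $2^{-n}$, we appeal directly to the analysis of Nisan's PRG (or INW), which fools such branching programs with the stated seed length and error. The expander-walk layer does not harm this property because pseudorandomness is preserved under deterministic post-processing of the seed-generated output, and the branching-program class being fooled is chosen large enough to absorb the walk structure.

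The key step is \Cref{enum: item 2 hard}, indistinguishability preservation for size $t = k^2$. The plan is a standard hybrid argument across the $k$ coordinates. Define hybrids $H_0, H_1, \dots, H_k$ where $H_j$ uses $g_1, \dots, g_j, f_{j+1}, \dots, f_k$ in the output $\sigma \| h_1(X_1) \| \cdots \| h_k(X_k)$. Any distinguisher of size $s - t$ that separates $H_0$ from $H_k$ with advantage exceeding $k \cdot \epsilon$ yields, by averaging, some index $j$ where it separates $H_{j-1}$ and $H_j$ with advantage exceeding $\epsilon$. To reduce this to a $\{x \| f_j(x)\}$ vs.\ $\{x \| g_j(x)\}$ distinguisher of size $s$, one must, given a sample $(x, b)$, simulate a full hybrid sample in which coordinate $j$ takes the value $b$ on input $x$. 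The expander-walk structure of $G_k$ is essential here: conditioned on $X_j = x$, the remaining coordinates of $G_k$ are determined by a short suffix of random bits, and the other blocks $h_i(X_i)$ for $i \ne j$ (fixed $f_i$'s and $g_i$'s) can be computed nonuniformly using only $O(k)$ additional gates per block for the walk update and a constant for each $h_i$. This overhead is bounded by $t = k^2$, which is where the size bound $s - t$ arises.

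The main obstacle I anticipate is making the simulation in the hybrid argument genuinely fit into size $t = k^2$ rather than something much larger. A naive implementation would need to evaluate the remaining $f_i$'s and $g_i$'s explicitly, which is not permitted. The trick, as in \cite{DBLP:journals/siamcomp/HealyVV06}, is to hardwire, via nonuniform advice, the $k-1$ outputs for the other coordinates along every relevant branch of the hybrid; because the expander-walk seed after fixing $X_j$ is short and the hybrid step only requires a \emph{single} random completion, one can charge $O(k)$ overhead per block and $O(k^2)$ overall. I would spell this out carefully, verifying that the oracle access of the resulting $\Sigma^p_{i-1}$-oracle distinguisher is preserved (no new oracle calls are introduced by the simulation), and that the total size blow-up is indeed $k^2$, which matches the setting $t = k^2$ in the statement.
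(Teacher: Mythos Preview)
Your hybrid-argument sketch for \Cref{enum: item 2 hard} has a genuine gap, and it stems from the choice of an expander walk rather than a Nisan--Wigderson design. In the reduction, given a sample $(x,b)$ with $x\sim\{0,1\}^n$, you must produce the full hybrid $(\sigma,h_1(X_1),\ldots,h_k(X_k))$ with $X_j=x$ and the $j$-th output set to $b$. The functions $h_i\in\{f_i,g_i\}$ are \emph{arbitrary} probabilistic functions: you are not given circuits or samplers for them, so the only way to supply $h_i(X_i)$ is to hardwire its value nonuniformly. For an expander walk, once the walk steps are fixed by averaging, every $X_i$ is a bijective function of $X_j=x$; hence $h_i(X_i)$ genuinely depends on all $n$ bits of $x$, and the truth table you need to hardwire has size $2^n$ per coordinate, i.e.\ $k\cdot 2^n$ total rather than $k^2$. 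Your claim that ``a constant for each $h_i$'' suffices conflates the cost of \emph{computing the walk} with the cost of \emph{producing the unknown values} $h_i(X_i)$; the latter is the actual bottleneck.

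The paper's construction avoids this by taking $G_k$ to be the XOR of a Nisan--Wigderson generator $\NW_k$ (built from an $(\ell,n,\log k)$-design) and Nisan's PRG $\Nisan_k$ against branching programs. The design property $|S_i\cap S_j|\le\log k$ is exactly what makes the hybrid argument go through with overhead $k^2$: after fixing $\sigma$ outside $S_{j}$, each $X_i=\sigma|_{S_i}$ for $i\ne j$ depends on at most $\log k$ bits of the free part, so only $2^{\log k}\cdot k=k^2$ values of the $h_i$'s need to be hardwired. The XOR with $\Nisan_k$ then supplies \Cref{enum: item 3 hard}, and a short argument shows that XORing with a fixed string preserves both properties. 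Your vaguely described ``composition'' of expander walk and Nisan's PRG would also need to be made precise, but the decisive issue is the first one: replace the expander walk by an NW design (or otherwise ensure that each $X_i$ depends on only $O(\log k)$ bits of $X_j$) and the rest of your outline can be made to work.
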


\newcommand{\Nisan}{\mathsf{N}}

\begin{proof}
    The only difference between this lemma and \cite[Lemma 5.12]{DBLP:journals/siamcomp/HealyVV06} is that in our definition, the indistinguishability-preserving property holds against $\Sigma^p_{i-1}$-oracle circuits instead of standard circuits, which will not cause any issue since their argument only requires mild closure properties of the adversary. For completeness, we sketch their proof here.

    The generator $G_k$ is defined as the XOR of two generators: a Nisan-Wigderson based generator $\NW_k : \{0,1\}^{\ell_\NW}\to(\{0,1\}^n)^k$ that is efficiently computable and indistinguishability-preserving; and Nisan's unconditional PRG $\Nisan_k:\{0,1\}^{\ell_\Nisan}\to(\{0,1\}^n)^k$ against (probabilistic) branching programs (see, e.g., \cite[Theorem 5.6]{DBLP:journals/siamcomp/HealyVV06} and \cite{Nisan92}). That is, $G_k(x,y)\eqdef \NW_k(x)\oplus \Nisan_k(y)$. Both $\Nisan_k$ and $\NW_k$ have seed length at most $O(n^2)$, hence $G_k$ has seed length $\ell=O(n^2)$. Next, we discuss the properties of the generator.
    \begin{itemize}
        \item Both $\NW_k$ and $\Nisan_k$ are efficiently computable in the sense that given $\sigma$ and $i$, we can compute the $i$-th block of the output in $\poly(\ell,\log k)$ time. Therefore \Cref{enum: item 1 hard} holds.
        \item To prove \Cref{enum: item 2 hard}, we need to show that any indistinguishability-preserving generator XORed with a fixed string is still indistinguishability-preserving. Towards a contradiction, assume that $G_k$ is not indistinguishability-preserving. This means that there are $f_1,\dots,f_k,g_1,\dots,g_k$ such that for every $i\in[k]$, $x\| f_i(x)\approx^s_\epsilon x\| g_i(x)$ for $x\sim\{0,1\}^n$, while for $(\sigma_1,\sigma_2)\sim\{0,1\}^{\ell_\NW}\times\{0,1\}^{\ell_\Nisan}$ and $(X_1,\dots,X_k)\eqdef \NW_k(\sigma_1)\oplus\Nisan_k(\sigma_2)$, 
        \begin{equation*}
        \sigma_1\| \sigma_2\|f_1(X_1)\|\dots\|f_k(X_k) ~\not\approx^{s-t}_\epsilon~ \sigma_1\|\sigma_2\| g_1(X_1)\|\dots\|g_k(X_k). 
        \end{equation*}  
        By an averaging argument, there is a $\sigma_2^*\in\{0,1\}^{\ell_\Nisan}$ such that  
        \begin{equation}
            \sigma_1\|f_1(X_1\oplus y_1)\|\dots\|f_k(X_k\oplus y_k) ~\not\approx^{s-t}_\epsilon~ \sigma_1\|g_1(X_1\oplus y_1)\|\dots\|g_k(X_k\oplus y_k),\label{equ: distinguish-contra}
        \end{equation}  
        where $(y_1,\dots,y_k)\eqdef\Nisan_k(\sigma_2^*)$. Let $f_i'(x)\eqdef f_i(x\oplus y_i)$ and $g_i'(x)\eqdef g_i(x\oplus y_i)$ for $i\in[k]$. Clearly for every $i\in[k]$, $x\|f'_i(x)\approx^s_\epsilon x\|g_i'(x)$,\footnote{There is no loss in the circuit size of the adversary if we define the circuit model so that NOT gates are free.} which is impossible since $\NW_k$ is indistinguishability-preserving but \Cref{equ: distinguish-contra} holds.
        \item Similarly, we can show that since $\Nisan_k$ is $2^{-n}$-pseudorandom against branching programs of size $2^n$, after XORed with another generator, $G_k$ is still $2^{-n}$-pseudorandom against branching programs of size $2^n$. This implies \Cref{enum: item 3 hard}.
    \end{itemize}
    
    It remains to verify that the Nisan-Wigderson based generator $\NW_k$ is indistinguishability-preserving for size $k^2$ against $\Sigma^p_{i-1}$-oracle circuits. Let $\ell=O(n^2)$ and $S_1,S_2,\dots,S_k\subseteq[\ell]$ be an $(\ell,n,\log k)$-design (see \Cref{sec:NW} and \cite{Nisan92}). Then $\NW_k:\{0,1\}^\ell\to(\{0,1\}^n)^k$ is defined as 
    \[
    \NW_k(\sigma) \eqdef (\sigma|_{S_1},\sigma|_{S_2},\dots,\sigma|_{S_k}).
    \] 
    Let $f_1,\dots,f_k,g_1,\dots,g_k$ be probabilistic functions such that for every $i\in[k]$, $x\| f_i(x)\approx^s_\epsilon x|| g_i(x)$ for $x\sim \{0,1\}^n$. Suppose, for the sake of contradiction, that 
    \begin{equation}
    \sigma\| f_1(\sigma|_{S_1})\|\dots\| f_k(\sigma|_{S_k})~\not\approx^{s-k^2}_{k\cdot \epsilon}~ \sigma\| g_1(\sigma|_{S_1})\|\dots\| g_k(\sigma|_{S_k}). \label{equ: distinguishable-f-g}
    \end{equation}
    For every $i\in[0,k]$, we define the hybrid distribution 
    \[
        H_i=\sigma\|g_1(\sigma|_{S_1})\|\dots\|g_i(\sigma|_{S_i})\|f_{i+1}(\sigma|_{S_{i+1}})\|\dots\| f_k(\sigma|_{S_k}). 
    \]
    Then the distinguisher $D$ for \Cref{equ: distinguishable-f-g}, which is a $\Sigma^p_{i-1}$-oracle circuit of size $s-k^2$, can distinguish between $H_i$ and $H_{i+1}$ with advantage at least $\epsilon$ for some $i\in[0,k-1]$. Note that 
    \begin{align*}
        H_i &= \sigma\|g_1(\sigma|_{S_1})\|\dots\|g_i(\sigma|_{S_i})\|f_{i+1}(\sigma|_{S_{i+1}})\|f_{i+2}(\sigma|_{S_{i+2}})\|\dots\| f_k(\sigma|_{S_k}) & \text{and} \\ 
        H_{i+1} &= \sigma\|g_1(\sigma|_{S_1})\|\dots\|g_i(\sigma|_{S_i})\|g_{i+1}(\sigma|_{S_{i+1}})\|f_{i+2}(\sigma|_{S_{i+2}})\|\dots\| f_k(\sigma|_{S_k})
    \end{align*}
    differ only on the $(i+2)$-th part: $H_i$ has $f_{i+1}(\sigma|_{S_{i+1}})$ while $H_{i+1}$ has $g_{i+1}(\sigma|_{S_{i+1}})$. 
    
    By an averaging argument, we can fix all the bits of $\sigma$ outside of $S_{i+1}$ so that $\hat H_i$ and $\hat H_{i+1}$ are still distinguishable with advantage $\epsilon$, where $\hat H_i$ and $\hat H_{i+1}$ refer to the distribution $H_i$ and $H_{i+1}$ after we fix the bits of $\sigma$ outside of $S_{i+1}$. Since for every $j\ne i+1$, $|S_j\cap S_{i+1}|\le \log k$, we can construct a $\Sigma^p_{i-1}$-oracle circuit of size at most $(s-k^2)+2^{\log k}\cdot k = s$ that hardwires all possibilities for the common parts of $H_i$ and $H_{i+1}$ such that:
    \begin{itemize}[itemsep=0pt]
        \item Given the unfixed bits of $\sigma$ and $f_{i+1}(\sigma)$, it generates $\hat H_i$ and outputs $D(\hat H_i)$. 
        \item Given the unfixed bits of $\sigma$ and $g_{i+1}(\sigma)$, it generates $\hat H_{i+1}$ and outputs $D(\hat H_{i+1})$. 
    \end{itemize}  
    Since $D$ can distinguish between $\hat H_i$ and $\hat H_{i=1}$ with advantage $\epsilon$, the circuit above can distinguish bewteen $\sigma\| f_{i+1}(\sigma\| S_{i+1})$ and $\sigma\| g_{i+1}(\sigma\| S_{i+1})$ with advantage $\epsilon$. This leads to a contradiction. 
\end{proof}

  \paragraph{The Choice of $C$.} The outer function $C$, which serves as the counterpart of the XOR function in Yao's XOR Lemma (see, e.g.,~\cite[Theorem 12.9]{Arora-Barak09}), is chosen according to the following lemma. 
  \begin{lemma}[Generalized version of {\cite[Lemma 5.15]{DBLP:journals/siamcomp/HealyVV06}}]\label{lmm: choice of C}
    For every $i\ge 1$, $\delta(n)=1/\poly(n)$, and $k=k(n)$ such that $n^{\omega(1)}\le k\le 2^n$, there is a function $C_k:\{0,1\}^k\to\{0,1\}$ such that: 
    \begin{enumerate}
      \item $\NoiseStab_\delta[C_k]\le 1/k^{\Omega(1)}$; 
      \item For every $f:\{0,1\}^n\to\{0,1\}$ computable by $\cSig{i}\SIZE[s(n)]$ circuits, $(C_k\circ f^{\otimes k})\circ G_k:\{0,1\}^\ell\to\{0,1\}$ is computable by $\cSig{i}\SIZE[\poly(n)\cdot s(n)]$ circuits. 
      \item $C_k$ is computable by a branching program of size $\poly(n)\cdot k$ and by a deterministic circuit of size $\poly(n)\cdot k$.    
    \end{enumerate}
  \end{lemma}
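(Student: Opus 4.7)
The plan is to take $C_k$ to be a balanced Tribes function, essentially as in the original argument of Healy, Vadhan, and Viola, and to verify that each of the three items transfers to the $\Sigma_i$ setting. Setting $w = c\log k$ for a constant $c$ chosen so that the resulting function is approximately balanced (and assuming $w$ divides $k$ for simplicity), define
$$C_k(y_{1,1},\ldots,y_{k/w,w}) \eqdef \bigvee_{a=1}^{k/w} \bigwedge_{b=1}^{w} y_{a,b}.$$
Item~(1), the noise-stability bound $\NoiseStab_\delta[C_k] \le 1/k^{\Omega(1)}$ for $\delta = 1/\poly(n)$ and $k \ge n^{\omega(1)}$, follows from standard Fourier-analytic calculations for Tribes and is essentially Lemma~3.11 of~\cite{DBLP:journals/siamcomp/HealyVV06}; no change in the argument is needed. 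Item~(3) is immediate: $C_k$ is a depth-$2$ formula of size $O(k)$ on $k$ inputs, hence computable by a deterministic circuit of size $O(k)$ and by an obvious read-once branching program of size $\poly(n)\cdot k$.

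The main content is Item~(2), where the crucial observation is that the large outer $\bigvee_a$ can be absorbed into the outermost existential quantifier of the $\Sigma_i$-prefix, so that only the inner $\bigwedge_b$ of width $w = O(\log k) \le O(n)$ contributes to the size blowup. Writing $f(x) = \exists z_1\,\forall z_2 \cdots Q_i z_i\, A(x,z_1,\ldots,z_i)$ with $A$ a deterministic circuit of size $s(n)$, and letting $X_{a,b}$ denote the $(a,b)$-th $n$-bit block of $G_k(\sigma)$, we obtain
\begin{align*}
(C_k\circ f^{\otimes k}\circ G_k)(\sigma)
 &= \bigvee_{a=1}^{k/w}\bigwedge_{b=1}^{w} f(X_{a,b}) \\
 &\equiv \exists a\,\exists (z_1^b)_{b}\,\forall (z_2^b)_{b}\cdots Q_i(z_i^b)_{b}\;\bigwedge_{b=1}^{w} A\bigl(X_{a,b},z_1^b,\ldots,z_i^b\bigr),
\end{align*}
using the standard quantifier manipulations $\bigwedge_b\exists \vec u\,\phi_b(\vec u)\equiv \exists \vec u^{(1)},\ldots,\vec u^{(w)}\bigwedge_b\phi_b(\vec u^{(b)})$ and $\bigwedge_b\forall \vec u\,\phi_b(\vec u)\equiv \forall \vec u\bigwedge_b\phi_b(\vec u)$ to pull the conjunction $\bigwedge_b$ past the alternating prefix, together with the absorption $\bigvee_a\exists\vec u\,\psi_a(\vec u)\equiv \exists a,\vec u\,\psi_a(\vec u)$ to merge the outer disjunction into the outermost existential block. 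By Item~(1) of Lemma~\ref{lmm: generator-g-amp}, each $X_{a,b}$ is computable from $(\sigma,a,b)$ by a circuit of size $\poly(n)$, and therefore the quantifier-free matrix is a deterministic circuit of size $O\bigl(w\cdot(\poly(n)+s(n))\bigr) = \poly(n)\cdot s(n)$. The quantifier prefix has alternation pattern $\exists\forall\cdots Q_i$ with each quantifier block of length $\poly(n)$, so the whole formula is computable in $\cSig{i}\SIZE[\poly(n)\cdot s(n)]$ as required.

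The main step that needs care is not the generalization to $\Sigma_i$, which is pure quantifier bookkeeping, but rather verifying that the parameter regime $\delta = 1/\poly(n)$ and $k = n^{\omega(1)}$ indeed admits a choice of $w$ making $C_k$ simultaneously approximately balanced and of noise stability $1/k^{\Omega(1)}$. The latter reduces to the estimate $\NoiseStab_\delta[\Tribes] \le O((1-\delta)^w) + O(k/w)\cdot 2^{-2w}$, which for $w = \Theta(\log k)$ with the hidden constant much larger than $1/\delta$ yields the required $1/k^{\Omega(1)}$ bound; this is precisely the analysis of Lemma~3.11 in~\cite{DBLP:journals/siamcomp/HealyVV06}, which transfers verbatim to our setting since it is purely combinatorial and does not depend on the class of adversaries considered.
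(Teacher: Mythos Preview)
Your argument for Item~(2) is clean and correct for a Tribes-only combining function, but the construction itself fails at Item~(1). The balanced Tribes function on $k$ bits has block size $w \approx \log k$ (this is forced by balancedness, not a free parameter), and its noise stability at rate $\delta$ satisfies $\NoiseStab_\delta[\Tribes_k] \ge 1 - O(\delta \log k)$ for small $\delta$: conditioned on $C_k(x)=1$, the unique satisfied block survives the noise with probability $(1-\delta)^w \approx 1 - \delta\log k$. With $\delta = 1/\poly(n)$ and $\log k \le n$, this is $1 - o(1)$, nowhere near $1/k^{\Omega(1)}$. Your attempted fix of taking the hidden constant in $w = \Theta(\log k)$ ``much larger than $1/\delta$'' does not work: $1/\delta = \poly(n)$ is not a constant, and taking $w = \poly(n)\cdot\log k$ makes $\Pr[C_k = 1] \approx (k/w)\cdot 2^{-w}$ negligibly small, so $C_k$ is essentially the constant~$0$ function and again has noise stability close to~$1$.

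This is exactly why the paper (following HVV) composes Tribes with recursive majority: setting $r = c\log(1/\delta)$ and $C_k = \Tribes_{k/3^r}\circ \RMaj_r^{\otimes k/3^r}$, the $\RMaj_r$ layer first amplifies the noise rate from $\delta$ to a constant, after which the Tribes layer gives the $1/k^{\Omega(1)}$ bound. Since $3^r = \poly(1/\delta) = \poly(n)$, each satisfied Tribes-clause involves only $b\cdot 3^r = \poly(n)$ evaluations of $f$, and monotonicity of $\RMaj$ lets the $\Sigma_i$-circuit guess and verify only the $1$-positions, so Item~(2) still goes through with size $\poly(n)\cdot s(n)$. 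Your quantifier-merging argument for Item~(2) adapts directly to this modified $C_k$; the missing ingredient is the $\RMaj$ preprocessing, without which Item~(1) simply does not hold in the regime $\delta = 1/\poly(n)$.
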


\newcommand{\Maj}{\mathsf{Maj}}
\newcommand{\RMaj}{\mathsf{RMaj}}

\begin{proof}
    Let $\delta=\delta(n)\ge 1/\poly(n)$ and $k=k(n)$ such that $n^{\omega(1)}\le k\le 2^n$. We will define $C_k$ as the composition of two functions defined as follows: 
    \begin{itemize}
        \item The \emph{recursive-majority function} $\RMaj_r:\{0,1\}^{3^r}\to\{0,1\}$ is recursively defined by 
        \begin{eqnarray*}
            \RMaj_1(x_1,x_2,x_3) & \eqdef & \Maj(x_1,x_2,x_3) \\ 
            \RMaj_r(x_1,\dots,x_{3^r}) & \eqdef & \RMaj_{r-1}(\Maj(x_1,x_2,x_3),\dots,\Maj(x_{3^r-2},x_{3^r-1},x_{3^r}))
        \end{eqnarray*} 
        where $\Maj(x_1,x_2,x_3)$ is the majority value among $x_1,x_2,x_3\in\{0,1\}$. 
        \item The \emph{tribes function} of $k$ bits is defined by 
        \[
        \Tribes_k(x_1,\dots,x_k) \eqdef (x_1\land\dots\land x_b)\lor (x_{b+1}\land\dots\land x_{2b})\lor\dots\lor(x_{k-b+1}\land\dots\land x_k),   
        \]
        where $b=O(\log k)$ is the largest integer such that $(1-2^{-b})^{k/b}\ge 1/2$. 
    \end{itemize}
    
    Let $r\eqdef c\cdot\log(1/\delta)$ for a constant $c$ to be determined later. Assuming without loss of generality that $r$ and $k/3^r$ are integers, we define $C_k:\{0,1\}^k\to\{0,1\}$ by 
    \[ 
        C_k\eqdef\Tribes_{k/3^r}\circ\RMaj_r^{\otimes k/3^r}. 
    \] 
    As \cite[Section 5.5]{DBLP:journals/siamcomp/HealyVV06} in the proof of Lemma 5.15, we know that for some sufficiently large constant $c$, the noise stability of $C_k$ is at most $1/k^{\Omega(1)}$. Also they showed that $C_k$ can be computed by a branching program of size $\poly(n)\cdot k$ and a deterministic circuit of size $\poly(n)\cdot k$.
    
    It remains to determine the complexity of $(C_k\circ f^{\otimes k})\circ G_k$ for $f:\{0,1\}^n\to\{0,1\}$ computable by $\cSig{i}\SIZE[s(n)]$ circuits. Consider the following $\cSig{i}$circuit. We first guess (using non-determinism) a clause $K$ of the upper $\Tribes_{k/3^r}$ function that is satisfied. For every $\RMaj_r$ function feeding into this clause (there are $b=O(\log k)=\poly(n)$ such $\RMaj_r$ functions), we guess the input bits of the upper $C_k$ sub-circuit (or equivalently, the output bits of the lower $f$ functions) that are $1$ and
    \begin{enumerate}
        \item we verify that these input bits that are $1$ make the clause $K$ accept, which can be done by a deterministic circuit of size $\poly(3^r)=\poly(n)$ since $\RMaj$ is a monotone function; 
        \item for every guessed input bit of $C_k$ (or equivalently, the output bit of one of $f$ in the middle $f^{\otimes k}$ layer) that is supposed to be $1$, we use the $\cSig{i}\SIZE[s(n)]$ circuit for $f$ to verify that it is indeed $1$. The input to this function $f$ is one of the $n$-bit blocks of the output of $G_k$, which can be computed by a deterministic algorithm in $\poly(\ell,\log k)=\poly(n)$ time (see \Cref{lmm: generator-g-amp}). 
    \end{enumerate}
    The overall $\Sigma_i$-circuit complexity of $(C_k\circ f^{\otimes k})\circ G_k$ is at most $\poly(n)\cdot s(n)$.
\end{proof}
  
  Note that the second item means that the function $(C_k\circ f^{\otimes k})\circ G$ is efficiently computable \emph{even if $k$ is as large as $2^n$}. The argument relies on the explicitness of $C_k$ and $G$ as well as on the power of $\Sigma_i$-circuits. This is crucial for hardness amplification up to $1/2-1/s_2(\ell^\gamma)^\gamma$ (instead of only $1/2-1/\poly(\ell)$). 

  \paragraph{Proof of the Hardness Amplification.} Following \cite[Section 5]{DBLP:journals/siamcomp/HealyVV06}, we now argue that if $f$ is $\delta$-hard for size $s(n)\ge n^{\omega(1)}$, where $\delta\ge 1/\poly(n)$, then  we can construct $\Amp_f:\{0,1\}^\ell\to\{0,1\}$ with $\ell=\poly(n)$ that is $(1/2-1/s(\sqrt \ell)^{\Omega(1)})$-hard for size $s(\sqrt{\ell})^{\Omega(1)}$. To prove this, we need the following two technical lemmas. 

  \begin{lemma}[{\cite[Lemma 5.7 and Lemma 5.12]{DBLP:journals/siamcomp/HealyVV06}}]\label{lmm: bias-and-noise-stab}
    Let $g$ be an $n$-input single output $\delta$-random function, and $C_k$ and $G_k$ be defined as above. Then 
    \[
      \ExpBias[(C_k\circ g^{\otimes k})\circ G_k]\le\sqrt{\NoiseStab_\delta(C_k)+2^{-n+1}}.
    \]    
  \end{lemma}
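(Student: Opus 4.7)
The plan is to reduce the bound on $\ExpBias$ to a comparison between the squared bias when the input blocks are generated by $G_k$ versus when they are drawn uniformly at random, and then to evaluate the uniform case exactly in terms of $\NoiseStab_\delta(C_k)$.

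First, observe the elementary identity for any probabilistic function $h(x;r)$: since $\Bias(h(x))^2 = 2 \Pr_{r,r'}[h(x;r) = h(x;r')] - 1$, we have
\[
\Ex_x\bigl[\Bias(h(x))^2\bigr] \;=\; 2 \Pr_{x, r, r'}\bigl[h(x;r) = h(x;r')\bigr] - 1.
\]
By Jensen's inequality, $\ExpBias[h]^2 \le \Ex_x[\Bias(h(x))^2]$, so it suffices to upper bound the right-hand side above when $h = (C_k \circ g^{\otimes k}) \circ G_k$.

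Next, I would show that the event ``$C_k(g(X_1;r_1),\dots,g(X_k;r_k)) = C_k(g(X_1;r_1'),\dots,g(X_k;r_k'))$'' can be decided, for fixed randomness $(r,r')$, by a read-once oblivious branching program with block-size $n$ and total size $\poly(n) \cdot k \le 2^n$. Indeed, by \Cref{lmm: choice of C}, $C_k$ itself admits a branching program of size $\poly(n) \cdot k$; we run two copies of this program in parallel, and for each block $X_i \in \{0,1\}^n$ that arrives we feed $g(X_i;r_i)$ and $g(X_i;r_i')$ into the two copies respectively (these bits are determined by the block and the fixed randomness). At the end we accept iff the two copies agree. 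Since $G_k$ is $2^{-n}$-pseudorandom against this class of branching programs (\Cref{lmm: generator-g-amp}), averaging over $(r,r')$ yields
\[
\Bigl| \Pr_{\sigma,r,r'}\bigl[h(\sigma;r) = h(\sigma;r')\bigr] - \Pr_{X \sim U, r, r'}\bigl[C_k(g^{\otimes k}(X;r)) = C_k(g^{\otimes k}(X;r'))\bigr] \Bigr| \le 2^{-n},
\]
and hence $\Ex_\sigma[\Bias(h(\sigma))^2] \le \Ex_{X \sim U}[\Bias(C_k \circ g^{\otimes k}(X;\cdot))^2] + 2^{-n+1}$.

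Finally, I would evaluate the uniform-$X$ case exactly. Because $g$ is $\delta$-random and balanced, for a uniform $X_i$ the output $g(X_i; r_i)$ is uniform in $\{0,1\}$, and a second fresh copy $g(X_i; r_i')$ agrees with it deterministically when $X_i \notin H$ (probability $1 - 2\delta$) and is an independent fair coin when $X_i \in H$ (probability $2\delta$). Thus the pair $(Z,Z') = (g^{\otimes k}(X;r), g^{\otimes k}(X;r'))$ has the same joint distribution as $(Z, Z \oplus \eta)$ with $Z$ uniform and each $\eta_i$ an independent Bernoulli$(\delta)$, so $\Ex_{X \sim U}[\Bias(C_k \circ g^{\otimes k}(X;\cdot))^2] = 2\Pr[C_k(Z)=C_k(Z \oplus \eta)] - 1 = \NoiseStab_\delta(C_k)$. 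Combining the three displays yields the claimed bound.

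The main obstacle is the branching-program construction in the second step: one has to verify that the two parallel copies of $C_k$'s BP, augmented with the per-block application of $g(\cdot;r_i)$ and $g(\cdot;r_i')$, still respects the read-once oblivious block structure required by the pseudorandomness guarantee of $G_k$ and fits within the $2^n$ size budget. For the regime $k \le 2^n/\poly(n)$ used in \Cref{thm:hardamp-ppoly} the size bound is immediate; near the extreme $k = 2^n$ one would absorb a constant-factor loss into the constant $\gamma$. Every other step is either Jensen's inequality, a routine identity for the bias of $\{0,1\}$-valued random variables, or a direct computation using the definition of a $\delta$-random function.
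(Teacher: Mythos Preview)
The paper does not supply its own proof of this lemma; it is stated with a citation to \cite[Lemmas 5.7 and 5.12]{DBLP:journals/siamcomp/HealyVV06} and used as a black box. Your proposal reconstructs exactly the argument behind those two lemmas: the Jensen step together with the identity $\Ex_x[\Bias(h(x))^2]=2\Pr_{x,r,r'}[h(x;r)=h(x;r')]-1$ reduces the claim to a collision-probability estimate; the pseudorandomness of $G_k$ against read-once branching programs of block-size $n$ (\Cref{lmm: generator-g-amp}) replaces the seed-generated blocks by uniform ones at cost $2^{-n+1}$; and the computation identifying $\Ex_X[\Bias(C_k\circ g^{\otimes k}(X))^2]$ with $\NoiseStab_\delta(C_k)$ for uniform $X$ and a $\delta$-random $g$ is precisely Lemma~3.7/5.7 of \cite{DBLP:journals/siamcomp/HealyVV06} (and your coupling of $(Z,Z')$ with $(Z,Z\oplus\eta)$ is the right way to see it).

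Your identification of the branching-program construction as the only nontrivial verification is accurate. Running two copies of the width-$\poly(n)$, $k$-layer program for $C_k$ in lockstep, with the per-block map $X_i\mapsto(g(X_i;r_i),g(X_i;r_i'))$ hardwired, gives a read-once oblivious program of width $\poly(n)$ and $k$ layers, hence size $\poly(n)\cdot k$; this fits in $2^n$ for the regime $k=s(n)^{1/7}\le 2^n/\poly(n)$ actually used in the proof of \Cref{thm:hardamp-ppoly}, so no ``absorption into $\gamma$'' is needed there. In short, your sketch is correct and matches the intended (cited) proof.
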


  \begin{lemma}[Generalized version of {\cite[Lemma 5.2]{DBLP:journals/siamcomp/HealyVV06}}]\label{lmm: indist-exp-bias}
    Assume that $f:\{0,1\}^n\to\{0,1\}$ is $\delta$-hard for size $s=n^{\omega(1)}$. There is a $\delta'$-random function $g$ with $\delta'\in[\delta/2,\delta]$ such that $\Amp_f \colon \{0,1\}^\ell\to\{0,1\}$ has hardness 
    \[
    \frac{1}{2}-\frac{\ExpBias[(C\circ g^{\otimes k})\circ G]}{2}-\frac{k}{s^{1/3}}
    \]
    for size $\Omega(s^{1/3}/\log(s/\delta))-k^2-\poly(n)\cdot k$.  
  \end{lemma}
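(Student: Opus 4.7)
The plan is to adapt the proof of \cite[Lemma 5.2]{DBLP:journals/siamcomp/HealyVV06} to the setting where the adversary is a $\Sigma^p_{i-1}$-oracle circuit, following the same three-step template: (i) replace $f$ by a $\delta'$-random ``proxy'' $g$ via Impagliazzo's hardcore theorem (relativized to $\Sigma^p_{i-1}$-oracle circuits); (ii) transfer the indistinguishability through the generator $G_k$ using its indistinguishability-preserving property (\Cref{lmm: generator-g-amp}, Item~2) and through the outer circuit $C_k$ (\Cref{lmm: choice of C}, Item~3); (iii) convert a computational distinguisher against $\mathsf{Amp}_f$ into an expected-bias bound on $(C_k \circ g^{\otimes k}) \circ G_k$.

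For step (i), I invoke the ``measure'' form of Impagliazzo's hardcore lemma: since $f$ is $\delta$-hard for size $s$ against $\Sigma^p_{i-1}$-oracle circuits, there exists a $\delta'$-random probabilistic function $g$ with $\delta' \in [\delta/2, \delta]$ such that for every $\Sigma^p_{i-1}$-oracle circuit of size $s_1 = \Omega(s \epsilon^2 / \log(s/\delta))$, we have $(x, f(x)) \approx^{s_1}_{\epsilon} (x, g(x))$. Choosing $\epsilon = s^{-1/3}$ gives $s_1 = \Omega(s^{1/3}/\log(s/\delta))$. The standard proofs of the hardcore lemma (boosting / multiplicative weights) only combine candidate distinguishers by weighted thresholds and majorities, and $\Sigma^p_{i-1}$-oracle circuits are trivially closed under such combinations, so the argument relativizes without any quantitative loss.

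For steps (ii) and (iii), let $D$ be any $\Sigma^p_{i-1}$-oracle circuit of size $s_2 \triangleq s_1 - k^2 - \poly(n)\cdot k$ that attempts to compute $\mathsf{Amp}_f$, and consider the equality-check distinguisher $E(\sigma, b) \triangleq [D(\sigma) = b]$, whose size matches that of $D$ up to $O(1)$. Applying the indistinguishability-preserving property with $f_1 = \cdots = f_k = f$ and $g_1 = \cdots = g_k = g$, and then composing with the circuit $C_k$ of size $\poly(n)\cdot k$, yields
\[
\Big|\Pr_\sigma[E(\sigma, (C_k\circ f^{\otimes k})\circ G_k(\sigma))=1] - \Pr_\sigma[E(\sigma, (C_k\circ g^{\otimes k})\circ G_k(\sigma))=1]\Big| \;\leq\; k\epsilon \;=\; k/s^{1/3}.
\]
For each fixed $\sigma$, the second probability (taken also over the internal randomness of $g$) is at most $\tfrac{1}{2}\bigl(1 + \Bias((C_k\circ g^{\otimes k})\circ G_k(\sigma))\bigr)$, so averaging over $\sigma$ bounds it by $\tfrac{1}{2} + \tfrac{1}{2}\,\ExpBias[(C_k\circ g^{\otimes k})\circ G_k]$. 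Combining the two inequalities gives exactly the agreement bound $\tfrac{1}{2} + \tfrac{1}{2}\ExpBias[(C_k\circ g^{\otimes k})\circ G_k] + k/s^{1/3}$ for any $\Sigma^p_{i-1}$-oracle circuit of size $s_2$, which is the statement of the lemma.

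The main obstacle is pinning down the relativized hardcore lemma: one must verify that the proof as given by Impagliazzo (or by Klivans--Servedio / Barak--Hardt--Kale via boosting) only manipulates candidate adversaries via operations that stay inside the $\Sigma^p_{i-1}$-oracle circuit class, so that the quantitative size-vs-advantage tradeoff $s_1 = \Omega(s\epsilon^2/\log(s/\delta))$ carries over unchanged. Beyond this, the bookkeeping is routine: the additive losses $k^2$ (from indistinguishability preservation) and $\poly(n)\cdot k$ (from substituting $C_k$) are exactly the two terms subtracted in the final circuit size, and the choice $\epsilon = s^{-1/3}$ gives simultaneously the hardness loss $k/s^{1/3}$ and the circuit budget $\Omega(s^{1/3}/\log(s/\delta))$ advertised in the statement.
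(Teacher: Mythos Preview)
Your proposal is correct and follows essentially the same approach as the paper: relativized hardcore lemma to obtain the $\delta'$-random $g$, transfer indistinguishability through $G_k$ (losing $k^2$) and through $C_k$ (losing $\poly(n)\cdot k$), then bound the agreement with $\mathsf{Amp}_f$ via the expected bias of $(C_k\circ g^{\otimes k})\circ G_k$, finally setting $\epsilon = s^{-1/3}$. The only cosmetic difference is that the paper phrases step~(iii) as a statistical-distance comparison of $\sigma\|(C_k\circ g^{\otimes k})\circ G_k(\sigma)$ to $\sigma\|b$ for uniform $b$ (invoking \cite[Lemma~3.4]{DBLP:journals/siamcomp/HealyVV06}), whereas you bound the equality-check probability directly; the two are equivalent.
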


Before proving this lemma, we need to verify that Impagliazzo's hardcore lemma (see, e.g., \cite[Section 19.1.2]{Arora-Barak09}) holds against adversaries with access to $\Sigma^p_{i-1}$ oracles.

\begin{lemma}[Generalized version of Impagliazzo's Hardcore Lemma]
    Assume that $2n<s<0.001\cdot (\epsilon\delta)^2\cdot 2^n/n$. Let $f:\{0,1\}^n\to\{0,1\}$ be a balanced function that is $\delta$-hard for $\Sigma^p_{i-1}$-oracle circuits of size $s$. There exists a $\delta'$-random function $g:\{0,1\}^n\to\{0,1\}$ such that $X\|f(X)\approx^{s'}_{\epsilon} X\| g(X)$ for $X\sim\{0,1\}^n$, where $s'=\Omega(s\epsilon^2/\log(1/(\delta\epsilon)))$ and $\delta'\in[\delta/2,\delta]$. 
\end{lemma}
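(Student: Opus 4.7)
The plan is to adapt the standard proof of Impagliazzo's hardcore lemma (see, e.g., \cite{Arora-Barak09} or Impagliazzo's original argument) to the setting where both the distinguishers and the simulating circuits are allowed to have $\Sigma^p_{i-1}$ oracle gates. The key observation is that the original proof is essentially black-box with respect to the underlying circuit class: it only uses that the class is closed, with polynomial overhead, under taking majority (and AND/OR) of polynomially many circuits of the same size. Since a deterministic circuit with $\Sigma^p_{i-1}$-oracle gates can simulate the majority of $\poly(n)$ many $\Sigma^p_{i-1}$-oracle circuits (or their boolean combinations) with only polynomial overhead, this closure property is preserved in our setting.

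Concretely, I would follow the boosting / measure-based formulation of the proof. Call a \emph{measure} any function $\mu \colon \{0,1\}^n \to [0,1]$, and let its density be $|\mu|/2^n \eqdef 2^{-n}\sum_x \mu(x)$. For a measure $\mu$ of density at least $2\delta$, define the $\delta'$-random function associated to $\mu$ by letting $g(x)$ be a fair coin on the support of $\mu$ (scaled appropriately) and equal to $f(x)$ elsewhere, so that $\delta' \in [\delta/2,\delta]$ once $\mu$ is suitably rounded to a $0/1$ indicator of a set $H$ of density $2\delta'$. The claim is that there exists such $\mu$ (equivalently, such $H$) under which no $\Sigma^p_{i-1}$-oracle circuit $D$ of size $s'$ can distinguish $X\|f(X)$ from $X\|g(X)$ with advantage more than $\epsilon$. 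Suppose not: then for every candidate $\mu$ of density $\ge 2\delta$, there is a distinguisher $D_\mu$ of size $s'$ with advantage $>\epsilon$.

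The main step is then the boosting argument: starting from the uniform measure, iteratively re-weight $\mu$ using the successive distinguishers $D_1, D_2, \ldots, D_T$ (for $T = O(\log(1/(\delta\epsilon))/\epsilon^2)$), where $D_j$ is obtained at stage $j$ by invoking the assumption on the current measure. A standard potential-function argument shows that after $T$ stages, a weighted majority (with appropriate truncation) of the signs $\mathrm{sgn}(D_j(x,b) - D_j(x, 1-b))$ produces a predictor that agrees with $f$ on more than a $1 - \delta$ fraction of inputs. The crucial point is that this predictor is implementable as a $\Sigma^p_{i-1}$-oracle circuit: it is the $\mathsf{Maj}$ of $T = O(\log(1/(\delta\epsilon))/\epsilon^2)$ circuits, each of size $s'$, with $O(\log T)$ bits of comparison wiring, giving total size $O(T \cdot s') + \poly(n) = O(s' \cdot \log(1/(\delta\epsilon))/\epsilon^2) + \poly(n)$. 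Choosing $s' = \Omega(s \epsilon^2 / \log(1/(\delta\epsilon)))$ makes this bound at most $s$, contradicting the assumed $\delta$-hardness of $f$ against size $s$ (via \Cref{thm:oracle-circuit-to-nondet-circuit} or the closure of $\SIZE^{\Sigma^p_{i-1}}$ under these operations).

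The main obstacle, and the only place where one needs to be careful, is the verification that the closure step goes through: namely, that combining $T$ $\Sigma^p_{i-1}$-oracle circuits of size $s'$ via weighted majority yields a $\Sigma^p_{i-1}$-oracle circuit of size at most $s$. This is where the bound $s < 0.001 (\epsilon\delta)^2 2^n/n$ and $s > 2n$ enter (the former ensures the measure arguments and truncation are meaningful at the scale $2^n$, the latter ensures that a hard truth-table exists and $s'$ is at least $n$). Everything else -- the re-weighting, the potential argument, and the final contradiction -- is a syntactic replay of the standard proof, since no property of the Boolean circuit class beyond closure under $\mathsf{Maj}$ and evaluation of a threshold on integer sums is used. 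Once $g$ is obtained as above, we round the fractional measure $\mu$ to an indicator of a set $H$ of size $2\delta' \cdot 2^n$ with $\delta' \in [\delta/2,\delta]$ via a standard averaging/rounding step, producing the desired $\delta'$-random $g$ with $X\|f(X) \approx^{s'}_\epsilon X\|g(X)$.
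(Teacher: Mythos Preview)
Your proposal is correct but follows a genuinely different route from the paper. The paper uses the \emph{min-max} proof of the hardcore lemma (as in \cite[Section~19.1.2]{Arora-Barak09}): assuming no hardcore distribution exists, it applies the min-max theorem for zero-sum games to obtain a distribution $\calC$ over small $\Sigma^p_{i-1}$-oracle circuits that does well on every density-$\delta_1$ distribution, then takes the majority of $t=O(\epsilon^{-2}\log(1/(\delta\epsilon)))$ independent samples from $\calC$ to build a single circuit of size $\le s$ that computes $f$ on more than a $1-\delta$ fraction of inputs. You instead use the \emph{boosting} proof: iteratively reweight a measure using the successive distinguishers and take a weighted majority after $T=O(\epsilon^{-2}\log(1/(\delta\epsilon)))$ rounds.

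Both arguments hinge on the same black-box observation you identified---closure of $\SIZE^{\Sigma^p_{i-1}}$ under majority of $O(\epsilon^{-2}\log(1/(\delta\epsilon)))$ circuits---and both yield the same bound $s'=\Omega(s\epsilon^2/\log(1/(\delta\epsilon)))$. The min-max version is slightly cleaner in that the final ``measure to set'' step is made explicit: sample $H$ randomly from the hardcore distribution and use a Chernoff bound plus a union bound over all $\Sigma^p_{i-1}$-oracle circuits of size $s'$ (this is where the upper bound $s<0.001(\epsilon\delta)^2 2^n/n$ is actually used, to make the union bound go through). Your rounding step is stated more loosely, but the same probabilistic argument applies equally well to the measure produced by boosting, so there is no gap.
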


\newcommand{\calC}{\mathcal{C}}

\newcommand{\calH}{\mathcal{H}}

\begin{proof}[Proof Sketch.]
    We follow the proof presented in \cite[Section 19.1.2]{Arora-Barak09} based on the \emph{min-max theorem} for  zero-sum games (also see, e.g., \cite{Impagliazzo95}). We say that a distribution $\calH$ over $\{0,1\}^n$ has density $\delta$ if for every $x\in\{0,1\}^n$, $\calH(x)\le 1/(\delta 2^n)$. Let $\delta_1=0.99\delta$. We first show that there is a distribution $\calH$ of density $\delta_1$ such that for every $\Sigma^p_{i-1}$-oracle circuit $C$ of size $s'$, $\Pr[f(x)=C(x)]<1/2+\epsilon/2$ for $x\sim\calH$. 
    
    Towards a contradiction, we assume that such distribution does not exist. By a game-theoretic argument using the min-max theorem, we can construct a distribution $\calC$ over $\Sigma^p_{i-1}$-oracle circuits of size $s'$ such that for every distribution $\calH$ of density $\delta_1$, a random $C\sim\calC$ can approximate $f$ over $\calH$ with error $\le 1/2-\epsilon/2$. 

    An input $x\in\{0,1\}^n$ is said to be \emph{bad} if $\Pr[C(x)\ne f(x)]>1/2-\epsilon/2$ for $C\sim\calC$. It is said to be \emph{good} otherwise. There are at most $\delta_1\cdot 2^n$ bad inputs, since otherwise we can let $\calH$ be the uniform distribution over a set of $\delta_1\cdot 2^n$ bad inputs and violate the aforementioned property of $\calC$. Let $t=O(\epsilon^{-2}\log(1/(\delta\epsilon)))$ and $C$ be the following probabilistic circuit (with $\Sigma^p_{i-1}$ oracles): given input $x$, obtain $t$ independent samples $C_1,\dots,C_t\sim\calC$, and output the majority of $C_1(x),\dots,C_t(x)$. This probabilistic circuit has size at most $t\cdot s'\le s$. By the Chernoff bound, it computes $f(x)$ for any good $x$ with error at most $\exp(-\Omega(\epsilon^2 t))\le 0.001\cdot \delta$. This means that for a uniformly random $x\sim\{0,1\}^n$, the probabilistic $\Sigma^p_{i-1}$-oracle circuit (and also deterministic $\Sigma^p_{i-1}$-orcle circuit by an averaging argument) can approximate $f(x)$ with error at most $\delta_1+\delta/2\le\delta$ for an $x\sim\{0,1\}^n$, which is impossible. 

    We then prove via a probabilistic argument that there is a subset $H$ of size $\delta'\in[\delta/2,\delta]$ such that no $\Sigma^p_{i-1}$-oracle circuit of size $s$ can approximate $f$ on $H$ with advantage $\epsilon$. Let $H$ be a \emph{random} subset defined as follows: for every $x\in\{0,1\}^n$, we let $x\in H$ independently with probability $\calH(x)$. By a ``concentration bound then union bound'' argument, we get with non-zero probability that $H$ has size $\delta'\in[\delta/2,\delta]$ \emph{and} for every $C$ of size $s$, $\Pr[f(x)=C(x)]\le 1/2+1/\epsilon$. This means that the $\delta'$-random function $g$ defined over $H$ satisfies the conditions of the lemma.   
\end{proof}

\begin{proof}[Proof of \Cref{lmm: indist-exp-bias}]
    Assume that $f:\{0,1\}^n\to\{0,1\}$ is $\delta$-hard for size $s=n^{\omega(1)}$. By Impagliazzo's hardcore lemma, there is a $\delta'$-random function $g:\{0,1\}^n\to\{0,1\}$ such that $X\| f(X)\approx^{s'}_\epsilon X\|g(X)$ for $X\sim\{0,1\}^n$, where $s'=\Omega(s\epsilon^2/\log(1/(\delta\epsilon)))$ and $\delta'\in[\delta/2,\delta]$. Since $G$ is indistinguishability-preserving for size $k^2$, we get that
    \[
    \sigma\|f(X_1)\|\dots\|f(X_k)\approx^{s'-k^2}_{k\epsilon}\sigma\| g(X_1)\|\dots\|g(X_k), 
    \]
    where $\sigma\sim\{0,1\}^\ell$ and $(X_1,\dots,X_k)=G(\sigma)$. Since $C_k$ has complexity bounded by $\poly(n)\cdot k$ this further means that 
    \[
        \sigma\|C_k(f(X_1),\dots,f(X_k))\approx^{s''}_{k\epsilon}\sigma\|C_k(g(X_1),\dots,g(X_k)),
    \]
    where $s''=s'-k^2-\poly(n)\cdot k$. Note that $$C_k(f(X_1),\dots,f(X_k))=(C_k\circ f^{\otimes k})\circ G_k(\sigma) \quad \text{and} \quad C_k(g(X_1),\dots,g(X_k))=(C_k\circ g^{\otimes k})\circ G_k(\sigma)\,.$$

    Also we can see that the for every probabilistic function $h$, the statistical distance between $X\| h(X)$ and $X\|b$ for $X\sim\{0,1\}^n$ and $b\sim\{0,1\}$ is exactly $\ExpBias[h]/2$ (see, e.g., \cite[Lemma 3.4]{DBLP:journals/siamcomp/HealyVV06}). Therefore we know that 
    \[
    \Delta(\sigma\|(C_k\circ g^{\otimes k})\circ G_k(\sigma),\sigma\|b) \le \frac{\ExpBias[(C_k\circ g^{\otimes k})\circ G_k]}{2},     
    \]
    where $\sigma\sim\{0,1\}^\ell$ and $b\sim\{0,1\}$. This further means that $\sigma\|(C_k\circ f^{\otimes k})\circ G_k(\sigma)$ and $\sigma\|b$ are $k\epsilon+(1/2)\cdot \ExpBias[(C_k\circ g^{\otimes k})\circ G_k]$ indistinguishable for size $s''$. By setting $\epsilon=s^{-1/3}$, we obtain the lemma.  
    \end{proof}

  Let $k=k(n)=s(n)^{1/7}$, $C_k$ be the function in \Cref{lmm: choice of C}, and $G_k$ be the generator in \Cref{lmm: generator-g-amp} with $\ell=O(n^2)$. Recall that $\Amp_f:\{0,1\}^\ell\to\{0,1\}$ is defined as $\Amp_f\eqdef (C_k\circ f^{\otimes k})\circ G_k$, and note that the upper bound on the complexity of $\Amp_f$ is guaranteed by \Cref{lmm: choice of C}. By \Cref{lmm: indist-exp-bias}, we know that $\Amp_f$ has hardness 
  \begin{equation}
    \frac{1}{2}-\frac{\ExpBias[(C_k\circ g^{\otimes k})\circ G_k]}{2}-\frac{k}{s^{1/3}} \label{equ: hardness-of-amp}
  \end{equation}
  for size $\Omega(s^{1/3}/\log(s/\delta))-k^2-\poly(n)\cdot k=s(\sqrt\ell)^{\Omega(1)}$, where $g$ is some $\delta'$-random function with $\delta'\in[\delta/2,\delta]$. By \Cref{lmm: bias-and-noise-stab}, we can bound \Cref{equ: hardness-of-amp} using the noise stability bound for $C_k$ given in \Cref{lmm: choice of C}: 
  \begin{align*}
    (\ref{equ: hardness-of-amp})\; &\ge\; \frac{1}{2}-\frac{\sqrt{\NoiseStab_{\delta'}(C_k)+2^{-n+1}}}{2}-\frac{k}{s^{1/3}} \\ 
    & \ge\; \frac{1}{2}-\frac{\sqrt{k^{-\Omega(1)}+2^{-n+1}}}{2}-\frac{k}{s(n)^{1/3}} \\ 
    & \ge\; \frac{1}{2} - \frac{1}{s(\sqrt\ell)^{\Omega(1)}}.
  \end{align*}
  This completes the argument. 

\ignore{
We will need a more general version of this result.

\begin{theorem}[Implicit in \citep{DBLP:journals/siamcomp/HealyVV06}]\label{thm:hardamp} Let $i\ge 1$ be an arbitrary constant. There is a constant $\gamma >0$ for which the following holds. Let $s_1\colon \mathbb{N} \to \mathbb{N}$ and $s_2\colon \mathbb{N} \to \mathbb{N}$ be non-decreasing functions, and suppose there is a sequence $f_n \colon \{0,1\}^n \to \{0,1\}$ of functions in $\cPi{i}\SIZE[s_1(n)]$ such that each non-uniform circuit $A_n\in\SIZE^{\sTF\Sigma^b_i}[s_2(n)]$ satisfies 
$$
\Pr_{x \sim \{0,1\}^n}[f_n(x) = A_n(x)] \;\leq\; 1 - \frac{1}{n}
$$ 
for sufficiently large $n$. Then there is sequence $h_m \colon \{0,1\}^m \to \{0,1\}$ of functions computable in $\cPi{i}\SIZE[s_1(m)^{\gamma^{-1}}]$ such that each non-uniform circuit $B_m\in\SIZE^{\sTF\Sigma^b_i}[s_2(m^\gamma)^\gamma]$ satisfies
$$
\Pr_{y \sim \{0,1\}^m}[h_m(y) = B_m(y)] \;\leq\; \frac{1}{2} + \frac{1}{s_2(m^{\gamma})^{\gamma}}
$$
for sufficiently large $n$.
\end{theorem}

\begin{proof}[Sketch of the Proof.]
  \jnote{TODO: Still need careful reexamination of [HVV06].} 
\inote{Adapt statement as we're no longer using $\sTF\Sigma^b_i$ classes.}
\end{proof}
}

\section{A Universal Theory for $\Tpv^i$}
\label{sec:universal-theory}

In this section, we describe the proofs omitted from \Cref{sec:universal_theory}. First, we will need some auxiliary results.

\begin{lemma}\label{lmm:bounded-normal-form}
    Let $i\ge 0$. For every $\Sigma^b_i$-formula \emph{(}resp.~$\Pi^b_i$-formula\emph{)} $\alpha(\vec z)$ in the language $\Lpv$, there exists a formula $\alpha^{\sf norm}(\vec z)=Q_1 x_1\le t_1(\vec z)~Q_2 x_2\le t_2(\vec z)\dots Q_i x_i\le t_i(\vec z)~\phi(\vec z,\vec x)$, where $Q_1=\exists$ \emph{(}resp.~$Q_1=\forall$\emph{)}, $Q_j\in\{\forall,\exists\}$, and $Q_j\ne Q_{j+1}$ for every $j\le i-1$, such that $\Tpv^1\vdash\forall\vec z~(\alpha(\vec z)\leftrightarrow\alpha^{\sf norm}(\vec z))$. 
\end{lemma}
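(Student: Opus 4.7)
The plan is to proceed by simultaneous structural induction on $\alpha$, covering the $\Sigma^b_i$ and $\Pi^b_i$ cases in parallel (they are dual, and one uses the other through the negation clause). The base case $i = 0$ is immediate: a quantifier-free formula is already in the required form. For the inductive step, I follow the inductive definition of $\Sigma^b_i$ and $\Pi^b_i$ given in the preliminaries, handling each clause separately. Throughout, I rely on two facts. First, classical prenex rules, like $(\forall x \le t~\psi) \vee \chi \leftrightarrow \forall x \le t~(\psi \vee \chi)$ when $x \notin \mathrm{free}(\chi)$, are first-order tautologies and hence provable in $\Tpv^1$. Second, $\Lpv$ contains polynomial-time pairing functions $\langle \cdot, \cdot \rangle$ and projections $\pi_1, \pi_2$, whose defining equations $\pi_j(\langle a_1, a_2 \rangle) = a_j$ and whose polynomial growth bounds are true universal sentences, hence axioms of $\Tpv^1$.

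Assume the statement for all formulas of rank $< i$, and let $\alpha \in \Sigma^b_i$. If $\alpha \in \Sigma^b_{i-1} \cup \Pi^b_{i-1}$, apply the induction hypothesis at the lower level and pad the prefix with at most two dummy bounded quantifiers of the form $Q x \le 0$ (with $x$ fresh) to reach $i$ strictly alternating quantifiers beginning with $\exists$; the equivalence $Q x \le 0~\psi(x) \leftrightarrow \psi(0)$ is provable in $\Tpv^1$. If $\alpha = \beta \star \gamma$ for $\star \in \{\wedge, \vee\}$ with $\beta, \gamma \in \Sigma^b_i$, apply the induction hypothesis to each, yielding normal forms $\beta^{\sf norm}$ and $\gamma^{\sf norm}$ both of shape $\exists y_1 \forall y_2 \cdots$ of length $i$. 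Using prenex rules I bring both prefixes in front (the key equivalences $(\exists x~\phi) \star (\exists y~\psi) \leftrightarrow \exists x, y~(\phi \star \psi)$ and $(\forall x~\phi) \wedge (\forall y~\psi) \leftrightarrow \forall x, y~(\phi \wedge \psi)$, together with the disjunction case handled after the leading existential has already been merged, are all valid), and then collapse each block of same-type adjacent bounded quantifiers $\exists x_1 \le t_1~\exists x_2 \le t_2$ into a single $\exists w \le T$ with substitutions $x_j \mapsto \pi_j(w)$ (analogously for $\forall$). The universal case for disjunction is handled by first merging the leading existentials so the disjunction sits below one layer deeper, where both top universals align.

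If $\alpha = \exists y \le t~\beta$ with $\beta \in \Sigma^b_i$, apply induction to $\beta$ and merge the outer $\exists y$ with the leading $\exists$ of $\beta^{\sf norm}$ via pairing; the alternation is preserved because the second quantifier of $\beta^{\sf norm}$ is already $\forall$. If $\alpha = \neg \beta$ with $\beta \in \Pi^b_i$, apply induction to $\beta$ to obtain a $\Pi^b_i$-normal form $\forall x_1 \le t_1~\exists x_2 \le t_2 \cdots~\phi$; push the negation all the way inside using $\neg \forall x \le t~\psi \leftrightarrow \exists x \le t~\neg\psi$ and De Morgan on the quantifier-free matrix. The resulting formula is in the target $\Sigma^b_i$-normal form. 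The $\Pi^b_i$ case is entirely dual.

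The main technical obstacle is the bookkeeping of term bounds when merging two existentials $\exists x_1 \le t_1(\vec z)~\exists x_2 \le t_2(\vec z, x_1)$ into $\exists w \le T(\vec z)$: the new bound $T$ may not mention $x_1$ or $x_2$, even though $t_2$ does. I resolve this by exploiting the polynomial growth of $\Lpv$-terms: if $x_1 \le t_1(\vec z)$, then $t_2(\vec z, x_1)$ is bounded by some explicit $\Lpv$-term $\hat t_2(\vec z)$ (obtainable from the polynomial bound of $t_2$ on the lengths of its arguments), and similarly $\langle x_1, x_2 \rangle$ is bounded by some polynomial $\Lpv$-term $T(\vec z)$. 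Each such upper bound is a true universal statement and therefore an axiom of $\Tpv^1$, ensuring that the merging step is not only semantically valid but provably so in the theory, as required.
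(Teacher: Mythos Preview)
Your proof is correct and follows essentially the same approach as the paper's: bring the formula to prenex normal form via classical prenexification rules (which are first-order validities, hence provable in $\Tpv^1$), then collapse adjacent same-type bounded quantifiers using $\Lpv$-definable pairing and projection functions whose defining equations and growth bounds are true universal sentences, hence axioms of $\Tpv^1$. The paper presents these two steps as a brief sketch, whereas you unfold them into an explicit structural induction and, in particular, spell out the term-bound bookkeeping (replacing a bound $t_2(\vec z, x_1)$ by a dominating $\hat t_2(\vec z)$ via the polynomial growth of $\Lpv$-terms), a detail the paper glosses over.
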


\begin{proof}
    Note that for every $\Sigma^b_i$-formula (resp.~$\Pi^b_i$-formula) $\alpha(\vec z)$, we can firstly find its prenex normal form $\alpha^{\sf pnf}(\vec z)$ with $i-1$ quantifier alternations starting with an existential (resp.~universal) quantifier that is logically equivalent to $\alpha(\vec z)$. Note that $\Tpv^1$ defines pairing and unpairing functions. Concretely, there are functions $\left<\cdot,\cdot\right>,\pi_1(\cdot),\pi_2(\cdot)$ such that $\Tpv^1\vdash\forall x~\forall y~(\pi_1(\left<x,y\right>)=x\land \pi_2(\left<x,y\right>)=y\land |\left<x,y\right>|\le 10\cdot (|x|+|y|))$. It is then easy to see that 
    \begin{align*}
    \Tpv^1&\vdash \Big(\forall x\le s~\forall y\le t~\varphi(x,y)\Big) \leftrightarrow \Big(\forall p\le (s\cdot t)^{10}(\pi_1(p)\le s\land \pi_2(p)\le t\to \varphi(\pi_1(p),\pi_2(p)))\Big) \\ 
    \Tpv^1&\vdash \Big(\exists x\le s~\exists y\le t~\varphi(x,y)\Big) \leftrightarrow \Big(\exists p\le (s\cdot t)^{10}(\pi_1(p)\le s\land \pi_2(p)\le t\land \varphi(\pi_1(p),\pi_2(p)))\Big).
    \end{align*}
    Therefore we can further collapse adjacent quantifiers of the same kind to obtain $\alpha^{\sf norm}(\vec z)$ as described above such that $\Tpv^1\vdash\forall\vec z~(\alpha(z)\leftrightarrow\alpha^{\sf pnf}(\vec z)\leftrightarrow\alpha^{\sf norm}(\vec z))$. 
\end{proof}

\begin{lemma}\label{lmm:defining-axiom-not}
    For every $i\ge 1$ and $(\Pi_{i-1}^b\cup\Sigma_{i-1}^b)$-formula $\alpha(\vec x)$, we have \emph{(1)} $\Upv^i\vdash \forall\vec x~(f_\alpha(\vec x)=1\leftrightarrow f_{\lnot\alpha}(\vec x)=0)$ and \emph{(2)} $\Upv^i\vdash\forall\vec x~(f_\alpha(\vec x)=0\lor f_\alpha(\vec x)=1)$.
\end{lemma}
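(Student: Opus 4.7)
The plan is to observe that both claimed sentences are, by construction, axioms of $\Upv^i$: they are universal $\Lpv^i$-sentences that are true over the standard model $\mathbb{N}$. Recall that $\Upv^i$ was defined as the theory of \emph{all} universal $\Lpv^i$-sentences true in $\mathbb{N}$, so the task reduces to verifying universality and truth.

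For part (2), I would note that the formula $f_\alpha(\vec x)=0 \lor f_\alpha(\vec x)=1$ is quantifier-free (a disjunction of two atomic equalities), so $\forall \vec x\,(f_\alpha(\vec x)=0 \lor f_\alpha(\vec x)=1)$ is a universal $\Lpv^i$-sentence. Its truth in $\mathbb{N}$ is immediate from the definition of $f_\alpha^\mathbb{N}$ as a Boolean-valued characteristic function: by construction $f_\alpha^\mathbb{N}(\vec x) \in \{0,1\}$ for all $\vec x$. Hence this sentence is an axiom of $\Upv^i$, so $\Upv^i$ proves it.

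For part (1), I would first observe that since $\alpha$ is a $(\Pi^b_{i-1}\cup\Sigma^b_{i-1})$-formula, so is $\neg \alpha$ (using De Morgan, a $\Sigma^b_{i-1}$-formula negates to a $\Pi^b_{i-1}$-formula and vice versa), and therefore the function symbol $f_{\neg\alpha}$ has been introduced in $\Lpv^i$ with intended interpretation $f_{\neg\alpha}^\mathbb{N}(\vec x) = 1$ iff $(\neg\alpha)^\mathbb{N}(\vec x)$ holds. The sentence $\forall \vec x\,(f_\alpha(\vec x)=1 \leftrightarrow f_{\neg\alpha}(\vec x)=0)$ is universal (its matrix, once the biconditional is unfolded into a conjunction of two implications and the implications unfolded into disjunctions, is quantifier-free). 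To verify truth in $\mathbb{N}$, I would chain the defining equivalences: $f_\alpha^\mathbb{N}(\vec x) = 1 \Longleftrightarrow \alpha^\mathbb{N}(\vec x) \Longleftrightarrow \neg (\neg \alpha)^\mathbb{N}(\vec x) \Longleftrightarrow f_{\neg\alpha}^\mathbb{N}(\vec x) \neq 1 \Longleftrightarrow f_{\neg\alpha}^\mathbb{N}(\vec x) = 0$, where the last step uses the already-established fact that $f_{\neg\alpha}^\mathbb{N}$ is Boolean-valued (i.e., part (2) applied to $\neg\alpha$). Therefore this sentence is also an axiom of $\Upv^i$, and hence provable.

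There is essentially no obstacle here — the lemma is just bookkeeping that records which defining properties of the new Skolem/characteristic function symbols are already captured by the universal-axiom scheme of $\Upv^i$. The only conceptual point worth flagging is that one should prove part (2) before part (1) (or at least note its applicability to $\neg\alpha$), since the derivation of (1) in $\mathbb{N}$ uses that $f_{\neg\alpha}^\mathbb{N}$ is $\{0,1\}$-valued in order to pass from ``$\neq 1$'' to ``$= 0$''.
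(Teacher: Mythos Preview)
Your proposal is correct and follows essentially the same approach as the paper: both sentences are universal $\Lpv^i$-sentences that are true in $\mathbb{N}$ by the definition of $f_\alpha^\mathbb{N}$, hence they are axioms of $\Upv^i$. The paper's proof is a one-liner to this effect; your version just spells out more of the bookkeeping (in particular, that $\neg\alpha$ is again in $\Pi^b_{i-1}\cup\Sigma^b_{i-1}$ so that $f_{\neg\alpha}$ is defined, and that Boolean-valuedness is needed to pass from $\neq 1$ to $=0$).
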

  
\begin{proof}
    By the definition of each $f_\alpha^\bbN$, we can see that the universal sentences $\forall\vec x~(f_\alpha(\vec x)=1\leftrightarrow f_{\lnot\alpha}(\vec x)=0)$ and $\forall\vec x~(f_\alpha(\vec x)=0\lor f_\alpha(\vec x)=1)$ are both true in the standard model, so they are provable in $\Upv^i$. 
\end{proof}

\begin{reminder}[\Cref{lmm:defining-axiom-g}]
    Let $i\ge 2$, $\beta(\vec{x},y)$ be any $\Sigma_{i-1}^b$-formula in $\Lpv$, and $t$ be any term in $\Lpv$. Then $\Upv^i\vdash \forall\vec x~((\exists y\le t(\vec x)~f_{\beta}(\vec x, y)=1)\leftrightarrow f_{\beta}(\vec x,g_{\beta,t}(\vec x))=1)$.
\end{reminder}

\begin{proof}
  We will show separately that:
  \begin{align}
    &\Upv^i\vdash\forall\vec x~((\exists y\le t(\vec x)~f_{\beta}(\vec x, y)=1)\to f_{\beta}(\vec x,g_{\beta,t}(\vec x))=1), \label{eq:defining-g-case-1}\tag{Case 1}\\
    &\Upv^i\vdash \forall\vec x~(f_{\beta}(\vec x,g_{\beta,t}(\vec x))=1\to (\exists y\le t(\vec x)~f_{\beta}(\vec x, y)=1)) \label{eq:defining-g-case-2}\tag{Case 2}.
  \end{align}
  \begin{itemize}
  \item \ref{eq:defining-g-case-1}: It's easy to see that the sentence we want to prove (in $\Upv^i$) is logically equivalent to the following universal sentence: $\forall\vec x~\forall y\le t(\vec x)~(f_{\beta}(\vec x,y)=1\to f_{\beta}(\vec x,g_{\beta,t}(\vec x))=1)$ ($*$). Furthermore, ($*$) is a true universal sentence in the standard model by the definition of $g^\bbN_{\beta,t}$ and $f^\bbN_\beta$. Therefore $\Upv^i$ proves $(*)$.
  \item \ref{eq:defining-g-case-2}: By the definition of $g^\bbN_{\beta,t}$, the universal sentence $\forall\vec x~g_{\beta,t}(\vec x)\le t(\vec x)$ is true in the standard model, which further means that $\Upv^i\vdash \forall\vec x~g_{\beta,t}(\vec x)\le t(\vec x)$. This sentence logically implies the sentence we want to prove in $\Upv^i$. \qedhere
  \end{itemize}
\end{proof}

\begin{reminder}[\Cref{lmm:defining-axiom-f}]
    For every $i\ge 1$ and $(\Pi_{i-1}^b\cup\Sigma_{i-1}^b)$-formula $\alpha(\vec z)$ in the language $\Lpv$, $\Upv^i\vdash\forall\vec z~(\alpha(\vec z)\leftrightarrow f_\alpha(\vec z)=1)$. 
\end{reminder}

\begin{proof}
    Fix any $i\ge 1$. Let $\varphi_\alpha\triangleq\forall\vec z~(\alpha(\vec z)\leftrightarrow f_\alpha(\vec z)=1)$. We firstly prove that $\Upv^i\vdash\varphi_\alpha$ for every bounded $\Lpv$-formula $\alpha(\vec z)= Q_1 x_1\le t_1(\vec z)~Q_2 x_2\le t_2(\vec z)\dots Q_k x_k\le t_k(\vec z)~\phi(\vec z, x_1,\dots,x_k)$, where $\phi$ is quantifier free, $k\le i-1$, $Q_i\in\{\forall,\exists\}$, and $Q_i\ne Q_{i+1}$ for every $i\in [k-1]$. We will prove this by induction over $k$. 
\begin{itemize}
    \item \textbf{(Case 0).} Assume that $k=0$ and $\alpha(\vec{z})$ is a quantifier-free formula. Then $\varphi_\alpha$ is a universal sentence. Furthermore, by the definition of the interpretation of $f_\alpha$ over the standard model, we know that $\bbN\vDash \varphi_\alpha$, which means that $\Upv^i\vdash \varphi_\alpha$. 
    \item \textbf{(Case 1).} Assume that $\alpha(\vec z)=\forall x\le t(\vec z)~\alpha'(x,\vec z)$. In such case, $i\ge 2$. By the induction hypothesis, $\Upv^i\vdash\varphi_{\alpha'}$. To show that $\Upv^i\vdash\varphi_\alpha$ it is sufficient to prove that $\Upv^i\vdash \forall\vec z~(f_\alpha(\vec z)=1\to \alpha(\vec z))$ and $\Upv^i\vdash \forall \vec z~(\alpha(\vec z)\to f_\alpha(\vec z)=1)$. Now we prove them separately.    
    \begin{enumerate}
        \item Since $\Upv^i\vdash\varphi_{\alpha'}$, we know that $\Upv^i$ proves $\forall\vec z~\forall x\le t(\vec z)~(f_{\alpha'}(x,\vec z)=1\to \alpha'(x,\vec z))$ ($\star$). Consider the universal sentence $\psi\triangleq\forall\vec z~\forall x\le t(\vec z)~(f_\alpha(\vec z)=1\to f_{\alpha'}(x,\vec z)=1)$. By the definition of the interpretations of $f_{\alpha}$ and $f_{\alpha'}$, $\psi$ is a true sentence, therefore $\Upv^i\vdash \psi$ ($\lozenge$). Combining $(\star)$ and $(\lozenge)$ we get that 
        \[
        \Upv^i\vdash \forall\vec z~\forall x\le t(\vec z)~(f_\alpha(\vec z)=1\to \alpha'(x,\vec z)).
        \]
        This means that $\Upv^i\vdash \forall\vec z~(f_\alpha(\vec z)=1\to \alpha(\vec z))$. 
      \item Recall that we need to show that $\Upv^i\vdash\forall\vec z~((\forall x\le t(\vec z)~\alpha'(x,\vec z))\to f_\alpha(\vec z)=1)$. Since $\Upv^i\vdash\varphi_{\alpha'}$, it is sufficient to prove that
          \[
            \Upv^i\vdash\forall\vec z~((\forall x\le t(\vec z)~f_{\alpha'}(\vec z,x)=1)\to f_\alpha(\vec z)=1).
          \]
          Since $\alpha$ is a $\Pi^b_{i-1}$-formula of the form above, $\lnot \alpha'$ is a $\Sigma^b_{i-2} \cup \Pi^b_{i-2}$-formula. By Lemma \ref{lmm:defining-axiom-not}, $\Upv^i\vdash f_{\alpha'}(\vec z,x)=1\leftrightarrow f_{\lnot\alpha'}(\vec z,x)=0$. So we only need to prove that 
          \[
            \Upv^i\vdash\forall\vec z~((\forall x\le t(\vec z)~f_{\lnot\alpha'}(\vec z,x)=0)\to f_\alpha(\vec z)=1).
          \]
          By Lemma \ref{lmm:defining-axiom-g}, we know that $\Upv^i\vdash (\exists x\le t(\vec z)~f_{\lnot\alpha'}(\vec z,x)=1)\leftrightarrow f_{\lnot\alpha'}(\vec z,g_{\lnot\alpha',t}(\vec z))=1$, which means we only need to prove that
        \begin{equation}\label{equ:defining-axiom-2:target-case-2}
            \Upv^i\vdash\forall\vec z~(f_{\lnot\alpha'}(\vec z,g_{\lnot\alpha',t}(\vec z))=0\to f_\alpha(\vec z)=1).
        \end{equation}
        By considering the interpretations of $f_\alpha$, $f_{\lnot\alpha'}$, and $g_{\lnot\alpha',t}$ in the standard model, it follows that the universal sentence (\ref{equ:defining-axiom-2:target-case-2}) is true in the standard model. Therefore it is provable in $\Upv^i$. This completes this case.  
    \end{enumerate}
    \item \textbf{(Case 2).} Assume that $\alpha(\vec z)=\exists x\le t(\vec z)~\alpha'(x,\vec z)$. Let $\overline\alpha(\vec z)$ be the formula obtained by pushing the negation in  $\lnot\alpha(\vec z)$ into the quantifiers. Note that $\vdash \lnot\alpha(\vec z)\leftrightarrow\overline\alpha(\vec z)$. By applying Case 1, we can show that 
    \[
    \Upv^i\vdash\forall\vec z~\left(f_{\overline{\alpha}}(\vec z)=1\leftrightarrow\lnot\alpha(\vec z)\right). 
    \]
    Since $\forall\vec z~(f_{\overline{\alpha}}(\vec z)=1\leftrightarrow f_{\alpha}(z)\ne 1)$ is a universal sentence that is true in the standard model, we know that it is provable in $\Upv^i$, which further implies that 
    \[
    \Upv^i\vdash \forall\vec z~(f_{\alpha}(\vec z)\ne 1\leftrightarrow \lnot\alpha(\vec z)).
    \]
    This yields $\Upv^i\vdash\varphi_\alpha$. 
\end{itemize}

Now we consider the case when $\alpha$ is an arbitrary $(\Pi^b_{i-1}\cup\Sigma^b_{i-1})$-formula. By Lemma \ref{lmm:bounded-normal-form}, we can see that $\Upv^i\vdash \forall\vec z~(\alpha(\vec z)\leftrightarrow\alpha^{\sf norm}(\vec z))$. According the discussion above, we know that $\Upv^i\vdash\forall \vec z~(\alpha^{\sf norm}(\vec z)\leftrightarrow f_{\alpha^{\sf norm}}(\vec z)=1)$. Moreover, we have that $\Upv^i\vdash\forall\vec z~(f_\alpha(\vec z)=1\leftrightarrow f_{\alpha^{\sf norm}}(\vec z)=1)$, since this is a true universal sentence in the standard model. It follows from the provability of these three sentences that $\Upv^i\vdash \varphi_\alpha$, as desired.
\end{proof}

\begin{reminder}[\Cref{thm:universal_theory}]
    For every $i \geq 1$, the theory  $\UTpv^i$ satisfies the following properties:
    \begin{enumerate}
        \item $\UTpv^i$ is a universal theory.
          \item Every $\Lpv^i$-sentence provable in $\Upv^i$ is also provable in $\UTpv^i$.
        \item\label{enum: provable translate} Every $\Lpv$-sentence provable in $\Tpv^i$ is also provable in $\UTpv^i$.
        \item Let $t$ be an arbitrary $\LUT^i$-term, and consider its interpretation $t^{\mathbb{N}} \colon \mathbb{N}^k \to \mathbb{N}$ over the standard model. Then $t^{\mathbb{N}} \in \FP^{\Sigma_{i-1}^p}$.
        \item $\UTpv^i$ is closed under if-then-else.
        \item $\UTpv^i$ is sound, i.e., every sentence provable in  $\UTpv^i$ is true over $\mathbb{N}$.
    \end{enumerate}
\end{reminder}
    
\begin{proof}
    We prove each item in turn.
    \begin{enumerate}
        \item This is immediate from the definition of the theory.
        \item Let $\varphi$ be an $\Lpv^i$-sentence provable in $\Upv^i$. It is enough to argue that every axiom of $\Upv^i$ is provable in $\UTpv^i$. Since $\Upv^i$ is the theory consisting of all universal true sentences (over the standard model) in $\Lpv^i$, $\Lpv^i \subseteq \mathcal{L}^i_{\sf UT}$, and $\UTpv^i$ is the theory of all universal sentences in $\LUT^i$ that are true in the standard model, the result is immediate.
        \item Let $\varphi$ be an $\Lpv$-sentence provable in $\Tpv^i$. It follows from \Cref{thm:Upv-extends-Tpv} that $\varphi$ is provable in $\Upv^i$. Consequently, the claim follows from the previous item.
        \item This follows from \Cref{thm:lpvi-term-complexity}, the definition of $\mathcal{L}^i_{\sf UT}$, and the closure of the functions in $\FP^{\Sigma_{i-1}^p}$ under composition.
        \item To show this, let $\varphi(x_1, \ldots, x_k)$ be a quantifier-free $\LUT^i$-formula, and consider $\LUT^i$-terms $t_1(x_1, \ldots, x_k)$ and $t_2(x_1, \ldots, x_k)$. We must prove that there exists an $\LUT^i$-term $t(x_1, \ldots, x_k)$ such that 
        \begin{equation}\label{eq:prov_ifthenelse}
    \UTpv^i\vdash \big (t(\vec x)=t_1(\vec x)\land\varphi(\vec x)\big )\lor \big(t(\vec x)=t_2(\vec x)\land\lnot \varphi(\vec x)\big).
        \end{equation}
    Consider the interpretations of terms $t_1^{\mathbb{N}}, t_2^{\mathbb{N}} \colon \mathbb{N}^k \to \mathbb{N}$ over the standard model. Let $f \colon \mathbb{N}^k \to \mathbb{N}$ be the function defined as follows:
    $$
    f(\vec{a}) = \begin{cases}
    t_1^{\mathbb{N}}(\vec{a})\quad & \text{if~}\; \varphi^{\mathbb{N}}(\vec{a})~\text{ is true}; \\ 
    t_2^{\mathbb{N}}(\vec{a}) & \text{otherwise.}
    \end{cases}
    $$
    Since $\varphi$ is a quantifier-free formula, thanks to \Cref{enum: provable translate}, it is easy to see that $f \in \FP^{\Sigma_{i-1}^p}$. Consequently, the corresponding function symbol $f_{\sf UT} \in \LUT^i$. Take $t$ as $f_{\sf UT}$. It follows from the definition of $f$ and of $t$ that, for every $\vec{a} \in \mathbb{N}^k$,
    $$
    \mathbb{N} \models \big (t(\vec{a})=t_1(\vec{a})\land\varphi(\vec{a})\big )\lor \big(t(\vec{a})=t_2(\vec{a})\land\lnot \varphi(\vec{a})\big).$$ 
    Since the formula above is free of quantifiers, the definition of $\UTpv^i$ immediately yields \Cref{eq:prov_ifthenelse}.
    \item This is obvious from its definition.\qedhere
    \end{enumerate}
\end{proof}

\section{The Counting Lemma: Existence of a Good Restriction}\label{sec:counting_lemma}

\noindent \textbf{Notation.} Recall that for $m \geq 1$, a set $S \subseteq \{0,1\}^{[m]}$, and a string $a \in \{0,1\}^I$, where $I \subseteq [m]$, we define the \emph{restriction of} $S$ \emph{with respect to} $a$ as the set 
$$
S\uhr_a \eqdef \{w \in S \mid w|_I = a\}.
$$
For a non-empty set $U$ and a set $S \subseteq U$, we define $\mathsf{dens}_U(S) \eqdef |S|/|U|$.\\

For simplicity of the exposition, we consider without loss of generality  restrictions with respect to the first $m_1$ input bits. Let $S \subseteq \{0,1\}^m$, where $m = m_1 + m_2$. Suppose that $\mathsf{dens}_{\{0,1\}^m}(S) = \delta$. Now let $T \subseteq S$ be a set such that $\mathsf{dens}_S(T) > 2/3$. The following result appears implicit in  \citep{DBLP:journals/jml/Krajicek11, DBLP:journals/apal/Pich15}.

\begin{lemma}[Counting Lemma]\label{lmm:counting}
Under these assumptions, there is $a \in \{0,1\}^{m_1}$ such that
\begin{equation}\label{eq:densitya}
    \frac{|S\uhr_a\!|}{2^{m_2}} \geq \frac{1}{100} \cdot \delta \quad \text{and} \quad \frac{|T\uhr_a\!|}{|S\uhr_a\!|} \geq \frac{2}{3} - \frac{1}{100}. 
\end{equation}
\end{lemma}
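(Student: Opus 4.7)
The plan is a double counting argument. For each $a\in\{0,1\}^{m_1}$ let $s_a\eqdef|S\uhr_a\!|$ and $t_a\eqdef|T\uhr_a\!|$. Then $\sum_a s_a=|S|=\delta\cdot 2^m$ and $\sum_a(s_a-t_a)=|S|-|T|<|S|/3$ by the assumption $\mathsf{dens}_S(T)>2/3$. Call $a$ \emph{sparse} if $s_a/2^{m_2}<\delta/100$ and \emph{unbalanced} if $s_a>0$ and $t_a/s_a<2/3-1/100$. We want to show that some $a$ is neither sparse nor unbalanced, since such an $a$ witnesses \Cref{eq:densitya}.

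First I would bound the total $s_a$-mass on sparse $a$: there are at most $2^{m_1}$ sparse indices, so
\[
\sum_{a\text{ sparse}} s_a \;<\; 2^{m_1}\cdot\frac{\delta}{100}\cdot 2^{m_2} \;=\; \frac{|S|}{100}.
\]
Next I would bound the total $s_a$-mass on unbalanced $a$. By definition, each unbalanced $a$ satisfies $s_a-t_a>(1/3+1/100)\,s_a$, so
\[
\Bigl(\tfrac13+\tfrac{1}{100}\Bigr)\sum_{a\text{ unbal.}} s_a \;\leq\; \sum_{a\text{ unbal.}}(s_a-t_a) \;\leq\; |S|-|T| \;<\; \frac{|S|}{3},
\]
which gives $\sum_{a\text{ unbal.}} s_a < |S|/(1+3/100) < 0.98\,|S|$.

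Combining the two bounds yields $\sum_{a\text{ sparse or unbal.}} s_a < 0.01\,|S|+0.98\,|S|<|S|=\sum_a s_a$, so there must exist some $a^\star$ that is neither sparse nor unbalanced, and in particular $s_{a^\star}>0$. Such an $a^\star$ satisfies both conditions in \Cref{eq:densitya}, completing the proof. I do not expect a real obstacle here: the only point to be careful about is choosing the numerical slacks ($1/100$ on both sides) so that the two mass bounds sum to strictly less than $|S|$; any sufficiently small constants in place of $1/100$ would work, and the exact constants stated in the lemma are already compatible with this calculation.
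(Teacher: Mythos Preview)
Your proof is correct and is essentially the same counting/averaging argument as the paper's. The paper phrases it as a direct upper bound on $|T|=\sum_a |T\uhr_a|$ (using that each $a$ failing \eqref{eq:densitya} contributes at most $\tfrac{\delta}{100}2^{m_2}+(\tfrac23-\tfrac{1}{100})|S\uhr_a|$), while you instead bound the total $s_a$-mass on sparse and unbalanced $a$'s; these are dual formulations of the same computation.
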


\begin{proof}
Suppose this is not the case, i.e., for every $a \in \{0,1\}^{m_1}$, at least one of the two inequalities above does not hold. We use this to  contradict $|T| > (2/3) \cdot |S| = (2/3) \cdot \delta \cdot 2^m$. Under the assumption, and using that $T\uhr_a \subseteq S\uhr_a$,
\begin{eqnarray}
    |T| & = & \sum_{a \in \{0,1\}^{m_1}} |T\uhr_a\!|  \nonumber \\
    & \leq & \sum_{a \in \{0,1\}^{m_1}} \left ( \frac{1}{100} \cdot \delta \cdot 2^{m_2} + \left ( \frac{2}{3} - \frac{1}{100} \right ) |S\uhr_a\!| \right ) \nonumber \\
    & = & 2^{m_1 + m_2} \cdot \frac{1}{100} \cdot \delta + \left( \frac{2}{3} - \frac{1}{100} \right) \cdot |S| \nonumber \\
    & = & \frac{\delta}{100} \cdot 2^m + \left( \frac{2}{3} - \frac{1}{100} \right) \cdot \delta \cdot 2^m \nonumber \\
    & = & \frac{2}{3} \cdot \delta \cdot 2^m.
\end{eqnarray}
This completes the proof.
\end{proof}

\newfunc{\cktEval}{cktEval}

\end{document}